\def\red{\textcolor{red}} 
\newtheorem{theorem}{Theorem}
\newtheorem{proposition}{Proposition}
\newtheorem{remark}{Remark}
\newtheorem{lemma}{Lemma}
\newtheorem{definition}{Definition}
\newtheorem{corollary}{Corollary}
\newcommand\ep{{\varepsilon}}
\newcommand\zu{[0,1]}
\newcommand\mk{\medskip}
\newcommand\sk{\smallskip}
\newcommand{\R}{\mathbb{R}}
\newcommand{\N}{\mathbb{N}}
\newcommand{{\taut}}{\widetilde{\tau}}
\newcommand \cjeta{\mathcal{S}_j(\eta)}
\newcommand{\wep}{\widetilde \ep}
\newcommand\weta{{\widetilde \eta}}
\newcommand\wtau{\widetilde \tau}
\newcommand\zud{[0,1]^d}
\newcommand{\etal}{\eta_\ell}
\newcommand{\etar}{\eta_r}
\newcommand\lf{\lfloor}
\newcommand\rf{\rfloor}
\newcommand{\Ht}{{\widetilde H}}
\newcommand{\locloc}{{}}
\newcommand\Rmu{\mathcal{R}_\mu}
\newcommand\Tl{{\mathcal{T}_{\mu,\ell}}}
\newcommand\Tr{{\mathcal{T}_{\mu,r}}}
\newcommand\Tt{{\mathcal{T}_\mu}}
\newcommand{\M}{\mathsf{M}}
\begin{document}
\title[Random sparse sampling]{Random sparse sampling \\ in a Gibbs weighted tree}

\author{Julien Barral \and St\'ephane Seuret }         

\address{Julien BARRAL, LAGA, CNRS UMR 7539, Institut Galil\'ee, Universit\'e Paris 13, Sorbonne Paris Cit\'e, 99 avenue Jean-Baptiste Cl\'ement , 93430  Villetaneuse, 
France}
\address{St\'ephane SEURET, Universit\'e Paris-Est, LAMA (UMR 8050), UPEMLV, UPEC, CNRS, F-94010, Cr\'eteil, France}
\date{Received: date / Revised version: date}
\subjclass[2010]{28A78, 28A80, 37D35, 60F10, 60K35}
 \keywords{Hausdorff dimension, Random sparse sampling,  Thermodynamic formalism, Phase transitions, Ubiquity theory, Multifractals, Large deviations} 
\begin{abstract}
Let $\mu$ be the geometric realization on $[0,1]$ of a Gibbs measure on $\Sigma=\{0,1\}^\N$ associated with a H\"older  potential.  The thermodynamic and multifractal properties of $\mu$ are  well known to be linked via  the multifractal formalism.  In this article, the impact of a random  sampling procedure on this structure is studied. 

More precisely, let $\{I_w\}_{w\in \Sigma^*}$ stand for the collection of dyadic subintervals of $[0,1]$ naturally indexed by the set of finite dyadic words $\Sigma^*$. Fix $\eta\in(0,1)$,  and a sequence $(p_w)_{w\in \Sigma^*}$ of independent Bernoulli variables of   parameters $2^{-|w|(1-\eta)}$ ($|w|$ is the length of $w$). 
We consider the (very sparse) remaining values $\widetilde\mu=\{\mu(I_w): w\in \Sigma^*, p_w=1\}$.

We   prove that when $\eta<1/2$, it is possible to entirely reconstruct $\mu$ from the sole knowledge of $\widetilde\mu$, while it is not possible when $\eta>1/2$, hence a first phase transition phenomenon.

We show that, for all $\eta \in (0,1)$, it is possible to reconstruct a large part of the initial multifractal structure of $\mu$, via the fine study of $\widetilde\mu$. After reorganization,  these coefficients give rise to a random capacity with new remarkable scaling and multifractal properties: its $L^q$-spectrum exhibits two phase transitions, and has a   rich   thermodynamic and geometric structure. 
\end{abstract}

\maketitle


\section{Introduction}

Statistical mechanics and multifractals are well known to be closely related. Typical situations are provided by the energy model associated with a Gibbs measure on  the boundary $\Sigma$ of the dyadic tree $\Sigma^*$ in the context of the thermodynamic formalism \cite{Ruelle,Collet,Rand}, or the random energy model associated with a branching random walk on $\Sigma^*$, namely directed polymers on disordered trees \cite{DerSpo,CK,HW,Mol,AB,B2014}. 
The purpose of this paper is to investigate the thermodynamic and geometric impact of a random sparse sampling on such structures. 

Let us start by describing the interplay between thermodynamics and multifractals. 

\subsection{Free energy and singularity spectrum as a Legendre pair}

For the sake of generality, we   work on the $d$-dimensional dyadic tree  and on $[0,1]^d$, $d\ge 1$. Let $\Sigma_j$ be the set of words of length $j\ge 1$ over the alphabet $\{0,1\}^d$, i.e. 
$$\Sigma_j= \Big\{(w_1w_2\cdots w_j): \forall \, k\in \{1,...,j\}, \ w_k=(w_k^{(1)}, w_k^{(2)}, ,...,w_k^{(d)} ) \in \{0,1\}^d \Big \}.$$

If $w\in \Sigma_j$, we denote by $|w|=j$ its length (or its generation). Then, $\Sigma^* = \bigcup_{j\geq 1} \Sigma_j$ and $\Sigma=(\{0,1\}^d)^{\N_+}$ denote the set of finite words and infinite words over $\{0,1\}^d$ respectively. The set $\Sigma $ is endowed with the standard ultra-metric distance, and $\Sigma^*\cup\Sigma$ is endowed with the shift operation denoted $\sigma$.

If $w\in \Sigma^*\cup\Sigma$ and $1\le j\le |w|$ is finite, $w_{|j}$ stands for the prefix of length $j$ of $w$. If  $W\in \Sigma^*$,  $[W]$ is the {\em cylinder} of those words  $w\in\Sigma$ such that $w_{||W|}=W$.  

With each $w=w_1...w_j\in \Sigma_j$ is naturally associated the dyadic point 
\begin{equation}
\label{defxw}
x_w= \left(\sum_{k=1}^j w^{(i)}_k 2^{-k} \right)_{1\le i\le d},
\end{equation}
of $[0,1]^d$, and the dyadic subcube $I_w=\prod_{i=1}^d \left[x^{(i)}_w,x^{(i)}_w+2^{-j} \right]$ of $[0,1]^d$. 
 
 If $x=(x^{(1)}, x^{(2)}, ,...,x^{(d)})\in [0,1]^d$ has no dyadic component, then $x$ is encoded by a unique $w=w^{(1)}w^{(2)}...w^{(d)} \in \Sigma$, and $I_j(x)$ stands for $I_{w_{|j}}$. When $x^{(i)}$ is dyadic, we choose  $w^{(i)}$ as the largest element of $\{0,1\}^{\N_+}$ in lexicographical order which encodes $x^{(i)}$. In both cases, $w_{|j}$  is also denoted $x_{|j}$.
 
\begin{definition}
We call {\em capacity} a non-negative and non-decreasing function $\mu$ of the dyadic subcubes of $[0,1]^d$, i.e. for every $W,w \in \Sigma^*$ such that $I_w\subset I_W$, $0\le \mu(I_w)\leq \mu(I_W)$.

The set of capacities  is denoted by $ \mathrm{Cap}([0,1]^d)$. 

 The support of $\mu \in \mathrm{Cap}([0,1]^d)$  is the set $\displaystyle \mathrm{supp}(\mu)=\bigcap_{j\ge 1}\bigcup_{w\in\Sigma_j: \mu(I_w)>0}I_w$. 
 \end{definition}

We focus on two quantities especially relevant  in the thermodynamic  and  geometric measure theoretic  contexts. 

\mk

$\bullet$ The {\em free energy} of a capacity $\mu\in  \mathrm{Cap}([0,1]^d)$ with a non empty support  is defined as the thermodynamic (lower) limit given   for $q\in \R$ by 
\begin{equation}
\label{deftau}
\tau_\mu(q)=\liminf_{j\to\infty} \tau_{\mu,j}(q), \ \ \ \mbox{ where } \  \tau_{\mu,j}(q) :=\frac{-1}{j}\log_2\sum_{w\in\Sigma_{j}: \mu(I_w)>0}\mu(I_w)^q,
\end{equation}
and $q$ is interpreted as the inverse of a temperature when it is positive (the precise connection with statistical mechanics terminology is that in finite volume $j$, $\tau_{\mu,j}(q)$ is the free energy associated with the  potential $V(w)=-\log(\mu(I_w))$, $w\in\Sigma_j$).   

When the free energy  $\tau_\mu(q)$ is a limit (not only a liminf) and is differentiable, the value $\tau_\mu(q)$ allows one to describe the asymptotical distribution properties of $\mu$ over $\Sigma_j$ thanks to large deviations theory, which roughly gives the approximation:
$$
\forall \, H\in \R, \ \#\left\{w\in\Sigma^*: |w|=j,\, \mu(I_w)\approx 2^{-jH}\right \} \approx 2^{j\tau_\mu^*(H)}\quad\text{as $j\to + \infty$},
$$ 
where $\tau_\mu^*$ is the Legendre transform of $\tau_\mu$, i.e.  
\begin{equation}
\label{legendre}
\tau_\mu^*(H):=\inf_{q\in\R} \big(Hq-\tau_\mu(q)\big).
\end{equation}

$\bullet$   The {\em singularity}, or {\em multifractal}, {\em spectrum} of~$\mu$ is defined as
$$
D_\mu:H\mapsto \dim \underline E_\mu(H), \quad H\in\R,
$$
where 
$$
\underline E_\mu(H)=\left\{x\in \mathrm{supp}(\mu):\liminf_{j\to\infty} \frac{\log_2 \big(\mu(I_{x_{|j}} ) \big)}{-j }=H\right\}.
$$
The Hausdorff dimension in  $\R^d$ is denoted by $\dim$, and by convention, $\dim \emptyset = -\infty$.   The singularity spectrum  provides a fine geometric description of the energy distribution at small scales by giving the Hausdorff dimension of the iso-H\"older sets $\underline E_\mu(H)$ of $\mu$.


\begin{figure}  \begin{tikzpicture}[xscale=0.9,yscale=0.9]
{\tiny
\draw [->] (0,-2.8) -- (0,1.5) [radius=0.006] node [above] {$\tau_\mu(q)$};
\draw [->] (-1.,0) -- (3.7,0) node [right] {$q$};
 \draw [thick, domain=1.7:3.5, color=black]  plot ({\x},  {-ln(exp(\x*ln(1/5)) +exp(\x*ln(0.8)))/(ln(2)});
 \draw [thick, domain=0.6:1.7, color=black]  plot ({\x},  {-ln(exp(\x*ln(1/5)) +exp(\x*ln(0.8)))/(ln(2)});
\draw [thick, domain=-0.8:0.6, color=black]  plot ({\x},  {-ln(exp(\x*ln(1/5)) +exp(\x*ln(0.8)))/(ln(2)});
  \draw[dashed] (0,0) -- (2.8,1);
\draw [fill] (-0.1,-0.20)   node [left] {$0$}; 
\draw [fill] (-0,-1) circle [radius=0.03]  node [left] {$-d$ \ }; 

\draw [fill] (1,-0) circle [radius=0.03] [fill] (1,-0.25) node [ ] {$1 $}; }
\end{tikzpicture} \hskip .5cm
 \begin{tikzpicture}[xscale=1.9,yscale=2.6]
    {\tiny
\draw [->] (0,-0.2) -- (0,1.25) [radius=0.006] node [above] {$D_\mu(H)=\tau_\mu^*(H)$};
\draw [->] (-0.2,0) -- (2.8,0) node [right] {$H$};
\draw [thick, domain=0:5]  plot ({-(exp(\x*ln(1/5))*ln(0.2)+exp(\x*ln(0.8))*ln(0.8))/(ln(2)*(exp(\x*ln(1/5))+exp(\x*ln(0.8)) ) )} , {-\x*( exp(\x*ln(1/5))*ln(0.2)+exp(\x*ln(0.8))*ln(0.8))/(ln(2)*(exp(\x*ln(1/5))+exp(\x*ln(0.8))))+ ln((exp(\x*ln(1/5))+exp(\x*ln(0.8))))/ln(2)});
\draw [thick, domain=0:5]  plot ({-( ln(0.2)+ ln(0.8))/(ln(2)) +(exp(\x*ln(1/5))*ln(0.2)+exp(\x*ln(0.8))*ln(0.8))/(ln(2)*(exp(\x*ln(1/5))+exp(\x*ln(0.8)) ) )} , {-\x*( exp(\x*ln(1/5))*ln(0.2)+exp(\x*ln(0.8))*ln(0.8))/(ln(2)*(exp(\x*ln(1/5))+exp(\x*ln(0.8))))+ ln((exp(\x*ln(1/5))+exp(\x*ln(0.8))))/ln(2)});
  \draw[dashed] (0,1) -- (2.6,1);
\draw [fill] (-0.1,-0.10)   node [left] {$0$}; 
\draw [fill] (0,1) circle [radius=0.03] node [left] {$d \ $}; 
\draw  [fill] (0.32,0) circle [radius=0.03]  (0.32,-0.05)  node [below]{$H_{\min}$};
\draw  [fill] (2.32,0) circle [radius=0.03]   (2.32,-0.05) node [below] {$H_{\max}$};
\draw  [fill] (1.32,1) circle [radius=0.03]  [dashed]   (1.32,1) -- (1.32,0)  [fill] (1.32,0) circle [radius=0.03]  (1.32,-0.05) node [below] {$H_{s}$};
}
\end{tikzpicture}
\caption{{\bf Left:} Free energy function of a Gibbs measure $\mu$ on $[0,1]^d$. {\bf Right:} The singularity spectrum of $\mu$.}
\label{figureLegpair}
\end{figure}
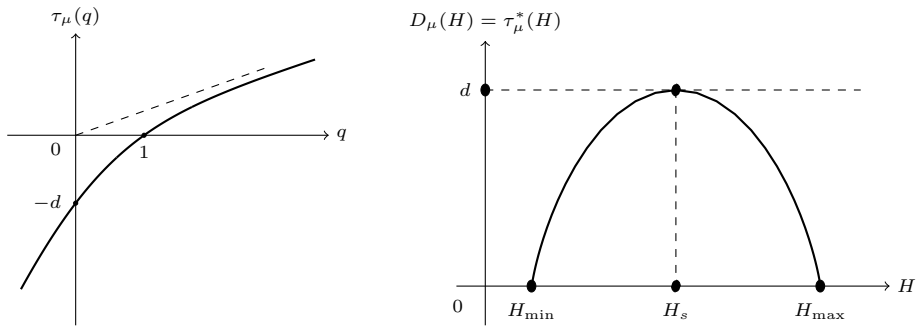


\mk

It turns out that  when $\mu$ possesses nice scaling properties, one has
$$
  \forall\,H\in\R, \quad D_\mu(H)=\tau_\mu^*(H).
$$ 
\begin{definition}
When the above formula is satisfied, $\tau_\mu$ and $D_\mu$ are said to form a Legendre pair (see Figure~\ref{figureLegpair}). In this situation, one says that $\mu$ obeys the multifractal formalism at any $H\in\R$.
\end{definition}

Forming a Legendre pair implies that the geometric description of $\mu$ provided by its singularity spectrum $D_\mu$ matches with the asymptotic statistical description of the energy distribution~$\mu$ provided by the free energy $\tau_\mu$ and its Legendre transform.
Our goal is to  investigate the impact on such well-organized structures of a natural sampling procedure.
   
\subsection{Random sparse sampling operation on capacities}

We perform on any capacity$\mu$ the  random sampling process consisting in acting independently on the vertices of $\Sigma^*$ by letting a vertex of generation $j$ survive with probability $2^{-jd(1-\eta)}$, {where $\eta\in (0,1)$} (it is also a special case of decimation rule used in percolation theory on $\Sigma^*$). More formally:

 \begin{definition}\label{def11}
Fix  a real parameter $0<\eta <1$,  called the {\em sampling index}. Let  $(\Omega, \mathcal{F}, \mathbb{P})$ be   a probability space, 
 and  $(p_w)_{w\in \Sigma^*}$ a sequence of independent Bernoulli random variables so that $p_w\sim B(2^{-d(1-\eta)|w|})$, i.e.
 \begin{equation}
 \label{defeta}
 \mathbb{P}(p_w=1)=1-\mathbb{P}(p_w=0) = 2^{-d(1-\eta)|w|}.
 \end{equation}

When $p_w=1$, $w$ is said to be a surviving vertex (or a survivor).

 For every $j\geq 1$, denote by $\mathcal{S}_j(\eta)$   the (random) set of  surviving vertices   in $ \Sigma_j$:
    $$ \mathcal{S}_j(\eta):= \big\{ w\in \Sigma_j:  p_w =1\}.$$
 
\smallskip

 Let $\mu \in \mathrm{Cap}([0,1]^d)$.  We denote by $\widetilde \mu: \Sigma^* \to \R^+$ the function defined by
 $$\forall \, w\in \Sigma^*,  \ \ \ \widetilde\mu(I_w) = \mu(I_w) \cdot p_w.$$
  \end{definition} 
  
 See Figure  \ref{figure1}   for an illustration.
The problems we address are the following. 
\begin{itemize}

\smallskip
\item
{\bf Recovering   from sparse information:} The set of surviving vertices  $\mathcal{S}_j(\eta)$ has a cardinality of expectation $2^{dj \eta}$ (which is exponentially less than the $2^{dj}$ initial coefficients), and is very sparse.  The first question concerns the   information remaining after the sampling operation.  Can one recover the initial Gibbs measure $\mu$ (i.e. all the values $(\mu(I_w))_{w\in \Sigma^*}$) from the sole knowledge of    $\widetilde \mu$?  If not, what about recovering the free energy and multifractal spectrum of $\mu$? 

\smallskip
\item
{\bf Structure of $\widetilde \mu$:}  The new object $\widetilde \mu$ is not a capacity any more. Does it have a well-organized structure though?  
 \end{itemize}

\begin{figure}
\tikzstyle{lien}=[->,>=stealth,rounded corners=5pt,thick]
\tikzset{individu/.style={draw,thick,fill=#1!25},
individu/.default={green}}

\begin{tikzpicture}[ xscale=0.58,yscale=0.63]
{\tiny
\node   {$\mu(I_\emptyset)$} [sibling distance=6cm]
child { node {$\mu(I_0)$}    [sibling distance=3cm]
child { node {$\mu(I_{00})$} [sibling distance=1.5cm]
child { node  {$\mu(I_{000})$}  [sibling distance=0.8cm]
child { node {:} } child {node {:} }}
child { node  {$\mu(I_{001})$}  [sibling distance=0.8cm]
child { node {:} } child {node {:} }}  } 
child { node {$\mu(I_{01})$} [sibling distance=1.5cm]
child { node  {$\mu(I_{010})$}  [sibling distance=0.8cm]
child { node {:} } child {node {:} }}
child { node  {$\mu(I_{011})$}  [sibling distance=0.8cm]
child { node {:} } child {node {:} }}  
}}
child { node {$\mu(I_1)$}    [sibling distance=3cm]
child { node {$\mu(I_{10})$} [sibling distance=1.5cm]
child { node  {$\mu(I_{100})$}  [sibling distance=0.8cm]
child { node {:} } child {node {:} }}
child { node  {$\mu(I_{101})$}  [sibling distance=0.8cm]
child { node {:} } child {node {:} }}  } 
child { node {$\mu(I_{11})$} [sibling distance=1.5cm]
child { node  {$\mu(I_{110})$}  [sibling distance=0.8cm]
child { node {:} } child {node {:} }}
child { node  {$\mu(I_{111})$}  [sibling distance=0.8cm]
child { node {:} } child {node {:} }}  
}};
}
\end{tikzpicture}
  \begin{tikzpicture}[ xscale=0.54,yscale=0.63]
{\tiny
\node   {$\mu(I_\emptyset)$} [sibling distance=6cm]
child { node {$\textcolor{red}{0}$}    [sibling distance=3cm]
child { node {$\textcolor{red}{0}$} [sibling distance=1.5cm]
child { node  {$\textcolor{red}{0}$}  [sibling distance=0.8cm]
child { node {:} } child {node {:} }}
child { node  {$\mu(I_{001})$}  [sibling distance=0.8cm]
child { node {:} } child {node {:} }}  } 
child { node {$\mu(I_{01})$} [sibling distance=1.5cm]
child { node  {$\textcolor{red}{0}$}  [sibling distance=0.8cm]
child { node {:} } child {node {:} }}
child { node  {$\textcolor{red}{0}$}  [sibling distance=0.8cm]
child { node {:} } child {node {:} }}  
}}
child { node {$\mu(I_1)$}    [sibling distance=3cm]
child { node {$\mu(I_{10})$} [sibling distance=1.5cm]
child { node  {$\textcolor{red}{0}$}  [sibling distance=0.8cm]
child { node {:} } child {node {:} }}
child { node  {$\textcolor{red}{0}$}  [sibling distance=0.8cm]
child { node {:} } child {node {:} }}  } 
child { node {$\textcolor{red}{0}$} [sibling distance=1.5cm]
child { node  {$\textcolor{red}{0}$}  [sibling distance=0.8cm]
child { node {:} } child {node {:} }}
child { node  {$\mu(I_{111})$}  [sibling distance=0.8cm]
child { node {:} } child {node {:} }}  
}};
}
\end{tikzpicture} 
\caption{{\bf Left:} Capacity $\mu$ on dyadic cubess. {\bf Right:} Function $\widetilde \mu$ and surviving vertices after sampling.}\label{figure1}
   \end{figure}
%

 The last two above questions are of course related to each other.   
 
 \sk
 
 Concerning the reconstruction problematics, recovering the scaling behavior from sparse information is a very natural issue in signal processing (this is one issue in compressive sensing). This allows one to evaluate the ``incompressible'' information represented by the initial capacity. We   bring an answer when  $\mu$ is the geometric realization on $[0,1]$ of a Gibbs measure associated with a H\"older continuous potential on $\Sigma$, and more generally a non trivial Gibbs capacity. Specifically, the capacity $\mu$ satisfies   that there exists a Gibbs measure $\nu$, $K>0$ and~$(\alpha,\beta)\in \R_+\times \R_+$ such that  
$$
\mu(I_w)=K\nu(I_w)^\alpha 2^{-\beta |w|},\quad \forall \, w\in\Sigma^*,
$$
and $\mu$ is not constant, so $(\alpha,\beta)\neq (0,0)$ (see Section~\ref{GibbsConstruction} for a precise definition of Gibbs measures and capacities).  

 \begin{theorem} 
 \label{th_intro_reco}  Suppose that $\mu$ is a Gibbs capacity. With probability one, when $\eta<1/2$, one can reconstruct, up to some multiplicative constant depending only on $\mu$, all the values $\{\mu(I_w):w\in \Sigma^*\}$, while when $\eta>1/2$, it is impossible  provided $\mu$ is not built from a potential on $\Sigma$ depending on only finitely many letters.
 \end{theorem}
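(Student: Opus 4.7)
The proof splits at the critical value $\eta=1/2$, which is the birthday-type threshold at which the expected number of sibling pairs of survivors at generation $j$, of order $2^{d(2\eta-1)j}$, transitions from decaying to growing. This transition governs how rigidly the sample constrains the underlying Gibbs potential $\varphi$.

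For the reconstruction direction ($\eta<1/2$), I would first apply a first-moment Borel--Cantelli argument to the expected number of survivors in each subtree of $\Sigma^*$ (which equals $2^{dk\eta}$ at subgeneration $k$ below a fixed $w$), to conclude that, almost surely, every cylinder $[w]$ contains surviving descendants at infinitely many scales. Second, I would use the controlled-distortion property of the Gibbs measure $\nu$ to rewrite any observed ratio
\[
\mu(I_w)/\mu(I_{w'})=(\nu(I_w)/\nu(I_{w'}))^\alpha\cdot 2^{-\beta(|w|-|w'|)}
\]
as an explicit function of differences of Birkhoff sums of $\varphi$ along extensions of $w$ and $w'$, up to a uniformly bounded multiplicative error. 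Third, letting the common prefix of two survivors grow and exploiting the H\"older regularity of $\varphi$, this error washes out in the limit and one simultaneously recovers $\alpha$, $\beta$ and $\varphi$; the Gibbs capacity $\mu$ is then reconstructed up to the global multiplicative constant $K$.

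For the impossibility direction ($\eta>1/2$), the strategy is to construct, almost surely, an alternative Gibbs capacity $\mu'\neq\mu$ with $\widetilde{\mu'}=\widetilde{\mu}$. One perturbs $\varphi$ H\"older-continuously on a set of configurations avoided by every surviving path; the existence of such ``unsampled directions'' follows from a second Borel--Cantelli-type argument on the survivor process valid precisely when $\eta>1/2$. The assumption that $\varphi$ depends on infinitely many letters is essential here: it provides the flexibility to make the perturbation non-cohomologous to $\varphi$, producing a genuinely distinct $\mu'$, whereas this freedom is absent for finite-range potentials. The main obstacle is to ensure the exact matching $\mu'(I_w)=\mu(I_w)$ at every survivor $w$, since a modification of $\varphi$ off the survivor set a priori affects the Birkhoff sums entering the $\mu$-values there as well; handling this requires localizing the perturbation deep enough in the tree that its forward orbits avoid all surviving paths, and possibly rebalancing the parameters $K$, $\alpha$, $\beta$ to absorb residual mismatches.
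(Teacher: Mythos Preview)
Your proposal diverges from the paper's proof in both directions, and in each case the paper's argument is considerably more direct.

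\textbf{Reconstruction.} The paper does not try to recover $\varphi$, $\alpha$, $\beta$. It introduces the notion that $u\in\Sigma^*$ is \emph{$1$-reconstructible} when some ancestor--descendant pair $(w,wu)$ survives; in that case the quasi-Bernoulli inequality $C^{-1}\mu(I_w)\mu(I_u)\le\mu(I_{wu})\le C\mu(I_w)\mu(I_u)$ pins down $\mu(I_u)$ up to the fixed constant $C$ from the two observed values $\mu(I_w),\mu(I_{wu})$. A single Borel--Cantelli estimate on $\mathbb{P}(\forall\,w\in\Sigma_j:\ p_w p_{wu}=0)$ then shows every $u$ is $1$-reconstructible in the good regime. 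Your first step---every cylinder contains infinitely many surviving descendants---holds for \emph{every} $\eta\in(0,1)$ and does not isolate the threshold; the threshold enters precisely through the existence of surviving ancestor--descendant \emph{pairs}, which you flag in your preamble but never exploit in the argument itself. Your plan to pass to a limit along growing common prefixes and extract $\varphi$ pointwise is more ambitious and would require substantial additional work to explain why the sampled suffixes determine $\varphi$ on all of $\Sigma$.

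\textbf{Impossibility.} The paper does not construct an alternative $\mu'$. It stays within the $k$-reconstructibility framework: a first-moment bound gives that the probability a given $u$ is $1$-reconstructible is $O(2^{-|u|(1-\eta)})$, so by Markov's inequality and Borel--Cantelli at most $O(2^{j(\eta+\varepsilon)})\ll 2^j$ words of length $j$ are $1$-reconstructible; an induction on $k$ then shows the number of $k$-reconstructible words of length $J$ is $O(J^{k}2^{J(\eta+\varepsilon)})\ll 2^J$ for every $k$. Thus the paper proves only that this specific concatenation-and-quasi-Bernoulli scheme cannot reach all words. Your route---perturbing $\varphi$ off the survivor set to produce a distinct $\mu'$ agreeing with $\mu$ on every survivor---would, if it succeeded, give a stronger conclusion, but the obstacle you yourself identify (a perturbation of $\varphi$ anywhere shifts Birkhoff sums and hence $\mu$-values at the survivors too) is genuine, and your sketch does not resolve it.
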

 
See Section~\ref{recovering}, Theorem \ref{th_reconstruction} for a more precise statement.  This constitutes a first phase transition phenomenon at $\eta=1/2$.

\sk

 Regarding recovering of statistical and geometrical properties of $\mu$,  we first  reorganize the surviving information in a suitable and exploitable way, as follows. If $w,v\in \Sigma^*$, $wv$ stands for the   concatenation of $w$ and $v$.
 
 \begin{definition} 
  Let $\mu \in \mathrm{Cap}([0,1]^d)$. We consider the  random capacity $\M_\mu\in \mathrm{Cap}([0,1]^d)$ associated with $\mu$ and the sequence  $(p_w)_{w\in \Sigma^*}$ defined by
 \begin{equation}
 \label{defwmu}
 \M_\mu(I_w) = \max\Big\{\mu(I_{wv}): v\in \Sigma^* \ \mbox{ and } p_{wv} =1 \Big\} . 
 \end{equation}
 \end{definition}

See Figure  \ref{figure2} for the construction of $\M_\mu$.
By construction, any capacity $\mu  \in \mathrm{Cap}([0,1]^d)$ satisfies $\mu(I_w) = \max\Big\{\mu(I_{wv}): v\in \Sigma^*\Big\} , $ hence  \eqref{defwmu} is the most natural  formula to be used to build a capacity from $\widetilde \mu$.

It is not difficult to see that with probability 1, for every $w\in \Sigma^*$, the set $\big \{v\in \Sigma^* \ \mbox{ and } p_{wv} =1\big\}$ is non-empty, so that $\M_\mu$ is well defined. Observe that by our choice \eqref{defeta}, most of the coefficients $\widetilde\mu(I_w)$  equal 0, hence typically one has $\M_\mu(I_w) <\!\!\!<  \mu(I_w) $ when $\lim_{j\to+\infty}\max\{\mu(I_w):w\in \Sigma_j\}=0$.

The definition of $\M_\mu$ can be rephrased as
$$ \M_\mu(I_w) = \max \left \{\mu(I_v): v\text{ survives, $[v]\subset[w]$}\right\} =  \max\left\{\widetilde\mu(I_{wv}): v\in \Sigma^*   \right\}.$$

 We notice that $\M_\mu$ and $\widetilde \mu$ are equivalent objects in the following sense. If $\mu$ is strictly positive,  $\widetilde\mu$ can be recovered from $\M_\mu$  since $\widetilde \mu(I_w) \neq 0$ if and only if $\M_\mu(I_w)>\M_\mu(I_{wv})$ for all $v\in\Sigma^*$ such that $|v|\ge 1$.  From now on, we  work with the capacitiy $\M_\mu$ only.

\smallskip

Starting from a positive capacity $\mu$ whose free energy $\tau_\mu$ and singularity spectrum $D_\mu$ form a Legendre pair, we consider the following questions in order to estimate the structural perturbations induced by the sampling process: 

\begin{itemize}
\item
Do the free energies in finite volume $\tau_{\M_\mu,j}$ converge to a thermodynamic limit $\tau_{\M_\mu}$ as $j\to+\infty$? 
\item
Is it possible to conduct a fine   analysis of the local behavior of $\M_\mu$ so that $D_{\M_\mu}$ is computable? If so, do $\tau_{\M_\mu}$ and $D_{\M_\mu}$ form a Legendre pair?
\item
Are there explicit relations between the new pair $(\tau_{\M_\mu},D_{\M_\mu})$ and the original one~$(\tau_\mu,D_\mu)$, so that one can recover the  initial information (before sampling)?
\end{itemize}


\begin{figure}
  \begin{tikzpicture}[ xscale=0.54,yscale=0.63]
{\tiny
\node   {$\mu(I_\emptyset)$} [sibling distance=6cm]
child { node {$\textcolor{red}{0}$}    [sibling distance=3cm]
child { node {$\textcolor{red}{0}$} [sibling distance=1.5cm]
child { node  {$\textcolor{red}{0}$}  [sibling distance=0.8cm]
child { node {:} } child {node {:} }}
child { node  {$\mu(I_{001})$}  [sibling distance=0.8cm]
child { node {:} } child {node {:} }}  } 
child { node {$\mu(I_{01})$} [sibling distance=1.5cm]
child { node  {$\textcolor{red}{0}$}  [sibling distance=0.8cm]
child { node {:} } child {node {:} }}
child { node  {$\textcolor{red}{0}$}  [sibling distance=0.8cm]
child { node {:} } child {node {:} }}  
}}
child { node {$\mu(I_1)$}    [sibling distance=3cm]
child { node {$\mu(I_{10})$} [sibling distance=1.5cm]
child { node  {$\textcolor{red}{0}$}  [sibling distance=0.8cm]
child { node {:} } child {node {:} }}
child { node  {$\textcolor{red}{0}$}  [sibling distance=0.8cm]
child { node {:} } child {node {:} }}  } 
child { node {$\textcolor{red}{0}$} [sibling distance=1.5cm]
child { node  {$\textcolor{red}{0}$}  [sibling distance=0.8cm]
child { node {:} } child {node {:} }}
child { node  {$\mu(I_{111})$}  [sibling distance=0.8cm]
child { node {:} } child {node {:} }}  
}};
\draw [thick] (3.0,-3.0) -- (4.2,-3.0)  [thick] (4.8,-3.0) -- (6.1,-3.0);
\draw [thick] (3.0,-3.0) -- (3.0,-6.5)  [thick] (6.1,-3.0) -- (6.1,-6.5);
\draw [thick] (-3.4,-1.5) -- (-6.0,-1.5)  [thick] (-2.7,-1.5) -- (-0.1,-1.5);
\draw [thick] (-6.0,-1.5) -- (-6.0,-6.5)  [thick] (-0.1,-1.5) -- (-0.1,-6.5);
}
\end{tikzpicture}
  \begin{tikzpicture}[ xscale=0.54,yscale=0.63]
{\tiny
\node   {$\mu(I_\emptyset)$} [sibling distance=6cm]
child { node {$\textcolor{red}{\mu(I_{001})}$}   [sibling distance=3cm]
child { node {$\textcolor{red}{\mu(I_{001})}$} [sibling distance=1.5cm]
child { node  {$\textcolor{red}{\mu(I_{...})}$}  [sibling distance=0.8cm]
child { node {:} } child {node {:} }}
child { node  {$\mu(I_{001})$}  [sibling distance=0.8cm]
child { node {:} } child {node {:} }}  } 
child { node {$\mu(I_{01})$} [sibling distance=1.5cm]
child { node  {$\textcolor{red}{\mu(I_{...})}$}  [sibling distance=0.8cm]
child { node {:} } child {node {:} }}
child { node  {$\textcolor{red}{\mu(I_{...})}$}  [sibling distance=0.8cm]
child { node {:} } child {node {:} }}  }}
child { node {$\mu(I_1)$}    [sibling distance=3cm]
child { node {$\mu(I_{10})$} [sibling distance=1.5cm]
child { node  {$\textcolor{red}{\mu(I_{...})}$}  [sibling distance=0.8cm]
child { node {:} } child {node {:} }}
child { node  {$\textcolor{red}{\mu(I_{...})}$}  [sibling distance=0.8cm]
child { node {:} } child {node {:} }}  } 
child { node {$\textcolor{red}{\mu(I_{111})}$} [sibling distance=1.5cm]
child { node  {$\textcolor{red}{\mu(I_{...})}$}  [sibling distance=0.8cm]
child { node {:} } child {node {:} }}
child { node  {$\mu(I_{111})$}  [sibling distance=0.8cm]
child { node {:} } child {node {:} }}  
}};
}
\end{tikzpicture}
\caption{{\bf Left:} surviving vertices after  sampling, and the coefficients used to compute  $\M_\mu(I_{0})$ and $ \M_\mu(I_{11})$.    {\bf Right:} The  capacity $\M_\mu$.}\label{figure2}
   \end{figure}
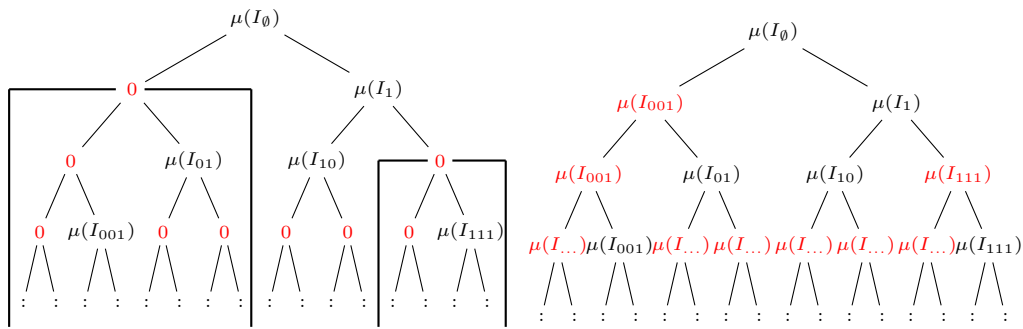

When $\mu$ is a Gibbs Capacity, we are going to  prove that the free energy $\tau_{\M_\mu}$ exists as a limit, and that it forms a Legendre pair with the singularity spectrum of $\M_\mu$. Nevertheless, we will see that the sampling   deeply modifies and complexifies  the  initial structure,  creating several phenomenological differences between $\mu$ and $\M_\mu$, both from thermodynamic and geometric viewpoints.    
  

\subsection{Statement of the main result for the random capacity $\M_{\mu}$ when $\mu$ is Gibbsian}

We  only consider capacities with full support, i.e.   $\mu(I_w)>0$ for all $w\in\Sigma^* $. 

\begin{definition}
Let  $\mu  \in \mathcal{C}([0,1]^d) $ with full support.  
For $x\in \zud$, the lower and upper  local dimensions of $\mu$ at $x$ are respectively defined as 
$$
{\underline \dim_\locloc}(\mu,x)=\liminf_{r \to 0^+}\frac{\log_2 \mu(B(x,r))}{ \log r}  \ \text{ and } \  {\overline \dim_\locloc}(\mu,x)=\limsup_{r\to 0^+}\frac{\log_2 \mu(B(x,r))}{ \log r }.
$$

When ${\underline \dim_\locloc}(\mu,x) = {\overline \dim_\locloc}(\mu,x)$,  their common value is denoted by $ {\dim_\locloc}(\mu,x)$.

For $H\in\R$, set
\begin{equation*}
\begin{split}
\underline E_\mu(H)&=\left \{x\in[0,1]^d: {\underline \dim_\locloc}(\mu,x)=H\right \},\\
 \overline E_\mu(H)&=\left \{x\in[0,1]^d: {\overline \dim_\locloc}(\mu,x)=H\right \},\\
  E_\mu(h)&= \underline E_\mu(H)\cap   \overline E_\mu(H).
  \end{split}
 \end{equation*}

\end{definition}

Recall that  the singularity   spectrum of $\mu$ is the mapping 
$$
D_\mu : \ H\in\R\longmapsto \dim \underline E_\mu(H).
$$

 The lower local dimension is distinguished with respect to $ {\overline \dim_\locloc}(\mu,x) $ or ${\dim_\locloc}(\mu,x)$, because it provides at any $x$ the best local    control   of the capacity  $\mu$.  Since $\mu$ is bounded, one has ${\dim_\locloc}(\mu,x)\ge 0$ at any $x$, hence $\underline E_\mu(H)=\emptyset=\overline E_\mu(H)$ for all $H<0$. 

\medskip

The multifractal formalism states that for every  capacity $\mu \in \mathcal{C}(\zu^d)$  with full support,
\begin{equation}
\label{formalism}
\dim \underline E_\mu(H)\le \tau_\mu^*(H):=\inf _{ q\in \R}\big (Hq-\tau_\mu(q)  \big ), \quad \forall\, H\in\R,
\end{equation}
see for instance \cite{BRMICHPEY,OLSEN}, which deal with measures, but easily extend to capacities. Recall that the multifractal formalism holds for $\mu$ at $H\in\R$ when  there is equality in \eqref{formalism}.   
\medskip

\begin{figure}
 
     \begin{tikzpicture}[xscale=1.8,yscale=2.4]{\tiny
\draw [->] (0,-0.2) -- (0,1.2) [radius=0.006] node [above] {$D_\mu(H)$};
\draw [->] (-0.2,0) -- (2.8,0) node [right] {$H$};
\draw [thick,  domain=1.9:5]  plot ({-(exp(\x*ln(1/5))*ln(0.2)+exp(\x*ln(0.8))*ln(0.8))/(ln(2)*(exp(\x*ln(1/5))+exp(\x*ln(0.8)) ) )} , {-\x*( exp(\x*ln(1/5))*ln(0.2)+exp(\x*ln(0.8))*ln(0.8))/(ln(2)*(exp(\x*ln(1/5))+exp(\x*ln(0.8))))+ ln((exp(\x*ln(1/5))+exp(\x*ln(0.8))))/ln(2)});
\draw [thick, domain=0:1.9]  plot ({-(exp(\x*ln(1/5))*ln(0.2)+exp(\x*ln(0.8))*ln(0.8))/(ln(2)*(exp(\x*ln(1/5))+exp(\x*ln(0.8)) ) )} , {-\x*( exp(\x*ln(1/5))*ln(0.2)+exp(\x*ln(0.8))*ln(0.8))/(ln(2)*(exp(\x*ln(1/5))+exp(\x*ln(0.8))))+ ln((exp(\x*ln(1/5))+exp(\x*ln(0.8))))/ln(2)});
\draw [thick, domain=1.9:5]  plot ({-( ln(0.2)+ ln(0.8))/(ln(2)) +(exp(\x*ln(1/5))*ln(0.2)+exp(\x*ln(0.8))*ln(0.8))/(ln(2)*(exp(\x*ln(1/5))+exp(\x*ln(0.8)) ) )} , {-\x*( exp(\x*ln(1/5))*ln(0.2)+exp(\x*ln(0.8))*ln(0.8))/(ln(2)*(exp(\x*ln(1/5))+exp(\x*ln(0.8))))+ ln((exp(\x*ln(1/5))+exp(\x*ln(0.8))))/ln(2)});
\draw [thick, domain=0:1.9]  plot ({-( ln(0.2)+ ln(0.8))/(ln(2)) +(exp(\x*ln(1/5))*ln(0.2)+exp(\x*ln(0.8))*ln(0.8))/(ln(2)*(exp(\x*ln(1/5))+exp(\x*ln(0.8)) ) )} , {-\x*( exp(\x*ln(1/5))*ln(0.2)+exp(\x*ln(0.8))*ln(0.8))/(ln(2)*(exp(\x*ln(1/5))+exp(\x*ln(0.8))))+ ln((exp(\x*ln(1/5))+exp(\x*ln(0.8))))/ln(2)});
  \draw[dashed] (0,1) -- (2.6,1);
 \draw [fill]  (-0,0.33)  circle  [radius=0.03] node [left] {$d(1-\eta)\ $}[dashed] (0,0.33) -- (2.6,0.33);
  \draw [fill]  (0,0.33)  circle  [radius=0.002]  ;
 \draw [fill]  (0.43,0.33)  circle  [radius=0.03]  ;
\draw[dashed] (0.43,0.33) -- (0.43,-0.3)  node [below]  {$H_\ell(\etal)$} (0.43,-0) [fill] circle [radius=0.03]  ;

\draw[fill] (0.93,0.89) circle [radius=0.03] [dashed] (0.93,0.89) -- (0.93,-0.0) [fill] circle [radius=0.03]  [dashed] (0.93,0.0) -- (0.93,-0.1)  node [below]  {$H_\ell(\weta)$};

 \draw [fill] (-0.1,-0.10)   node [left] {$0$}; 
 \draw [fill] (0,1) circle [radius=0.03] node [left] {$d \ $}; 
 \draw [thick] (0,0.33) -- (0.93,0.89);
 }
\end{tikzpicture}
 \    \begin{tikzpicture}[xscale=1.8,yscale=2.4]{\tiny
\draw [->] (0,-0.2) -- (0,1.2) [radius=0.006] node [above] {$D_\mu(H)$};
\draw [->] (-0.2,0) -- (2.8,0) node [right] {$H$};
\draw [thick, domain=-0.5:2]  plot ({-(2*exp(\x*ln(4/9))*ln(4/9)+exp(\x*ln(1/9))*ln(1/9))/(ln(3)*(2*exp(\x*ln(4/9))+exp(\x*ln(1/9)) ) )-0.06} , {\x*(-2* exp(\x*ln(4/9))*ln(4/9)-exp(\x*ln(1/9))*ln(1/9))/(ln(3)*(2*exp(\x*ln(4/9))+exp(\x*ln(1/9))))+ ln(2*(exp(\x*ln(4/9))+exp(\x*ln(1/9))))/ln(3)-0.31});

\draw [thick, domain=0:5]  plot ({-( ln(0.2)+ ln(0.8))/(ln(2)) +(exp(\x*ln(1/5))*ln(0.2)+exp(\x*ln(0.8))*ln(0.8))/(ln(2)*(exp(\x*ln(1/5))+exp(\x*ln(0.8)) ) )} , {-\x*( exp(\x*ln(1/5))*ln(0.2)+exp(\x*ln(0.8))*ln(0.8))/(ln(2)*(exp(\x*ln(1/5))+exp(\x*ln(0.8))))+ ln((exp(\x*ln(1/5))+exp(\x*ln(0.8))))/ln(2)});
  \draw[dashed] (0,1) -- (2.6,1);
  \draw[dashed] (0,1) -- (2.6,1);
 \draw [fill]  (-0,0.33)  circle  [radius=0.03] node [left] {$d(1-\eta)\ $}[dashed] (0,0.33) -- (2.6,0.33);
  \draw [fill]  (0,0.33)  circle  [radius=0.002]  ;
 
\draw  [fill] (0.72,0.45) circle [radius=0.03]  [dashed]   (0.72,0.45) -- (0.72,-0.4) (0.72,0)  [fill] circle [radius=0.03]  (0.72,-0.3) node [below] {$H_\ell(\etal)$};
\draw[fill] (1.05,0.92) circle [radius=0.03] [dashed] (1.05,0.92) -- (1.05,-0.0) [fill] circle [radius=0.03]  [dashed] (1.05,0.0) -- (1.05,-0.1)  node [below]  {$H_\ell(\weta)$};
 \draw [fill] (0,1) circle [radius=0.03] node [left] {$d \ $}; 
 \draw [thick] (0,0.33) -- (1.05,0.92);
 }
\end{tikzpicture}
 
\caption{Values of $H_\ell(\etal)$  and $H_\ell(\weta)$ depending on $D_\mu$ and $\eta$: {\bf Left:} when $D_\mu(H_{\min}) \leq d(1-\eta)$.  {\bf Right:} when $D_\mu(H_{\min}) >d(1-\eta)$.}
\label{fighmin}
\end{figure}
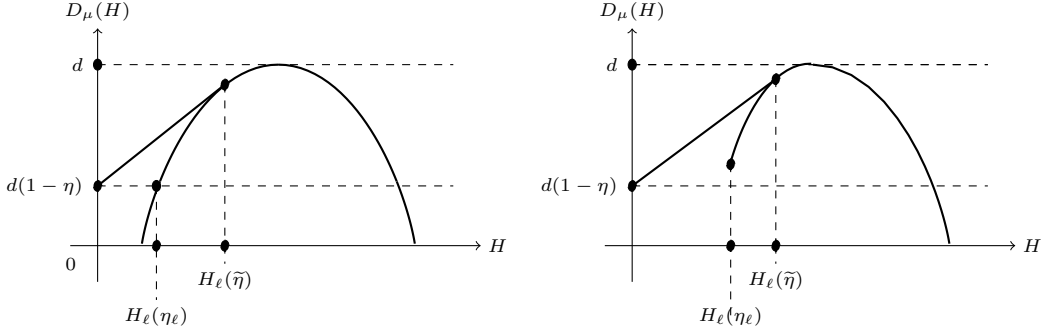


We consider a non-homogeneous Gibbs capacity  $\mu$, i.e. associated with a H\"older potential non cohomologous to a constant   (see  {Definition~\ref{Def 8} in} Section~\ref{GibbsConstruction} for a precise description). For such an object, the following statement gathers information deduced from the study of Gibbs measures and almost-additive potentials \cite{Ruelle,Collet,Rand,BRMICHPEY,H1,Feng-Lau,Fengmatrix}:
Let ${H_{\min}}=\tau_\mu'(+\infty) \, \leq \, H_s := \tau_\mu'(0) \, \leq  \, {H_{\max}}= \tau_\mu'(-\infty)$.
\begin{enumerate}
\item The   free energy  function $\tau_\mu$ is the limit of $(\tau_{\mu,j})_{j\ge 1}$ as $j\to+\infty$. The function $\tau_\mu$  is analytic, increasing,  and  strictly concave on $\R$.

\smallskip

\item The stricly concave function $\tau_\mu^*$ is non-negative on its domain of definition, namely $[H_{\min},H_{\max}]\subset \R^*_+$, and analytic on  $(H_{\min},H_{\max})$. It reaches its maximum at $H_s $, and $\tau_\mu^*(H_s)=d$.

\smallskip

\item For all $H\ge 0$, we have  $D_\mu(H)=\dim  E_\mu(H)=\dim \overline E_\mu(H)=\tau_\mu^*(H).$
\noindent
The multifractal formalism holds for $\mu$, and $(\tau_\mu,D_\mu)$ forms a Legendre pair.
\end{enumerate}


Let us describe our result  on the  random capacity ${\M_\mu}$ obtained after the sampling of $\mu$.  

For this, let us introduce some notations.
 \begin{definition} Let $\mu$  be a non-{homogeneous} Gibbs capacity.   {Given $\eta\in (0,1)$}, one introduces three exponents $H_\ell(\etal) $,  $ H_\ell(\weta) $ and  $\widetilde H_\ell(\weta) $, which depend on $\mu$ and $\eta$ only, by the following formulas:
\begin{itemize}
\item
$H_\ell(\etal) $ is defined as
\begin{equation*}
\label{defHmin}
H_\ell(\etal) =\min\{H\geq 0: D_\mu(H) \geq d(1-\eta)\}.  \text{ Then we set $q_{\eta_\ell}=D_\mu'(H_\ell(\etal))$}.
\end{equation*}

 \item
  $H_\ell(\weta)$  is  the (unique) real number such that the tangent to the graph of  $D_\mu$ at $(H_\ell(\weta),D_\mu(H_\ell(\weta))$ passes through $(0,d(1-\eta))$. Also let $q_\weta= D_\mu'(H_\ell(\weta))$.

\sk\item
Finally,   $\widetilde H_\ell(\weta)    =- \frac{\tau_\mu(q_\weta)}{q_\weta} $.
\end{itemize}
\end{definition}

See Figure \ref{fighmin} for an illustration.  {The origin and roles of the three exponents $H_\ell(\etal) $, $H_\ell(\weta)$ and $\widetilde H_\ell(\weta) $, as well as the notations themselves, will be explained  in Sections 4 and next.}
Observe that {these exponents} depend continuously on $D_\mu$ and $\eta$. 
\begin{figure}
    \begin{tikzpicture}[xscale=1.8,yscale=2.4]{\tiny
\draw [->] (0,-0.2) -- (0,1.2) [radius=0.006] node [above] {$D_\mu(H)$ };
\draw [->] (-0.2,0) -- (2.7,0) node [right] {$H$};
\draw [thick,  domain=1.9:5, color=purple]  plot ({-(exp(\x*ln(1/5))*ln(0.2)+exp(\x*ln(0.8))*ln(0.8))/(ln(2)*(exp(\x*ln(1/5))+exp(\x*ln(0.8)) ) )} , {-\x*( exp(\x*ln(1/5))*ln(0.2)+exp(\x*ln(0.8))*ln(0.8))/(ln(2)*(exp(\x*ln(1/5))+exp(\x*ln(0.8))))+ ln((exp(\x*ln(1/5))+exp(\x*ln(0.8))))/ln(2)});
\draw [thick, domain=0:0.6,color=red]  plot ({-(exp(\x*ln(1/5))*ln(0.2)+exp(\x*ln(0.8))*ln(0.8))/(ln(2)*(exp(\x*ln(1/5))+exp(\x*ln(0.8)) ) )} , {-\x*( exp(\x*ln(1/5))*ln(0.2)+exp(\x*ln(0.8))*ln(0.8))/(ln(2)*(exp(\x*ln(1/5))+exp(\x*ln(0.8))))+ ln((exp(\x*ln(1/5))+exp(\x*ln(0.8))))/ln(2)});
\draw [thick, domain=0.6:1.9,color=black]  plot ({-(exp(\x*ln(1/5))*ln(0.2)+exp(\x*ln(0.8))*ln(0.8))/(ln(2)*(exp(\x*ln(1/5))+exp(\x*ln(0.8)) ) )} , {-\x*( exp(\x*ln(1/5))*ln(0.2)+exp(\x*ln(0.8))*ln(0.8))/(ln(2)*(exp(\x*ln(1/5))+exp(\x*ln(0.8))))+ ln((exp(\x*ln(1/5))+exp(\x*ln(0.8))))/ln(2)});
 \draw [thick, domain=0:6,color=red]  plot ({-( ln(0.2)+ ln(0.8))/(ln(2)) +(exp(\x*ln(1/5))*ln(0.2)+exp(\x*ln(0.8))*ln(0.8))/(ln(2)*(exp(\x*ln(1/5))+exp(\x*ln(0.8)) ) )} , {-\x*( exp(\x*ln(1/5))*ln(0.2)+exp(\x*ln(0.8))*ln(0.8))/(ln(2)*(exp(\x*ln(1/5))+exp(\x*ln(0.8))))+ ln((exp(\x*ln(1/5))+exp(\x*ln(0.8))))/ln(2)});
  \draw[dashed] (0,1) -- (2.6,1);
  \draw[dashed] (0,0.33) -- (0.45,0.33) ; 
 \draw [fill]  (-0,0.33)  circle  [radius=0.03] node [left] {$d(1-\eta)\ $};
 \draw [fill]  (0,0.33)  circle  [radius=0.002]  ;
 \draw [fill]  (0.45,0.33)  circle  [radius=0.03]  ;
\draw[dashed] (0.45,0.33) -- (0.45,-0.3 )  node [below] {${H_\ell(\etal)}$}  (0.45,0) [fill] circle [radius=0.03] ; 
 \draw[fill] (0.93,0.89) circle [radius=0.03] [dashed] (0.93,0.89) -- (0.93,-0.0) [fill] circle [radius=0.03]  [dashed] (0.93,0.0) -- (0.93,-0.1)  node [below]  {$H_\ell(\weta)$};
 \draw [fill] (2.32,0.0)  circle [radius=0.03]  [dashed] (2.32,0.0) -- (2.32,-0.3)  node [below]  {$H_{\max}$};
  \draw [fill] (-0.1,-0.10)   node [left] {$0$}; 
 \draw [fill] (0,1) circle [radius=0.03] node [left] {$d \ $}; 
  \draw[fill,color=white] (1.41,0.89) circle [radius=0.03] [dashed] (1.41,0.89) -- (1.41,-0.0) [fill] circle [radius=0.03]  [dashed] (1.41,0.0) -- (1.41,-0.3)  node [below]  {$H_\ell(\weta)+{\Ht}_\ell(\weta)$};
}
\end{tikzpicture}    \       \begin{tikzpicture}[xscale=1.8,yscale=2.4]{\tiny
   \draw [->] (0,-0.2) -- (0,1.2) [radius=0.006] node [above] {$D_{\M_\mu}(H)$ };
\draw [->] (-0.2,0) -- (3.2,0) node [right] {$H$};
\draw [thick, domain=0.6:1.9,color=black]  plot ({-(exp(\x*ln(1/5))*ln(0.2)+exp(\x*ln(0.8))*ln(0.8))/(ln(2)*(exp(\x*ln(1/5))+exp(\x*ln(0.8)) ) )} , {(-0.33)-\x*( exp(\x*ln(1/5))*ln(0.2)+exp(\x*ln(0.8))*ln(0.8))/(ln(2)*(exp(\x*ln(1/5))+exp(\x*ln(0.8))))+ ln((exp(\x*ln(1/5))+exp(\x*ln(0.8))))/ln(2)});
\draw [thick, domain=0:0.6, color=red]  plot ({(0.47)-(exp(\x*ln(1/5))*ln(0.2)+exp(\x*ln(0.8))*ln(0.8))/(ln(2)*(exp(\x*ln(1/5))+exp(\x*ln(0.8)) ) )} , {-\x*( exp(\x*ln(1/5))*ln(0.2)+exp(\x*ln(0.8))*ln(0.8))/(ln(2)*(exp(\x*ln(1/5))+exp(\x*ln(0.8))))+ ln((exp(\x*ln(1/5))+exp(\x*ln(0.8))))/ln(2)});
 \draw [thick, domain=0:6,color=red]  plot ({(0.47)-( ln(0.2)+ ln(0.8))/(ln(2)) +(exp(\x*ln(1/5))*ln(0.2)+exp(\x*ln(0.8))*ln(0.8))/(ln(2)*(exp(\x*ln(1/5))+exp(\x*ln(0.8)) ) )} , {-\x*( exp(\x*ln(1/5))*ln(0.2)+exp(\x*ln(0.8))*ln(0.8))/(ln(2)*(exp(\x*ln(1/5))+exp(\x*ln(0.8))))+ ln((exp(\x*ln(1/5))+exp(\x*ln(0.8))))/ln(2)});
  \draw[dashed] (0,1) -- (2.6,1);
\draw  [fill, dashed]  (0.45,-0.) --(0.45,-0.3 )  node [below] {${H_\ell(\etal)}$} (0.45,-0) circle [radius=0.03];
 \draw[fill] (0.93,0.56) circle [radius=0.03] [dashed] (0.93,0.56) -- (0.93,-0.0) [fill] circle [radius=0.03]  [dashed] (0.93,0.0) -- (0.93,-0.1)  node [below]  {$H_\ell(\weta)$};
 \draw[fill] (1.41,0.89) circle [radius=0.03] [dashed] (1.41,0.89) -- (1.41,-0.0) [fill] circle [radius=0.03]  [dashed] (1.41,0.0) -- (1.41,-0.3)  node [below]  {$H_\ell(\weta)+{\Ht}_\ell(\weta)$};
  \draw [fill] (2.79,0.0)  circle [radius=0.03]  [dashed] (2.79,0.0) -- (2.79,-0.3)  node [below]  {$H_{\max}+{\Ht}_\ell(\weta)$};
 \draw [thick,color=blue]    (0.93,0.56) -- (1.41,0.89); 
\draw [dotted,color=blue]   (0,0) --  (0.93,0.57) ;
 \draw [fill] (-0.1,-0.10)   node [left] {$0$}; 
 \draw [fill] (0,1) circle [radius=0.03] node [left] {$d \ $}; 
   } 
\end{tikzpicture} 
\caption{{\bf Case $D_\mu(H_{\min}) \leq d(1-\eta)$}:   {\bf Left:} singularity spectrum of $\mu$. {\bf Right:} Almost sure singularity spectrum of  $\M_\mu$. The parts drawn with same color are translated copies of each other. One sees that the left part $H\leq H_\ell(\weta)$ of the spectrum of $\mu$ (drawn in purple) does not appear in the singularity spectrum of $\M_\mu$, and   a linear part appears in $D_{\M_\mu}$ which was not present in~$D_\mu$. Observe that the slope of $D_{\M_\mu}$ at $H_\ell(\etal)$ is  finite.} 
\label{figure6}
\end{figure}
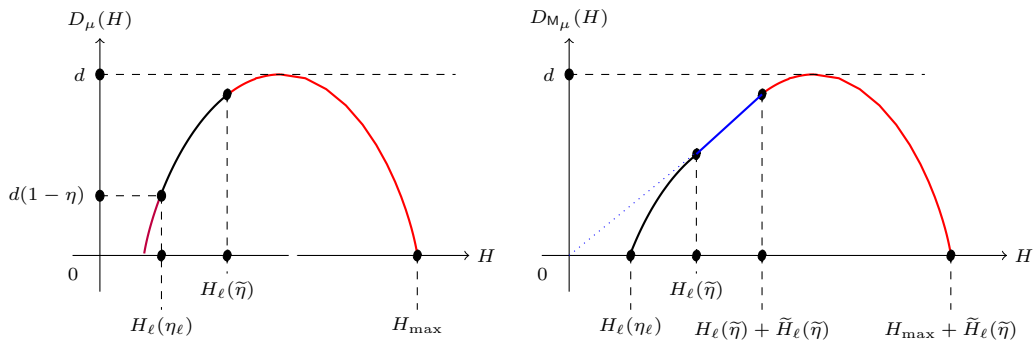

\begin{theorem}\label{thm-0} Let $\mu$ be a {non-homogeneous} Gibbs capacity on $\zu^d$.    
Let  $0<\eta<1$ be a sampling parameter.
With probability 1:
\begin{enumerate}
\sk\item
The singularity spectrum of $\M_\mu$ reads:
$$
D_{\M_\mu}(H)=
\begin{cases}\sk\sk
\ {D_\mu}(H)-d(1-\eta) &\text{when }  \ \ \ \ \ \  \ \    \ \ \ \,   \ H_\ell(\etal)  \le H \le H_\ell(\weta) ,\\ \sk\sk
 \ q_{\weta}\cdot  H&\text{when }  \ \ \ \ \ \ \ \  \ \ \ \ \, H_\ell(\weta)\le  H \leq  H_\ell(\weta)+\Ht_\ell (\weta) ,\\\sk\sk
 \ D_\mu \big (H-\Ht_\ell(\weta) \big)&\text{when } \ \ \  H_\ell(\weta)+\Ht_\ell(\weta) \le  H \leq   H_{\max}+\Ht_\ell(\weta),\\ \sk\sk
 \ -\infty&\text{otherwise}.
\end{cases}
$$

   \sk\item
  The free energy function of $\M_\mu$ is  is the limit of $(\tau_{M_\mu,j})_{j\ge 1}$ as $j\to\infty$, and $ (\M_\mu,\tau_{\M_\mu})$ forms a Legendre pair. One has
$$
 \tau_{\M_\mu}(q)=
\begin{cases}\sk
\ \tau_\mu(q)+  \Ht_\ell(\weta)\cdot  q&\text{when } \ \  q\le q_{\weta},\\\sk
\ \tau_\mu(q)+d(1-\eta)&\text{when } \ \  q_\weta  <q< q_{\eta_\ell},  \\ 
 \ H_\ell(\etal) \cdot  q&\text{when  \ \  $q_{\eta_\ell}<+\infty$ and  $q\ge q_{\eta_\ell}$}.
\end{cases}
$$

\sk\item {For all $H\ge H_\ell(\weta)+\widetilde H_\ell(\weta)$, 
$$
\dim  E_{\M_\mu}(H)=\dim\overline E_{\M_\mu}(H)= \begin{cases}
D_\mu \big(H-\Ht_\ell (\weta) \big)\!\!\!\!&\text{if }   H_\ell(\weta)+\Ht_\ell(\weta) \le  H \leq   H_{\max}+\Ht_\ell(\weta),\\
-\infty\!\!\!\!&\text{if } H>H_{\max}+\Ht_\ell(\weta).
\end{cases}
$$}
\end{enumerate}
\end{theorem}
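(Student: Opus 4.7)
The proof rests on a precise two-sided estimate of $\M_\mu(I_j(x))$ in terms of the local H\"older behavior of $\mu$ at $x$. Using the almost-multiplicative (quasi-Bernoulli) property of the Gibbs capacity, the subtree rooted at $x_{|j}$ is, after renormalization by $\mu(I_{x_{|j}})$, statistically comparable to the whole tree: by large deviations, the number of vertices $v$ of generation $j+k$ beneath $x_{|j}$ with $\mu(I_v)/\mu(I_{x_{|j}})\asymp 2^{-kH'}$ behaves like $2^{kD_\mu(H')}$. Independence of the Bernoulli variables $(p_w)$ gives that the expected number of surviving such vertices is $2^{kD_\mu(H')-d(1-\eta)(j+k)}$. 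Minimizing $H'/(D_\mu(H')-d(1-\eta))$ over admissible $H'$ locates the optimum at $H'=H_\ell(\weta)$ (the tangency condition from $(0,d(1-\eta))$ to the graph of $D_\mu$), with corresponding shift $d(1-\eta)/q_\weta=\widetilde H_\ell(\weta)$. Informally, for a $\mu_{q_0}$-typical point~$x$,
\[
\M_\mu(I_j(x))\asymp\mu(I_{x_{|j}})\cdot 2^{-j\widetilde H_\ell(\weta)}=2^{-j(H_0+\widetilde H_\ell(\weta))},
\]
where $H_0=\underline\dim_\locloc(\mu,x)$. This drives the shift present in the third branch of $D_{\M_\mu}$.

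\textbf{Proofs of (1) and (2).} The third branch ($H\in[H_\ell(\weta)+\widetilde H_\ell(\weta),H_{\max}+\widetilde H_\ell(\weta)]$) is realized on the auxiliary Gibbs measure $\mu_{q_0}$ with $q_0=D_\mu'(H-\widetilde H_\ell(\weta))$, which is carried by $E_\mu(H-\widetilde H_\ell(\weta))$ and has dimension $D_\mu(H-\widetilde H_\ell(\weta))$. The linear branch on $[H_\ell(\weta),H_\ell(\weta)+\widetilde H_\ell(\weta)]$ arises from a ubiquity argument in which the relative depth $\kappa j$ of the best survivor varies; the dimension $q_\weta H$ is the slope of the tangent from $(0,d(1-\eta))$ and corresponds to covering the $\limsup$ set $\{x:\exists v\supset x_{|j},\ |v|=j+\kappa j,\ p_v=1,\ \mu(I_v)\asymp 2^{-jH}\}$ by Borel--Cantelli and random covering estimates. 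The first branch on $[H_\ell(\etal),H_\ell(\weta)]$ corresponds to points $x$ whose subtree contains a survivor at very small relative depth carrying mass $\asymp\mu(I_{x_{|j}})$: this is again a limsup-ubiquity set whose dimension is the multifractal dimension $D_\mu(H)$ minus the codimension $d(1-\eta)$ induced by the survival rule. For (2), a Varadhan-type computation of $\tau_{\M_\mu}(q)=\lim_j-\tfrac{1}{j}\log_2\sum_{|w|=j}\M_\mu(I_w)^q$ yields the three-branch formula regime by regime, and matching against (1) gives the Legendre pair property.

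\textbf{Proof of (3) and the main obstacle.} Fix $H\in[H_\ell(\weta)+\widetilde H_\ell(\weta),H_{\max}+\widetilde H_\ell(\weta)]$, set $H_0=H-\widetilde H_\ell(\weta)\in[H_\ell(\weta),H_{\max}]$ and $q_0=D_\mu'(H_0)$. The key step is to show that for $\mu_{q_0}$-a.e.~$x$,
\[
\lim_{j\to\infty}\frac{-\log_2\M_\mu(I_j(x))}{j}=H,
\]
i.e.\ a two-sided control. This places $\mu_{q_0}$-a.e.~$x$ in $E_{\M_\mu}(H)\cap\overline E_{\M_\mu}(H)$, so both sets have Hausdorff dimension at least $D_\mu(H_0)$. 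The matching upper bounds come from $E_{\M_\mu}(H)\subset\underline E_{\M_\mu}(H)$ together with (1) for $\dim E_{\M_\mu}(H)$, and from the multifractal inequality $\dim\overline E_{\M_\mu}(H)\le\tau_{\M_\mu}^*(H)=D_\mu(H-\widetilde H_\ell(\weta))$ via (2) for $\dim\overline E_{\M_\mu}(H)$ in this strictly concave regime of the spectrum. The emptiness statement for $H>H_{\max}+\widetilde H_\ell(\weta)$ reduces to the a.s.~uniform lower bound $\M_\mu(I_j(x))\ge 2^{-j(H_{\max}+\widetilde H_\ell(\weta)+\epsilon)}$ valid for all $x$ and $j$ large, obtained by a first-moment/Borel--Cantelli estimate applied uniformly over the $2^{dj}$ cubes at generation $j$. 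The main obstacle lies in the two-sided concentration: one requires, jointly in $j$ and $x$, that (i) a survivor of mass $\ge 2^{-j(H+\epsilon)}$ exists in the subtree of $I_j(x)$ at the optimal depth $k_*\sim j\widetilde H_\ell(\weta)/H_\ell(\weta)$, and (ii) \emph{no} survivor in this subtree has mass $\ge 2^{-j(H-\epsilon)}$. Both tail probabilities must be exponentially small in $j$ and summable after a union bound weighted by $\mu_{q_0}(I_j)$, which forces sharp Chernoff-type estimates with exponent exactly $q_\weta$ --- sharper than the first-moment bounds sufficient for (1) --- and requires controlling jointly the density of good survivors and the upper tail of the $\mu$-masses in each subtree.
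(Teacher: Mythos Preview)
Your broad strategy matches the paper's: the quasi-Bernoulli decomposition of survivors, the tangency optimization singling out $H_\ell(\weta)$ and $q_\weta$, auxiliary Gibbs measures for the right branch, and heterogeneous ubiquity for the left and middle branches. Two points deserve comment, one a different route and one a genuine gap.

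\textbf{Upper bound for $\dim\overline E_{\M_\mu}(H)$ in (3).} You invoke $\dim\overline E_{\M_\mu}(H)\le\tau_{\M_\mu}^*(H)$ as a ``multifractal inequality.'' This is \emph{not} the standard formalism inequality (which is for $\underline E$); it is nonetheless true, but you must argue it separately (for $H\le\tau_{\M_\mu}'(0)$ use $\overline E\subset\underline E^{\le}$ and a positive-moment bound; for $H>\tau_{\M_\mu}'(0)$ use $\overline E\subset\overline E^{\ge}$ and a negative-moment bound). Your route then requires (2) before (3). The paper instead reduces directly to $\mu$: the pointwise bound $\overline\dim(\M_\mu,x)\le\overline\dim(\mu,x)+\widetilde H_\ell(\weta)$ (the ``existence'' half of the concentration, from Proposition~\ref{p'2} with $\eta'=\weta$) gives $\overline E_{\M_\mu}(H)\subset\overline E_\mu^{\ge}(H-\widetilde H_\ell(\weta))$, whose dimension is known. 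For $H<H_s+\widetilde H_\ell(\weta)$ the paper needs a further split according to $\underline\dim(\mu,x)$; your route avoids this splitting at the cost of a logical reordering.

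\textbf{The ``no-large-survivor'' direction.} Here there is a real gap. First, your use of $\mu_{q_0}$ fails at the endpoint $H=H_{\max}+\widetilde H_\ell(\weta)$ since no Gibbs measure sits on $E_\mu(H_{\max})$; the paper constructs $\nu_{H_{\max}}$ by a concatenation scheme (Section~\ref{sec5proof}). Second, and more importantly, your claim that one needs ``Chernoff-type estimates with exponent exactly $q_\weta$, sharper than first-moment bounds'' is misleading. What the paper actually does (Proposition~\ref{discretization} and Theorem~\ref{cantor}) is a \emph{first}-moment/union bound on the event $\mathscr C(N,J,W)$ that some survivor in the subtree of $W\in\Sigma_J$ at depth $j\in[J/\eta,J/\eta_0]$ has $\eta'$-tail mass exceeding $2^{-J(\widetilde H_\ell(\weta_{N,k})+\epsilon_N)}$, over a discretized grid $\weta_{N,k}$ of values of $\eta'$; one gets $\mathbb P(\mathscr C(N,J,W))\le 2^{-J\epsilon_N}$ uniformly in $W$, and a Borel--Cantelli weighted by $\nu_\alpha(I_W)$ (made uniform in $\alpha$ via H\"older dependence of $\mu_\alpha$ on $\alpha$) shows $\nu_\alpha$-a.e.\ $x$ eventually avoids $\mathscr C(N,J,x_{|J})$. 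The exponent $q_\weta$ enters only through the geometric optimization defining $\widetilde H_\ell(\weta)$, not through any probability tail; no second-moment sharpening is needed. Your description obscures the essential mechanism, which is the discretization over $\eta'$ combined with the definition of $\widetilde H_\ell$ as a minimum.
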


\subsection{Comments}\label{comments}

\mk
$\bullet$
It is quite easy to see that the lower local dimension  of $\M_\mu$  at any $x$ must be greater than $H_\ell(\etal)$ (see Lemma \ref{lempointgauche}). 
It is much more involved to define and to understand the role  of the other parameters.

 \begin{figure}\begin{tikzpicture}[xscale=1,yscale=1]
{\small
\draw [->] (0,-2.8) -- (0,1.5) [radius=0.006] node [above] {$\tau_\mu(q)$};
\draw [->] (-1.,0) -- (3.7,0) node [right] {$q$};
 \draw [thick, domain=1.7:3.5, color=brown]  plot ({\x},  {-ln(exp(\x*ln(1/5)) +exp(\x*ln(0.8)))/(ln(2)});
 \draw [thick, domain=0.6:1.7, color=black]  plot ({\x},  {-ln(exp(\x*ln(1/5)) +exp(\x*ln(0.8)))/(ln(2)});
\draw [thick, domain=-0.8:0.6, color=red ]  plot ({\x},  {-ln(exp(\x*ln(1/5)) +exp(\x*ln(0.8)))/(ln(2)});
 
 \draw [dotted] (0.6,0)-- (0.6,-1.5) [fill] (0.6,-1.6) node [below] {${q_\weta}$}; 
 \draw [dotted] (1.7,0.4)-- (1.7,-2.2) [fill] (1.7,-1.6) node [below] {$q_{\eta_\ell}$}; 
 
  \draw[dashed] (0,0) -- (2.8,1);
 
\draw [fill] (-0.1,-0.20)   node [left] {$0$}; 
\draw [fill] (-0,-1) circle [radius=0.03]  node [left] {$-d$ \ }; 
}
\end{tikzpicture}    \ \hspace{5mm} \ 
\begin{tikzpicture}[xscale=1,yscale=1]{\small
\draw [->] (0,-2.8) -- (0,1.5) [radius=0.006] node [above]  {$ \tau_{\M_\mu}(q)$};
\draw [->] (-1.,0) -- (3.7,0) node [right] {$q$};
  
 \draw [dotted] (0.6,0)-- (0.6,-1.5) [fill] (0.6,-1.6) node [below] {${q_\weta}$}; 
 \draw [dotted] (1.7,0.7)-- (1.7,-2.2) [fill] (1.7,-1.6) node [below] {$q_{\eta_\ell}$}; 
 
  \draw [dashed, domain=1.7:3.5, color=brown]  plot ({\x},  {(0.32)-ln(exp(\x*ln(1/5)) +exp(\x*ln(0.8)))/(ln(2)});
 \draw [thick, domain=0.6:1.7, color=black]  plot ({\x},  {(0.32)-ln(exp(\x*ln(1/5)) +exp(\x*ln(0.8)))/(ln(2) });
\draw [thick, domain=-0.6:0.6, color=red ]  plot ({\x},  {((0.55)*\x)-ln(exp(\x*ln(1/5)) +exp(\x*ln(0.8)))/(ln(2)});
 \draw [thick, domain=1.7:3.5, color=brown]  plot ({\x},  {(0.74/1.7)*\x });
  \draw[dashed] (0,0) -- (1.7,0.74);
 
 \draw [<-] (0.7,-0.1) -- (2.5,-1) node [right] {Phase};
 \draw [<-] (1.8,0.6) -- (2.5,-1);
 \draw[fill] (2.5,-1.4)   node   [right] {transitions};

\draw [fill] (-0.1,-0.20)   node [left] {$0$}; 
\draw [fill] (-0,-1) circle [radius=0.03]  node [left] {$-d$ \ }; 
}
\end{tikzpicture} 
\caption{{\bf Case $D_\mu(H_{\min}) \leq d(1-\eta)$}:   {\bf Left:} Free energy  $\tau_\mu$ of  $\mu$; {\bf Right:} Free energy  $ \tau_{\M_\mu}$ of $\M_\mu$.}
\label{figtau}
\end{figure}
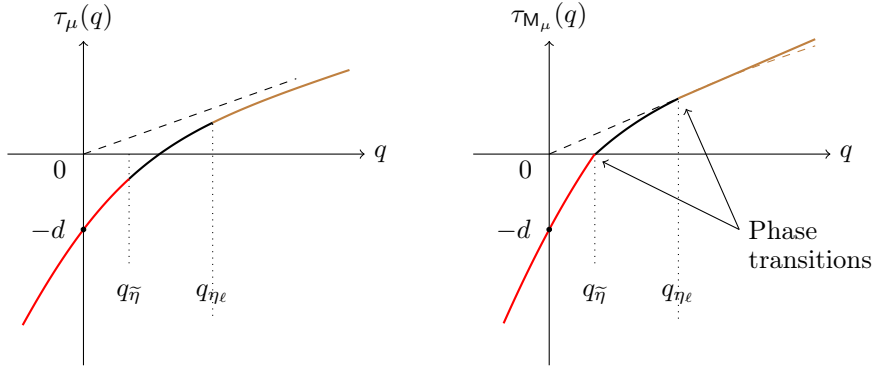

\mk
$\bullet$
From the  free energy function $\tau_{\M_\mu}$, one   recovers the initial free energy $\tau_\mu$, except for $q\geq  q_{\eta_\ell}$. Similarly,  one   recovers $D_\mu$ from $D_{\M_\mu}$ for $H\geq \widetilde H_\ell(\etal)$. In this sense, the sampling procedure implies a loss of information on the   local dimensions, since the values of the singularity spectrum $D_\mu(H)$ are ``lost'' when $H< H_\ell(\etal)$.

\mk
 {$\bullet$ The  singularity spectra associated with the level sets $E_{\M_{\mu}}(H)$ or $\overline E_{\M_{\mu}}(H)$  are just   translated from  $D_\mu$ over $[H_\ell(\weta)+\widetilde H_\ell(\weta),H_{\max}+\widetilde H_\ell(\weta)]$.  In fact, $D_\mu(\cdot-\widetilde H_\ell(\weta))$ is still a lower bound for these spectra over $[H_{\min}+\widetilde H_\ell(\weta),H_\ell(\weta)+\widetilde H_\ell(\weta))$, but, whether $D_\mu(\cdot-\widetilde H_\ell(\weta))$ is a sharp  upper bound in this case or not, remains an open question  (see Remark~\ref{remmin}).  }

\mk
$\bullet$
The thermodynamic and geometric phase transitions mentioned earlier can now be made more precise. The free energy $\tau_{\M_{\mu}}$ is not differentiable at $ q_\weta$, and it differentiable but not twice differentiable at  $ q_{\eta_\ell}$  when $d(1-\eta) >D_\mu(H_{\min})$. Moreover, $\tau_{\M_\mu}$ is analytic outside these singularities. In the thermodynamics language,  $\tau_{\M_\mu}$ presents a first order phase transition at the inverse temperature $ q_\weta $, and a second order phase transition at the inverse temperature $q_{\eta_\ell}$ whenever  $d(1-\eta) > {D_\mu(H _{\min})}$. 

\sk

Let us mention that the study of phase transitions for  weak Gibbs measures associated with continuous potentials, started with \cite{Ruelle,Hof}, is still an active domain of research \cite{Sar,IT,BL1,BL2,FFW01,FO}.

\mk
$\bullet$
In most of the usual situations, upper bounds for dimensions of ``fractal sets'' are easily deduced from covering arguments, and lower bounds are more difficult to derive. The   structure  of $\M_\mu$, combining random and dynamical phenomena,  makes  both the derivation of the sharp upper bound {\em and}  lower bound  for $D_{\M_\mu}$ delicate.

It is too soon in the paper to give an intuition of   the proofs. Let us only say that they follow from a careful analysis of the distribution and the scaling behavior (with respect to $\mu$) of the surviving vertices. Also,  results on large deviations for Gibbs measures, heterogeneous mass transference principles (which combines ergodic and approximation theories) and percolation theory, are involved.

\mk
$\bullet$
One may also want to describe  the asymptotical statistical distribution of $\M_\mu  $ through the notion of large deviations, as is often the case in statistical physics.
\begin{definition} 
\label{defsmu}
Let $\mu\in  \mathcal{C}([0,1]^d) $ with full support.
For every set $I\subset \R^+$, and every integer $j\geq 1$, set
$$\mathcal{E}_\mu(j,I) =
  \left\{ w\in \Sigma_j: \frac{\log_2 \mu(I_w)}{ {-j}} \in
I \right\}.$$
If $H\geq 0$ and $\ep>0$, we introduce the notation 
$$\mathcal{E}_\mu(j,H\pm \ep) =
  \left\{ w\in \Sigma_j: \frac{\log_2 \mu(I_w)}{ {-j}} \in
[H-\ep,H+\ep] \right\}.$$
Then, the lower and upper large deviations spectra of $\mu$ are respectively
\begin{eqnarray*} 
\underline {f_\mu}(H)&= &  \lim_{\ep\to 0}\liminf_{j\to+\infty} \frac{\log_2 \#\mathcal{E}_\mu(j,H\pm\ep) }{j}  \\
\mbox{ and } \ \  \   \overline {f_\mu}(H)  & =  & \lim_{\ep\to 0}\limsup_{j\to + \infty} \frac{\log_2 \#\mathcal{E}_\mu(j,H\pm\ep)}{j}.
\end{eqnarray*}
 \end{definition}

Heuristically, one should have in mind that the number of words of length $j$ satisfying $\mu(I_w) \sim 2^{-jH}$ is between $2^{j \underline {f_\mu}(H)}$ and $2^{j \overline {f_\mu}(H)}$. 
Next theorem states that $\M_\mu$ behaves nicely with respect to the large deviations theory, as the Gibbs capacity $\mu$ does.
\begin{theorem}
\label{thm-2}
Under the same assumptions as in Theorem \ref{thm-0}, with probability 1,  we have 
$$ \mbox{ for all $H\ge 0$, } \ \ \  \ \underline f_{\M_\mu}(H)=\overline f_{\M_\mu}(H)=D_{\M_\mu}(H).$$ 
\end{theorem}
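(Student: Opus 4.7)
The plan is to prove $\overline f_{\M_\mu}(H) \leq D_{\M_\mu}(H) \leq \underline f_{\M_\mu}(H)$; combined with the trivial inequality $\underline f_{\M_\mu}(H) \leq \overline f_{\M_\mu}(H)$, these yield the statement.

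First I would derive the upper bound from the classical multifractal-formalism estimate for large deviation spectra. For any $q\in \R$ and $\ep>0$,
$$
\sum_{w \in \mathcal{E}_{\M_\mu}(j, H\pm\ep)} \M_\mu(I_w)^q \ \leq \ \sum_{w\in \Sigma_j,\, \M_\mu(I_w)>0} \M_\mu(I_w)^q \ = \ 2^{-j \tau_{\M_\mu,j}(q)} .
$$
For $q\ge 0$ each summand on the left is at least $2^{-jq(H+\ep)}$, and for $q\le 0$ it is at least $2^{-jq(H-\ep)}$, giving Chebyshev-type bounds on $\#\mathcal{E}_{\M_\mu}(j, H\pm\ep)$. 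Letting $j\to\infty$ and invoking the convergence $\tau_{\M_\mu,j}\to \tau_{\M_\mu}$ granted by Theorem~\ref{thm-0}(2), then optimizing over $q$ and sending $\ep\to 0$ via the continuity of $\tau_{\M_\mu}^*$ on its domain, I obtain $\overline f_{\M_\mu}(H) \leq \tau_{\M_\mu}^*(H) = D_{\M_\mu}(H)$, the last equality being the Legendre pair property of Theorem~\ref{thm-0}(2).

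The lower bound $\underline f_{\M_\mu}(H) \geq D_{\M_\mu}(H)$ I would extract from the proof of Theorem~\ref{thm-0}. The lower bound on $\dim\underline E_{\M_\mu}(H)$ is proved by producing, at each generation $j$, an exponentially large family of dyadic words $w$ with $\M_\mu(I_w)\approx 2^{-jH}$, and then building a Billingsley-type measure on the associated Cantor-like set. In the range $H\in[H_\ell(\etal),H_\ell(\weta)]$ the relevant words are the surviving elements of $\mathcal{E}_\mu(j,H\pm \ep)$; since $\#\mathcal{E}_\mu(j,H\pm\ep)\geq 2^{jD_\mu(H) - o(j)}$ by large deviations for the Gibbs capacity $\mu$, and each such word survives independently with probability $2^{-jd(1-\eta)}$, a second moment argument produces at least $2^{j(D_\mu(H)-d(1-\eta)) - o(j)} = 2^{jD_{\M_\mu}(H) - o(j)}$ survivors with high probability. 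In the ranges $H\in[H_\ell(\weta),H_\ell(\weta)+\Ht_\ell(\weta)]$ and $H\ge H_\ell(\weta)+\Ht_\ell(\weta)$, one counts words $w$ at generation $j$ whose maximizing surviving descendant $wv$ has prescribed depth $|wv|$ and prescribed $\mu$-mass, using the same pigeon-hole decomposition over possible depths, large deviations for $\mu$, and percolation-type estimates on the set of survivors beneath $[w]$ that already underlie the Hausdorff lower bound in Theorem~\ref{thm-0}.

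The main technical obstacle lies in the middle regime $H\in[H_\ell(\weta),H_\ell(\weta)+\Ht_\ell(\weta)]$, where $\M_\mu(I_w)$ is realized by a descendant lying deep inside the subtree rooted at $w$. Counting generation-$j$ words $w$ with $\M_\mu(I_w)\approx 2^{-jH}$ demands simultaneous control of the depth of the maximizing descendant and of its $\mu$-mass, which requires almost-sure large deviations for $\mu$ uniform over subtrees together with sharp concentration estimates for the number of survivors at each generation within $[w]$. These are precisely the ingredients developed in Theorem~\ref{thm-0}; once the resulting estimate $\#\mathcal{E}_{\M_\mu}(j,H\pm\ep)\geq 2^{jD_{\M_\mu}(H) - o(j)}$ is extracted from that proof, Theorem~\ref{thm-2} follows.
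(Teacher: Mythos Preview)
Your upper bound $\overline f_{\M_\mu}(H)\le \tau_{\M_\mu}^*(H)=D_{\M_\mu}(H)$ via the Chebyshev/Legendre inequality is fine and is exactly what the paper uses.

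The lower bound, however, cannot be ``extracted from the proof of Theorem~\ref{thm-0}'' in the way you describe, because that proof does \emph{not} proceed by exhibiting at each generation an exponentially large family of words $w$ with $\M_\mu(I_w)\approx 2^{-jH}$ and then building a Billingsley measure. The Hausdorff lower bounds in Sections~\ref{sharplow1}--\ref{sharplow3} are obtained through heterogeneous ubiquity theory (Theorem~\ref{ubi}) and the measures of Theorem~\ref{cantor}; no counting of $\mathcal E_{\M_\mu}(j,H\pm\ep)$ appears there. So your plan to read the large-deviations lower bound out of that argument is not well founded.

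The paper gives instead an independent counting argument (Section~\ref{sec93}), whose key device is different from what you outline. One does \emph{not} try to control the actual maximizing descendant of $I_w$; one only produces many words with a \emph{one-sided} bound $\M_\mu(I_w)\ge 2^{-j(H+\ep)}$. Concretely, for each $H$ one picks the relevant scale parameter (either $\eta_H$ with $H=H_\ell(\eta_H)$, or $\widetilde\eta$), counts words $W$ at generation $\lfloor\eta' j\rfloor$ with $\mu(I_W)\gtrsim 2^{-\lfloor\eta' j\rfloor (H_\ell(\eta')+\ep)}$ by the large deviations for $\mu$, and then uses Proposition~\ref{p'2} to place inside each $I_W$ a surviving vertex $w\in\mathcal S_j(\eta,W)\cap\Tl(j,\eta',\ep_j^3)$; quasi-multiplicativity then gives $\M_\mu(I_w)\ge \mu(I_w)\gtrsim 2^{-j(H+2\ep)}$. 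For the middle regime one simply looks at the intermediate ancestor at generation $j'=\lfloor j H_\ell(\widetilde\eta)/H\rfloor$. This yields
\[
\liminf_{j\to\infty}\frac{1}{j}\log_2\#\mathcal E_{\M_\mu}(j,[0,H+2\ep])\ \ge\ D_{\M_\mu}(H),
\]
and the two-sided statement $\underline f_{\M_\mu}(H)\ge D_{\M_\mu}(H)$ follows by a monotonicity argument against Proposition~\ref{chernov} applied to $\M_\mu$ (if the mass concentrated at some $H'<H$, it would violate $\limsup_j j^{-1}\log_2\#\mathcal E_{\M_\mu}(j,[0,H'])\le \tau_{\M_\mu}^*(H')$). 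Your second-moment idea on $\mathcal S_j(\eta)\cap\mathcal E_\mu(j,H\pm\ep)$ would indeed work for the left range (it is essentially Proposition~\ref{p1prime}), but for the linear part your stated difficulty---``simultaneous control of the depth of the maximizing descendant and of its $\mu$-mass''---is a detour: the one-sided construction plus the Chernov-type monotonicity avoids it entirely.
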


\subsection{Conclusion and further perspectives}

The hierarchical structure of  the initial capacity $\mu$ is so robust  that, although we greatly sample it, the remaining coefficients still possess a rich structure, especially in terms of  scaling properties and  multifractal formalism. For instance, one consequence of Theorem \ref{thm-0} is that no matter how close to 0 $\eta $ is (i.e. even if only a very small {logarithmic} proportion of vertices are kept), it is always possible to reconstruct from   the knowledge of $\tau_{\M_\mu}$  all the dimensions of the set of points with local dimension greater than $H_s$ (one can even show that when $\eta=0$,  one has $\underline E_{\M_\mu}(H)=  E_{\M_\mu}(H)=\overline E_{\M_\mu}(H)=\emptyset $ if $H<{H_s}$, $\dim\underline E_{\M_\mu}(H)=\dim  E_{\M_\mu}(H)=\dim\overline E_{\M_\mu}(H)=0$ if  $H\ge {H_s}$,  and $\dim E_{\M_\mu}(+\infty)=d$).

This phenomenon is remarkable, since at the same time,  most of  the information on the dimensions of the set of points with local dimension smaller than $H_s$ is lost. This asymmetry was, at least from our point of view, unexpected.

\mk 

Let us finish with some  perspectives:
\begin{itemize} 

\mk\item
A remaining question concerns the possible reconstruction of the Gibbs tree at the critical parameter 1/2 (see Section~\ref{recovering}).  

\mk\item It is natural to expect our result to extend to capacities obtained after sampling of  branching random walks. 

\mk\item
Instead of starting by assigning the value $\mu(I_w)$ at every node $w \in \Sigma_j$, one could give the value $\mu(I_{w_{|\lfloor j\rho\rfloor}})$ with $\rho <1$. This creates redundancy in the dyadic tree, which may balance the sparsity associated with the sampling process and provide different behaviors than those exhibited in Theorem~\ref{thm-0}.  

\mk\item
Other sampling procedures  can be investigated. In particular, one would like to allow correlations between the  $p_w$,  or make $\eta$ depend on the vertex $w$. One may also multiply $\mu(I_w)$ by some positive  random variable when $p_w=1$. Other interesting phase transitions phenomena will certainly occur.

\mk\item
{It is tempting to  iterate the sampling process by applying it to $\M_\mu$. Unfortunately our analysis does not apply to $\M_\mu$ any more, since $\M_\mu$ is not a Gibbs capacity in the sense considered in this paper. An interesting related question is whether the capacity $\M_\mu$ could be made equivalent,  after a natural renormalization procedure, to a measure, as it is the case for Gibbs capacities.}

\end{itemize}

\mk

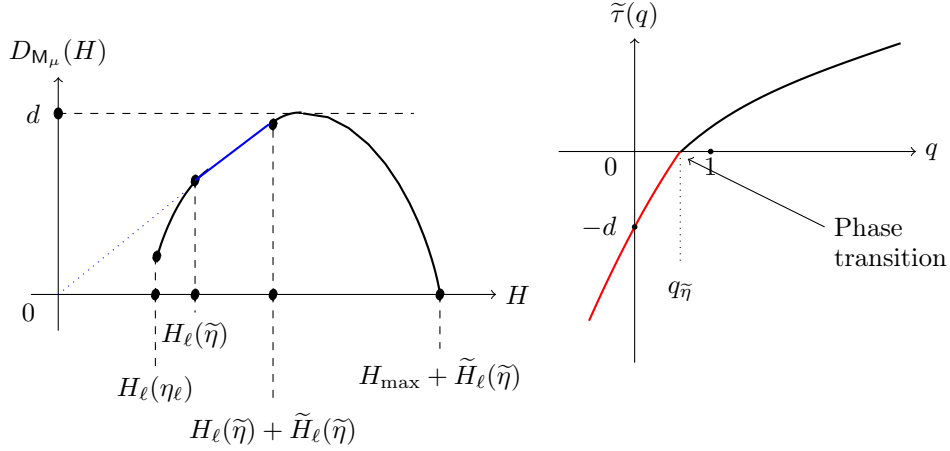
\begin{figure}  
  \begin{tikzpicture}[xscale=1.8,yscale=2.4]{\small
   \draw [->] (0,-0.2) -- (0,1.2) [radius=0.006] node [above] {$D_{\M_\mu}(H)$ };
\draw [->] (-0.2,0) -- (3.2,0) node [right] {$H$};
\draw [thick, domain=0:2]  plot ({-(2*exp(\x*ln(4/9))*ln(4/9)+exp(\x*ln(1/9))*ln(1/9))/(ln(3)*(2*exp(\x*ln(4/9))+exp(\x*ln(1/9)) ) )-0.06} , {\x*(-2* exp(\x*ln(4/9))*ln(4/9)-exp(\x*ln(1/9))*ln(1/9))/(ln(3)*(2*exp(\x*ln(4/9))+exp(\x*ln(1/9))))+ ln(2*(exp(\x*ln(4/9))+exp(\x*ln(1/9))))/ln(3)-0.57});
\draw [thick, domain=-0.5:0]  plot ({0.47-(2*exp(\x*ln(4/9))*ln(4/9)+exp(\x*ln(1/9))*ln(1/9))/(ln(3)*(2*exp(\x*ln(4/9))+exp(\x*ln(1/9)) ) )-0.06} , {\x*(-2* exp(\x*ln(4/9))*ln(4/9)-exp(\x*ln(1/9))*ln(1/9))/(ln(3)*(2*exp(\x*ln(4/9))+exp(\x*ln(1/9))))+ ln(2*(exp(\x*ln(4/9))+exp(\x*ln(1/9))))/ln(3)-0.31});
  \draw [thick, domain=0:6,color=black]  plot ({(0.47)-( ln(0.2)+ ln(0.8))/(ln(2)) +(exp(\x*ln(1/5))*ln(0.2)+exp(\x*ln(0.8))*ln(0.8))/(ln(2)*(exp(\x*ln(1/5))+exp(\x*ln(0.8)) ) )} , {-\x*( exp(\x*ln(1/5))*ln(0.2)+exp(\x*ln(0.8))*ln(0.8))/(ln(2)*(exp(\x*ln(1/5))+exp(\x*ln(0.8))))+ ln((exp(\x*ln(1/5))+exp(\x*ln(0.8))))/ln(2)});
  \draw[dashed] (0,1) -- (2.6,1);
 \draw[fill] (0.72,0.21) circle [radius=0.03] [dashed] (0.71,0.21) -- (0.71,-0.0) [fill] circle [radius=0.03]  [dashed] (0.71,0.0) -- (0.71,-0.4)  node [below]    {$H_\ell(\etal)$};
  \draw[fill] (1.00,0.63) circle [radius=0.03] [dashed] (1.,0.63) -- (1.,-0.0) [fill] circle [radius=0.03]  [dashed] (1.,0.0) -- (1.,-0.1)  node [below]    {$H_\ell(\weta)$};
 \draw[fill] (1.57,0.94) circle [radius=0.03] [dashed] (1.57,0.91) -- (1.57,-0.0) [fill] circle [radius=0.03]  [dashed] (1.57,0.0) -- (1.57,-0.6)  node [below]  {$H_\ell(\weta) + \Ht_\ell(\weta)$};
 \draw [fill] (2.79,0.0)  circle [radius=0.03]  [dashed] (2.79,0.0) -- (2.79,-0.3)  node [below]  {$H_{\max}+  \Ht_\ell(\weta)$};
 \draw [thick,color=blue]    (1.00,0.63) -- (1.54,0.94); 
\draw [dotted,color=blue]   (0,0) --  (0.93,0.57) ;
 \draw [fill] (-0.1,-0.10)   node [left] {$0$}; 
 \draw [fill] (0,1) circle [radius=0.03] node [left] {$d \ $}; 
   } 
\end{tikzpicture}  \ 
\begin{tikzpicture}[xscale=1,yscale=1]{\small
\draw [->] (0,-2.8) -- (0,1.5) [radius=0.006] node [above]  {$\widetilde\tau(q)$};
\draw [->] (-1.,0) -- (3.7,0) node [right] {$q$};
  
 \draw [dotted] (0.6,0)-- (0.6,-1.5) [fill] (0.6,-1.6) node [below] {$q_\weta$};  
  
  \draw[color=white] (0.8,-2)-- (0.8,-4.0) ;
 \draw [thick, domain=0.6:3.5, color=black]  plot ({\x},  {(0.32)-ln(exp(\x*ln(1/5)) +exp(\x*ln(0.8)))/(ln(2) });
\draw [thick, domain=-0.6:0.6, color=red ]  plot ({\x},  {((0.55)*\x)-ln(exp(\x*ln(1/5)) +exp(\x*ln(0.8)))/(ln(2)});

 \draw [<-] (0.7,-0.1) -- (2.5,-1) node [right] {Phase};
 \draw[fill] (2.5,-1.4)   node   [right] {transition};

\draw [fill] (-0.1,-0.20)   node [left] {$0$}; 
\draw [fill] (-0,-1) circle [radius=0.03]  node [left] {$-d$ \ }; 

\draw [fill] (1,-0) circle [radius=0.03] [fill] (1,-0.2) node [ ] {$1 $}; }
\end{tikzpicture}

\caption{{\bf Case $D_\mu(H_{\min}) >d(1-\eta)$}:  Observe that $D_\mu$ and $D_{\M_\mu}$ have an infinite slope at ${H_\ell}(\etal)= H_{\min}$.}
\label{Figetalpos}
\end{figure}
 
 The paper is organized as follows. 
 
 Section \ref{sec2} provides the reader with details on Gibbs measures and capacities, and  gathers some information about large deviations and multifractal analysis. 
 
 Section~\ref{recovering} focuses on the reconstruction of  the original capacity $\mu$ from its sample $\widetilde\mu$. 
 
 The rest of the paper is devoted to the investigation of the structure of $\M_\mu$.

 We first need to introduce new definitions to explain the origin of the parameters introduced  in Theorem \ref{thm-0}. This is achieved  in Section \ref{restate}.  {There, we first explain  that  we will work with a slight, and natural, modification of $\M_\mu$ possessing the same statistical and geometric properties as $\M_\mu$, but necessary to get an application of our result to Gibbs weighted wavelets series}.

  In Section~\ref{surv}, we investigate the scaling and distribution properties of the surviving vertices.   A key decomposition of  the value of $\mu(I_w)$ when $w$ survives, is proved (see Proposition~\ref{p2}).  

Sections~\ref{sec_upper} and~\ref{sec_lower} respectively establish the sharp upper bound and lower bound for the singularity spectrum $D_{\M_\mu}$, while Section~\ref{limlimd} is devoted to the dimensions of the sets $E_{\M_\mu}(H)$ and $\overline E_{\M_\mu}(H)$.
 The studies achieved in the Sections~\ref{sec_upper} and~\ref{sec_lower} are used in Section~\ref{free} to get  the free energy $\tau_{\M_{\mu}}$ as the limit of $(\tau_{\M_\mu,j})_{j\ge 1}$, as well as the large deviations spectra $\underline f_{\M_{\mu}}$ and $\overline f_{\M_{\mu}}$. The case of homogeneous Gibbs capacities (i.e. when the associated Gibbs measure is  the Lebesgue measure) is dealt with in Section~\ref{sechomo}. 

\medskip

{\bf Notational conventions:} 

 \sk

$\bullet $ We    always use:
\begin{itemize}
\item[-] {\bf capital letters} ($E_{\M_\mu}(H)$, $F_\mu$, ...) to characterize sets of points $x\in \zu^d$ enjoying some properties,
 
\item[-]
 {\bf curved letters} for sets of finite words having specific properties ($\mathcal{S}_{j}(\eta,W)$ for some surviving coefficients, $\Rmu(j,\eta',\alpha\pm\ep)$ or  $\Tt (j,\eta',\ep)$ for words with specific properties, see next Definition \ref{defRP}).

\item[-]
{\bf calligraphic letters} ($ \mathscr{A}$, $ \mathscr{B}$,...)   to denote probabilistic events.

\end{itemize}
 
 \sk

$\bullet $ For every finite word $W\in \Sigma_J$, $\mathcal{N}(W)$ stands for the set of $3^d -1$ words of length $J$ corresponding to the $3^d-1$ neighboring cubes at generation $J$ of $I_W$. Sometimes we will write $\mathcal{N}_J(W)$ when the length $J$  of $W$ is specified.

\section{Complements on Gibbs measures and capacities structure}
\label{sec2}

 
\subsection{Formal definition of Gibbs measures and capacities}\label{GibbsConstruction}

Let $\psi:\Sigma\to \R$ be a H\"older continuous mapping. 
Then, the function $\Psi$ defined as 
$$
\Psi([w])=\sup_{t\in [w]} \sum_{i=0}^{|w|-1}\psi(\sigma^{i} t),\quad \forall\, w\in\Sigma^*
$$
is almost additive: there exists $C_1\in\R$ such that  for all $u,v\in \Sigma^*$,
$$
|\Psi([u])+\Psi([v])- \Psi([{uv}])|\le C_1
$$
(see \cite{Ruelle}). This almost additivity property implies that the topological pressure 
$$
P(\sigma,\phi)=\lim_{j\to\infty}\frac{1}{j}\log\sum_{w\in\Sigma_j} \exp (\Psi([w]))
$$
exists in $\R$, and there exists a fully supported Gibbs measure $\nu$ on $\Sigma$ such that for another constant $C_2>0$ one has 
$$
C_2^{-1} \exp (\Psi([w])-nP(\sigma, \psi))
\le \nu([w])\le  C_2\exp (\Psi([w])-nP(\sigma,\psi)),\quad \forall\, w\in\Sigma^*.
$$
Also, there is a unique choice of such a $\nu$ so that $\nu$ is ergodic. Moreover,  the mapping $q\in\R\mapsto P(\sigma,q\psi)$ is convex, analytic, and it is strictly convex if and only if $\psi$ is not cohomologous to a constant, i.e. there is no continuous function $\varphi$ on $\Sigma$ and constant $c\in \R$ such that $\psi=c+\varphi-\varphi\circ\sigma$. These are important facts from thermodynamic formalism (see e.g. \cite{Ruelle}). 

\mk

\begin{definition}\label{Def 8} A capacity $\mu\in\mathrm{Cap}([0,1]^d)$ is a  Gibbs capacity if 
\begin{equation}\label{Kabnu}
\mu(I_w)= K \nu([w])^\alpha e^{-|w|\beta},\quad \forall\, w\in\Sigma^*,
\end{equation}
where $K>0$, $(\alpha,\beta)\in\R_+\times\R_+\setminus\{(0,0)\}$, and $\nu$ is a Gibbs measure associated with a H\"older continuous potential $\psi$ has above. 

Equivalently, one says that  $\mu$ is associated with the H\"older potential 
$$
\phi=\alpha\psi-\alpha P(\sigma,\psi) -\beta.
$$ 
The capacity $\mu$ is said to be  {\em homogeneous} when $\psi$  is cohomologous to a constant or $\alpha=0$, {i.e. when $\phi$ is cohomologous to a constant}, and {\em non-homogeneous} otherwise.
\end{definition}
Observe that if $(\alpha,\beta)=(1,0)$, $\mu$ reduces to  the Gibbs measure associated with  $\psi$, and that
$$
\tau_\mu(q)=\frac{1}{\log(2)}\Big ((\beta+\alpha P(\sigma,\psi)) q-P(\sigma,\alpha q\psi)\Big ),\quad  \forall\, q\in\R.
$$
The following fact is key: the capacity $\mu$ possesses self-similarity properties expressed through the following almost multiplicative property (easy to check): there exists a constant  $C>0$ such that 
\begin{equation}\label{quasib}
\mbox{for all words $v$ and $w$, } \ \ \ C^{-1}\mu(I_w)\mu(I_v)\le \mu(I_{wv})\le C \mu(I_w)\mu(I_v).
\end{equation}

\subsection{Large deviations and multifractal properties}

Let $\mu\in\mathcal C([0,1]^d)$ with non empty support. The concave function $\tau_\mu^*$ is called {Legendre spectrum} of $\mu$ (recall that the  Legendre transform $\tau_\mu^*$ is given by  \eqref{legendre}).  {For  a non-homogeneous Gibbs capacity $\mu$, one always has:}
{\begin{itemize}
\sk\item $\tau_\mu$ is strictly concave and analytic, and ${D_\mu}$ is strictly concave,  and real analytic   over $(H_{\min}, H_{\max})$. Also, ${D_\mu} = \tau_\mu^*$ and $(D_\mu^*)^*={D_\mu}$.
 \sk\item  If $H=\tau_\mu'(q)$, then $ \tau_\mu(q) = D_\mu ^* (q) =  q H  - {D_\mu}(H) = q \tau_\mu'(q) - {D_\mu}(\tau_\mu'(q)) $.
\sk\item
If  $q= {D_\mu}'(H)$, then  $ {D_\mu}(H) = \tau_\mu ^* (H) =    H q - \tau_\mu(q) = H D'_\mu(H) - \tau_\mu(D'_\mu(H)) $.
\end{itemize}}

These relationships will be used repeatedly in the following.

\begin{definition}\label{defnewsets}
For any fully supported capacity $\mu\in \mathcal{C}([0,1]^d)$, define  the level sets 
$$E^{\leq}_\mu(H) =\{x \in \zud: {\dim_\locloc}(\mu,x) \leq H\} \ \ \mbox{ and } \ \ E^{\geq}_\mu(H) =\{x \in \zud: {\dim_\locloc}(\mu,x) \geq H \}.$$
The sets $\underline{E}^{\leq}_\mu(H)$, $\underline{E}^{\geq}_\mu(H) $, $\overline{E}^{\leq}_\mu(H)$, $\overline{E}^{\geq}_\mu(H) $ are defined similarly using the lower and upper local dimensions, respectively.

\sk

If $j\ge 1$ and $w\in \Sigma_j$, denote by $\mathcal N_j(w)$ the set of at most $3^d$ elements $v\in\Sigma_{|w|}$ such that $I_v$ is a neighbor of $I_w$ in $\R^d$.  Also, for $x\in[0,1]^d$ and $j\ge 1$, set $\mathcal{N}_j(x)=\mathcal{N}(x_{|j})$. One defines  the set
$$ \widetilde{E}_\mu(H)\! = \!  \left \{ \!x\in \zud: \!  \lim_{j\to+\infty} \!\! \dfrac{\log _2 \max_{w\in \mathcal{N}_j(x)} \mu(I_w) }{j} = \! \lim_{j\to+\infty}  \!\! \dfrac{\log _2 \min_{w\in \mathcal{N}_j(x)} \mu(I_w) }{j} =     H \right \}.$$
\end{definition}

Obviously $\widetilde{E}_\mu(H)\subset E_\mu(H)$.  This refinement of $E_\mu(H)$ is needed when looking for the lower bound of the Hausdorff dimensions of some sets in Section \ref{sec_lower}.

\mk

A direct consequence of large deviations theory (see e.g. \cite{BRMICHPEY,OLSEN}) is a property valid for all capacities.

\begin{proposition}\label{chernov}
Let $\mu\in\mathcal C([0,1]^d)$ with full support. For all $H\le \tau_\mu'(0^+)$, one has 
$$
\limsup_{j\to\infty}\frac{1}{j}\log_2  \# \mathcal{E}_\mu(j,[0,H]) \le \tau_\mu^*(H).
$$
\end{proposition}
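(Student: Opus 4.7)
The statement is a standard Chernoff-type/Markov inequality in the large deviations framework, so the plan is to exploit directly the definition of $\tau_{\mu,j}$ as a moment generating function in finite volume.

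First I would observe that a word $w\in\mathcal{E}_\mu(j,[0,H])$ satisfies $\mu(I_w)\ge 2^{-jH}$. Hence, for any $q\ge 0$, $\mu(I_w)^q\ge 2^{-jHq}$, which gives the trivial but decisive bound
\begin{equation*}
\#\mathcal{E}_\mu(j,[0,H])\cdot 2^{-jHq}\ \le\ \sum_{w\in\Sigma_j:\,\mu(I_w)>0}\mu(I_w)^q\ =\ 2^{-j\tau_{\mu,j}(q)}.
\end{equation*}
Taking $\log_2$ and dividing by $j$, this reads $\frac{1}{j}\log_2\#\mathcal{E}_\mu(j,[0,H])\le Hq-\tau_{\mu,j}(q)$ for every $q\ge 0$ and every $j\ge 1$.

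Next I would pass to the limit: applying $\limsup_{j\to\infty}$ and using that $\limsup_j(-\tau_{\mu,j}(q))=-\liminf_j\tau_{\mu,j}(q)=-\tau_\mu(q)$ by the very definition \eqref{deftau}, one obtains, for every $q\ge 0$,
\begin{equation*}
\limsup_{j\to\infty}\frac{1}{j}\log_2\#\mathcal{E}_\mu(j,[0,H])\ \le\ Hq-\tau_\mu(q).
\end{equation*}
Minimizing in $q\ge 0$ gives the bound $\inf_{q\ge 0}\bigl(Hq-\tau_\mu(q)\bigr)$.

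The last step is to upgrade this to $\tau_\mu^*(H)=\inf_{q\in\R}\bigl(Hq-\tau_\mu(q)\bigr)$. Here the assumption $H\le \tau_\mu'(0^+)$ enters: the concave Legendre conjugate $\tau_\mu^*$ satisfies that the infimum in $q$ is realized on $[0,+\infty)$ precisely when $H\le \tau_\mu'(0^+)$, since for any $q<0$ the supporting-line inequality $\tau_\mu(q)\le \tau_\mu(0)+q\,\tau_\mu'(0^+)$ (which holds for the concave envelope of $\tau_\mu$, and for $\tau_\mu$ itself in the Gibbs setting where $\tau_\mu$ is concave) yields $Hq-\tau_\mu(q)\ge (H-\tau_\mu'(0^+))q-\tau_\mu(0)\ge -\tau_\mu(0)=H\cdot 0-\tau_\mu(0)$. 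Hence $\inf_{q\ge 0}=\inf_{q\in\R}=\tau_\mu^*(H)$, and the proposition follows.

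There is no substantial obstacle; the only delicate point is the reduction $\inf_{q\ge 0}=\inf_{q\in\R}$ under the hypothesis $H\le \tau_\mu'(0^+)$, which is just a convex-analysis observation on the Legendre transform. Everything else is the textbook Chebyshev/Chernoff bound applied term by term to the partition function defining $\tau_{\mu,j}$.
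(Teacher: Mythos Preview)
Your proof is correct and is exactly the standard Chernoff/Markov argument that the paper has in mind; the paper does not give its own proof but simply attributes the result to ``large deviations theory (see e.g. \cite{BRMICHPEY,OLSEN})''. One small remark: your hedge about the supporting-line inequality holding ``for the concave envelope of $\tau_\mu$'' is unnecessary, because $\tau_\mu$ is \emph{always} concave---each $\tau_{\mu,j}$ is concave (negative log of a sum of exponentials) and a pointwise $\liminf$ of concave functions is concave---so the reduction $\inf_{q\ge 0}=\inf_{q\in\R}$ under $H\le\tau_\mu'(0^+)$ goes through for any capacity, not just in the Gibbs case.
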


Next proposition  gathers information about  upper bounds for the singularity spectrum of 
$D_\mu$ in terms of   Legendre and large deviation spectra, when $\mu$ is a Gibbs capacity.  
\begin{proposition}
\label{fm}   
Let $ \mu$ be a non-homogeneous Gibbs capacity.  Recall that ${H_{\min}}=\tau_\mu'(+\infty) \, < \, H_s := \tau_\mu'(0) \, < \, {H_{\max}}= \tau_\mu'(-\infty)$.

\begin{enumerate}
\smallskip\item  For every $H\ge 0$, one has 
$$\dim \underline E_\mu(H)=\dim  E_\mu(H)=\dim \overline E_\mu(H)=\underline {D_\mu}(H)=\overline {D_\mu}(H)=\tau_\mu^*(H)={D_\mu}(H),$$ 
with  $\underline{E}_\mu(H) = \emptyset$ if and only if ${D_\mu}(H)=-\infty$. 

\smallskip \item  For every $H  \in [H_{\min},H_s]$ (i.e., in the increasing part of the singularity spectrum ${D_\mu}$),  one has  
$$\dim\,
E^{\leq}_\mu(H) =\dim\, \underline{E}^{\leq}_\mu(H) =\dim\,
\overline{E}^{\leq}_\mu(H) = {D_\mu}(H).$$

\smallskip \item  For every $H  \in [H_s, H_{\max}]$ (i.e. in the decreasing part of   ${D_\mu}$),  one has  
$$\dim\,
E^{\geq}_\mu(H)=\dim\, \underline{E}^{\geq}_\mu(H) =\dim\,
\overline{E}^{\geq}_\mu(H) =  {D_\mu}(H).$$
  
\mk \item
 For every possible local dimension $H\in (H_{\min},H_{\max})$, there exists a unique $q\in\R$ such that $H=\tau_\mu'(q)$. The Gibbs measure $\mu_H$ associated with the potential $q\phi$ is exact dimensional with dimension ${D_\mu}(H)$, and $\mu_H\big (E_\mu(H)\big ) = \mu\big(\widetilde{E}_\mu(H)\big) = 1$.
 
 \sk\item
For every $\ep>0$ and every interval {$I \subset \R_+$, there exists an integer $J_{I}$ such that for every $j\geq J_{I}$,  
$$\left| \frac{ \log_2  \mathcal{E}_\mu(j,I) }{ j } - 
 \sup_{h\in I} 
 {D_\mu}(h) \right| \leq  \ep  .  $$}
 
 \sk\item
There exists a constant $K>0$ such that  for every finite word  $w\in \Sigma^*$,
$$ \left|\frac{\log_2 \mu(I_w)}{-|w| }\right | \leq K .  $$
 

\end{enumerate}
\end{proposition}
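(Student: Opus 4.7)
The proof combines the Gibbs structure \eqref{Kabnu}--\eqref{quasib} with the thermodynamic formalism of the family of ergodic Gibbs measures $\mu_q$ associated with the H\"older potentials $q\phi$, where $\phi=\alpha\psi-\alpha P(\sigma,\psi)-\beta$. Item (6) is immediate: $\psi$ is bounded on $\Sigma$, so $|\Psi([w])|=O(|w|)$, and \eqref{Kabnu} together with the Gibbs bounds on $\nu$ yield $|\log_2\mu(I_w)|/|w|\le K$. For item (5), since $q\mapsto P(\sigma,q\alpha\psi)$ is analytic and strictly convex under the non-homogeneous assumption, the normalized logarithms of the partition functions $\sum_{w\in\Sigma_j}\mu(I_w)^q$ converge to $-\tau_\mu(q)\log 2$ uniformly on compacts. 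A G\"artner--Ellis argument then gives matching upper and lower asymptotics for $\#\mathcal{E}_\mu(j,I)$, with rate function $\sup_{h\in I}\tau_\mu^*(h)=\sup_{h\in I}D_\mu(h)$; continuity of $D_\mu$ on $[H_{\min},H_{\max}]$ ensures the supremum is attained and the convergence is uniform for $j\ge J_I$.

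The central step is item (4). Fix $H\in(H_{\min},H_{\max})$, let $q$ be the unique real with $\tau_\mu'(q)=H$, and set $\mu_H:=\mu_q$. Using the almost-additivity of $\log\mu$, one has $\log_2\mu(I_w)=S_{|w|}\phi(t)/\log 2+O(1)$ for any $t\in[w]$, and Birkhoff's ergodic theorem applied to $\mu_H$ gives $-j^{-1}\log_2\mu(I_{x_{|j}})\to -\int\phi\,d\mu_H/\log 2=H$ for $\mu_H$-a.e.\ $x$. Combined with Shannon--McMillan--Breiman and the variational identity
\begin{equation*}
h(\mu_H)=P(\sigma,q\phi)-q\int\phi\,d\mu_H=-\tau_\mu(q)\log 2+qH\log 2,
\end{equation*}
this yields $-j^{-1}\log_2\mu_H(I_{x_{|j}})\to D_\mu(H)=qH-\tau_\mu(q)$, so $\mu_H$ is exact dimensional with dimension $D_\mu(H)$ and $\mu_H(E_\mu(H))=1$. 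To upgrade this to $\mu_H(\widetilde E_\mu(H))=1$, one controls simultaneously each of the at most $3^d-1$ neighboring cylinders $I_v$ with $v\in\mathcal N_j(x_{|j})$: the quasi-Bernoulli estimate \eqref{quasib} ensures that $\mu(I_v)/\mu(I_{x_{|j}})$ differs from $1$ only by a factor coming from bounded-length prefix differences, negligible at the logarithmic scale, so Birkhoff control transfers uniformly over neighbors.

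Items (1), (2), (3) then follow. For (1), the upper bound $\dim\underline E_\mu(H)\le\tau_\mu^*(H)$ comes from Proposition \ref{chernov} via the standard $\limsup$ covering of $\underline E_\mu(H)$ by dyadic cubes $I_{x_{|j}}$ with $\mu(I_{x_{|j}})\ge 2^{-j(H+\varepsilon)}$, while the matching lower bounds $\dim\underline E_\mu(H)\ge\dim E_\mu(H)\ge D_\mu(H)$ come from the mass distribution principle applied to $\mu_H$ supported on $\widetilde E_\mu(H)$; the equality $\dim\overline E_\mu(H)=D_\mu(H)$ uses in addition that $\widetilde E_\mu(H)\subset\overline E_\mu(H)$ together with a Chernov-type bound for the upper local dimension dual to Proposition \ref{chernov}. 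For (2) (resp.\ (3)), when $H\le H_s$ (resp.\ $H\ge H_s$) the associated $q$ satisfies $q\ge 0$ (resp.\ $q\le 0$), so the monotonicity of $\tau_\mu^*$ on the corresponding branch makes the Chernov bound sharp at $H$; the lower bound $\dim E^{\le}_\mu(H)\ge D_\mu(H)$ (resp.\ $\dim E^{\ge}_\mu(H)\ge D_\mu(H)$) follows from the inclusion $E_\mu(H)\subset E^{\le}_\mu(H)$ (resp.\ $\subset E^{\ge}_\mu(H)$) combined with item (1). The main obstacle throughout is the neighbor-control used in (4) to pass from $E_\mu(H)$ to the finer set $\widetilde E_\mu(H)$, which is crucial for the $\dim \overline{E}_\mu(H)$ statement and for later lower-bound constructions in the paper; this is resolved by the quasi-Bernoulli property \eqref{quasib}.
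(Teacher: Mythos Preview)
The paper does not actually prove this proposition: it simply states ``This is deduced from \cite{BRMICHPEY,OLSEN,JLVVOJAK,BBP}'' and moves on, treating the result as standard background from the multifractal literature on Gibbs measures. Your sketch is therefore not competing with a proof in the paper but with the cited references, and it follows the expected route (G\"artner--Ellis for large deviations, Birkhoff plus Shannon--McMillan--Breiman for exact dimensionality of $\mu_H$, mass distribution principle for lower bounds, Chernov-type coverings for upper bounds). Most of this is fine.

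There is, however, a genuine gap in your neighbor-control step for $\mu_H(\widetilde E_\mu(H))=1$. You claim that quasi-Bernoulli \eqref{quasib} forces $\mu(I_v)/\mu(I_{x_{|j}})$ for $v\in\mathcal N_j(x_{|j})$ to differ from $1$ only by a factor coming from ``bounded-length prefix differences''. This is false: two adjacent dyadic cylinders of generation $j$ can have addresses differing in essentially all coordinates (e.g.\ $u\,0\,1^{j-k-1}$ versus $u\,1\,0^{j-k-1}$ in dimension one), and for a generic Gibbs or even Bernoulli measure the ratio $\mu(I_{01^{n}})/\mu(I_{10^{n}})$ is unbounded in $n$. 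So \eqref{quasib} alone does not give the uniform neighbor comparison you assert. The correct argument (as in the cited references) is measure-theoretic rather than pointwise: one shows via a Borel--Cantelli estimate, using the large deviations of item (5), that the set of $x$ for which some neighbor $v\in\mathcal N_j(x_{|j})$ satisfies $\mu(I_v)\ge 2^{-j(H-\varepsilon)}$ (or $\le 2^{-j(H+\varepsilon)}$) infinitely often has $\mu_H$-measure zero. This is exactly the ``main obstacle'' you flag, but your proposed resolution does not work; you need the probabilistic argument instead.
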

This is deduced from \cite{BRMICHPEY,OLSEN,JLVVOJAK,BBP}.

Items (1) and (3) of the last proposition say in particular that  the Hausdorff dimension of the sets of points at which  ${\dim_\locloc}(\mu,x) = H$ is the same as the Hausdorff dimension of the set of points at which  ${\underline \dim_\locloc}(\mu,x) = H$. This will be of particular importance.

We   often use item (5) under the following form. Recall the formula for  $ \mathcal{E}_\mu(j,H\pm\ep) $  in Definition \ref{defsmu}: heuristically, $ \mathcal{E}_\mu(j,H\pm\ep) $ contains those words of length $j$ such that $\mu(I_w) \sim 2^{-j (H\pm\ep)}$.
 For every $H_{\min}\leq H \leq H_{\max}$ and $\ep, \wep>0$,  there exists a
generation  $J$ such that $j \geq J$ implies
\begin{equation}
\label{eq1sss}
\left| \frac{ \log_2 \#  \mathcal{E}_\mu(j,H\pm\ep)  }{j } - 
 \sup_{h\in [H-\ep, H+\ep]} 
 {D_\mu}(h) \right| \leq \wep.
\end{equation}
One needs to keep in mind that $ \#  \mathcal{E}_\mu(j,H\pm\ep)    \approx 2^{j {D_\mu}(H)}$.


\section{Reconstruction of the initial capacity $\mu$}
\label{recovering}

\newcommand \Mm{\M_\mu}

Fix a Gibbs capacity $\mu$. We investigate the possibility to reconstitute the whole Gibbs tree $(\mu(I_w))_{w\in \Sigma^*}$ from the sole knowledge of  $\tilde\mu$ (or equivalently, from $\Mm$). 

\mk

Assume first that  the capacity $\mu$ is associated with a  Bernoulli measure, i.e. there exists $q_{0},q_1>0$ such that   for any word $w\in\Sigma_*$  one has $\mu(I_{w1}) = q_1\mu(I_w)$ and $\mu(I_{w0}) = q_{0}\mu(I_w)$.  Hence, in order to reconstitute $\mu$, it is enough to find $q_{0}$ and $q_1$. Assume that two surviving vertices $w$ and $w'$ have different proportions of zeros and ones in their dyadic decomposition.  It is easy to check that this event has probability one.
Then the knowledge of $\mu(I_w)$ and $\mu(I_{w'})$ leads to two linearly independent equations with unknowns $q_{0}$ and $q_1$, hence to their values.  

  This idea generalizes  to the case where $\mu$ is constructed from a Markov measure, i.e. there exist an integer $k\ge 0$ and $((q_{v0},q_{v_1}))_{v\in\Sigma_k}\in (0,\infty)^{2^{k+1}}$ such that for all $w\in\Sigma_*$ and $v\in\Sigma_k$ one has $\mu(I_{wv0})=q_{v0}\mu(I_{wv})$ and $\mu(I_{wv1})=q_{v1}\mu(I_{wv})$.

\mk

When $\mu$ is associated with a general Gibbs measure the situation is not that simple. The answer we propose uses the basic tools we have at our disposal, namely concatenation of words and quasi-Bernoulli property \eqref{quasib}; it depends on the value of $\eta$, and there is a phase transition at $\eta=1/2$.

\begin{definition}
Let $k\in \N^*$. A word  $u\in \Sigma^*$ is {\em $k$-reconstructible} when there is a finite sequence of words $(w_1,u_1,w_2,u_2,... w_k,u_k)$ in $\Sigma^*$ such that 
\begin{itemize}
\mk\item
for every $i\in \{1,...,k\}$, $p_{w_i} = p_{w_i u_i} = 1 $,

\mk\item
$u = u_1 u_2\cdots u_k$.
\end{itemize}
One says that $S\subset \Sigma^*$ is $k$-reconstructible when every word $u\in S$ is $k$-reconstructible.
\end{definition}

This definition follows from the idea that when $u$ is   $k$-reconstructible, after sampling of the initial tree one has access  to the value of the weights $\mu(I_{w_i})$ and $\mu(I_{w_iu_i})$ for every $i$. Hence, by the quasi-Bernoulli property \eqref{quasib}, one estimates, up to the constant $C>1$, the value of  $\mu(I_{ u_i})$, and by concatenation of the words $u_1$, ..., $u_k $ and  \eqref{quasib} again, one reconstructs the value of $ \mu(I_{ u_i})$ up to the constant $C^{k+1}$. Next Theorem completes Theorem \ref{th_intro_reco} in the introduction.

\begin{theorem}
\label{th_reconstruction}
When $\eta<1/2$, $\Sigma^*$ is $1$-reconstructible.

When $\eta>1/2$, $\Sigma^*$ is not $k$-reconstructible, for any integer $k\geq 1$.
\end{theorem}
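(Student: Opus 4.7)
Both assertions reduce to Borel--Cantelli computations on the independent Bernoulli variables $(p_w)_{w\in\Sigma^*}$, with critical threshold at $\eta=1/2$. Fix $u\in\Sigma^*$, and for each $w\in\Sigma^*$ define the witness event
\[
\mathcal{E}_w^u=\{p_w=1\text{ and }p_{wu}=1\},\qquad \mathbb{P}(\mathcal{E}_w^u)=2^{-d(1-\eta)(2|w|+|u|)}.
\]
For $w\ne w'$ in $\Sigma_j$, the four vertices $w,wu,w',w'u$ are pairwise distinct, so the events $(\mathcal{E}_w^u)_{w\in\Sigma_j}$ are mutually independent, and the expected number of witnesses of $u$ at generation $j$ equals
\[
\mathbb{E}_j(u)=2^{d(2\eta-1)j-d(1-\eta)|u|},
\]
which makes $\eta=1/2$ the critical value.

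For $\eta<1/2$, the plan is to prove that for each fixed $u\in\Sigma^*$ there exists almost surely at least one $w$ with $p_w=p_{wu}=1$; a countable union over $u\in\Sigma^*$ then promotes this to $1$-reconstructibility of the whole tree. Scale-wise independence combined with an appropriate Borel--Cantelli/Paley--Zygmund argument, together with a zero--one law on the tail $\sigma$-algebra generated by the $(p_w)_{|w|\ge J}$, is designed to upgrade the expected count into an almost-sure existence statement uniformly over the countable family $u\in\Sigma^*$.

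For $\eta>1/2$, the plan is twofold. First, identify almost surely short ``bad'' words $v$ for which no $w\in\Sigma^*$ satisfies $\mathcal{E}_w^v$; this uses the first Borel--Cantelli lemma applied to the events $\{\mathcal{E}_w^v:w\in\Sigma^*\}$, which are independent across distinct generations of $w$. Second, construct a single word $u\in\Sigma^*$ whose every decomposition $u=u_1u_2\cdots u_k$, for \emph{every} $k\ge 1$, is forced to contain one of the bad words $v$ as one of its factors $u_i$. Concretely, this is achieved by concatenating sufficiently many copies of the bad patterns into $u$ so that a pigeonhole on the $k-1$ cut positions guarantees that some factor $u_i$ contains a full copy of a bad pattern.

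The main obstacle is step~(ii) in the $\eta>1/2$ regime: the definition of $k$-reconstructibility allows the splitting of $u$ to be chosen freely, so defeating every decomposition, and doing so simultaneously for every~$k$, requires careful combinatorial placement of the bad patterns together with a diagonal argument over $k$ to produce a single universal~$u$. A zero--one/tail law is then needed to convert the positive-probability construction into an almost-sure statement, closing the proof.
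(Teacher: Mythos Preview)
Your formula $\mathbb{E}_j(u)=2^{d(2\eta-1)j-d(1-\eta)|u|}$ is correct, but it immediately undercuts both halves of your plan as written: for $\eta<1/2$ the expected count tends to~$0$ and $\sum_j\mathbb{E}_j(u)<\infty$, so no Paley--Zygmund or Borel--Cantelli argument can manufacture an almost-sure witness; for $\eta>1/2$ the count diverges, first Borel--Cantelli is inapplicable, and in fact every fixed $v$ almost surely has infinitely many witnesses, so no ``bad'' words exist. (The paper's statement and its own proof share this inversion of the two regimes; with the inequalities on~$\eta$ swapped, all the computations go through.) Once the regimes are corrected, your treatment of the reconstructible direction is the paper's: compute the probability of no witness at generation~$j$ by independence and sum.

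For the non-reconstructible direction your route departs from the paper's and carries a genuine gap, independent of the sign issue above. The paper never exhibits a specific non-reconstructible word; it \emph{counts}. The probability that a given length-$j$ word is $1$-reconstructible is at most $\tilde C\,2^{-j(1-\eta)}$; Markov's inequality then gives $\widetilde Z_j^1\le C_1\,2^{j(\eta+\varepsilon_j)}$ almost surely for the number of $1$-reconstructible words of length~$j$; and the convolution bound $\widetilde Z_J^{k+1}\le\sum_i \widetilde Z_i^1\,\widetilde Z_{J-i}^k$ yields $\widetilde Z_J^k\lesssim J^k\,2^{J(\eta+\varepsilon)}\ll 2^J$ for every fixed~$k$. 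Your construction---concatenate bad patterns into~$u$ and pigeonhole on the $k-1$ cut positions---only forces some piece $u_i$ to \emph{contain} a bad~$v$ as a substring. Containment is worthless here: the event ``no $w$ has $p_w=p_{wv}=1$'' says nothing about whether $u_i=avb$, with $a$ or $b$ nonempty, has its own witness $w'$ with $p_{w'}=p_{w'u_i}=1$. Forcing some $u_i$ to \emph{equal} a pre-identified bad word, simultaneously for all~$k$ and all cut placements, is not something concatenation and pigeonhole can deliver.
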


\begin{proof}
$\bullet$ Assume first that $\eta<1/2$.

Fix  a generation $\ell\geq 1$, and a word $u\in \Sigma_\ell$. By construction, for any word $w\in \Sigma_j$, 
\begin{equation}
\label{ptilde}
\mathbb{P}( p _{w} p_{wu} =1) = 2^{-j(1-\eta)}  2^{-(j+\ell)(1-\eta)}  =2^{-\ell(1-\eta)}  2^{-j 2(1-\eta)} .
\end{equation}

Consider the random variable $Z_j = \# \{w\in \Sigma_j:  \ p _{w} p_{wu} = 1\}$ and the event $\mathcal{Z}_j = \{ Z_j = 0\}$. By independence,   $\mathbb{P}(\mathcal{Z} _j) = \big(1- 2^{-\ell(1-\eta)}  2^{-j 2(1-\eta)}\big )^{2^j} = e^{ - 2^{-\ell (1-\eta) +j(  1-2(1-\eta)) + o(j)}}$, which tends exponentially fast to zero. By the Borel-Cantelli Lemma,  there exists  almost surely an (infinite number of) words $w\in \Sigma^*$  such that $p _{w} p_{wu} =1$, i.e. $u $ is 1-reconstructible.

\mk

$\bullet$ Assume now that $\eta>1/2$.

For every  $u\in \Sigma^*$, denote  by $r_u$ the random variable equal to 1 if $u$ is $1$-reconstructible, and 0 otherwise.  Hence,  $r_u$ is a Bernoulli variable, with parameter $\tilde p_{|u|}$, the probability that there exists $w\in \Sigma^*$ such that $p _{w} p_{wu} =1$ (which depends only on $|u|$).  By \eqref{ptilde},  
$$\forall \ j\geq 1, \ \ \  \tilde p_j \leq \sum_{w\in \Sigma^*} \mathbb{P}( p _{w} p_{wu} =1)  \leq \tilde C 2^{-j(1-\eta)},$$
for some constant $\tilde C$ independent on $w$.

Fix $\ep>0$ so small that $\eta+\ep<1$ and $(\ep_j)_{j\geq 1}$ a positive sequence converging to zero, such that $0<\ep_j\leq \ep$ and $\sum_{ j\geq 1} 2^{-j \ep_j} <+\infty$.

Let us introduce    $ \widetilde  Z_j^1  =\sum_{u\in \Sigma_j}  r_u $, the number of $1$-reconstructible words at generation $j$. The random variable $ \widetilde Z_j^1$   is a sum of non-independent random variables with common law  the Bernoulli law with parameter $\tilde p_j$.   Markov's inequality yields $\mathbb{P}(  \widetilde Z^1_j \geq  2^{j\ep_j} 2^j \tilde p_j) \leq 2^{-j\ep_j}$, and 
Borel-Cantelli's lemma  implies that almost surely, for $j$ large enough, we have 
\begin{equation}
\label{majZ1}
\widetilde Z_j^1 \leq \tilde C 2^{j\ep_j} 2^j \tilde p_j \leq   C_1 2^{j (\eta+\ep_j)},
\end{equation}
 for some other constant $  C_1$.
 This implies that $\Sigma^* $ is not $1$-reconstructible, since at most $  C_1 2^{j (\eta+\ep_j)} <\!\!\!< 2^j$ words can be reconstructed.

 Assume that for $k\geq 2$, the number $ \widetilde  Z_j^k$   of $k$-reconstructible words  at any generation $j$ is bounded from above by $  C_k j^{k}  2^{j (\eta+\ep )} $ for some constant $  C_k$. Let $J\geq k+1$. Any $(k+1)$-reconstructible word $u$  in $\Sigma_J$  is the concatenation of a $k$-reconstructible word and a $1$-reconstructible word. Hence, by \eqref{majZ1}, for the constant $  C_{ k +1}  =   C_{1}   C_{k} $, one has
 $$ \widetilde Z_J^{k+1} \leq \sum_{i=1}^{J-k}    \widetilde Z_i^1    \widetilde Z_{J-i}^k \leq    C_1   C_k \sum_{i=1}^{J-k}   2^{i(\eta+\ep_i)}  (J-i)^{k} 2^{ (J-i) (\eta+\ep)} \leq C_{k+1} J^k  2^{ J(\eta+\ep)} .$$
One concludes that  $\Sigma^* $ is not $k$-reconstructible, for any $k$, since   $  \widetilde Z_J^{k}<\!\!\!< 2^J$.
\end{proof}

The situation at the critical sampling index $\eta=1/2$ must still be investigated.

\section{New parameters, alternative definitions for the parameters $H_\ell(\etal)$, $H_\ell(\weta)$ and $\Ht(\weta)$}\label{restate}
  
From now on, we consider  a non-homogeneous Gibbs capacity $\mu$. The homogeneous case will be dealt with at the all end of the paper (Section \ref{sechomo}).

 We work with the $\|\ \|_\infty$ over $\R^d$, so that balls are Euclidean cubes.

\subsection{Modified version of the capacity $\M_\mu$}

 We will study  a   slight  modification of $\M_\mu$.

\begin{definition}
\label{defnewMmu}
 Let $\mu \in \mathrm{Cap}(\zu^d)$. We set
\begin{equation}
\label{defnewmmu}
 \widetilde\M_\mu (I_w)=\max_{u\in \mathcal N_j(w)}\M_\mu(I_u) = \max \left\{\mu(I_{uv}): u\in \mathcal N_j(w), v\in \Sigma^*, p_{uv}=1\right \}.
\end{equation} 
 \end{definition}
{Thus, the difference between the capacities $\M_\mu$ and $ \widetilde  \M_\mu$ is that $\widetilde  \M_\mu(I_j(x))$ carries information about the behavior of $\mu$ in the neighborhood of  $x$, not only in the dyadic cube $I_j(x)$ of generation $j$ containing $x$. We consider $\widetilde  \M_\mu$  for the following reasons. First it  is natural to extend $\M_\mu$ to balls: for $x\in [0,1]^d$ and $r>0$ one denotes $B(x,r)$ the closed ball of radius $r$ centered at $x$, and  defines $ \M_\mu(B(x,r))=\max\{\M_\mu (I_w):I_w\subset B(x,r)\}$. Then the multifractal analysis of $\M_\mu$ using the more intrinsic  logarithmic density $\frac{\log(\mu(B(x,r))}{\log(r)}$ to define  the local dimensions  of $\M_\mu$ is given by the multifractal analysis of~$\widetilde \M_\mu$.}

{The second reason is that  knowing the mutifractal nature of $\widetilde\M_\mu$ yields that of the sparse wavelets series  weighted by using the random sample $\widetilde\mu$ of $\mu$ (see \cite{JaffardPSPUM} for an account of multifractal analysis of functions)}. 

\mk 

 From now on, only $\widetilde  \M_\mu$ will be considered, and we denote it as $\M_\mu$.

\mk

The reader will check that our proofs to study the capacity defined by \eqref{defnewmmu}  are easily adapted to the case where the capacity is defined by \eqref{defwmu}. In fact, the case we study is a little bit more complicated, since it involves a control of all the immediate neighbors.

\subsection{New parameters}

\begin{definition}
\label{defetal}
The real number $\eta_\ell \in [0,\eta]$ is defined as 
 $$
 \eta_\ell= \begin{cases}   \ 0 & \mbox{ if }  \  0\leq D_\mu({H_{\min}})\leq d(1-\eta) \\  1- \frac{d(1-\eta)}{   {D_\mu}({H_{\min}})} & \mbox{ otherwise.}\end{cases}
 $$
 For all $\eta'\in [\eta_{\ell},\eta]$, there exists a unique ${{H_\ell}}(\eta')\in [{H_{\min}},{H_s}]$ such that  
 \begin{equation}
 \label{defHl}
D_\mu\big ({{H_\ell}}(\eta')\big)=\frac{d(1-\eta)}{1-\eta'}.
 \end{equation}
 \end{definition}

 See Figures \ref{fig3} and  \ref{figbetal} for a geometrical interpretation of ${{H_\ell}} (\eta')$, which makes it easier to understand.
By construction one has:
\begin{itemize}
\sk\item
 ${{H_\ell}}(\eta)={H_s}$,
 \sk\item
if $D_\mu({H_{\min}})\le d(1-\eta)$,   $\eta_\ell=0$ and ${{H_\ell}}(\eta_\ell)$ is the unique solution of $D_\mu(H)=d(1-\eta)$ in $[{H_{\min}},{H_s}]$,
  \sk\item  if $D_\mu({H_{\min}}) >d(1-\eta)$,   $\eta_\ell>0$ and ${{H_\ell}}(\eta_\ell)={H_{\min}}$.
 \end{itemize}

\begin{figure}   \begin{tikzpicture}[xscale=1.9,yscale=2.3]
    {\tiny
\draw [->] (0,-0.2) -- (0,1.25) [radius=0.006] node [above] {$D_\mu(H)$};
\draw [->] (-0.2,0) -- (2.8,0) node [right] {$H$};
\draw [thick, domain=0:5]  plot ({-(exp(\x*ln(1/5))*ln(0.2)+exp(\x*ln(0.8))*ln(0.8))/(ln(2)*(exp(\x*ln(1/5))+exp(\x*ln(0.8)) ) )} , {-\x*( exp(\x*ln(1/5))*ln(0.2)+exp(\x*ln(0.8))*ln(0.8))/(ln(2)*(exp(\x*ln(1/5))+exp(\x*ln(0.8))))+ ln((exp(\x*ln(1/5))+exp(\x*ln(0.8))))/ln(2)});
\draw [thick, domain=0:5]  plot ({-( ln(0.2)+ ln(0.8))/(ln(2)) +(exp(\x*ln(1/5))*ln(0.2)+exp(\x*ln(0.8))*ln(0.8))/(ln(2)*(exp(\x*ln(1/5))+exp(\x*ln(0.8)) ) )} , {-\x*( exp(\x*ln(1/5))*ln(0.2)+exp(\x*ln(0.8))*ln(0.8))/(ln(2)*(exp(\x*ln(1/5))+exp(\x*ln(0.8))))+ ln((exp(\x*ln(1/5))+exp(\x*ln(0.8))))/ln(2)});
  \draw[dashed] (0,1) -- (2.6,1);
\draw [fill] (-0.1,-0.10)   node [left] {$0$}; 
\draw [fill] (0,1) circle [radius=0.03] node [left] {$d \ $}; 
\draw  [fill] (0.32,0) circle [radius=0.03] node [below] {$H_{\min}$};
\draw  [fill] (2.32,0) circle [radius=0.03] node [below] {$H_{\max}$};
\draw  [fill] (1.32,1) circle [radius=0.03]  [dashed]   (1.32,1) -- (1.32,0)  [fill] (1.32,0) circle [radius=0.03]  node [below] {$H_{s}$};
}
\end{tikzpicture}
   \begin{tikzpicture}[xscale=1.9,yscale=2.3]
{\tiny
\draw [->] (0,-0.2) -- (0,1.25) [radius=0.006] node [above] { };
\draw [->] (-0.2,0) -- (2.8,0) node [right] {$H$};
\draw [dashed,  domain=1.9:5]  plot ({-(exp(\x*ln(1/5))*ln(0.2)+exp(\x*ln(0.8))*ln(0.8))/(ln(2)*(exp(\x*ln(1/5))+exp(\x*ln(0.8)) ) )} , {-\x*( exp(\x*ln(1/5))*ln(0.2)+exp(\x*ln(0.8))*ln(0.8))/(ln(2)*(exp(\x*ln(1/5))+exp(\x*ln(0.8))))+ ln((exp(\x*ln(1/5))+exp(\x*ln(0.8))))/ln(2)});
\draw [thick, domain=0:1.9]  plot ({-(exp(\x*ln(1/5))*ln(0.2)+exp(\x*ln(0.8))*ln(0.8))/(ln(2)*(exp(\x*ln(1/5))+exp(\x*ln(0.8)) ) )} , {-\x*( exp(\x*ln(1/5))*ln(0.2)+exp(\x*ln(0.8))*ln(0.8))/(ln(2)*(exp(\x*ln(1/5))+exp(\x*ln(0.8))))+ ln((exp(\x*ln(1/5))+exp(\x*ln(0.8))))/ln(2)});
\draw [dashed, domain=1.9:5]  plot ({-( ln(0.2)+ ln(0.8))/(ln(2)) +(exp(\x*ln(1/5))*ln(0.2)+exp(\x*ln(0.8))*ln(0.8))/(ln(2)*(exp(\x*ln(1/5))+exp(\x*ln(0.8)) ) )} , {-\x*( exp(\x*ln(1/5))*ln(0.2)+exp(\x*ln(0.8))*ln(0.8))/(ln(2)*(exp(\x*ln(1/5))+exp(\x*ln(0.8))))+ ln((exp(\x*ln(1/5))+exp(\x*ln(0.8))))/ln(2)});
\draw [thick, domain=0:1.9]  plot ({-( ln(0.2)+ ln(0.8))/(ln(2)) +(exp(\x*ln(1/5))*ln(0.2)+exp(\x*ln(0.8))*ln(0.8))/(ln(2)*(exp(\x*ln(1/5))+exp(\x*ln(0.8)) ) )} , {-\x*( exp(\x*ln(1/5))*ln(0.2)+exp(\x*ln(0.8))*ln(0.8))/(ln(2)*(exp(\x*ln(1/5))+exp(\x*ln(0.8))))+ ln((exp(\x*ln(1/5))+exp(\x*ln(0.8))))/ln(2)});
\draw[dashed] (0,1) -- (2.6,1);
  \draw[dashed] (0,0.33) -- (2.6,0.33) ; 
 \draw [fill]  (-0,0.33)  circle  [radius=0.03] node [left] {$d(1-\eta)\ $};
 \draw [fill]  (0,0.33)  circle  [radius=0.002]  ;
 \draw [fill]  (0.45,0.33)  circle  [radius=0.03]  ;
\draw[dashed] (0.45,0.33) -- (0.45,0) [fill] circle [radius=0.03] node [below] {${{H_\ell}}(0)$};
\draw[dashed] (2.20,0.33) -- (2.20,0) [fill] circle [radius=0.03] node [below] {${H_r}(0)$};
 \draw [fill]  (2.20,0.33)  circle  [radius=0.03]  ; 
 \draw [fill] (-0.1,-0.10)   node [left] {$0$}; 
 \draw [fill] (0,1) circle [radius=0.03] node [left] {$d \ $}; 
} 
\end{tikzpicture}   
\caption{{\bf Left:} Typical singularity spectrum of a Gibbs measure. {\bf Right:} Parameters ${{H_\ell}}(0)$ and ${H_r}(0)$.}
\label{fig3}
\end{figure}
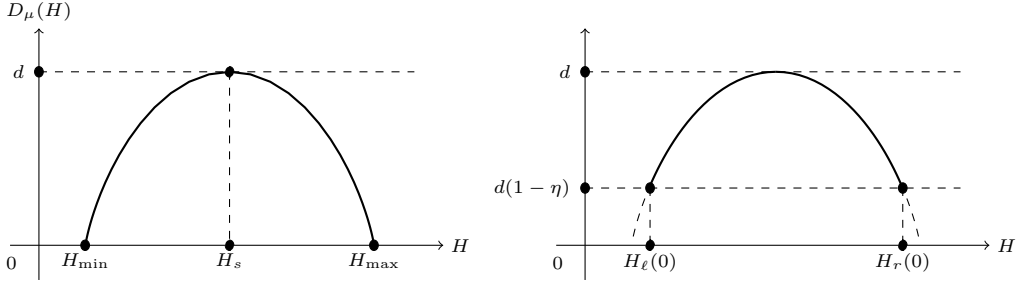

\begin{definition}
\label{defgammal}
For $\eta'\in[\eta_\ell,\eta]\setminus\{0\}$, let 
\begin{eqnarray}
\nonumber{\Ht_\ell}(\eta') & = & \left( \frac{1}{\eta'}-1\right){{H_\ell}} (\eta') \\
\mbox{ and }  \ \ \ \  \ \ \  \ \ \ 
\label{defweta} \widetilde\eta & = & \mathrm{argmin}_{\eta'\in [\eta_\ell,\eta]\setminus\{0\}} \ {\Ht_\ell}(\eta').
\end{eqnarray}
 \end{definition}

Again, see Figure \ref{figbetal} for a geometrical interpretation of these parameters (in the case we discard at the moment, i.e. when $\mu$ is homogeneous, the function $\tau_\mu$ is linear so that $H_{\min}=H_s=H_{\max}$ and  $\widetilde\eta=\eta_\ell=\eta$). It is easily seen that by definition the value $\weta$ is so that the straight line passing through the points $(0,d(1-\eta))$ and $\left({H_\ell}(\weta), \frac{d(1-\eta)}{1-\weta}\right)$ is tangent to the singularity spectrum of $\mu$. This value always exists and  is unique. Since ${D_\mu}$ is strictly concave,   ${D_\mu}'(H_{\min}+)=\infty$ and ${D_\mu}'(H_s)=0$, one has $\weta \in(\eta_\ell,\eta)$.

\begin{definition} 
\label{degtaut}
Let ${q_{\widetilde \eta}}$ be the unique solution to the equation  
\begin{eqnarray}
\label{defqeta}
{{H_\ell}}(\widetilde\eta) & = & \tau_\mu'({q_{\widetilde \eta}})\\
\mbox{ and } \ \ \ \ \ \ \   \ \ \ 
\label{defqetap}
{q_{\eta_\ell}} & = & \sup \big\{q\ge 0: {\tau_\mu^*}(\tau_\mu'(q))\ge d(1-\eta)\big\}.
\end{eqnarray}

Let $\wtau:\R \to \R$ be the mapping  defined as
\begin{equation}
\label{formtaut2}
\widetilde\tau(q)=
\begin{cases}
\ \tau_\mu(q)+{\Ht_\ell}(\widetilde\eta)q&\text{if } q\le {q_{\widetilde \eta}},\\
\ \tau_\mu(q)+d(1-\eta)&\text{if }{q_{\widetilde \eta}}<q<{q_{\eta_\ell}},  \\
\ {{H_\ell}} (0) q&\text{if ${q_{\eta_\ell}}<\infty$ and $q\ge {q_{\eta_\ell}}$}.
\end{cases}
\end{equation}
\end{definition}

See Figure \ref{figtau} for a representation of $\wtau$.  

Observe that $q_{\eta_\ell} <+\infty $ if and only if $\eta_\ell =0$, and in this case $\tau_\mu'(q_{\eta_\ell})={H_\ell}(0)$.

\begin{definition}
\label{defetar}
The real number $\etar \in [0,\eta]$ is defined as 
 $$
 \etar= \begin{cases}   \ 0 & \mbox{ if }  \ 0\leq  {{D_\mu}}({H_{\max}})\leq d(1-\eta) \\  1- \frac{d(1-\eta)}{   {{D_\mu}}({H_{\max}})} & \mbox{ otherwise.}\end{cases}
 $$
 For all $\eta'\in [\etar,\eta]$, there exists a unique ${H_r}(\eta')\in [{H_s}, {H_{\max}}]$ such that  $$
{{D_\mu}}\big ({H_r}(\eta')\big)=\frac{d(1-\eta)}{1-\eta'}.
 $$
 \end{definition}
 By construction one has:
\begin{itemize}
\sk\item
 ${H_r}(\eta)={H_s}$,
 \sk\item
if ${{D_\mu}}({H_{\max}})\le d(1-\eta)$,   $\etar=0$ and ${H_r}(\etar)$ is the unique solution of ${{D_\mu}}(H)=d(1-\eta)$ in $[{H_s}, {H_{\max}}]$,
  \sk\item  if ${{D_\mu}}({H_{\max}}) >d(1-\eta)$,   $\etar>0$ and ${H_r}(\etar)={H_{\max}}$.
 \end{itemize}

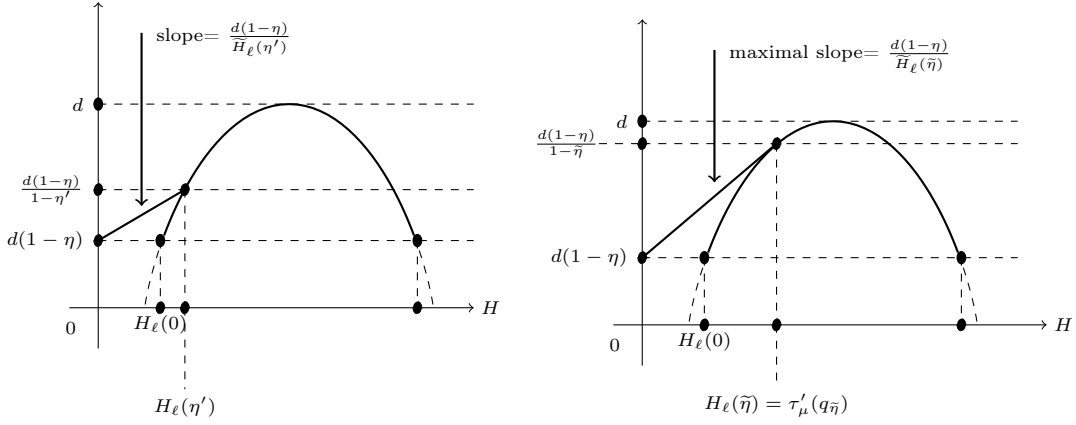
\begin{figure}
\begin{tikzpicture}[xscale=1.9,yscale=2.7]
{\tiny
\draw [->] (0,-0.2) -- (0,1.5) [radius=0.006] node [above] { };
\draw [->] (-0.2,0) -- (2.6,0) node [right] {$H$};
\draw [dashed,  domain=1.9:5]  plot ({-(exp(\x*ln(1/5))*ln(0.2)+exp(\x*ln(0.8))*ln(0.8))/(ln(2)*(exp(\x*ln(1/5))+exp(\x*ln(0.8)) ) )} , {-\x*( exp(\x*ln(1/5))*ln(0.2)+exp(\x*ln(0.8))*ln(0.8))/(ln(2)*(exp(\x*ln(1/5))+exp(\x*ln(0.8))))+ ln((exp(\x*ln(1/5))+exp(\x*ln(0.8))))/ln(2)});
\draw [thick, domain=0:1.9]  plot ({-(exp(\x*ln(1/5))*ln(0.2)+exp(\x*ln(0.8))*ln(0.8))/(ln(2)*(exp(\x*ln(1/5))+exp(\x*ln(0.8)) ) )} , {-\x*( exp(\x*ln(1/5))*ln(0.2)+exp(\x*ln(0.8))*ln(0.8))/(ln(2)*(exp(\x*ln(1/5))+exp(\x*ln(0.8))))+ ln((exp(\x*ln(1/5))+exp(\x*ln(0.8))))/ln(2)});
\draw [dashed, domain=1.9:5]  plot ({-( ln(0.2)+ ln(0.8))/(ln(2)) +(exp(\x*ln(1/5))*ln(0.2)+exp(\x*ln(0.8))*ln(0.8))/(ln(2)*(exp(\x*ln(1/5))+exp(\x*ln(0.8)) ) )} , {-\x*( exp(\x*ln(1/5))*ln(0.2)+exp(\x*ln(0.8))*ln(0.8))/(ln(2)*(exp(\x*ln(1/5))+exp(\x*ln(0.8))))+ ln((exp(\x*ln(1/5))+exp(\x*ln(0.8))))/ln(2)});
\draw [thick, domain=0:1.9]  plot ({-( ln(0.2)+ ln(0.8))/(ln(2)) +(exp(\x*ln(1/5))*ln(0.2)+exp(\x*ln(0.8))*ln(0.8))/(ln(2)*(exp(\x*ln(1/5))+exp(\x*ln(0.8)) ) )} , {-\x*( exp(\x*ln(1/5))*ln(0.2)+exp(\x*ln(0.8))*ln(0.8))/(ln(2)*(exp(\x*ln(1/5))+exp(\x*ln(0.8))))+ ln((exp(\x*ln(1/5))+exp(\x*ln(0.8))))/ln(2)});
  \draw[dashed] (0,1) -- (2.6,1);
 \draw [fill]  (-0,0.33)  circle  [radius=0.03] node [left] {$d(1-\eta)\ $}[dashed] (0,0.33) -- (2.6,0.33);
  \draw [fill]  (-0,0.58)  circle  [radius=0.03] node [left] {$\frac{d(1-\eta)}{1-\eta'}\ $} [dashed] (0,0.58) -- (2.6,0.58) ;
   \draw [fill]  (0,0.33)  circle  [radius=0.002]  ;
 \draw [fill]  (0.43,0.33)  circle  [radius=0.03]  ;
\draw[dashed] (0.43,0.33) -- (0.43,0) [fill] circle [radius=0.03] node [below] {$ {{H_\ell}}(0)$};
\draw[fill] (0.6,0.58) circle [radius=0.03] [dashed] (0.6,0.58) -- (0.6,-0.0) [fill] circle [radius=0.03]  [dashed] (0.6,0.0) -- (0.6,-0.4)  node [below]   {${{H_\ell}}(\eta')$};
\draw[dashed] (2.21,0.33) -- (2.21,0) [fill] circle [radius=0.03] node [below] {$ $};
 \draw [fill]  (2.21,0.33)  circle  [radius=0.03]  ; 
 \draw [fill] (-0.1,-0.10)   node [left] {$0$}; 
 \draw [fill] (0,1) circle [radius=0.03] node [left] {$d \ $}; 
   \draw [thick] (0,0.33) -- (0.6,0.58);
    \draw   (0.3, 1.2) node [above right] {\ slope= $ \frac{d(1-\eta)}{{\Ht_\ell}(\eta')}$}  [ ->,thick]    (0.3, 1.35)  -- (0.3,0.5);
 }
\end{tikzpicture}
 \   \begin{tikzpicture}[xscale=1.9,yscale=2.7]
{\tiny 
\draw [->] (0,-0.2) -- (0,1.5) [radius=0.006] node [above] { };
\draw [->] (-0.2,0) -- (2.8,0) node [right] {$H$};
\draw [dashed,  domain=1.9:5]  plot ({-(exp(\x*ln(1/5))*ln(0.2)+exp(\x*ln(0.8))*ln(0.8))/(ln(2)*(exp(\x*ln(1/5))+exp(\x*ln(0.8)) ) )} , {-\x*( exp(\x*ln(1/5))*ln(0.2)+exp(\x*ln(0.8))*ln(0.8))/(ln(2)*(exp(\x*ln(1/5))+exp(\x*ln(0.8))))+ ln((exp(\x*ln(1/5))+exp(\x*ln(0.8))))/ln(2)});
\draw [thick, domain=0:1.9]  plot ({-(exp(\x*ln(1/5))*ln(0.2)+exp(\x*ln(0.8))*ln(0.8))/(ln(2)*(exp(\x*ln(1/5))+exp(\x*ln(0.8)) ) )} , {-\x*( exp(\x*ln(1/5))*ln(0.2)+exp(\x*ln(0.8))*ln(0.8))/(ln(2)*(exp(\x*ln(1/5))+exp(\x*ln(0.8))))+ ln((exp(\x*ln(1/5))+exp(\x*ln(0.8))))/ln(2)});
\draw [dashed, domain=1.9:5]  plot ({-( ln(0.2)+ ln(0.8))/(ln(2)) +(exp(\x*ln(1/5))*ln(0.2)+exp(\x*ln(0.8))*ln(0.8))/(ln(2)*(exp(\x*ln(1/5))+exp(\x*ln(0.8)) ) )} , {-\x*( exp(\x*ln(1/5))*ln(0.2)+exp(\x*ln(0.8))*ln(0.8))/(ln(2)*(exp(\x*ln(1/5))+exp(\x*ln(0.8))))+ ln((exp(\x*ln(1/5))+exp(\x*ln(0.8))))/ln(2)});
\draw [thick, domain=0:1.9]  plot ({-( ln(0.2)+ ln(0.8))/(ln(2)) +(exp(\x*ln(1/5))*ln(0.2)+exp(\x*ln(0.8))*ln(0.8))/(ln(2)*(exp(\x*ln(1/5))+exp(\x*ln(0.8)) ) )} , {-\x*( exp(\x*ln(1/5))*ln(0.2)+exp(\x*ln(0.8))*ln(0.8))/(ln(2)*(exp(\x*ln(1/5))+exp(\x*ln(0.8))))+ ln((exp(\x*ln(1/5))+exp(\x*ln(0.8))))/ln(2)});
  \draw[dashed] (0,1) -- (2.6,1);
 \draw [fill]  (-0,0.33)  circle  [radius=0.03] node [left] {$d(1-\eta)\ $}[dashed] (0,0.33) -- (2.6,0.33);
 \draw (-0.2,0.89)  node [left] {$\frac{d(1-\eta)}{1-\widetilde\eta}$  $ $ }  [dashed] (-0.3,0.89) -- (0,0.89) [fill]  (-0,0.89)  circle  [radius=0.03]  [dashed] (0,0.89) -- (2.6,0.89) ;
  \draw [fill]  (0,0.33)  circle  [radius=0.002]  ;
 \draw [fill]  (0.43,0.33)  circle  [radius=0.03]  ;
\draw[dashed] (0.43,0.33) -- (0.43,0) [fill] circle [radius=0.03] node [below] {$ {{H_\ell}}(0)$};
\draw[fill] (0.93,0.89) circle [radius=0.03] [dashed] (0.93,0.89) -- (0.93,-0.0) [fill] circle [radius=0.03]  [dashed] (0.93,0.0) -- (0.93,-0.3)  node [below]  {${{H_\ell}}(\widetilde \eta)=\tau_\mu'({q_{\widetilde \eta}})$};
\draw[dashed] (2.21,0.33) -- (2.21,0) [fill] circle [radius=0.03] node [below] { };
 \draw [fill]  (2.21,0.33)  circle  [radius=0.03]  ; 
 \draw [fill] (-0.1,-0.10)   node [left] {$0$}; 
 \draw [fill] (0,1) circle [radius=0.03] node [left] {$d \ $}; 
 \draw [thick] (0,0.33) -- (0.93,0.89);
 \draw   (0.5, 1.2) node [above right] {\ maximal slope= $ \frac{d(1-\eta)}{{\Ht_\ell}(\widetilde \eta)}$   }  [ ->,thick]    (0.5, 1.35)  -- (0.5,0.7);
 }
\end{tikzpicture}
\caption{{\bf Left:} Parameters ${{H_\ell}}(\eta')$ and ${{\Ht_\ell}}(\eta')$. {\bf Right:} Optimal parameter $\weta$.}
\label{figbetal}
 \end{figure}

 The existence of ${H_\ell}$ and $H_r$ is ensured by the continuity of the Legendre  spectrum ${D_\mu}$  on its support.   
   
 \medskip
 
As ${\Ht_\ell}(\eta')$ was defined in Definition \ref{defgammal}, we can also define a parameter ${\Ht_r}(\eta')$ as follows: for every $\eta'\in[\eta_r,\eta]\setminus\{0\}$, let 
$$
{\Ht_r}(\eta')=\left( \frac{1}{\eta'}-1\right)H_r (\eta').
$$
The geometrical interpretation is the same as the one for ${\Ht_\ell}(\eta')$ (see Figure \ref{figbetal}), except now that everything is done on the decreasing part of the spectrum.

\sk

Finally, the following lemma provides us with another interpretation of the exponent  ${H_\ell}(\weta)$ (see \eqref{defweta}), which is useful to simplify some   formulas and to understand its role.

\begin{lemma}
\label{lem2s}
One has 
\begin{equation}\label{Hell}
{{H_\ell}}(\widetilde\eta) =  \mathrm{argmax}_H\left(\frac{{D_\mu}(H)}{H+{\Ht_\ell}(\widetilde\eta)}\right ).
\end{equation} 
\end{lemma}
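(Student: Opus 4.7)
The plan is to reduce the identity \eqref{Hell} to a tangency statement about the strictly concave spectrum $D_\mu$. First, I use the defining relation $D_\mu(H_\ell(\eta'))=d(1-\eta)/(1-\eta')$ from \eqref{defHl} to eliminate $\eta'$ in favor of $H:=H_\ell(\eta')$. Solving $1-\eta'=d(1-\eta)/D_\mu(H)$ and substituting into $\Ht_\ell(\eta')=\bigl(\tfrac{1}{\eta'}-1\bigr)H_\ell(\eta')$ yields
\[
\Ht_\ell(\eta')=\frac{d(1-\eta)\,H}{D_\mu(H)-d(1-\eta)}.
\]
Since $\eta'\mapsto H_\ell(\eta')$ is a continuous increasing bijection from $[\eta_\ell,\eta]\setminus\{0\}$ onto a subinterval of $[H_\ell(\eta_\ell),H_s]$, the minimization \eqref{defweta} defining $\weta$ is equivalent to \emph{maximizing}
\[
\Phi(H):=\frac{D_\mu(H)-d(1-\eta)}{H}
\]
over that range. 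Geometrically, $\Phi(H)$ is the slope of the chord joining $(0,d(1-\eta))$ to $(H,D_\mu(H))$.

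Next, I exploit strict concavity and real-analyticity of $D_\mu$ on $(H_{\min},H_{\max})$ (Proposition~\ref{fm}). These properties imply that $\Phi$ admits a unique interior maximizer $H^\star$, characterized by the tangency condition $D_\mu'(H^\star)=\Phi(H^\star)$. By definition \eqref{defweta} (cf.\ Figure~\ref{figbetal}), this tangency point is exactly $H^\star=H_\ell(\weta)$. The equation of the tangent line is then $y=d(1-\eta)+\Phi(H^\star)\,H$, and it meets the $H$-axis at $H=-d(1-\eta)/\Phi(H^\star)=-\Ht_\ell(\weta)$.

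Finally, observe that $D_\mu(H)/(H+\Ht_\ell(\weta))$ is exactly the slope of the chord from $(-\Ht_\ell(\weta),0)$ to $(H,D_\mu(H))$, so that its first-order critical condition reads
\[
D_\mu'(H)\bigl(H+\Ht_\ell(\weta)\bigr)=D_\mu(H),
\]
which, by the tangency relation derived above, is satisfied at $H=H_\ell(\weta)$. The function $\psi(H):=D_\mu'(H)(H+\Ht_\ell(\weta))-D_\mu(H)$ has derivative $D_\mu''(H)(H+\Ht_\ell(\weta))$, which is strictly negative on $(H_{\min},H_{\max})$ since $H+\Ht_\ell(\weta)>0$ and $D_\mu''<0$; hence $\psi$ is strictly decreasing, the critical point is unique, and it is the global maximum on $[H_{\min},H_{\max}]$, yielding \eqref{Hell}.

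No serious obstacle is expected: the content is essentially reading off the geometric meaning of the two optimizations and invoking strict concavity of $D_\mu$. The only point requiring minor care is to check that $H_\ell(\weta)\in(H_{\min},H_s)\subset(H_{\min},H_{\max})$, so that differentiability of $D_\mu$ makes the tangency argument legitimate; this is ensured by $\weta\in(\eta_\ell,\eta)$, observed right after Definition~\ref{defgammal}.
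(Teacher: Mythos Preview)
Your proof is correct and takes a genuinely different, more geometric route than the paper's. The paper works in the variable $\eta'$: it introduces $\varphi(\eta')=(1-\eta')H_\ell(\eta')$, computes $\varphi'(\eta')=-D_\mu^*\bigl(D_\mu'(H_\ell(\eta'))\bigr)/D_\mu'(H_\ell(\eta'))$ by implicit differentiation of \eqref{defHl}, and combines the first-order condition $\Ht_\ell'(\weta)=0$ with the critical equation for the argmax $\alpha$ to obtain
\[
-\frac{D_\mu^*\bigl(D_\mu'(\alpha)\bigr)}{D_\mu'(\alpha)}=-\frac{D_\mu^*\bigl(D_\mu'(H_\ell(\weta))\bigr)}{D_\mu'(H_\ell(\weta))},
\]
then concludes via the injectivity of $q\mapsto -\tau_\mu(q)/q$ through the Legendre correspondence. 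You instead change variables to $H=H_\ell(\eta')$ from the start, obtain the closed form $\Ht_\ell(\eta')=d(1-\eta)H/\bigl(D_\mu(H)-d(1-\eta)\bigr)$, and recast \eqref{defweta} as maximizing the chord slope $\Phi(H)$ from $(0,d(1-\eta))$; the tangency characterization of $H_\ell(\weta)$ is then immediate, and your key observation that this same tangent line hits the $H$-axis at $-\Ht_\ell(\weta)$ identifies the critical point of $H\mapsto D_\mu(H)/(H+\Ht_\ell(\weta))$ for free. Your uniqueness argument via $\psi'(H)=D_\mu''(H)(H+\Ht_\ell(\weta))<0$ is also more self-contained than the paper's detour through the injectivity of $-\tau_\mu(q)/q$. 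The paper's approach has the virtue of exercising the Legendre machinery used throughout, but yours is shorter and makes the geometry already suggested by Figure~\ref{figbetal} fully explicit.
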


\begin{proof} Due to the unimodal character of ${D_\mu}$, the maximum we seek for is reached at $H\in [H_{\min},H_s]$.  A rapid calculation shows that since ${D_\mu}$ is strictly concave and differentiable over $(H_{\min},H_s]$ with ${D_\mu}'(H_{\min}+)=+\infty$ and  ${D_\mu}'(H_s)=0$, then for any $\gamma>0$, $H\mapsto \frac{{D_\mu}(H)}{H+\gamma}$ reaches its maximum at a unique point of $(H_{\min},H_s)$.  Notice that from its definition the function $\eta'\mapsto {H_\ell}(\eta') $ is differentiable. 

 Let us introduce the function $\varphi(\eta')=\eta'{\Ht_\ell}(\eta') = (1-\eta'){H_\ell}(\eta')$.  Recall that by \eqref{defHl}, ${D_\mu}\big({H_\ell}(\eta') \big) = \dfrac{d(1-\eta)}{1-\eta'}$. So ${D_\mu}'\big({H_\ell}(\eta')\big ){H_\ell}'(\eta') = \dfrac{d(1-\eta)}{(1-\eta')^2} = \dfrac{{D_\mu}\big({H_\ell}(\eta')\big)}{1-\eta'}$. One deduces that
 $$
\varphi'(\eta')= -  {H_\ell}(\eta') +(1-\eta'){H_\ell}'(\eta') = -{H_\ell}(\eta') +\frac{{D_\mu}\big({H_\ell}(\eta')\big)}{ {D_\mu}'\big({H_\ell}(\eta')\big)}= -\dfrac{{D_\mu}^*\big({D_\mu}'\big({{H_\ell}}(\eta')\big)\big)}{{D_\mu}'\big({{H_\ell}}(\eta')\big)},
$$
since $ {D_\mu}^*(H) = H{D_\mu}' (H)- {D_\mu}(H)$.

On the other hand,  the derivative of $H\mapsto \dfrac{{D_\mu}(H)}{H+{\Ht_\ell}(\widetilde\eta)}$ vanishes at $\alpha= \mathrm{argmax}_H\left(\frac{{D_\mu}(H)}{H+{\Ht_\ell}(\widetilde\eta)}\right )$. This yields 
$$
 {D_\mu}'(\alpha) (\alpha+ {\Ht_\ell}(\weta)) - {D_\mu}(\alpha)= 0,$$
i.e.
$$ {\Ht_\ell}(\widetilde\eta) =  -\dfrac{{D_\mu}^*({D_\mu}'(\alpha ))}{{D_\mu}'(\alpha)}.
$$
Since $\weta$ is chosen so that ${\Ht_\ell}(\eta')$ is minimal at $\weta$, we have   $\Ht'_\ell(\weta)=0$.  This implies that  $\varphi'(\weta) = {\Ht_\ell}(\weta)$,   so finally
\begin{equation}
\label{eq2sss}
-\dfrac{{D_\mu}^*\big({D_\mu}'(\alpha )\big)}{{D_\mu}'(\alpha )}=  -\dfrac{{D_\mu}^*\big({D_\mu}'\big({{H_\ell}}(\widetilde \eta\big)\big)}{{D_\mu}'\big({{H_\ell}}(\widetilde \eta)\big)}.
\end{equation}
Recalling that ${D_\mu}$ is the Legendre transform of $\tau_\mu$, we know that $q\in \R^*_+\mapsto \tau_\mu'(q)$  is a bijection onto $(H_{\min},H_s)$. Hence,    since the  mapping $q>0\mapsto -\dfrac{\tau_\mu(q)}{q}$ is injective ($\tau_\mu$ being strictly concave),  the identification   $\Big(H,q, {D_\mu}^*\big({D_\mu}'(H)\big)\Big)= \Big(\tau_\mu'(q),{D_\mu}'(H),\tau_\mu(q)\Big)$ implies that $H \in (H_{\min},H_s) \mapsto -\dfrac{{D_\mu}^*\big({D_\mu}'(H)\big)}{{D_\mu}'(H)}$is injective as well.  Equation \eqref{eq2sss} yields finally $\alpha = {H_\ell}(\weta)$.   
\end{proof}
 
 {The previous definitions and discussion clarify the origin of the parameters introduced to state Theorem \ref{thm-0}}. The rest of the paper is devoted to the proof of the multifractal properties of $\Mm$ definied by \eqref{defnewMmu}.

 \section{Analysis of the surviving vertices}
 \label{surv}

 \subsection{Basic properties of the distribution of the surviving vertices}

Recall the Definition~\ref{def11} in which $\cjeta$ is defined, and recall that  $x_w$, defined by \eqref{defxw}, is the dyadic point corresponding to the projection of the finite word $w\in \Sigma_j$ to $\zu^d$. The first question concerns the distribution of the  points $x_w$,  for $w\in \cjeta$.

 \begin{definition}
 For every $j\geq 1$, and every finite word $W\in \Sigma^*$, we define
 $$  \mathcal{S}_j(\eta,W) = \{w\in  \mathcal{S}_j(\eta): I_w \subset I_W\}.$$    
 \end{definition}

The set $ \mathcal{S}_j(\eta,W) $ describes the {\em surviving} coefficients at generation $j$ included in $I_W$.

 Obviously, for every $J\leq j$, 
 $$ \mathcal{S}_j(\eta) =\bigcup _{W\in \Sigma_J}   \mathcal{S}_j(\eta,W) .$$

 \begin{lemma}
 \label{lem1}
There exists a positive sequence $(\ep_j)_{j\geq 1}$ converging to 0 such that, with probability 1,    for  every   $j$ large enough,  for every $W\in \Sigma_{\lfloor j(\eta-\ep_j)\rfloor}$,  $\mathcal{S}_j(\eta,W) \neq \emptyset$.  
 \end{lemma}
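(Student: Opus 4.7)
The plan is a first-moment computation combined with a union bound and Borel--Cantelli. Fix a vanishing sequence $(\ep_j)_{j\ge 1}$ (to be chosen later), set $J=J(j)=\lfloor j(\eta-\ep_j)\rfloor$, and for each $W\in\Sigma_J$ consider the descendants of $W$ at generation $j$: there are exactly $2^{d(j-J)}$ of them, and by the independence built into Definition~\ref{def11}, each survives independently with probability $2^{-d(1-\eta)j}$. Thus
\[
\mathbb{P}\bigl(\mathcal{S}_j(\eta,W)=\emptyset\bigr)
=\bigl(1-2^{-d(1-\eta)j}\bigr)^{2^{d(j-J)}}
\le \exp\!\bigl(-2^{d(j-J)-d(1-\eta)j}\bigr)
=\exp\!\bigl(-2^{d(\eta j-J)}\bigr),
\]
using $(1-x)^n\le e^{-nx}$. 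Because $J\le j(\eta-\ep_j)$, one has $d(\eta j - J)\ge dj\ep_j$, so this bound is at most $\exp(-2^{dj\ep_j})$.

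Next, I would apply a union bound over $W\in\Sigma_J$ (there are at most $2^{dJ}\le 2^{dj}$ such words):
\[
\mathbb{P}\bigl(\exists\,W\in\Sigma_J:\ \mathcal{S}_j(\eta,W)=\emptyset\bigr)
\le 2^{dj}\exp\!\bigl(-2^{dj\ep_j}\bigr).
\]
It now suffices to choose $\ep_j\to 0$ such that this upper bound is summable in~$j$. Any choice making $2^{dj\ep_j}$ grow faster than $(dj)\log 2$ by at least a polynomial factor works; for concreteness, take
\[
\ep_j=\frac{3\log_2 j}{dj},
\]
so that $2^{dj\ep_j}=j^{3}$ and the union bound above is dominated by $2^{dj}e^{-j^3}$, which is clearly summable. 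The Borel--Cantelli lemma then gives that, with probability one, for all $j$ large enough, every $W\in\Sigma_{J(j)}$ satisfies $\mathcal{S}_j(\eta,W)\neq\emptyset$, as required.

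There is essentially no obstacle: the argument is a textbook second-level Borel--Cantelli estimate. The only mildly delicate point is making sure the rate $\ep_j$ is chosen so that the union bound over the $\le 2^{dj}$ cylinders of generation $J$ is absorbed by the doubly-exponential decay $\exp(-2^{dj\ep_j})$; the choice $\ep_j=3(\log_2 j)/(dj)$ (or any $\ep_j$ satisfying $dj\ep_j\ge 3\log_2 j$ eventually) clearly achieves this while still tending to zero.
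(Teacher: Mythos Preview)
Your proof is correct and follows essentially the same approach as the paper: compute the probability that a given cylinder $W$ has no surviving descendant at generation $j$, apply a union bound over $\Sigma_{\lfloor j(\eta-\ep_j)\rfloor}$, and conclude by Borel--Cantelli. The only cosmetic differences are your explicit inclusion of the dimensional factor $d$ throughout and your choice $\ep_j=3\log_2 j/(dj)$ in place of the paper's $\ep_j=(\log^2 j)/j$; both choices yield summability.
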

 
 In other words, every cylinder of generation $\lf j(\eta-\ep_j)\rf$ contains a surviving vertex $w$ of generation $j$.
\begin{proof}
Fix a positive sequence $(\ep_j)_{j\geq 1}$ converging to 0. For each $j\ge 1$ and $W \in \Sigma_{\lfloor j(\eta-\ep_j)\rfloor}$, the cylinder $[W]$ contains exactly $2^{j- \lfloor j(\eta-\ep_j)\rfloor}$ distinct cylinders $[w]$, with $w\in\Sigma_j$. Denote this set by $S(W)$. The probability of the event $\mathscr{E}(W)=\{\forall\, w\in S(W),\ p_{w}=0\}$ is given by $(1-2^{-j(1-\eta)})^{2^{j- \lfloor j(\eta-\ep_j)\rfloor}}$. Thus, 
\begin{eqnarray*}
\mathbb{P}\Big ( \bigcup_{W\in \Sigma_{\lfloor j(\eta-\ep_j)\rfloor}} \mathscr{E}(W)\Big )&\le& 2^{ \lfloor j(\eta-\ep_j)\rfloor}  (1-2^{-j(1-\eta)})^{2^{j- \lfloor j(\eta-\ep_j)\rfloor}}\\
&\le& 2^{ \lfloor j(\eta-\ep_j)\rfloor} \exp (-2\cdot 2^{j\ep_j}).
\end{eqnarray*}
If we choose $\ep_j =  (\log^2(j))/j$, we get $\sum_{j\ge 1}\mathbb{P}\Big ( \bigcup_{v\in \Sigma_{\lfloor j(\eta-\ep_j)\rfloor}} \mathscr{E}(W)\Big )<\infty$. So the Borel-Cantelli lemma yields that, with probability 1, for $j$ large enough, for all $W\in \Sigma_{\lfloor j(\eta-\ep_j)\rfloor}$, there exists $w\in \Sigma_j$ such that $I_w\subset I_W$ and $p_{w}=1$, i.e. $w\in\mathcal S_j(\eta,W)$.  
\end{proof}

The sequence $(\ep_j)_{j\geq 1}$ is now fixed.

 Lemma \ref{lem1} has the following consequence: Almost surely,  the set of points belonging to an infinite number of balls of the form $B(x_w, 2^{-\lfloor |w|(\eta-\ep_{|w|})\rfloor})$ with $p_w=1$  is exactly  the whole cube $\zud$, i.e.
 \begin{equation}
 \label{cover1}
 \zud = \limsup_{j\to +\infty} \ \bigcup_{w\in \mathcal{S}_j(\eta)} B(x_w, 2^{-\lfloor |w|(\eta-\ep_{|w|})\rfloor}).
 \end{equation}
  
  Next we obtain an upper bound for the cardinality of $\mathcal{S}_j(\eta,W)$ when $W \in \Sigma_{\lfloor \eta j \rfloor}$.
 
 \begin{lemma}
 \label{lem2}
  With probability one,  for  every large  $j$, for every $W \in \Sigma_{\lfloor \eta j \rfloor}$,   $ \#  \mathcal{S}_j(\eta,W) \leq j$.
 \end{lemma}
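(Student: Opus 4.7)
The plan is to show that for each fixed $W\in\Sigma_{\lfloor\eta j\rfloor}$, the random variable $\#\mathcal{S}_j(\eta,W)$ is a binomial sum whose expectation is bounded by a constant depending only on $d$, then control its upper tail via the standard binomial-coefficient inequality, and finally conclude by a union bound over all such $W$ combined with Borel--Cantelli.

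First I would observe that the cylinder $[W]$ contains exactly $N_j:=2^{d(j-\lfloor\eta j\rfloor)}$ distinct words $w\in\Sigma_j$, and the variables $(p_w)_{w\in\Sigma_j,\,[w]\subset[W]}$ are independent Bernoulli with parameter $p_j:=2^{-d(1-\eta)j}$. Therefore $\#\mathcal{S}_j(\eta,W)$ is Binomial$(N_j,p_j)$, with
\[
N_j\, p_j \;=\; 2^{d(\eta j-\lfloor\eta j\rfloor)}\;\le\; 2^d.
\]
Next I would apply the elementary tail inequality for binomial laws: for any integer $k\ge 1$,
\[
\mathbb{P}\bigl(\#\mathcal{S}_j(\eta,W)\ge k\bigr)\;\le\;\binom{N_j}{k}p_j^{\,k}\;\le\;\frac{(N_jp_j)^k}{k!}\;\le\;\frac{2^{dk}}{k!}.
\]
Specializing to $k=j$ gives $\mathbb{P}(\#\mathcal{S}_j(\eta,W)\ge j)\le 2^{dj}/j!$.

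Then I would take a union bound over the $2^{d\lfloor \eta j\rfloor}\le 2^{d\eta j}$ words $W\in\Sigma_{\lfloor\eta j\rfloor}$:
\[
\mathbb{P}\Bigl(\exists\,W\in\Sigma_{\lfloor\eta j\rfloor}:\ \#\mathcal{S}_j(\eta,W)\ge j\Bigr)\;\le\;2^{d\eta j}\cdot\frac{2^{dj}}{j!}\;=\;\frac{2^{d(1+\eta)j}}{j!}.
\]
Using Stirling's lower bound $j!\ge (j/e)^j$, the right-hand side is bounded by $\bigl(2^{d(1+\eta)}\,e/j\bigr)^{j}$, which is summable in $j$. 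The Borel--Cantelli lemma then yields, with probability one, that for all $j$ sufficiently large every $W\in\Sigma_{\lfloor\eta j\rfloor}$ satisfies $\#\mathcal{S}_j(\eta,W)\le j$.

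There is no real obstacle here: the whole argument is a first-moment/Poissonization estimate, the only mildly delicate point being the uniform control $N_jp_j\le 2^d$ (which comes from the floor in $\lfloor\eta j\rfloor$) so that the binomial has bounded mean, ensuring super-exponential decay of the tail when $k$ grows like $j$.
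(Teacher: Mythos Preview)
Your proof is correct and follows essentially the same approach as the paper: identify $\#\mathcal{S}_j(\eta,W)$ as a binomial with mean bounded by $2^d$, bound the tail probability $\mathbb{P}(\#\mathcal{S}_j(\eta,W)\ge j)$ by $2^{dj}/j!$, take a union bound over $W\in\Sigma_{\lfloor\eta j\rfloor}$, and conclude via Borel--Cantelli. Your use of the union-bound inequality $\mathbb{P}(\mathrm{Bin}(n,p)\ge k)\le\binom{n}{k}p^k$ is slightly slicker than the paper's explicit manipulation of the full binomial sum, but the underlying estimates and conclusion are identical.
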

\begin{proof}
This is standard computations. Denote  for every $j\geq 1$ and every word $W \in \Sigma_{\lfloor \eta j \rfloor}$, the random variable
$${B}_{W} = \sum_{ w\in \Sigma_j: \, I_w\subset  I_W    } \ p_w$$
is equal to the cardinality of $  \mathcal{S}_j(\eta,W)$.

With this formulation, the $( {B}_{W})_{W\in \Sigma_{\lfloor \eta j \rfloor}}$ are i.i.d. random variables with common law the binomial law $ {B}(n_j,  \rho_j)$ of parameters  $n_j=2^{d(j-\lfloor \eta j \rfloor)} $ and $ \rho_j=2^{-dj(1-\eta)}$. We have 
\begin{eqnarray*}
\!\!\!\!\!\!\!\!    \!\!\!\!  &&\mathbb{P}\big ( {B}(n_j,  \rho_j) \geq  j \big)  =    \sum_{l =    j } ^{n_j}  \binom{n_j}{l}    ( \rho_j)^l (1- \rho_j)^{n_j-l} = \\
\!\!\!\!\!\!\!\!   \!\!\!\!  &&  \sum_{l =    j } ^{n_j} \frac{2^{d(j-\lfloor \eta j \rfloor)}(2^{d(j-\lfloor \eta j \rfloor)}-1) \ldots (2^{d(j-\lfloor \eta j \rfloor)}- (l-1)) }{l !}    2^{-djl(1-\eta)} (1-2^{-dj(1-\eta)})^{2^{d(j-\lfloor \eta j \rfloor)}-l}
\end{eqnarray*}

Observing that 
$$ \frac{2^{d(j-\lfloor \eta j \rfloor)} \ldots (2^{d(j-\lfloor \eta j \rfloor)}- (l-1)) }{  2^{djl(1-\eta)}} = (2^{d(\eta j  -[ \eta j ])}) \ldots (2^{d(\eta j  -[ \eta j ])}  - (l-1) 2^{-dj(1-\eta)}  )   , $$
this quantity is upper bounded by   $2^{dl} $. Finally, 
\begin{eqnarray*}
 \mathbb{P} \big ({B}(n_j, \rho_j) \geq   j \big)  \leq   \sum_{l =    j } ^{n_j} \frac{2^{dl}  }{l !}    (1-2^{-dj(1-\eta)})^{2^{d(j-\lfloor \eta j \rfloor)}-l}\ \leq   \sum_{l =    j } ^{+\infty} \frac{2^{dl}}{l !}     \leq  2^{-dj} 
   \end{eqnarray*}
for $j$ large enough.  We deduce that  $\sum_{j\geq 1} 2^{d\lfloor \eta j \rfloor} \mathbb{P} \big( {B}(n_j, \rho_j) \geq   j \big) <+\infty$. Then the Borel-Cantelli lemma yields that  almost surely there exists $J \geq 1$ such that for all $j\ge J$, for  all $W\in \Sigma_{\lfloor \eta j \rfloor}$, one has  ${B}_W  <   j$.
\end{proof}

As a conclusion, one keeps in mind the intuition that every cylinder $W\in \Sigma_{\lf \eta j  \rf}$ contains at least one, but not much more than one surviving vertex $w\in \mathcal{S}_j(\eta)$.

\subsection{Analysis of the values of $\mu$ at the surviving vertices}\label{anaval}

 The above lemmas give some hints about the possible values for $\mu(I_w)$ for $w\in \mathcal{S}_j(\eta)$. Indeed,  observe that   any word $w$ can be written as the concatenation $w = w_{|\lf \eta j  \rf} \cdot \sigma^{ \lf \eta j  \rf} w.$
Further,   by the almost multiplicativity property of $\mu$,  one has 
$$\mu(I_w) \approx \mu(I_{w_{|\lf \eta j \rf}}) \mu(I_{\sigma^{ \lf \eta j  \rf} w}).$$
 Lemmas \ref{lem1} and \ref{lem2} assert that  all the possible  values  for $  \mu(I_{w_{|\lf \eta j \rf}})$ are reached.  Hence, in order to describe the values of $\mu(I_w)$, it is necessary to investigate the possible values for $ \mu(I_{\sigma^{ \lf \eta j  \rf} w})$ when $w\in \mathcal{S}_j(\eta)$. A quick analysis could lead to the intuition that since most of the coefficients are put to zero, only the most frequent local dimension $H_s$ survive, i.e. $\mu(I_{\sigma^{ \lf \eta j  \rf} w}) \approx 2^{- \lf j \eta \rf H_s}$.

 The goal of this section is to prove that this intuition is neither true, nor absolutely false. In fact, we are going to explain that in order to investigate the values of $\mu(I_w)$ for $w\in \mathcal{S}_j(\eta)$, one needs to look at all the decompositions
 \begin{equation}
 \label{decompeta'}
 w = w_{|\lf \eta'j \rf} \cdot \sigma^{ \lf \eta'j \rf} w,
 \end{equation}
and to use that 
$$\mu(I_w) \approx \mu(I_{w_{|\lf \eta'j\rf}}) \mu(I_{\sigma^{ \lf \eta'j  \rf} w}),$$
for all possible values of $\eta'\in [\etal,\eta]\cup[\etar,\eta]$, and that the most frequent behaviors for $\mu(I_{\sigma^{ \lf \eta'j  \rf} w})$ are related to the local dimensions ${H_\ell}(\eta')$ and $H_r(\eta')$, $H_s$ corresponding to ${H_\ell}(\eta)=H_r(\eta)$.

These considerations lead to the following definition.

\begin{definition}
\label{defRP}
Let $\alpha,\ep \geq 0$ be two real numbers, and let $\eta'\in [0,\eta]$. 

When $w\in \Sigma_j$, the prefix $
 w_{|\lf \eta'j  \rf} $ is referred to as the {\em $\eta'$-root} of $w$, and the suffix $ \sigma^{ \lf \eta'j  \rf} w$ is the {\em $\eta'$-tail} of $w$.

 \mk
 
We introduce the following sets:

\begin{itemize}

\sk\item
$\Rmu(j,\eta',\alpha\pm\ep)$ is the set of those finite words $w\in \Sigma_j$ whose $\eta'$-root   $w_{|\lf \eta' j\rf} $ belongs to $ \mathcal{E}_\mu(\lf \eta'j \rf,\alpha\pm\ep)$, i.e. 
$$  \frac{\log_2 \mu(I_{w_{|\lf \eta' j\rf}})}{- \lf \eta' j\rf } \in [\alpha-\ep,\alpha+\ep].$$
\sk\item
When  $W \in  \Sigma^*$,  $\mathcal{T}_{\mu, \ell} (j, \eta',\ep,W )$  is the set of those finite words $w\in \Sigma_j$ satisfying $I_w\subset I_W$ and   whose $\eta'$-tail   $\sigma^{ \lf \eta'j  \rf} w $  belongs to $ \mathcal{E}_\mu(j-\lf \eta'j\rf,{H_\ell}(\eta')\pm\ep)$, i.e. 
\begin{equation}
\label{tail}
  \frac{\log_2  \mu(I_{\sigma^{ \lf \eta'j  \rf} w})}{ j - \lf \eta' j\rf } \in [{H_\ell} (\eta')-\ep, {H_\ell}(\eta')+\ep ].
  \end{equation}
 
\sk\item
the sets $\Tl (j,\eta',\ep)$ is the set of all finite words $w\in \Sigma_j$ satisfying \eqref{tail}, so for every $J\leq j$,
$$ \Tl (j,\eta',\ep) = \bigcup_{W\in \Sigma_J} \mathcal{T}_{\mu, \ell} (j, \eta',\ep,W ) .$$

\sk\item
$\Tr (j,\eta',\ep,W)$ and $\Tr (j,\eta',\ep)$ are defined as $\Tl (j,\eta',\ep,W)$ and $\Tl (j,\eta',\ep)$ by replacing ${H_\ell} (\eta')$ by $H_r (\eta')$.
\sk\item
$\Tt (j,\eta',\ep)=\Tl (j,\eta',\ep)\cup\Tr (j,\eta',\ep)$.

\end{itemize}
\end{definition}

\begin{center}
\begin{figure}
\begin{tikzpicture}[xscale=1.,yscale=1.]
\draw  [thick,|-|,color=white] (1.2,1) -- (1.4,0.2)  ;

\draw[fill] (0,0) node  [right]  {$w \, = \, w_1  \, w_2   \,\cdots  \,w_{\lf j \eta'\rf}  \ \  w_{\lf j \eta'\rf +1 }   \,w_{\lf j \eta'\rf +2 } \, \cdots   \, w_j$} ;

\draw [thick,<->] (4,-0.5) -- (7.9,-0.5) ;
\draw [fill] (6,-0.8)   node [below] {$\eta'$-tail of $w$};
\draw [ thick,->]  (6.2,-1.5) -- (8.5,-2.5);
\draw [fill] (8,-2.6)   node [below] {$  \frac{\log_2  \mu(I_{\sigma^{ \lf \eta'j  \rf} w})}{ j - \lf \eta' j\rf } \in [{H_\ell} (\eta')-\ep, {H_\ell}(\eta')+\ep ]$};

\draw [thick,<->] (1,-0.5) -- (3.7,-0.5) ;
\draw [fill] (2.3,-0.8)   node [below] {$\eta'$-root of $w$};
\draw [ thick,->]  (2.5,-1.5) -- (1,-2.5);
\draw [fill] (1,-2.6)   node [below]   {$ \frac{\log_2 \mu(I_{w_{|\lf \eta' j\rf}})}{- \lf \eta' j\rf } \in [\alpha-\ep,\alpha+\ep]$};

\end{tikzpicture}
\caption{Decomposition of a word $ w\in  \Rmu(j,\eta',\alpha\pm\ep) \cap \Tl (j,\eta',\ep)$ into its $\eta'$-tail and its $\eta'$-root.}
\label{figdecomp}
\end{figure}\end{center}

Recall the decomposition \eqref{decompeta'} of any finite word $w$. 
The idea, illustrated by Figure~\ref{figdecomp},  is that the sets $\Rmu(j,\eta',\alpha\pm\ep)$ describe the scaling behavior of the {$\eta'$-root} $w_{|\lf \eta'j \rf}$ of the word $w\in \Sigma_j$, while $\Tt(j,\eta',\ep)$ describe the scaling behavior of the {$\eta'$-tail}  $ {\sigma^{\lf \eta'j \rf}w}$ of $w$. Observe that we focus on the cases where the $\eta'$-tail of $w$ behaves with a local dimension close to some ${H_\ell}(\eta')$ or $H_r(\eta')$. Indeed, these specific behaviors of the $\eta'$-tail  will turn out to be central to explain the structure of the local dimensions of $\M_\mu$, and  Propositions \ref{p1}, and \ref{p2} to \ref{discretization} will exhibit essential properties related to them.

\mk

Observe   that the knowledge of which sets $ \Rmu (j,\eta',\alpha\pm\ep)  $ and $\Tt (j,\eta', \ep)$ a given word $w$ belongs to, yields  {$\mu(I_w)$ up to a multiplicative factor of order $2^{\pm \varepsilon j}$.}

 The first proposition gives upper and lower bounds for the  possible values of $\mu(I_w)$, when $w$ survives after sampling.

 \begin{proposition}\label{p1}
  Almost surely, there exists a positive sequence $(\ep^1_j)_{j\ge 1}$ converging to 0 such that for $j$ large enough, for all $w\in\mathcal{S}_j(\eta)$, one has 
 $$
j({{H_\ell}}(\eta_\ell)-\ep^1_j)\le  -\log_2 \mu(I_w)\le j({H_r}(\eta_r)+\ep^1_j).
$$ 
 \end{proposition}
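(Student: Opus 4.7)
The plan is to use a first-moment estimate coupled with Borel--Cantelli, leveraging the large deviations estimate of item~(5) of Proposition~\ref{fm}. The heuristic is: for $H\in\R_+$, the number of $w\in\Sigma_j$ with $-\log_2\mu(I_w)/j\approx H$ is essentially $2^{jD_\mu(H)}$ by item~(5), and each such $w$ independently belongs to $\mathcal{S}_j(\eta)$ with probability $2^{-dj(1-\eta)}$. Hence the expected number of surviving words with exponent $\approx H$ is essentially $2^{j(D_\mu(H)-d(1-\eta))}$. The definitions of $H_\ell(\etal)$ (Definition~\ref{defetal}) and $H_r(\etar)$ (Definition~\ref{defetar}), combined with the strict concavity of $D_\mu$ and the fact that $D_\mu$ is maximized at $H_s$ with value $d$, imply that $D_\mu(H)\ge d(1-\eta)$ precisely for $H\in[H_\ell(\etal),H_r(\etar)]$, with the inequality strict just outside this interval and with a uniform gap once we are $\delta$-away from it.

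Concretely, fix $\delta>0$ and set
$$
F_\delta=\big\{H\in[0,K]:\ H\notin(H_\ell(\etal)-\delta,\,H_r(\etar)+\delta)\big\},
$$
where $K$ is the upper bound given by item~(6) of Proposition~\ref{fm}. By strict concavity and monotonicity of $D_\mu$ on each side of $H_s$, together with the boundary conditions $D_\mu(H_\ell(\etal))=d(1-\eta)/(1-\etal)\ge d(1-\eta)$ and $D_\mu(H_r(\etar))\ge d(1-\eta)$, one gets $\alpha=\alpha(\delta)>0$ such that $\sup_{h\in F_\delta\cap[H_{\min},H_{\max}]}D_\mu(h)\le d(1-\eta)-2\alpha$. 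Cover $F_\delta$ by finitely many closed intervals $I_1,\ldots,I_N$ of length at most $\ep\ll\alpha$, chosen so that $\sup_{h\in I_i}D_\mu(h)\le d(1-\eta)-\alpha$ for every $i$ with $I_i\cap[H_{\min},H_{\max}]\neq\emptyset$. Applying item~(5) of Proposition~\ref{fm} to each $I_i$ (and using that $\#\mathcal{E}_\mu(j,I)=0$ for $j$ large when $I$ is disjoint from $[H_{\min},H_{\max}]$), there exists $J_0=J_0(\delta)$ such that for all $j\ge J_0$ and all $i$,
$$
\#\mathcal{E}_\mu(j,I_i)\le 2^{j(d(1-\eta)-\alpha/2)}.
$$
By independence of the $p_w$'s and Markov's inequality,
$$
\mathbb{P}\Big(\exists\, w\in\mathcal{S}_j(\eta)\text{ with }-\log_2\mu(I_w)/j\in F_\delta\Big)\le \sum_{i=1}^N\#\mathcal{E}_\mu(j,I_i)\cdot 2^{-dj(1-\eta)}\le N\cdot 2^{-j\alpha/2},
$$
which is summable in $j$; Borel--Cantelli then provides an almost sure integer $J(\delta)$ such that, for all $j\ge J(\delta)$, every $w\in\mathcal{S}_j(\eta)$ satisfies $-\log_2\mu(I_w)/j\in(H_\ell(\etal)-\delta,H_r(\etar)+\delta)$.

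To pass from a fixed $\delta$ to a sequence $\ep^1_j\to 0$, apply the above to $\delta_k=1/k$ for each $k\ge 1$ and intersect the corresponding almost sure events; the integers $J(\delta_k)$ can be chosen non-decreasing in $k$. Define $\ep^1_j=1/k$ for $J(\delta_k)\le j<J(\delta_{k+1})$ (and $\ep^1_j=1$ for $j<J(\delta_1)$). Then $\ep^1_j\to 0$ and the announced double inequality holds for every $w\in\mathcal{S}_j(\eta)$ and every $j$ large enough. I do not anticipate any real obstacle in the proof; the only mildly delicate point is observing that the gap $\sup_{h\in F_\delta}D_\mu(h)<d(1-\eta)$ is uniform on $F_\delta$ for each fixed $\delta>0$, which is immediate from the strict concavity of $D_\mu$ together with Definitions~\ref{defetal}--\ref{defetar}.
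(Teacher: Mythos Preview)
Your proof is correct and follows essentially the same approach as the paper's: use item~(5) of Proposition~\ref{fm} to bound the number of words with exponent outside $[H_\ell(\etal)-\delta,H_r(\etar)+\delta]$, combine with a first-moment/union bound on the surviving vertices, and apply Borel--Cantelli for each $\delta$ before extracting a vanishing sequence. The only cosmetic differences are that the paper applies item~(5) directly to the two-interval set $I_p=[0,H_\ell(\etal)-2^{-p}]\cup[H_r(\etar)+2^{-p},+\infty)$ rather than covering by many small intervals, and bounds the exact probability $1-(1-2^{-dj(1-\eta)})^{\#\mathcal{E}_\mu(j,I_p)}$ rather than invoking Markov; your finite covering is redundant but harmless.
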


 \begin{proof} This is a consequence of the large deviations properties of Gibbs measures.   Fix an integer $p\geq 1$. Consider the interval $I_p= [0,{H_\ell}(\eta_\ell)- 2^{-p}] \cup [ H_r(\etar)+2^{-p}, +\infty)$.  By definition of ${H_\ell}$ and $H_r$,    one has $\sup \{ {D_\mu}(h): h\in  I_p \} < d(1-\eta)$.  Let us call $\xi_p = d(1-\eta) - \sup \{ {D_\mu}(h): h\in  I_p \} $.
 
  By  item (5) of Proposition \ref{fm}, there exists a generation   $J_p$ such that $j \geq J_p$ implies
$$\left| \frac{ \log  \# \mathcal{E} _ \mu(j, I_p)   }{-j}  { \log 2^{j}} -   \sup_{h\in I_p } 
 {D_\mu}(h) \right| \leq  \xi_p/2 . $$
 Using the definition of $\xi_p$, this rephrases as
 $$ \# \mathcal{E} _ \mu(j, I_p)   \leq 2^{j( \sup_{h\in I_p } {D_\mu}(h) +\xi_p/2)} \leq 2^{j(d(1-\eta) -\xi_p/2)} .  $$
 
 Let us compute the probability  of the event $\mathscr{A}_j^p = \left\{ \mathcal{S}_j(\eta) \cap \mathcal{E} _ \mu(j, I_p)  \neq \emptyset \right\}$.   One has
 \begin{eqnarray*}
\forall \ j\geq J_p, \ \ \  \mathbb{P}(\mathscr{A}_j^p)   &   \leq  &  1 - (1-2^{-dj(1-\eta)}) ^{\#\mathcal{E} _ \mu(j, I_p ) } \\
  &\leq &  1 - (1-2^{-dj(1-\eta)}) ^{ 2^{j(d(1-\eta) -\xi_p/2)}}\\
  & \leq & 2^{ -j\xi_p/4}. 
  \end{eqnarray*}
The Borel-Cantelli lemma implies that, almost surely, $\mathscr{A}_j^p$ is not realized  when $j$ becomes greater than some integer $J'_p \geq J_p$. In the construction, one can ensure that $J_{p+1}$ is always strictly greater than $J_p$, for all integers $p$.

Choosing now $\ep^1_j = 2^{-p}$ for $ j\in [J_p, J_{p+1})$ yields the result. 
 \end{proof}

The next proposition is complementary to the previous one: it  precisely estimates  the number of surviving vertices with a given $\mu$-local dimension. We state it for the sake of completeness but it will not be used. 
 \begin{proposition}\label{p1prime}
  Almost surely,  for every  $\ep>0$ and every $H\in [{H_\ell}(\etal), H_r(\etar)]$,  there exists $ \widetilde \ep>0$ with $\widetilde\ep \to 0$ as $\ep\to 0$ such that for $j$ large enough one has
 $$
2^{j\big({D_\mu}(H)-d(1-\eta) -\widetilde\ep\big)}\le \#  \left(\mathcal{E}_\mu(j,H\pm\ep) \cap\mathcal{S}_j(\eta)  \right)\leq  2^{j \big ({D_\mu}(H)-d(1-\eta) +\widetilde\ep\big)}.
$$ 
 \end{proposition}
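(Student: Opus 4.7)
\textbf{Proof plan for Proposition \ref{p1prime}.}

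The statement is a concentration result: conditional on the deterministic values $(\mu(I_w))_{w\in\Sigma_j}$, the random cardinality $Z_j(H,\ep):=\#\bigl(\mathcal E_\mu(j,H\pm\ep)\cap\mathcal S_j(\eta)\bigr)$ is a sum of $N_j(H,\ep):=\#\mathcal E_\mu(j,H\pm\ep)$ independent Bernoulli random variables of common parameter $q_j:=2^{-dj(1-\eta)}$. The plan is thus to first estimate $N_j(H,\ep)$ via the Gibbs large deviations, and then concentrate $Z_j(H,\ep)$ around its mean using a Chernoff-type bound, finally invoking Borel--Cantelli.

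First, fix $H\in[{H_\ell}(\etal),H_r(\etar)]$ and $\ep>0$. By item (5) of Proposition \ref{fm} applied to the interval $[H-\ep,H+\ep]$, for every $\delta>0$ there exists $J=J(H,\ep,\delta)$ such that for all $j\ge J$,
$$
2^{j(s_\ep(H)-\delta)}\le N_j(H,\ep)\le 2^{j(s_\ep(H)+\delta)}, \qquad s_\ep(H):=\sup_{h\in[H-\ep,H+\ep]}D_\mu(h).
$$
Continuity of $D_\mu$ at $H$ ensures $|s_\ep(H)-D_\mu(H)|\to 0$ as $\ep\to 0$, so by taking $\delta=\ep$ (say) both $s_\ep(H)$ and $\delta$ contribute to a single error $\wep$ vanishing with $\ep$. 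In particular, $\E[Z_j(H,\ep)]=N_j(H,\ep)\,q_j$ satisfies
$$
2^{j(D_\mu(H)-d(1-\eta)-\wep/2)}\le \E[Z_j(H,\ep)]\le 2^{j(D_\mu(H)-d(1-\eta)+\wep/2)}
$$
for $j$ large.

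Second, I apply concentration. For the upper bound, Markov's inequality gives $\mathbb P\bigl(Z_j(H,\ep)\ge 2^{j\wep/2}\E[Z_j(H,\ep)]\bigr)\le 2^{-j\wep/2}$, which is summable, so Borel--Cantelli yields almost surely $Z_j(H,\ep)\le 2^{j(D_\mu(H)-d(1-\eta)+\wep)}$ for $j$ large. For the lower bound, note that $s_\ep(H)>d(1-\eta)$ strictly (even when $H$ is a boundary point, because $\ep>0$ and $D_\mu$ is strictly concave with $D_\mu({H_\ell}(\etal))\ge d(1-\eta)$ and similarly at $H_r(\etar)$), so $m_j:=\E[Z_j(H,\ep)]$ grows at least like $2^{j\kappa}$ for some $\kappa=\kappa(H,\ep)>0$. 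Bernstein's inequality for sums of independent Bernoullis then gives
$$
\mathbb P\bigl(Z_j(H,\ep)\le m_j/2\bigr)\le \exp(-m_j/8),
$$
which is super-exponentially summable, so Borel--Cantelli provides $Z_j(H,\ep)\ge m_j/2\ge 2^{j(D_\mu(H)-d(1-\eta)-\wep)}$ for $j$ large, almost surely.

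Third, to handle all $H$ and $\ep$ simultaneously on a probability-$1$ event, I perform the above construction over a countable dense set $\{(H_n,\ep_n)\}$ (rationals), taking a countable intersection. For a general pair $(H,\ep)$, I exploit the monotonicity $\mathcal E_\mu(j,H\pm\ep)\subset \mathcal E_\mu(j,H'\pm\ep')$ whenever $[H-\ep,H+\ep]\subset [H'-\ep',H'+\ep']$, together with the continuity of $D_\mu$, to sandwich $Z_j(H,\ep)$ between two quantities $Z_j(H_n,\ep_n)$ with $|H_n-H|,|\ep_n-\ep|$ small; the resulting extra error is again absorbed into $\wep$. The main technical nuisance, which I expect to be the only real obstacle, is the treatment of the boundary points $H={H_\ell}(\etal)$ and $H=H_r(\etar)$ for the lower bound, since there $D_\mu(H)-d(1-\eta)=0$; the key remark that saves us is that $\ep>0$ combined with strict concavity of $D_\mu$ forces $s_\ep(H)>d(1-\eta)$ strictly, which is enough to run the Bernstein estimate above.
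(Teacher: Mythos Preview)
Your proposal is correct and follows the natural approach: the paper itself omits the proof, stating only that it ``is close to that of Proposition~\ref{p1} and is omitted.'' Your argument---large deviations from Proposition~\ref{fm}(5) to control $N_j(H,\ep)$, then Markov/Bernstein concentration for the binomial $Z_j(H,\ep)$, Borel--Cantelli, and a countable-dense-set plus monotonicity argument to obtain the uniformity in $(H,\ep)$---is exactly the expected elaboration of the method of Proposition~\ref{p1}, and your handling of the boundary points via $s_\ep(H)>d(1-\eta)$ (from strict concavity of $D_\mu$ and $\ep>0$) is the right observation.
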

\begin{proof}
The proof is close to that of Proposition \ref{p1} and is omitted.
\end{proof}
Next proposition is crucial. It shows that the parameters $\eta'$, ${H_\ell}(\eta')$ and $H_r(\eta')$ play a special role in our problem. The  underlying idea  is the following: The almost multiplicativity property  implies that for every word $w\in \Sigma$, for every $\eta' \in [\etal,\eta] \cup [\etar,\eta]$, one has 
  $$ \mu(I_w) \approx \mu(I_{w_{| \lfloor \eta' j\rfloor}}) \mu(I_{w_{\sigma ^{\lfloor \eta' j\rfloor} w}}) .$$
But   if a vertex $w$ survives after sampling, i.e. if $ w\in \Sigma_j(\eta)$, then we are going to prove that  $\mu(I_w)$   can be decomposed  as
  $$ \mu(I_w) \approx  \mu(I_{w_{| \lfloor \eta' j\rfloor}})  2^{-(j-\lfloor \eta' j\rfloor)  {H_\ell}(\eta')} \ \mbox{ or } \   \mu(I_w) \approx  \mu(I_{w_{| \lfloor \eta' j\rfloor}})2^{-(j-\lfloor \eta' j\rfloor) H_r(\eta')} , $$
 for some  suitable choice of $\eta'$ (depending on $w$). So we have an explicit formula for its $\eta'$-tail. We will then establish a complementary information (Proposition \ref{p'2}): $\eta'$ being fixed,  with probability one for $j$ large enough,  for $ W \in \Sigma_{\lfloor \eta'j\rfloor}$, there is necessarily at least  one word $w\in\mathcal{S}_j(\eta,W)$ such that   the above decomposition holds.

\begin{center}
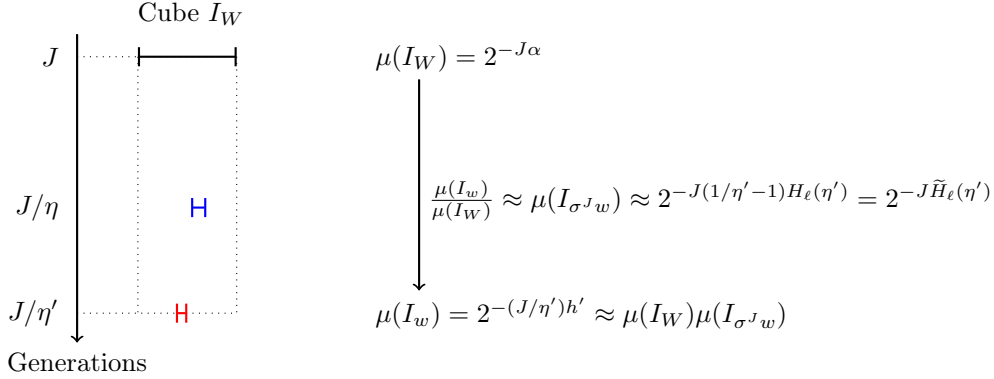
\begin{figure}
\begin{tikzpicture}[xscale=1.,yscale=1.]{\small
\draw [thick,->] (-0.3,2.5) -- (-0.3,-1.59) node [below] {Generations};
\draw [dotted] (-0.3,2.2) -- (1.8,2.2)   ; 
\draw [fill] (-0.3,2.2)  node   [left]   {$J$ \ }  [thick,|-|] (0.5,2.2) -- (1.8,2.2) ;
\draw [fill] (1.2,2.5)  node [above]  {Cube $ I_{W}$}  [fill] (4.7,2.5)  ;
\draw [fill] (3.5,2.2)  node  [right] {$\mu(I_{W }) = 2^{-J  \alpha}$} ;
\draw  [thick,|-|,color=blue] (1.2,0.2) -- (1.4,0.2)  ;
\draw [thick,->] (4.2,1.9) -- (4.2,-0.9)  ;
\draw [fill] (4.2,0.3)  node  [right] {$\frac{\mu(I_{w})}{\mu(I_{W})}  \approx \mu(I_{\sigma^J w})\approx  2^{-J (1/{\eta'}-1) {{H_\ell}}( {\eta'})}= 2^{-J {\Ht_\ell}(\eta')}$} ;
\draw [fill] (-0.3,.2)  node   [left]   {$J /{\eta}$ \ };
\draw [dotted] (0.5,2.2) -- (0.5,-1.2)  [dotted] (1.8,2.2) -- (1.8,-1.2)   ; 
\draw [dotted] (-0.3,-1.2) -- (1.8,-1.2)   ; 
\draw [fill] (-0.3,-1.2)  node   [left]   {$J /{\eta'}$ \ }  [thick,|-|,color=red] (1.0,-1.2) -- (1.15,-1.2)  ;
\draw [fill] (3.5,-1.2)  node  [right] {$\mu(I_{w}) = 2^{-(J /{\eta'}) h'}\approx \mu(I_W)\mu(I_{\sigma^J w})$} ;}
\end{tikzpicture}
\caption{Behavior of the surviving vertices inside a cube $I_W$}
\label{figure_decoup}
\end{figure}\end{center}
 \begin{proposition}\label{p2} 
  With probability one, there exists a positive sequence $(\ep^2_j)_{j\ge 1}$ converging to 0 such that for all $w\in \mathcal{S}_j(\eta)$, there exists $\eta'\in [\eta_{\ell},\eta] \cup [\etar,\eta]$ such that $w\in \Tt (j,\eta', \ep^2_j)$. \end{proposition}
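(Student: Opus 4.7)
The plan is to run a discrete intermediate value argument on the scaling $F_k := -\log_2\mu(I_{\sigma^k w})/(j-k)$ of the $k$-tail of $w$, for $k=0,\dots,\lfloor\eta j\rfloor$, compared against the continuous curves $k\mapsto{H_\ell}(k/j)$ and $k\mapsto H_r(k/j)$. The input is Proposition~\ref{p1}, which supplies, uniformly in $w\in\mathcal{S}_j(\eta)$ on an event of probability one, the bound $F_0\in[{H_\ell}(\etal)-\ep^1_j,H_r(\etar)+\ep^1_j]$. The quasi-Bernoulli property~\eqref{quasib}, applied to the factorization $I_{\sigma^k w}=I_{w_{k+1}\cdot\sigma^{k+1}w}$, yields $|\log_2\mu(I_{\sigma^{k+1}w})-\log_2\mu(I_{\sigma^k w})|\le C_\mu$ uniformly, whence the step bound $|F_{k+1}-F_k|=O(1/j)$ for $k\le\lfloor\eta j\rfloor$ with a constant depending only on $\mu$ and $\eta$. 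On the continuous side, the defining relation $D_\mu({H_\ell}(\eta'))=d(1-\eta)/(1-\eta')$ and implicit differentiation give ${H_\ell}'(\eta')=d(1-\eta)/((1-\eta')^2D_\mu'({H_\ell}(\eta')))$, showing that ${H_\ell}$ is $C^1$ on $[\etal,\eta]$; the only delicate point is the endpoint $\etal>0$ (where ${H_\ell}(\etal)=H_{\min}$ and $D_\mu'(H_{\min}+)=+\infty$), but the formula then yields ${H_\ell}'(\etal)=0$, so the graph of ${H_\ell}$ has a horizontal tangent there and $|{H_\ell}((k+1)/j)-{H_\ell}(k/j)|=O(1/j)$ uniformly; $H_r$ is treated symmetrically on $[\etar,\eta]$.

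A dichotomy on $F_{\lfloor\eta j\rfloor}$ relative to $H_s$ concludes. If $F_{\lfloor\eta j\rfloor}\le H_s$, set $\Phi_k:=F_k-{H_\ell}(k/j)$ on $k\in[\lceil\etal j\rceil,\lfloor\eta j\rfloor]$. The right endpoint satisfies $\Phi_{\lfloor\eta j\rfloor}\le 0$ since ${H_\ell}(\eta)=H_s$; the left endpoint satisfies $\Phi_{\lceil\etal j\rceil}\ge -\ep^1_j-O(1/j)$ (trivial when $\etal>0$ since ${H_\ell}(\etal)=H_{\min}\le F_k$; deduced from Proposition~\ref{p1} when $\etal=0$ since $F_0\ge{H_\ell}(0)-\ep^1_j$ and $|F_{\lceil\etal j\rceil}-F_0|=O(1/j)$). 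Because the jumps of $\Phi_k$ are $O(1/j)$, a discrete intermediate value argument produces $k^*$ in this range with $|\Phi_{k^*}|\le\ep^1_j+O(1/j)=:\ep^2_j$, which is exactly the condition $w\in\Tl(j,k^*/j,\ep^2_j)$ with $k^*/j\in[\etal,\eta]$. The case $F_{\lfloor\eta j\rfloor}>H_s$ is handled symmetrically by applying the same argument to $\Psi_k:=F_k-H_r(k/j)$ on $k\in[\lceil\etar j\rceil,\lfloor\eta j\rfloor]$: here $\Psi_{\lfloor\eta j\rfloor}>0$ and $\Psi_{\lceil\etar j\rceil}\le\ep^1_j+O(1/j)$, so either discrete IVT or the choice $k^*=\lceil\etar j\rceil$ (if $\Psi$ is already non-negative there) yields $w\in\Tr(j,k^*/j,\ep^2_j)$.

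Since all constants depend only on $\mu$ and $\eta$, and the sequence $(\ep^1_j)$ is supplied almost surely by Proposition~\ref{p1}, the sequence $\ep^2_j=\ep^1_j+O(1/j)$ is deterministic on the underlying event of full probability and converges to $0$, and the conclusion holds simultaneously for every $w\in\mathcal{S}_j(\eta)$. The main obstacle I anticipate is the uniform $C^1$-control of ${H_\ell}$ and $H_r$ up to their potentially singular endpoints $\etal$ and $\etar$; this is resolved precisely by the identity ${H_\ell}'(\etal)=H_r'(\etar)=0$ noted above, which reflects the phenomenon that $D_\mu$ has vertical tangents at $H_{\min}$ and $H_{\max}$ in the non-homogeneous regime.
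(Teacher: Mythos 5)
Your argument is, at its core, the same as the paper's: a (discrete) intermediate value argument on the difference between the tail exponent $H_j(\eta')=-\log_2\mu(I_{\sigma^{\lfloor\eta' j\rfloor}w})/(j-\lfloor\eta' j\rfloor)$ and the curve ${H_\ell}(\eta')$ (resp. $H_r$), with the slow variation of the tail exponent coming from the quasi-Bernoulli property and the boundary control coming from Proposition~\ref{p1} on one side and ${H_\ell}(\eta)=H_s$ on the other. The paper interpolates $H_j$ by a continuous $\widetilde H_j$ and applies the continuous IVT to $\widetilde H_j-{H_\ell}$; you work directly with the discrete sequence $F_k$. That difference is cosmetic.

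There is, however, an error in your regularity analysis of ${H_\ell}$, and you have identified the wrong endpoint as the delicate one. From ${H_\ell}'(\eta')=d(1-\eta)/\bigl((1-\eta')^2\,{D_\mu}'({H_\ell}(\eta'))\bigr)$, the derivative indeed vanishes at $\eta'=\etal$ when $\etal>0$ (since ${D_\mu}'(H_{\min}^+)=+\infty$), but it \emph{blows up} at the right endpoint $\eta'=\eta$, because ${H_\ell}(\eta)=H_s$ and ${D_\mu}'(H_s)=0$. Since ${D_\mu}$ is strictly concave and smooth near its maximum, ${H_\ell}(\eta)-{H_\ell}(\eta')\asymp\sqrt{\eta-\eta'}$ as $\eta'\to\eta^-$, so $\max_k|{H_\ell}((k+1)/j)-{H_\ell}(k/j)|$ is of order $j^{-1/2}$, not $O(1/j)$; the claim that ${H_\ell}$ is $C^1$ on the closed interval $[\etal,\eta]$ with uniformly $O(1/j)$ increments is false, and so is the resulting estimate $\ep^2_j=\ep^1_j+O(1/j)$. (The same issue affects the sign of $\Phi_{\lfloor\eta j\rfloor}$: you only get $\Phi_{\lfloor\eta j\rfloor}\le {H_\ell}(\eta)-{H_\ell}(\lfloor\eta j\rfloor/j)=O(j^{-1/2})$, not $\le 0$.) The proof is salvageable because the discrete IVT only requires the jumps of $\Phi_k$ to tend to $0$ uniformly, and for that the uniform continuity of ${H_\ell}$ on the compact interval $[\etal,\eta]$ suffices: replacing your $O(1/j)$ by the modulus of continuity $\omega_{{H_\ell}}(1/j)$ gives $\ep^2_j=\ep^1_j+O(1/j)+\omega_{{H_\ell}}(1/j)\to 0$, which is all the proposition asserts. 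This is precisely why the paper's proof invokes only the continuity of $\eta'\mapsto{H_\ell}(\eta')$ and never a derivative bound.
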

 \begin{proof}  
 We fix $w\in \mathcal{S}_j(\eta)$, and we look for a suitable $\eta'$. See Figure \ref{figure_decoup}.

  Let us denote, for all $j\geq 1$, $\alpha_j := \displaystyle -\frac{\log_2 \mu(I_w) }{j}$, and for all $\eta'\in [0,\eta]$, $\displaystyle\alpha_j(\eta')= -\frac{\log_2 \mu(I_{w_{|\lfloor \eta' j\rfloor}})}{\lfloor \eta' j\rfloor} $ and  $\displaystyle H_j(\eta') = \frac{ -\log_2 \mu(I_{\sigma^{\lfloor \eta' j\rfloor}w})}{j-\lfloor \eta' j\rfloor}$. By the almost multiplicativity property of $\mu$, we have 
 \begin{equation}\label{decomp}
 \alpha_j  j=\alpha_j(\eta')\lfloor \eta' j\rfloor+H_j(\eta') (j-\lfloor \eta' j\rfloor)+O(1),
 \end{equation}
 where $O(1)$ is bounded independently on $w$, $j$ and $\eta'$ (it depends only on the constant $C$ involved in \eqref{quasib}).

 On the other hand, for $\eta,'\eta''\in[0,\eta]$ we have 
  $$H_j(\eta'') (j-\lfloor \eta'' j\rfloor) - H_j(\eta') (j-\lfloor \eta' j\rfloor) = -\log_2 \mu(I_{\sigma^{\lfloor \eta' j\rfloor}w})+\log_2 \mu(I_{\sigma^{\lfloor \eta'' j\rfloor}w}),$$
  which is  bounded above by  $c|\lfloor \eta' j\rfloor -\lfloor \eta''j\rfloor| $ for some constant $c>0$ by \eqref{quasib}. Also, by item (6) of Proposition \ref{fm},    $H_j (\eta')$ and ${H_j}(\eta'')$ are bounded by a constant $K>0$ independently of  $j$, $w$ and $\eta'$. Subsequently, 
\begin{eqnarray*}
| H_j (\eta'')  - H_j (\eta')  |   \!\!  &  \leq  &  \!\!  \left| H_j (\eta'')  - H_j (\eta') \frac{j-\lfloor \eta' j\rfloor}{j-\lfloor \eta'' j\rfloor}  \right|  + H_j (\eta') \left|1- \frac{j-\lfloor \eta' j\rfloor}{j-\lfloor \eta'' j\rfloor}  \right|   \\
  \!\! &  \leq  &   \!\!  (c+K)\frac{|\lfloor \eta' j\rfloor -\lfloor \eta''j\rfloor|  }{j-\lfloor \eta'' j\rfloor}    \leq (c+K) \frac{|\eta''-\eta'|+1/j}{1-\eta}.\end{eqnarray*}
   
  From this inequality, one deduces that there exists a continuous function $\widetilde H_j:[0,\eta] \to \R^+$ such that 
$$s_j=\sup\{|H_j (\eta')-\widetilde H_j(\eta')|:\eta'\in [0,\eta]\}=O(1/j)$$
 independently of $w$ as $j\to\infty$, and  \eqref{decomp} holds with $\widetilde H_j$ instead of  $H_j$.    
 
 \sk
 
$\bullet$ Suppose that $\widetilde H_j(\eta)= H_s $. Since ${{H_\ell}}(\eta) =H_s$,  Proposition \ref{p2} is proved with $\eta'=\eta$.

\sk

$\bullet$ Suppose now that $\widetilde H_j(\eta)< H_s={H_\ell}(\eta)$.  

\sk

- Suppose first that $\eta_\ell=0$. Recall that ${H_\ell}(0)=H_{\min}$.

If $\widetilde H_j(0 ) \le {{H_\ell}}(0) $, then we see that $j\alpha_j  =j \widetilde H_j(0)+O(1)\le  j({{H_\ell}}(0)+O(1/j))$, which due to Proposition \ref{p1} implies that $ {{H_\ell}}(0)-\ep^1_j\le \widetilde{H}_j(0)+O(1/j)\le  {{H_\ell}}(0)+O(1/j)$. Hence \eqref{eq3sss} is satisfied with $\eta'=0$.

 If $\widetilde{H}_j(0)>{{H_\ell}}(0)$, observe that the  mapping $\eta'  \mapsto (\widetilde{H}_j-{{H_\ell}} )(\eta')$ is continuous, positive at $\eta'=0$, negative at $\eta'=\eta$. The continuity  ensures the existence of  $\eta'\in (0,\eta)$ such that $\widetilde{H}_j(\eta')={{H_\ell}}(\eta')$, and \eqref{eq3sss} is satisfied with this $\eta'$.

\sk

-  Suppose now  that $\eta_\ell>0$ and recall that ${{H_\ell}}(\eta')$ ranges in $[{{H_\ell}}(\eta_\ell), H_s]$.  Notice that for any $\eta'$,  $j-\lfloor \eta' j\rfloor\ge j-\lfloor \eta j\rfloor$ which tends to $+\infty$ when $j\to +\infty$. Hence, by Proposition~\ref{p1}, for $j$ large enough we have ${H_j}(\eta')\ge {H}_{\ell}(\etal)-\ep^1_{j-\lfloor \eta' j\rfloor}$, so that  for all $\eta'\in [\eta_\ell,\eta]$,
$$\widetilde{H}_j (\eta')\ge {{H_\ell}}(\eta_\ell)-\ep^1_{j-\lfloor \eta' j\rfloor}-s_j.$$
 By continuity of $\eta'  \mapsto (\widetilde{H}_j-{{H_\ell}} )(\eta')$, there exists $\eta'\in [\eta_\ell,\eta]$ such that  $|\widetilde{H}_j(\eta')-{{H_\ell}}(\eta')|\le \ep^1_{j-\lfloor \eta' j\rfloor}+s_j$. In all   cases, we found $\eta'\in [\eta_\ell,\eta]$ such that $|\widetilde{H}_j(\eta')-{{H_\ell}}(\eta')|\le\ep^1_{j-\lfloor \eta' j\rfloor}+2s_j+O(1/j)$. 
Since $H_j$ and $\widetilde{H}_j$ differ by $o(1)$, the result follows.
\sk

$\bullet$ Finally suppose that $\widetilde{H}_j(\eta)> H_s$. Similar arguments as above yield  $\eta'\in [\eta_r,\eta]$ such that $|{H_j}(\eta')-{H_r}(\eta')|\le \ep^1_{j-\lfloor \eta' j\rfloor}+2s_j+O(1/j)$. We let the reader check the details.

\sk

Since the bound  $\ep^1_{j-\lfloor \eta' j\rfloor}+2s_j+O(1/j)$ tends to 0 uniformly in $\eta'\in [0,\eta]$ as $j\to +\infty$, the sequence $(\ep^2_j := \ep^1_{j-\eta j} +2s_j +O(1/j))_{j\geq 1}$  fulfills the conditions of Proposition \ref{p2}.
  \end{proof}

The previous proposition tells us that every surviving vertex  $w\in \mathcal{S}_j(\eta)$ is such that  its $\eta'$-tail satisfies  either for some $\eta'\in [\etal,\eta]$ (depending on $w$), 
\begin{equation}
 \label{eq3sss}
  (j-\lfloor \eta' j\rfloor)({{H_\ell}}(\eta')-\ep^2_j) \le -\log_2 \mu(I_{\sigma^{\lfloor \eta' j\rfloor}w})\le (j-\lfloor \eta' j\rfloor)({{H_\ell}}(\eta')+\ep^2_j),
\end{equation}
 or for some $\eta'\in [\eta_{r},\eta]$  (also depending on $w$) that
 $$
  (j-\lfloor \eta' j\rfloor)({H_r}(\eta')-\ep^2_j)\le - \log_2 \mu(I_{\sigma^{\lfloor \eta' j\rfloor}w})\le (j-\lfloor \eta' j\rfloor)({H_r}(\eta')+\ep^2_j).
$$
  Next proposition claims that $\eta'$ being fixed in $[\eta_\ell,\eta]$, almost surely, for $j$ large enough, for all $W \in \Sigma_{\lfloor \eta' j\rfloor}$, there is a surviving vertex  $w\in\mathcal{S}_{j}(\eta,W)$  such that   $\dfrac{-\log_2 \mu(I_{\sigma^{\lfloor \eta' j\rfloor}w} )}{j-\lf \eta'j \rf} \approx  {H_\ell}(\eta')$. This property shall be understood as a renewal property for  the local dimensions ${H_\ell}(\eta')$. 
See Figure \ref{figdecomp} for an illustration of this decomposition.

\begin{proposition} 
\label{p'2}
Given $\eta'\in [\eta_\ell,\eta)$, there exists a positive sequence $(\ep^3_j)_{j\ge 1}$ converging to 0 such that, with probability 1,  for $j$ large enough,   for all $W \in \Sigma_{\lfloor \eta' j\rfloor}$, $
 \mathcal{S}_j(\eta,W)  \cap \Tl(j,\eta', \ep^3_j) \neq \emptyset $. 
\end{proposition}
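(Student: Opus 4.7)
The plan is to use a first-moment / Borel--Cantelli argument based on the large deviations estimate \eqref{eq1sss} applied to the Gibbs capacity $\mu$ on the shifted tree. Fix $\ep>0$. For each $W\in\Sigma_{\lfloor\eta' j\rfloor}$, I would consider the deterministic collection of candidate extensions
$$
\mathcal{A}_W(j,\ep)=\bigl\{Wv : v\in\Sigma_{j-\lfloor\eta' j\rfloor},\ v\in\mathcal{E}_\mu(j-\lfloor\eta' j\rfloor,{H_\ell}(\eta')\pm\ep)\bigr\}.
$$
By construction, every $w\in\mathcal{A}_W(j,\ep)$ satisfies $I_w\subset I_W$ and \eqref{tail}, so $\mathcal{A}_W(j,\ep)\subset \mathcal{T}_{\mu,\ell}(j,\eta',\ep,W)$, and $\#\mathcal{A}_W(j,\ep)$ does not depend on $W$. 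The conclusion thus reduces to showing $\mathcal{A}_W(j,\ep)\cap\mathcal{S}_j(\eta)\neq\emptyset$ simultaneously for every $W$.

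I would then lower-bound $\#\mathcal{A}_W(j,\ep)$ via \eqref{eq1sss}: for every $\wep>0$ there is $J(\ep,\wep)$ such that for all $j\ge J(\ep,\wep)$,
$$
\#\mathcal{A}_W(j,\ep)\ge 2^{(j-\lfloor\eta' j\rfloor)\bigl(\sup_{h\in[{H_\ell}(\eta')-\ep,{H_\ell}(\eta')+\ep]}D_\mu(h)-\wep\bigr)}.
$$
The key observation, which follows from $\eta'<\eta$, is that $D_\mu({H_\ell}(\eta'))=d(1-\eta)/(1-\eta')<d$, so ${H_\ell}(\eta')<H_s$ lies strictly in the increasing part of the strictly concave spectrum $D_\mu$. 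Hence there exists $\delta=\delta(\ep)>0$ with $D_\mu({H_\ell}(\eta')+\ep)\ge D_\mu({H_\ell}(\eta'))+\delta$. Taking $\wep=\delta/2$ and using both $j-\lfloor\eta' j\rfloor\ge j(1-\eta')$ and the identity $D_\mu({H_\ell}(\eta'))(1-\eta')=d(1-\eta)$ coming from \eqref{defHl}, I obtain
$$
\#\mathcal{A}_W(j,\ep)\ge 2^{dj(1-\eta)+cj}\qquad\text{with } c=(1-\eta')\delta/2>0.
$$

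Since the Bernoullis $\{p_u:u\in\mathcal{A}_W(j,\ep)\}$ are independent with parameter $2^{-dj(1-\eta)}$, the probability that $[W]$ carries no surviving candidate is at most $(1-2^{-dj(1-\eta)})^{\#\mathcal{A}_W(j,\ep)}\le\exp(-2^{cj})$. A union bound over the at most $2^{d\eta' j}$ words $W\in\Sigma_{\lfloor\eta' j\rfloor}$ gives a doubly exponentially small probability $2^{d\eta' j}\exp(-2^{cj})$, summable in $j$, so the Borel--Cantelli lemma yields the conclusion for this fixed $\ep$. To construct the vanishing sequence $(\ep_j^3)$, I would pick $\ep_p\downarrow 0$ and deterministic integers $J_p\uparrow\infty$ such that the previous argument is valid for $\ep=\ep_p$ at every $j\ge J_p$, and then set $\ep_j^3=\ep_p$ on $[J_p,J_{p+1})$.

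The main obstacle is the quantitative step: a naive estimate using only $D_\mu({H_\ell}(\eta'))$ produces an expected number of surviving candidates in $[W]$ of order $\Theta(1)$, which cannot absorb simultaneously the union bound over $W$ and the large deviations error $\wep$. It is precisely the strict concavity of $D_\mu$ on $[H_{\min},H_s]$, combined with the strict inequality $\eta'<\eta$ that keeps ${H_\ell}(\eta')$ away from $H_s$, which supplies the genuine exponential surplus $2^{cj}$ needed to close the Borel--Cantelli estimate and let $\ep$ shrink to $0$.
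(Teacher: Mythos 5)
Your proof is correct and follows essentially the same route as the paper's: a first-moment/Borel--Cantelli argument where the large deviations estimate shows the number of admissible $\eta'$-tails exceeds $2^{d(1-\eta)j}$ by a genuine exponential factor (precisely because $\eta'<\eta$ forces $D_\mu(H_\ell(\eta'))<d$, so $H_\ell(\eta')$ sits strictly inside the increasing branch of $D_\mu$), followed by the independence bound $(1-p)^N\le e^{-pN}$, a union bound over $W$, and a diagonal extraction of the vanishing sequence $(\ep^3_j)$. The only cosmetic difference is that the paper extracts the exponential surplus by recentering the large-deviations interval at a point $\alpha^1_p>H_\ell(\eta')$ where $D_\mu$ is strictly larger, whereas you keep the interval centered at $H_\ell(\eta')$ and use that the supremum of $D_\mu$ over it is attained strictly to the right; these are equivalent.
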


Of course, the same holds true for $  \Tr(j,\eta', \ep^3_j)$, but we do not need this second property.
\begin{center}
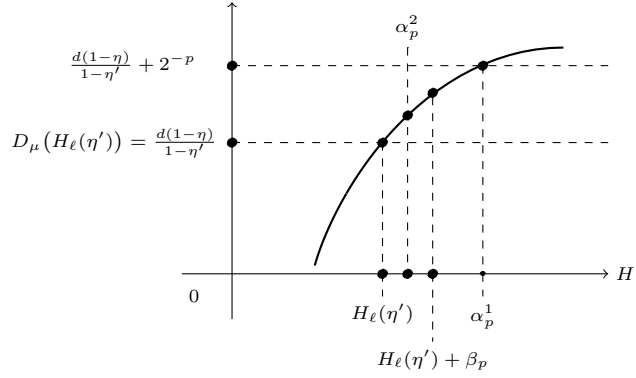
\begin{figure}
\begin{tikzpicture}[xscale=3.3,yscale=3.0]
{\tiny
\draw [->] (0,-0.2) -- (0,1.2) [radius=0.006] node [above] { };
\draw [->] (-0.2,0) -- (1.5,0) node [right] {$H$};
\draw [thick,  domain=1.9:4]  plot ({-(exp(\x*ln(1/5))*ln(0.2)+exp(\x*ln(0.8))*ln(0.8))/(ln(2)*(exp(\x*ln(1/5))+exp(\x*ln(0.8)) ) )} , {-\x*( exp(\x*ln(1/5))*ln(0.2)+exp(\x*ln(0.8))*ln(0.8))/(ln(2)*(exp(\x*ln(1/5))+exp(\x*ln(0.8))))+ ln((exp(\x*ln(1/5))+exp(\x*ln(0.8))))/ln(2)});
\draw [thick, domain=0:1.9]  plot ({-(exp(\x*ln(1/5))*ln(0.2)+exp(\x*ln(0.8))*ln(0.8))/(ln(2)*(exp(\x*ln(1/5))+exp(\x*ln(0.8)) ) )} , {-\x*( exp(\x*ln(1/5))*ln(0.2)+exp(\x*ln(0.8))*ln(0.8))/(ln(2)*(exp(\x*ln(1/5))+exp(\x*ln(0.8))))+ ln((exp(\x*ln(1/5))+exp(\x*ln(0.8))))/ln(2)});

  \draw [fill]  (-0,0.92)  circle  [radius=0.02] node [left] {$\frac{d(1-\eta)}{1-\eta'} +2^{-p} \  \ \ \ $} [dashed] (0,0.92) -- (1.5,0.92) ;
  \draw [fill]  (-0,0.58)  circle  [radius=0.02] node [left] {${D_\mu}\big({H_\ell}(\eta')\big)=\frac{d(1-\eta)}{1-\eta'}\ $} [dashed] (0,0.58) -- (1.5,0.58) ;

\draw[fill] (0.6,0.58) circle [radius=0.02] [dashed] (0.6,0.58) -- (0.6,-0.0) [fill] circle [radius=0.02]  [dashed] (0.6,0.0) -- (0.6,-0.1)  node [below]   {${{H_\ell}}(\eta')$};
\draw[fill] (0.8,0.8) circle [radius=0.02] [dashed] (0.8,0.78) -- (0.8,-0.0) [fill] circle [radius=0.02]  [dashed] (0.8,0.0) -- (0.8,-0.3)  node [below]   {${{H_\ell}}(\eta')+\beta_p $};

\draw[fill] (0.7,0.7) circle [radius=0.02] [dashed] (0.7,1) -- (0.7,-0.0) [fill] circle [radius=0.02]  [dashed] (0.7,1) -- (0.7,1)  node [above]   {$\alpha^2_p$};

\draw[fill] (1,0.92) circle [radius=0.02] [dashed] (1,0.92) -- (1,-0.0) [fill] circle [radius=0.01]  [dashed] (1,0.0) -- (1,-0.1)  node [below]   {$\alpha^1_p$};
  
 \draw [fill] (-0.1,-0.10)   node [left] {$0$};

 }
\end{tikzpicture}
\caption{Relative positions of ${{H_\ell}}(\eta')$, ${{H_\ell}}(\eta')+\beta_p$, $\alpha^1_p$, $\alpha^2_p$.}
\label{figalphap}
\end{figure}\end{center}

\begin{proof}
 
Fix  $\eta'\in [\eta_\ell,\eta)$, which implies that  ${D_\mu}({{H_\ell}}(\eta'))<d=\| {D_\mu}\|_\infty$. For every    integer $p\geq 1$ so large that ${D_\mu} \big({H_\ell}(\eta')\big) +2^{-p} <d$,   let $ \alpha^1_p ,  \alpha^2_p  ,\beta_p$  be such that 
\begin{itemize}
\item 
$\alpha^1_p$ is the  unique real number  in $[{H_\ell}(\etal),H_s]$ such  that ${D_\mu}(\alpha^1_p )={D_\mu}({{H_\ell}}(\eta'))+ 2^{-p} $, 
\item
$\beta_p = (\alpha^1_p - {H_\ell}(\eta'))/2$,
\item  $ \alpha^2_p $ is   such that  $ {H_\ell}(\eta') <  \alpha^2_p < {H_\ell}(\eta')+\beta_p=  \alpha^1_p -\beta_p  $.   
\end{itemize}
Observe that $ {D_\mu}({H_\ell}(\eta' )) < {D_\mu}( \alpha^2_p) < {D_\mu}( \alpha^1_p -\beta_p) < {D_\mu}( \alpha^1_p  )$ (see Figure \ref{figalphap}).

For every integer $p\geq 1$, due to the large deviations properties of $\mu$ (part (5)  of Proposition \ref{fm} and equation \eqref{eq1sss}), we can fix  an integer $j_{p}$ such that for all $j\ge j_{p}$, 

$$ \#\mathcal{E} _\mu(j,\alpha^1_p \pm \beta_p)\ge 2^{j {D_\mu}(\alpha^2_p )}.$$

Using the definition of our parameters,  this implies that
$$\# \mathcal{E}  _\mu(j,{{H_\ell}}(\eta')\pm \widetilde \ep_p )\ge 2^{j \big ({D_\mu}   ({{H_\ell}}(\eta')  )+\widehat\ep_p \big)},$$
 where $\widetilde \ep_p=  3\beta_p   $ and $\widehat\ep_p=  {D_\mu}( \alpha^2_p) -  {D_\mu}({H_\ell}(\eta') ) >0$.
  
  It is clear from the continuity and monotonicity of ${D_\mu}$ that $(\widetilde \ep_p)_{p\geq 1} $ and $(\widehat\ep_p)_{p\geq 1} $ are two positive decreasing sequences, and that $ \lim_{ p \to +\infty} \widetilde \ep_p = \lim_{p\to+\infty} \widehat\ep_p=0$.

For $j\ge j_p/(1-\eta')$ (hence so that $j-\lfloor \eta'j\rfloor\ge j_p$)  and $W \in \Sigma_{\lfloor \eta'  j\rfloor}$,  consider the event 
$$
\mathscr{A}(\eta', \widetilde \ep_p , W )=\Big\{\forall\, w'\in \mathcal{E}_\mu  \big(j-\lfloor \eta'  j\rfloor,{{H_\ell}}(\eta' )\pm \widetilde \ep_p \big), \   p_{Ww' }=0\Big\}.
$$

One has   
\begin{align}
\nonumber\mathbb P \Big(\mathscr{A}(\eta',\widetilde \ep_p, W) \Big)&=(1-2^{-d(1-\eta)j})^{\# \mathcal{E}_\mu \big(j-\lfloor \eta'  j\rfloor,{{H_\ell}}(\eta' ) \pm\widetilde \ep_p \big)}\\
\nonumber&\le \exp\big( {-2^{-d(1-\eta)j}\# \mathcal{E}_\mu(j-\lfloor \eta'  j\rfloor,{{H_\ell}}(\eta' )\pm  \widetilde \ep_p) \big)}\\
\nonumber&\le \exp\big( {-2^{-d(1-\eta)j+(j -\lf \eta' j\rf)\big({D_\mu}  ({{H_\ell}}(\eta')  )+\widehat\ep_p \big)}\big)}.
\end{align}
Recalling that  ${D_\mu} \big({H_\ell}(\eta' )\big)=\frac{d(1-\eta)}{1-\eta' }  $, we get
$$\mathbb P \Big(A( \eta', \widetilde \ep_p,W) \Big)\le\exp (  -2^{(j-\lfloor \eta'j\rfloor ) \widehat\ep_{p}} + O(1)) \leq C \exp (  -2^{(1-\eta')j\widehat\ep_{p}} ).
$$
We choose the sequence $(\ep^3_j)_{j\geq 1}$  as follows:  we first build some sequences of integers by induction. Pick an integer  $p_0$   so large that the previous inequality holds true for   $j\ge j_{p_0}/(1-\eta')$.   Also,  choose $\widetilde j_{p_{0}}>  j_{p_0}$  so large  that for all $j\ge \widetilde j_{p_{0}}/(1-\eta')$,   one has  $C \exp (  -2^{(1-\eta')j\widehat\ep_{p_{0}}} )\le 2^{-d j}$.

Then, assume that   integers     and  $  \widetilde j_{p_0}, ...  , \widetilde j_{p_0+m}$ are found such that   for $n=1,...,m$:
\begin{itemize}
\item 
$\widetilde j_{p_{0}+n}> \max( j_{p_0+n } , \widetilde j_{p_{0}+n-1})$, 

\item
 for $j(1-\eta')\ge \widetilde j_{p_0+n}$ one has $C \exp (  -2^{(1-\eta')j\widehat\ep_{p_0+n}} )\le 2^{-d j}$.
\end{itemize}
Then we choose   $\widetilde j_{p_{0}+{m+1}}>  \max( j_{p_0+m+1 } , \widetilde j_{p_{0}+m})$  so large  that for all $j\ge \widetilde j_{p_{0}+{m+1}}/(1-\eta')$,   one has  $C \exp (  -2^{(1-\eta')j\widehat\ep_{p_{0}+{m+1}}} )\le 2^{-d j}$.

Finally, for every  $j \geq  \widetilde j_{p_0}/(1-\eta')$,  there is a unique integer $m_j$  such that   
\begin{equation}
\label{eq4sss}
\widetilde j_{p_0+m_j}/(1-\eta')\le j<\widetilde j_{p_{0}+{m_j+1}}/(1-\eta'),
\end{equation}
and we set $\ep^3_j = \widetilde \ep_{p_0+m_j}$. By construction we obtain
\begin{equation}
\label{eq1s}
\mathbb P \Big(\mathscr{A}(\eta',\ep^3_j , W) \Big)  \leq    C \exp (  -2^{(1-\eta')j\widehat\ep_{p_0+m_j}} )\le 2^{-dj}.
\end{equation}

Subsequently, 
\begin{eqnarray*}
&& \mathbb P\Big(\left\{\exists\, j\ge \widetilde j_{p_0}/(1-\eta') \mbox{ and } \exists \, W\in \Sigma_{\lfloor \eta' j\rfloor}:  \mathscr{A}(\eta',\ep^3_j ,W) \text{ holds} \right\}\Big )\\
 &\leq &  \sum_{j\ge \widetilde j_{p_0}/(1-\eta')}  \sum_{W\in  \Sigma_{\lfloor \eta' j\rfloor}}   \mathbb P \Big (\mathscr{A}(\eta',\ep^3_j, W) \Big) \\
 & \leq & \sum_{j\ge \widetilde j_{p_0}/(1-\eta')} 2^{d\lf\eta'j\rf}2^{-d j}  <+\infty.
\end{eqnarray*}
We conclude thanks to the Borel-Cantelli lemma. 
\end{proof}

Last proposition can be realized simultaneously on several $\eta' \in [\etal,\eta]$.

\begin{corollary} 
\label{coroetank}
   For all integers $N\ge 1$ and $0\le k\le N-1$, let $\eta_{N,k}=\eta_\ell+ \frac{k}{N}(\eta-\eta_\ell)$. There exists a positive sequence $(\ep^{4,N}_j)_{j\ge 1}$ converging to 0 when $j$ tends to infinity, such that with probability 1, for $N\ge 2$ and  $j$ large enough, for all $0\le k\le N-1$ and all $W\in \Sigma_{\lfloor \eta_{N,k} j\rfloor}$, $ \mathcal{S}_j(\eta,W)  \cap  \Tl(j,\eta_{N,k}, \ep^{4,N}_j) \neq \emptyset $. 
   \end{corollary}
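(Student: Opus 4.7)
The plan is to reduce the corollary to a finite number of applications of Proposition \ref{p'2} for each fixed $N$, and then take a countable intersection over $N\ge 2$.

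First, fix $N\ge 2$. For each $k\in\{0,1,\dots,N-1\}$, one has $\eta_{N,k}=\eta_\ell+\frac{k}{N}(\eta-\eta_\ell)\in[\eta_\ell,\eta)$, which is an admissible value for Proposition \ref{p'2}. Applying the latter to $\eta'=\eta_{N,k}$ yields a positive sequence $(\ep^{3,N,k}_j)_{j\ge 1}$ tending to $0$ and a probability one event $\mathscr{B}_{N,k}$ on which there is a random integer $J_{N,k}$ such that for every $j\ge J_{N,k}$ and every $W\in\Sigma_{\lfloor \eta_{N,k}j\rfloor}$,
$$\mathcal{S}_j(\eta,W)\cap \Tl(j,\eta_{N,k},\ep^{3,N,k}_j)\neq\emptyset.$$

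Now define
$$\ep^{4,N}_j=\max_{0\le k\le N-1}\ep^{3,N,k}_j\quad \text{and}\quad J_N=\max_{0\le k\le N-1}J_{N,k},$$
and set $\mathscr{B}_N=\bigcap_{k=0}^{N-1}\mathscr{B}_{N,k}$. Since $\mathscr{B}_N$ is a finite intersection of almost sure events, $\mathbb{P}(\mathscr{B}_N)=1$. On $\mathscr{B}_N$, for every $j\ge J_N$, every $k\in\{0,\dots,N-1\}$ and every $W\in\Sigma_{\lfloor \eta_{N,k}j\rfloor}$, using that $\ep^{3,N,k}_j\le \ep^{4,N}_j$ implies the set-inclusion $\Tl(j,\eta_{N,k},\ep^{3,N,k}_j)\subset \Tl(j,\eta_{N,k},\ep^{4,N}_j)$, one obtains
$$\mathcal{S}_j(\eta,W)\cap \Tl(j,\eta_{N,k},\ep^{4,N}_j)\neq\emptyset.$$
Moreover $\ep^{4,N}_j\to 0$ as $j\to\infty$ since the maximum of finitely many sequences converging to $0$ still converges to $0$.

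Finally, to guarantee the joint validity of the statement for all $N\ge 2$, define $\mathscr{B}=\bigcap_{N\ge 2}\mathscr{B}_N$, a countable intersection of almost sure events, hence almost sure. On $\mathscr{B}$, the conclusion of the corollary holds with the sequences $(\ep^{4,N}_j)_{j\ge 1}$ constructed above. The argument presents no genuine obstacle; the only mild point to check is that the two thresholds (in $k$ for $\ep^{4,N}_j$, and in $N$ for the almost sure event) are each taken over at most countable index sets, so that neither the convergence $\ep^{4,N}_j\to 0$ nor the probability-one character of $\mathscr{B}$ is compromised.
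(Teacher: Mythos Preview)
Your proof is correct and follows essentially the same approach as the paper: apply Proposition~\ref{p'2} to each of the finitely many $\eta_{N,k}$, take $\ep^{4,N}_j$ as the maximum of the resulting sequences, and intersect the (countably many) almost sure events over $k$ and then over $N$. The paper's version differs only cosmetically: instead of citing the conclusion of Proposition~\ref{p'2} as a black box and intersecting the resulting almost sure events, it reuses the quantitative probability bound \eqref{eq1s} from inside that proof and runs a single combined Borel--Cantelli argument over all $k$ at once; both routes yield the same result with no additional difficulty.
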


\begin{proof} 
 Fix $N\ge 2$. For each $  k \in \{0,...,N-1\}$, we apply Proposition \ref{p2}, so that we get a sequence $(\ep^3_j(k))_{j\geq 1}$ and a sequence $(\widetilde j_p(k))_{p\geq p_0(k)}$, such that \eqref{eq1s} holds true, i.e. for every $j\geq \widetilde j_{p_0(k)}/(1-\eta')$ and $W \in \Sigma_{\lfloor \eta_{N,k} j\rfloor}$, 
\begin{equation}
\label{eq2s}
\mathbb P\Big (A(\eta_{N,k}, \ep^3_j(k) , W) \Big)  \leq    2^{-d j}.
\end{equation}

Observe that if $0< \ep<\ep'$, $\mathscr{A}(\eta', \ep ' ,W) \subset \mathscr{A}(\eta', \ep ,W)$. Hence,  we can choose  the integer $p=\max(p_0(0),...,p_0(N-1))$, and  the sequences $\ep^{4,N}_j := \max (\ep^3_j(0),...,\ep^3_j(N-1))$ and   $\widetilde j_p := \max( \widetilde j_p(0), ..., \widetilde j_p(N-1) )$, so that  we have the following property:  for all $0\le k\le N-1$, for all $j\ge \widetilde j_{p_0}/(1-\eta)$, for all $ W \in \Sigma_{\lfloor \eta_{N,k} j\rfloor}$,  \eqref{eq2s} holds true with $\ep^{4,N}_j $ instead of $\ep^{3}_j(k) $.

Thus,  
\begin{align*}
& \mathbb P\Big(\left\{\exists\, j\ge \widetilde  j_{p_0}/(1-\eta), \ \exists \, k\in \{0,..., N-1\},\ \exists \, W \in \Sigma_{\lfloor \eta_{N,k} j\rfloor}:  A(\eta_{N,k},\ep^{4,N}_j ,W) \text{ holds} \right\}\Big )\\
&\leq   \sum_{k=0}^{N-1}   \mathbb P\Big(\left\{\exists\, j\ge \widetilde j_{p_0}/(1-\eta) \mbox{ and } \exists \,W \in \Sigma_{\lfloor \eta_{N,k} j\rfloor}:  A(\eta_{N,k},\ep^3_j(k) ,W) \text{ holds} \right\}\Big )  \\
&\leq   \sum_{k=0}^{N-1}    \sum_{j\ge \widetilde j_{p_0}/(1-\eta)}   \sum_{W \in \Sigma_{\lfloor \eta_{N,k} j\rfloor}} \mathbb{P}\Big (  A(\eta_{N,k},\ep^3_j(k) ,W )    \Big  )  \\
&\leq   \sum_{k=0}^{N-1}   2^{-   (d-\eta_{N,k}) \widetilde j_{p_0}/(1-\eta) }     <+\infty.
\end{align*}
The result follows again by the Borel-Cantelli lemma. 
\end{proof}

Next proposition completes the previous corollary by showing (roughly speaking), that for a fixed $W\in \Sigma_J$ with $J$ large enough, for $\eta'$ in some interval $[\eta_0, \eta]$ fixed in advance, the probability to find $w\in  \bigcup_{J/\eta\le j\le J/\eta_0}\mathcal{S}_j(\eta,W)$ with a $\eta'$-tail having a local dimension smaller than ${H_\ell}(\eta')$ decreases exponentially with $J$.  
\begin{proposition}
\label{discretization}
Let 
$\displaystyle \eta_0= \begin{cases}  \frac{H_{\min}}{H_{\min}+{\Ht_\ell}(\widetilde\eta)}  & \mbox{ if } \eta_\ell=0\\ \eta_0=\eta_\ell& \mbox{ if } \eta_\ell>0. \end{cases}$

   For all integers $N\ge 1$ and $k\in \{-1,0..., N-1\}$,  set $\widetilde \eta_{N,k}=\eta- (\eta-\eta_0)\frac{k}{N}$.

For $J\ge 1$ and $W\in\Sigma_J$, consider  the event   $\mathscr{C}(N,J,W)$  defined as 
$$
\mathscr{C}(N,J,W)=\left\{
\begin{split}
&\exists \,   k\in \{-1,0,..., N-1\},   \ \exists \,  j \in [J/\weta_{N,k},J/\weta_{N,k+1}],   \ \\
&\ \  \exists  \, w\in \mathcal{S}_{j}(\weta_{N,k},W) \mbox{ such that } \  \mu(I_{\sigma^{J}w})> 2^{-J ({\Ht_\ell}(\weta_{N,k})+\ep_N)}
\end{split}\right\} ,
$$
with the convention that  $ {\Ht_\ell}(\widetilde\eta_{N,-1})={\Ht_\ell}(\widetilde\eta_{N,0})={\Ht_\ell}(\eta)$. 

With probability one, there exists a positive sequence $(\ep_N)_{N\ge 1}$ converging to 0 such that  for all $N\ge 1$, $J\ge 1$  and $W\in\Sigma_{J}$,  we have $\mathbb{P}\big (\mathscr{C}(N,J,W)\big)\le 2^{-J\ep_N}$.
\end{proposition}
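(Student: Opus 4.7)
My plan is to combine a first-moment (Markov) estimate for each individual pair $(j,k)$ with the large deviations bound of Proposition \ref{chernov}, and then close with a union bound over $k$ and $j$. Fix $k \in \{-1, 0, \ldots, N-1\}$, $j \in [J/\weta_{N,k}, J/\weta_{N,k+1}]$, and $W \in \Sigma_J$, and write $\gamma_{N,k} := \Ht_\ell(\weta_{N,k}) + \ep_N$. Since the Bernoulli variables $(p_u)_{u\in\Sigma^*}$ are independent of $\mu$ and of each other, Markov's inequality yields
\[
\mathbb{P}\big(\exists\, w \in \mcs_j(\eta,W) :\ \mu(I_{\sigma^J w}) > 2^{-J\gamma_{N,k}}\big) \leq 2^{-d(1-\eta)j}\cdot\#\big\{w' \in \Sigma_{j-J} : \mu(I_{w'}) > 2^{-J\gamma_{N,k}}\big\}.
\]
Proposition \ref{chernov} applied at scale $n = j-J$ and level $H_{j,k} := J\gamma_{N,k}/(j-J)$ (which lies in $[H_{\min}, H_s]$ in the regime under consideration, once $\ep_N$ and $1/N$ are sufficiently small) bounds the count by $2^{(j-J) D_\mu(H_{j,k}) + o(j-J)}$.

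The main step is to establish that, for a well-chosen sequence $\ep_N \to 0$,
\[
E(j,k,N) := (j-J)\, D_\mu(H_{j,k}) - d(1-\eta)\, j \,\leq\, -3J\ep_N
\]
uniformly in $k$ and $j \in [J/\weta_{N,k}, J/\weta_{N,k+1}]$. Direct substitution shows that $E = 0$ exactly at the critical configuration $j = J/\weta_{N,k}$, $\gamma_{N,k} = \Ht_\ell(\weta_{N,k})$, thanks to the identities $D_\mu(H_\ell(\eta')) = d(1-\eta)/(1-\eta')$ and $\Ht_\ell(\eta') = (1/\eta'-1) H_\ell(\eta')$. Writing $\xi = J/j$ and expanding $E$ as a smooth function of $\xi \in [\weta_{N,k+1}, \weta_{N,k}]$ around $\xi = \weta_{N,k}$, I would then use the strict concavity of $D_\mu$ (ensured by the non-homogeneous assumption) and the property (Lemma \ref{lem2s}) that $\weta$ is the unique interior minimizer of $\Ht_\ell$ on $[\eta_\ell,\eta]$ — together with the specific choice of $\eta_0$ (which keeps $H_{j,k}$ bounded away from the boundary point $H_{\min}$ in the case $\eta_\ell=0$, so that $D_\mu'$ stays uniformly bounded) — to absorb the $\ep_N$- and $1/N$-corrections into a net negative contribution, splitting into cases according to whether $\weta$ lies inside the current subinterval $[\weta_{N,k+1}, \weta_{N,k}]$ or on one of its monotone branches.

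Finally, summing the resulting bound over the at most $N+1$ values of $k$ and the at most $J/\eta_0$ integer values of $j$ in each subinterval,
\[
\mathbb{P}(\mathscr{C}(N,J,W)) \leq (N+1)\cdot(J/\eta_0)\cdot 2^{-3J\ep_N + o(J)} \leq 2^{-J\ep_N}
\]
for $J$ larger than a threshold depending only on $N$, uniformly in $W \in \Sigma_J$. The hardest part is the uniform Taylor-type estimate $E(j,k,N)\leq -3J\ep_N$: balancing the shift $\ep_N$ in $\gamma_{N,k}$ against the discretization error $O(1/N)$ over all subintervals simultaneously genuinely requires separating the subinterval containing the minimizer $\weta$ (where second-order behavior of $\Ht_\ell$ dominates, by nondegeneracy of the minimum) from the subintervals where $\Ht_\ell$ is monotonic, and it is here that the specific definition of $\eta_0$ and the strict concavity of $D_\mu$ enter in an essential way.
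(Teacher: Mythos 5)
Your overall skeleton --- a union bound over the candidates $w=Ww'$ with $w'\in\Sigma_{j-J}$, the large deviations count of Proposition \ref{chernov} for the tails, and a final union bound over $j$ and $k$ --- is certainly the intended route (the paper omits the proof, pointing to the arguments of Propositions \ref{p1} and \ref{p'2}). The gap is in the central claim $E(j,k,N)\le -3J\ep_N$, which is false as you have set it up, and the mechanism you propose for it (Taylor expansion around the ``critical configuration'' $j=J/\weta_{N,k}$) cannot produce it. Setting $\xi=J/j$ and $H^*(\xi)=J\gamma_{N,k}/(j-J)=\gamma_{N,k}\xi/(1-\xi)$, the identities $D_\mu(H_\ell(\xi))=d(1-\eta)/(1-\xi)$ and $H_\ell(\xi)=\frac{\xi}{1-\xi}\Ht_\ell(\xi)$ give, on the increasing branch of $D_\mu$, the exact equivalence
\[
(1-\xi)\,D_\mu\big(H^*(\xi)\big)\le d(1-\eta)\iff \gamma_{N,k}\le \Ht_\ell(\xi),
\]
so the sign of your exponent on the window $\xi\in[\weta_{N,k+1},\weta_{N,k}]$ is governed by comparing the fixed threshold $\gamma_{N,k}$ with $\Ht_\ell(\xi)$, not by the distance of $\xi$ to $\weta_{N,k}$. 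Two things then fail. First, with $\gamma_{N,k}=\Ht_\ell(\weta_{N,k})+\ep_N$ one already has $E>0$ at $\xi=\weta_{N,k}$: the $+\ep_N$ pushes the exponent the wrong way (the way the complement of $\mathscr{C}$ is exploited in Section \ref{sharplow1} indicates the intended threshold is $\Ht_\ell(\weta_{N,k})-\ep_N$; with the $+$ sign the analysis below shows the stated bound cannot hold). Second, and independently of the sign of $\ep_N$: $\Ht_\ell$ is not monotone on $[\eta_0,\eta]$ --- it decreases on $[\eta_\ell,\weta]$ and increases on $[\weta,\eta]$ --- so on every window contained in $(\weta,\eta)$ the right endpoint $\weta_{N,k}$ maximizes $\Ht_\ell$ over the window, hence $\gamma_{N,k}\ge\Ht_\ell(\weta_{N,k})>\Ht_\ell(\xi)$ in the interior of the window and $E>0$ there, by an amount of order $j/N$. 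On such windows the expected number of offending survivors is exponentially large in $j$ and, by independence of the $p_w$ together with the matching lower bound in \eqref{eq1sss}, $\mathbb{P}(\mathscr{C}(N,J,W))$ tends to $1$. Your expansion around a single critical point therefore breaks on roughly half of the windows: the zeroth-order term is already positive there.

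The repair is a uniform gap rather than a second-order expansion around $\weta$: take $\gamma_{N,k}=\Ht_\ell(\weta_{N,k})-\ep_N$ with $\ep_N$ larger than twice the oscillation of $\Ht_\ell$ over any single window (beware that this oscillation is only of order $N^{-1/2}$ near $\eta$, because $H_\ell'(\eta^-)=+\infty$ since $D_\mu'(H_s)=0$; so $\ep_N=N^{-1/4}$, say). Then $\gamma_{N,k}\le\Ht_\ell(\xi)-\ep_N/2$ uniformly on each window, hence $H^*(\xi)\le H_\ell(\xi)-c\,\ep_N$, and uniform strict concavity of $D_\mu$ on $[H_{\min},H_s]$ yields $D_\mu(H^*(\xi))\le \frac{d(1-\eta)}{1-\xi}-\rho_N$ with $\rho_N>0$ depending only on $N$; since $j-J\ge cJ$ throughout the admissible range, this gives $E\le -c\rho_N J$ uniformly, and the union bound closes as you describe (after shrinking $\ep_N$ once more to cover the finitely many $J$ below the threshold where Proposition \ref{chernov} applies uniformly). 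Your appeal to Lemma \ref{lem2s} and to a case analysis on the position of $\weta$ points in the right direction, but what makes the exponent uniformly negative is the sign and size of $\ep_N$ relative to the window oscillation of $\Ht_\ell$, not the nondegeneracy of the minimum of $\Ht_\ell$ at $\weta$.
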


The proof  uses arguments  similar to those developed earlier, and is left to the reader.

\mk

Proposition \ref{p'2} asserts that    for all $W \in \Sigma_{\lfloor \eta' j\rfloor}$, 
$ \mathcal{S}_j(\eta,W)  \cap \Tl(j,\eta', \ep^3_j) $ is not empty when $j$ becomes large. The last proposition of this section shows that its cardinality cannot be very large. This fact will be interpreted geometrically as a  {\em weak redundancy} property from the viewpoint of ubiquity theory \cite{BSubiquity2,BSubiquity3} and has nice geometric consequences for our study. 

\begin{proposition}\label{p4} 
 \begin{enumerate} 
 \item  For all $\eta' \in [\etal,\eta]\setminus \{0\}$, for all $\ep\in(0,1)$, there exists $\beta>0$ such that with probability 1,  for every $j$ large enough and  all $W\in\Sigma_{\lfloor \eta'j \rfloor}$,
\begin{equation}
\label{eq4s}
1  \, \leq \,   \# \big (\,  \mathcal{S}_j(\eta,W)  \cap \Tl(j,\eta',  \beta)  \,\big)  \, \leq \,  2^{\eta'j  \ep}.
\end{equation}
 
 \item
The same holds true for  $\eta'   \in [\etar,\eta]\setminus \{0\}$ and the sets $ \mathcal{S}_j(\eta,W)  \cap \Tr(j,\eta', \beta)  $.
\end{enumerate}
\end{proposition}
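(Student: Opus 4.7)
The plan is to obtain the lower bound as an immediate consequence of Proposition \ref{p'2} and to establish the upper bound via a Chernoff/binomial-tail argument combined with Borel--Cantelli. For the lower bound, note that given $\beta>0$, the sequence $(\ep^3_j)$ produced by Proposition \ref{p'2} eventually satisfies $\ep^3_j<\beta$, so $\Tl(j,\eta',\ep^3_j)\subset \Tl(j,\eta',\beta)$, and the conclusion of Proposition~\ref{p'2} already delivers at least one element in $\mathcal{S}_j(\eta,W)\cap \Tl(j,\eta',\beta)$ for every $W\in\Sigma_{\lfloor\eta'j\rfloor}$, uniformly, almost surely.

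For the upper bound, fix $\eta' \in [\etal,\eta]\setminus\{0\}$ and $\ep\in(0,1)$, and for $W \in \Sigma_{\lfloor\eta'j\rfloor}$ set
$$
X_W:=\#\bigl(\mathcal{S}_j(\eta,W)\cap \Tl(j,\eta',\beta)\bigr)=\sum_{w\in \Sigma_j,\ I_w\subset I_W,\ w\in\Tl(j,\eta',\beta)}\!\!\!\!\!p_w.
$$
The index set has cardinality at most $N_{j,\beta}:=\#\mathcal{E}_\mu(j-\lfloor\eta'j\rfloor,{H_\ell}(\eta')\pm\beta)$, and the $p_w$'s are i.i.d.\ Bernoulli of parameter $2^{-dj(1-\eta)}$. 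Item (5) of Proposition \ref{fm}, applied on the generation $j-\lfloor\eta'j\rfloor$, together with continuity of $D_\mu$, yields for $j$ large
$$
N_{j,\beta}\le 2^{(j-\lfloor\eta'j\rfloor)(D_\mu({H_\ell}(\eta'))+\delta(\beta))},
$$
with $\delta(\beta)\to 0$ as $\beta\to 0$. The key identity is the defining relation \eqref{defHl}, namely $D_\mu({H_\ell}(\eta'))=d(1-\eta)/(1-\eta')$, which gives $(j-\lfloor\eta'j\rfloor)D_\mu({H_\ell}(\eta'))=dj(1-\eta)+O(1)$, so that
$$
N_{j,\beta}\cdot 2^{-dj(1-\eta)}\le 2^{(1-\eta')\delta(\beta)j+O(1)}.
$$
I then invoke the binomial tail estimate $\mathbb{P}(\mathrm{Bin}(n,p)\ge k)\le (enp/k)^k$ with $k=\lceil 2^{\eta'j\ep}\rceil$. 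Choosing $\beta$ small enough so that $(1-\eta')\delta(\beta)<\eta'\ep/2$, one obtains, for $j$ large,
$$
\mathbb{P}(X_W\ge 2^{\eta'j\ep})\le 2^{-c\,j\,2^{\eta'j\ep}}
$$
for some $c>0$. The union bound over the $2^{d\lfloor\eta'j\rfloor}$ cubes $W$ followed by summation over $j$ is then convergent (the bound is super-exponentially small), and Borel--Cantelli finishes part~(1). Part (2) follows verbatim by replacing ${H_\ell}(\eta')$ with $H_r(\eta')$ and invoking the decreasing-part analogue of Proposition~\ref{p'2}, together with Definition~\ref{defetar} giving the same identity $D_\mu(H_r(\eta'))=d(1-\eta)/(1-\eta')$.

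The main obstacle is that a plain Markov inequality gives only $\mathbb{P}(X_W\ge 2^{\eta'j\ep})\lesssim 2^{-\eta'j\ep}\cdot\mathbb{E}[X_W]$, an at-best exponentially small bound that cannot absorb the $2^{d\lfloor\eta'j\rfloor}$ factor produced by the union bound over $W\in\Sigma_{\lfloor\eta'j\rfloor}$. The super-exponential decay coming from $(enp/k)^k$ is essential, and this in turn crucially relies on the identity $D_\mu({H_\ell}(\eta'))=d(1-\eta)/(1-\eta')$ to cancel the main term and leave only the vanishing error $\delta(\beta)$ in the exponent.
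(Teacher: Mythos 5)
Your proof is correct and follows essentially the same route as the paper: the lower bound comes from Proposition~\ref{p'2} (with $\ep^3_j\le\beta$ eventually), and the upper bound is the same binomial-tail estimate for $\sum p_w$ over $\Tl(j,\eta',\beta,W)$, using the cancellation $(j-\lfloor\eta'j\rfloor)D_\mu(H_\ell(\eta'))=dj(1-\eta)+O(1)$ from \eqref{defHl} to make $n_jp_j$ subexponential, followed by a union bound over $W\in\Sigma_{\lfloor\eta'j\rfloor}$ and Borel--Cantelli. The only cosmetic difference is that you shrink $\beta$ as a function of $\ep$ where the paper fixes the error at $d\ep^2$ and reduces to $\ep\le\eta'/(4d)$; both work.
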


\begin{proof}

(1) It is clear that it is enough to get the conclusion for $\ep$ small enough. Fix $\ep\in (0,1)$ and $\eta'\in \eta' \in [\etal,\eta]\setminus \{0\}$.  Due to the almost multiplicativity property of $\mu$, and equation \eqref{eq1sss},  there exists $\beta >0$ and $J_0$ such that for $j\ge J_0$, for each $W \in \Sigma_{\lfloor \eta'j \rfloor}$, 
\begin{equation}\label{eqbeta}
\#  \Tl(j,\eta',  \beta,W)   \leq  2^{ \big ({D_\mu}({{H_\ell} }(\eta')  )+d\ep^2\big)(j-\lfloor \eta'j  \rfloor)}.
\end{equation}
Notice that the cardinality $ n_j=\# \Tl(j,\eta',  \beta,W) $ is independent of $W$. Since ${D_\mu}({{H_\ell} }(\eta'))=d(1-\eta)/(1-\eta')\le d$, $\ep< 1$ and $\eta' \le \eta<1$, for $j\ge J_0$ we have 
$$
n_j\le 2^{ \big ({D_\mu}({{H_\ell} }(\eta')  )+d\ep^2\big)(j-\lfloor \eta'j  \rfloor)}\le 2^{d(1-\eta)j}2^{d\ep^2 j+d}.
$$

 By definition, we have 
 $$ \# \big (\,  \mathcal{S}_j(\eta,W)  \cap \Tl(j,\eta', \ep^3_j)  \,\big)  =\sum_{w\in   \mathcal{T}_{\mu, \ell} (j, \eta',\beta,W) } p_w.$$
 Denote this random variable by $B(j,\eta',\beta,W)$. Its law is  a binomial law of parameters $(n_j, 2^{-d(1-\eta)j} )$.  Thus 
\begin{eqnarray*}
\mathbb P \Big (B(j,\eta',\beta,W)\ge 2^{\ep\eta' j} \Big) &  \le  & \sum_{2^{\ep\eta' j}\le l\le n_j} \binom{n_j}{l}(2^{-d(1-\eta)j} )^l  \le  \sum_{2^{\ep\eta' j}\le l\le n_j} \frac{(n_j2^{-d(1-\eta)j} )^l}{l!}\\ & \le & \sum_{2^{\ep\eta' j}\le l\le n_j} \frac{2^{dj\ep^2 l+dl}}{l!}\le \sum_{l\ge 2^{\ep\eta' j}} \left (\frac{e2^{dj\ep^2 +d}}{l} \right )^l 
\end{eqnarray*}
for $j$ large enough by Stirling's formula. Then, if  $ \ep \leq\eta'/(4d)$,   there is another integer $J'_0$ such that for $j \geq J'_0$, for all  $l\ge2^{\ep\eta' j}$, we have $\dfrac{e2^{dj\ep^2 +d}}{l}\le 2^{-\ep\eta' j/2}\le 1/2$, hence 
$$
 \mathbb P\Big(B(j,\eta',\beta,W)\ge 2^{\ep\eta' j} \Big)\le 2\cdot 2^{-\lfloor 2^{\ep\eta' j}\rfloor\ep\eta' j/2},
 $$
 and 
$$
\sum_{j\ge J'_0} \  \sum_{W \in\Sigma_{\lfloor \eta'j \rfloor}}\mathbb P\Big(B(j,\eta',\beta,W)\ge 2^{\ep\eta' j} \Big)\le  \sum_{j\ge J'_0} 2^{d\lfloor \eta'j \rfloor}2\cdot 2^{-\lfloor 2^{\ep\eta' j}\rfloor\ep\eta' j/2}<\infty.
$$
The desired conclusion follows from the Borel-Cantelli lemma.  
 
 \mk
 
 \noindent 
 (2) The computations are identical for $\eta'\in [\eta_r,\eta]\setminus\{0\}$ and $\# \big (\,  \mathcal{S}_j(\eta,W)  \cap \Tl(j,\eta', \beta)  \,\big) $.
\end{proof}
\section{Upper bound for the singularity spectrum of $\M_\mu$}\label{sec_upper}

Section~\ref{upb3} derives the sharp upper bound provided by Theorem~\ref{thm-0} for the decreasing part of $D_{\M_\mu}$; this bound comes rather directly after the preparation achieved in Section~\ref{anaval}. Next,  Section~\ref{upb1} examines in detail the possible local behaviors of the surviving coefficients $\mu(I_w)$ which contribute to a given set $\underline E_{\M_\mu}(H)$ and provides a first expression for the upper bound of  the increasing part of $D_{\M_\mu}$. This bound is then simplified in Section~\ref{upb2} into the formula given by Theorem~\ref{thm-0}.  Also, precious information are pointed out in preparation of next Section~\ref{sec_lower}, which deals with the lower bound for   $D_{\M_\mu}$.

  \begin{remark}
  This section is rather long and technical, but it is key to understand what phenomena rule the local behavior of $\Mm$ at a point $x\in \zu^d$.
  Let us mention that there is a slightly faster way to obtain the upper bound for the multifractal spectrum of $\Mm$, using the multifractal formalism and a lower bound for the $L^q$-spectrum $\tau_{\Mm}$ obtained in Section~\ref{sec92}. Nevertheless, we choose to keep up with the first method, mainly for two reasons:
  \begin{itemize}
  \item
  The second method is ``blind'', since it does not give any clue on how to obtain the lower bound for the spectrum. Indeed, it will appear soon that for every possible local dimension $H$, there is a favorite scenario which leads a point $x$ to satisfy $\underline{\dim}(\Mm,x)=H$. This can absolutely not be guessed without the precise study achieved in the following pages.
  \item
  Using the multifractal formalism to get an upper bound for the multifractal spectrum is efficient only when the multifractal formalism is satisfied by the object under consideration. Fortunately, this is the case for $\Mm$, and the bound is sharp. But there are closely related sampling processes (not developed in this paper) not satisfying the multifractal formalism, and in this case this method is useless.   
  \end{itemize}
  
  \end{remark}

\subsection{Upper bound for the decreasing part of $D_{\M_\mu}$}\label{upb3}

It turns out that finding an upper bound in the decreasing part of the singularity spectrum $D_{\M_\mu}$, i.e. for $H\geq H_s+{\Ht_\ell}(\weta)$, is much easier than in the increasing~one.

\mk

We start with quite a direct  upper bound for all the local dimensions of $\M_\mu$.
\begin{proposition}
\label{majexp}
Almost surely, for every $x\in [0,1]^d$, 
$$
 {\underline \dim_\locloc}(\M_\mu,x) \le  {\overline \dim_\locloc}(\M_\mu,x) \leq \ {\overline \dim_\locloc}(\mu,x)  +{\Ht_\ell} (\widetilde \eta).
$$
As a consequence, for every $x\in \zu^d$, $ {\underline \dim_\locloc}(\M_\mu,x) \leq H_{\max}+{\Ht_\ell}(\weta)$.
\end{proposition}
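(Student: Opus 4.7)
The plan is to establish the inequality $\overline\dim_{\locloc}(\M_\mu, x) \le \overline\dim_{\locloc}(\mu, x) + \Ht_\ell(\weta)$ by producing, for every $x$ and every large scale $J$, a surviving vertex deep in the dyadic tree below $x_{|J}$ with $\mu$-mass not too small. The key input is Proposition~\ref{p'2} at the distinguished parameter $\eta' = \weta$, which belongs to $(\eta_\ell,\eta)$: almost surely, for every $j$ large enough and every $W \in \Sigma_{\lfloor \weta j\rfloor}$, the intersection $\mathcal{S}_j(\eta, W) \cap \Tl(j, \weta, \ep^3_j)$ is non-empty, with $\ep^3_j \to 0$.

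The execution runs as follows. Fix $x \in [0,1]^d$ and $J$ large, set $j = \lceil J/\weta\rceil$, and let $J_0 := \lfloor \weta j\rfloor$, so that $J \le J_0 \le J + O(1)$. Applying Proposition~\ref{p'2} with $W = x_{|J_0}$ furnishes a surviving vertex $w \in \mathcal{S}_j(\eta)$ with $I_w \subset I_{x_{|J_0}} \subset I_J(x)$ and $\mu(I_{\sigma^{J_0} w}) \ge 2^{-(j-J_0)({H_\ell}(\weta) + \ep^3_j)}$. The almost multiplicativity property \eqref{quasib} yields
$$
\mu(I_w) \ \ge \ C^{-1}\, \mu(I_{x_{|J_0}}) \cdot 2^{-(j-J_0)({H_\ell}(\weta) + \ep^3_j)}.
$$
Since $p_w = 1$ and $I_w \subset I_{x_{|J}}$, taking $u = x_{|J}$ in Definition~\ref{defnewmmu} forces $\M_\mu(I_J(x)) \ge \mu(I_w)$. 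Combining the identity $\Ht_\ell(\weta) = (1/\weta - 1) {H_\ell}(\weta)$ with $(j-J_0)/J = 1/\weta - 1 + o(1)$, and noting that $-\log_2 \mu(I_{x_{|J_0}}) = -\log_2 \mu(I_{x_{|J}}) + O(1)$ (again by \eqref{quasib} and the uniform bound in Proposition~\ref{fm}(6), since $|J_0 - J| = O(1)$), one deduces
$$
\frac{-\log_2 \M_\mu(I_J(x))}{J} \ \le \ \frac{-\log_2 \mu(I_{x_{|J}})}{J} + \Ht_\ell(\weta) + o(1).
$$
Passing to $\limsup_{J \to \infty}$ and using the standard equivalence between the local dimensions of $\M_\mu$ computed along Euclidean balls $B(x,r)$ and along dyadic cubes $I_J(x)$ (immediate here because the modified definition~\eqref{defnewmmu} of $\M_\mu$ already incorporates the neighboring cubes $\mathcal{N}_J(x)$), one obtains the announced bound $\overline\dim_{\locloc}(\M_\mu, x) \le \overline\dim_{\locloc}(\mu, x) + \Ht_\ell(\weta)$. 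The inequality $\underline\dim_{\locloc}(\M_\mu, x) \le \overline\dim_{\locloc}(\M_\mu, x)$ is trivial, completing the first conclusion.

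The second assertion is then immediate from the uniform bound $\overline\dim_{\locloc}(\mu, x) \le H_{\max}$, which holds at every $x \in [0,1]^d$ for any Gibbs capacity; this is a classical consequence of thermodynamic formalism (for instance of Proposition~\ref{fm}(1), since $\overline E_\mu(H) = \emptyset$ whenever $D_\mu(H) = -\infty$, i.e.\ whenever $H > H_{\max}$). The essentially only technical point in the whole argument is the bookkeeping around the non-equality $\lfloor \weta j\rfloor \ne J$: it is harmlessly absorbed into the $O(1)$ shift between $J_0$ and $J$, which vanishes after division by $J$.
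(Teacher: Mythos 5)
Your proof is correct and follows essentially the same route as the paper's: both apply Proposition~\ref{p'2} at the distinguished parameter $\eta'=\weta$ to produce, below each cube $I_{\lfloor \weta j\rfloor}(x)$, a surviving vertex whose $\weta$-tail has $\mu$-mass at least $2^{-(j-\lfloor\weta j\rfloor)(H_\ell(\weta)+\ep^3_j)}$, then use quasi-multiplicativity and the identity $\Ht_\ell(\weta)=(1/\weta-1)H_\ell(\weta)$ before passing to the $\limsup$. Your version merely makes the scale bookkeeping (the relation between $J$, $j=\lceil J/\weta\rceil$ and $\lfloor\weta j\rfloor$) more explicit than the paper does.
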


\begin{proof}
Let $x\in [0,1]^d$. Due to Proposition~\ref{p'2} applied with $\eta'=\widetilde\eta$, for each $j$ large enough we have 
$$\M_\mu (I_{\lfloor j\widetilde \eta\rfloor}(x))\ge C^{-1} \mu (I_{\lfloor j\widetilde \eta\rfloor}(x)) 2^{-(j-\lfloor j\widetilde \eta\rfloor)({{H_\ell}}(\widetilde\eta)+\ep^3_j)}.$$
Taking logarithm on both sides,   dividing by $-\lfloor j\widetilde \eta\rfloor\log(2)$, and  taking the $\liminf$ as $j\to\infty$ yields the desired conclusion. 

Since for every $x\in \zu^d$, $ {\overline \dim_\locloc}(\mu,x) \leq H_{\max}$, the result follows.
\end{proof}

Using this upper bound for the local dimensions, {and by anticipation the lower bound given by Lemma~\ref{lempointgauche} below}, one deduces that the domain of $D_{\M_\mu}$ is included in $[{H_\ell}(\etal), H_{\max} +{\Ht_\ell}(\weta)]$. Also, we get an upper bound for the decreasing part of the singularity spectrum.

\begin{proposition}
For all $H\in[ {H_s}+ {\Ht_\ell} (\widetilde \eta), H_{\max}+{\Ht_\ell}(\weta)]$, one has 
$$D_{\M_\mu} (H)\le {D_\mu}\big (H- {\Ht_\ell} (\widetilde \eta)\big),$$
and for all $H> H_{\max}+{\Ht_\ell}(\weta)$, 
$$\dim   \underline E^{\geq}_{\M_\mu}(H')=-\infty.$$
\end{proposition}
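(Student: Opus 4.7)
The plan is to derive both bounds from a single ingredient, namely the upper bound for the local dimensions of $\M_\mu$ given by Proposition~\ref{majexp}, combined with item (3) of Proposition~\ref{fm} on the Hausdorff dimension of the level sets of the upper local dimensions of $\mu$. Unlike the increasing part of the spectrum treated in the next subsections, the decreasing part essentially ``inherits'' its shape directly from the spectrum of $\mu$ after the $\Ht_\ell(\weta)$-translation provided by Proposition~\ref{majexp}.

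First I would handle the simpler case $H>H_{\max}+\Ht_\ell(\weta)$. Item (2) of Proposition~\ref{fm} tells us that $\overline \dim_\locloc(\mu,x)\le H_{\max}$ for every $x\in[0,1]^d$ (the domain of the Legendre spectrum of $\mu$ being $[H_{\min},H_{\max}]$). Combined with Proposition~\ref{majexp}, this yields
$$
\underline\dim_\locloc(\M_\mu,x) \;\le\; \overline\dim_\locloc(\mu,x)+\Ht_\ell(\weta) \;\le\; H_{\max}+\Ht_\ell(\weta) \;<\; H
$$
for every $x\in[0,1]^d$. Hence $\underline E^{\ge}_{\M_\mu}(H)=\emptyset$, which gives $\dim\underline E^{\ge}_{\M_\mu}(H)=-\infty$ by our convention.

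Next, for $H\in[H_s+\Ht_\ell(\weta),H_{\max}+\Ht_\ell(\weta)]$, I would argue as follows. If $x\in\underline E_{\M_\mu}(H)$, then Proposition~\ref{majexp} gives
$$
H=\underline\dim_\locloc(\M_\mu,x)\;\le\;\overline\dim_\locloc(\mu,x)+\Ht_\ell(\weta),
$$
so that $\overline\dim_\locloc(\mu,x)\ge H-\Ht_\ell(\weta)$, and this last quantity lies in $[H_s,H_{\max}]$ by the choice of the range of $H$. This produces the key inclusion
$$
\underline E_{\M_\mu}(H)\;\subset\;\overline E^{\ge}_\mu\bigl(H-\Ht_\ell(\weta)\bigr).
$$
Since $H-\Ht_\ell(\weta)$ sits on the decreasing part of $D_\mu$, item (3) of Proposition~\ref{fm} gives
$$
\dim \overline E^{\ge}_\mu\bigl(H-\Ht_\ell(\weta)\bigr)=D_\mu\bigl(H-\Ht_\ell(\weta)\bigr),
$$
and the monotonicity of Hausdorff dimension concludes.

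I do not expect any real obstacle in this argument: everything is already encapsulated in the Gibbs-type statement of Proposition~\ref{fm}(3) and in the a priori local bound provided by Proposition~\ref{majexp}. The point worth emphasizing is that both inputs are in a sense ``soft''—Proposition~\ref{majexp} itself followed from a single application of Proposition~\ref{p'2} with $\eta'=\weta$—whereas the matching lower bound and the full description of the increasing part of $D_{\M_\mu}$ will require the finer, scenario-by-scenario analysis developed in Section~\ref{surv} and Section~\ref{sec_lower}.
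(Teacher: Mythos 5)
Your proposal is correct and follows essentially the same route as the paper: the inclusion $\underline E_{\M_\mu}(H)\subset \overline E^{\geq}_\mu\bigl(H-\Ht_\ell(\weta)\bigr)$ via Proposition~\ref{majexp}, then Proposition~\ref{fm}(3) for the decreasing part, with the empty-level-set case for $H>H_{\max}+\Ht_\ell(\weta)$ handled exactly as the consequence already recorded in Proposition~\ref{majexp}.
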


\begin{proof}
Let $H\ge {H_s}+{\Ht_\ell} (\widetilde \eta)$. By Proposition~\ref{majexp}, if $x\in \underline{E}_{\M_\mu}(H)$, then $ {\overline \dim_\locloc}(\mu,x)   \ge    H-{\Ht_\ell} (\widetilde \eta)$, hence $\underline{E}_{\M_\mu}(H)\subset  \overline{E}^{\geq}_\mu({  H-{\Ht_\ell} (\widetilde \eta)})$. Using Part (3) of Proposition \ref{fm}, one deduces that 
$$\dim \underline{E}_{\M_\mu}(H)\le {D_\mu}\big( H-{\Ht_\ell} (\widetilde \eta)\big),$$
 since  $ H-{\Ht_\ell} (\widetilde \eta) \ge {H_s}$ (this corresponds to  the decreasing part of $D_\mu$). 
\end{proof}

\subsection{Upper bound for the increasing part of $D_{\M_\mu}$}\label{upb1}

Let us start with  the lower bound for the left end-point of the support of the singularity spectrum of $\M_\mu$.

\begin{lemma}
\label{lempointgauche}
With probability 1, for every $x\in \zu^d$, ${\underline  \dim}(\M_\mu,x)\geq {H_\ell}(\etal)$.
\end{lemma}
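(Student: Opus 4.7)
The plan is to observe that this lemma is essentially a direct corollary of Proposition~\ref{p1}, which has already done all the probabilistic work. Proposition~\ref{p1} provides, almost surely, a sequence $(\ep^1_j)_{j\ge 1}$ converging to $0$ such that for $j$ large enough and every surviving vertex $w \in \mathcal{S}_j(\eta)$,
$$
\mu(I_w) \le 2^{-j(H_\ell(\etal) - \ep^1_j)}.
$$
Because $\mu$ is a non-homogeneous Gibbs capacity, $H_{\min}>0$ and hence $H_\ell(\etal) \ge H_{\min} > 0$. So I can fix an arbitrary $\ep \in (0, H_\ell(\etal))$ and choose $J$ large enough that $\ep^1_{j'} \le \ep$ for every $j' \ge J$.

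Next, I would fix $x \in [0,1]^d$ and $j \ge J$ and unfold the definition \eqref{defnewmmu}: $\M_\mu(I_j(x))$ is a maximum of $\mu(I_{uv})$ over pairs $(u,v)$ with $u \in \mathcal{N}_j(x)$, $v \in \Sigma^*$, and $p_{uv}=1$. Every such $uv$ is a surviving vertex of some generation $j' := |uv| \ge j \ge J$, so the uniform bound above gives
$$
\mu(I_{uv}) \le 2^{-j'(H_\ell(\etal)-\ep)} \le 2^{-j(H_\ell(\etal)-\ep)},
$$
where the second inequality uses $H_\ell(\etal) - \ep > 0$. Taking the maximum over $(u,v)$ yields $\M_\mu(I_j(x)) \le 2^{-j(H_\ell(\etal)-\ep)}$, uniformly in $x \in \zu^d$. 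Dividing by $-j$ and passing to the $\liminf$ gives $\underline{\dim}_\locloc(\M_\mu, x) \ge H_\ell(\etal) - \ep$, and since $\ep > 0$ is arbitrary, the lemma follows.

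There is essentially no serious obstacle here: the whole content is already encoded in Proposition~\ref{p1} (which combines large deviations for $\mu$ with a Borel--Cantelli argument to show that no surviving vertex can have abnormally large $\mu$-mass). The only subtlety worth noting is the use of the positivity $H_\ell(\etal) > 0$, which is what makes the elementary inequality $2^{-j'\alpha} \le 2^{-j\alpha}$ valid for $j' \ge j$ and $\alpha = H_\ell(\etal) - \ep > 0$; without this positivity, descending to finer generations could not be exploited to bound $\M_\mu(I_j(x))$ from above.
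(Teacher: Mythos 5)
Your proof is correct and follows essentially the same route as the paper: both rest entirely on Proposition~\ref{p1} and the fact that $\M_\mu(I_j(x))$ is a maximum of $\mu(I_w)$ over surviving vertices of generation at least $j$. Your explicit remark that the positivity $H_\ell(\etal)>0$ is what lets one pass from generation $j'\ge j$ back to generation $j$ is a detail the paper leaves implicit, but it is the same argument.
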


\begin{proof}
By Proposition \ref{p1}, with probability 1, for $j$ large enough, the surviving vertices $w\in \mathcal{S}_j(\eta)$ all satisfy $\mu(I_w) \leq 2^{-j ({H_\ell}(\etal)-\ep^1_j)}$. Hence, for every large integer $J$ and every word $W\in \Sigma_{J}$, $\M_\mu(I_{W}) \leq 2^{-J({H_\ell}(\etal)-\ep^1_J)}$, since 
$\M_\mu(I_{W})$ is the  maximum of $\mu(I_w)$ over all surviving words $w$ such that $I_w\subset I_{W}$.  Subsequently, for every $x$, ${\underline  \dim}(\M_\mu,x)\geq {H_\ell}(\etal)$.
\end{proof}

Further, we are going to  provide a first expression for the sharp upper bound of $\dim \underline E_{\M_\mu}$ when $H_\ell(\etal)\leq H\le H_s+{\Ht_\ell}(\widetilde\eta)$ in Proposition~\ref{ubip}. 
It is based on the following definition and Proposition~\ref{propmajs}, which describe the possible scenarii leading to the property ${\underline \dim_\locloc}(\M_\mu,x) \le H$.  

\begin{definition}
\label{defbigsets}
 For each  $j\ge 1$, $k\ge 1$,   $\alpha\in\R_+$, $\eta'\in [\eta_\ell,\eta]\setminus\{0\}$ and $\delta\ge 1$, let 
\begin{eqnarray*}
\vspace{2mm}&& {F}_{\mu,\ell}(j,\alpha,\eta',\delta ,k) = \vspace{2mm}\\
 \vspace{2mm}&&\left\{ \,  x\in \zu^d:  \begin{cases}  \   \exists  \, w \in \mathcal{S}_j(\eta) \, \cap \,  \Rmu (j,\eta',\alpha\pm1/k) \, \cap  \, \Tl (j,\eta', 1/k) \vspace{1mm} \\ \      \mbox{such that }   \max( 2^{-j},d(x, I_w))\le  2^{-\eta'j  \delta}   \end{cases} \right\}.
\end{eqnarray*}

 \noindent 
Let  $\mathcal{P}_\ell$ be a countable set of parameters $(\alpha,\eta',\delta) $ dense in $[H_{\min},H_{\max}]\times  (\etal,\eta] \times [1,+\infty)$.

\noindent Then, for $H\ge 0$, let 
\begin{equation}
\label{defFmul}
 F_{\mu,\ell} (H)=\bigcap_{\ep\in(0,1)}  \ \bigcap_{k\ge 1} \ \bigcup_{\substack { (\alpha,\eta',\delta) \in \mathcal{P}_\ell: \\
 \delta \in [1,1/\eta'],\ 
 \frac{\alpha+{\Ht_\ell}(\eta')}{\delta}\le H+\ep } }\limsup_{j\to+ \infty} F_{\mu,\ell}(j,\alpha,\eta',\delta -\ep,k)  .
\end{equation}

\medskip

The sets  $F_{\mu,r}(j,\alpha,\eta',\delta,k) $,   $\mathcal{P}_r$ and $ {F}_{\mu,r} (H)$ are similarly defined.

\end{definition}

 The definition of the sets $F_{\mu,\ell}(j,\alpha,\eta',\delta,k)$  and   $F_{\mu,r}(j,\alpha,\eta',\delta,k) $ is rather long and difficult to handle with at first sight. Nevertheless, it is the mathematical counterpart of the following intuition: $F_{\mu,\ell}(j,\alpha,\eta',\delta,k)$ contains those points $x$ which are $2^{-\eta'j  \delta}$ close to some cube  $I_w$  (i.e.  $ \max( 2^{-j},d(x, I_w))\le  2^{-\eta'j  \delta}  $)  associated with a surviving vertex $w$ whose {$\eta'$-root} and {$\eta'$-tail} have a prescribed behavior with respect to the capacity $\mu$. 

The fact that these sets play a key role is illustrated by the following proposition, which is the  main result of this section.

\begin{proposition}
\label{propmajs}
With probability 1, for all $H\ge 0$,  
$$\underline{E}_{\M_\mu}(H)\subset F_{\mu,\ell}(H)\cup F_{\mu,r}(H).$$
\end{proposition}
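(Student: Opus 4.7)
The plan is as follows. Fix $x\in \underline E_{\M_\mu}(H)$ together with $\ep\in(0,1)$ and $k\ge 1$. By definition of the lower local dimension, the inequality $\M_\mu(I_m(x))\ge 2^{-m(H+\ep/2)}$ holds for infinitely many positive integers $m$. For each such $m$, the formula \eqref{defnewmmu} produces $u\in \mathcal N_m(x_{|m})$ and $v\in \Sigma^*$ with $p_{uv}=1$ realizing this maximum. Writing $w=uv$ and $j=|w|\ge m$, we have $w\in \mathcal S_{j}(\eta)$, $\mu(I_w)=\M_\mu(I_m(x))\ge 2^{-m(H+\ep/2)}$ and $d(x,I_w)\le C_d\,2^{-m}$ for a dimensional constant $C_d$. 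The upper bound in Proposition~\ref{p1} applied to the surviving vertex $w$ gives $\mu(I_w)\le 2^{-j({H_\ell}(\etal)-o(1))}$; combined with the previous lower bound, this forces the ratio $m/j$ to lie in a compact subinterval of $(0,1]$.

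Next, Proposition~\ref{p2} supplies $\eta''\in [\etal,\eta]\cup[\etar,\eta]$ such that $w\in \Tt(j,\eta'',\ep^2_{j})$. Indexing the above choices by $i$ and writing $(m_i,j_i,w_i,\eta''_i)$, I pass to a sub-sequence along which every $\eta''_i$ lies in $[\etal,\eta]$ (the complementary case is symmetric and yields the inclusion in $F_{\mu,r}(H)$). I then set
\[
\alpha_i := \frac{-\log_2 \mu\bigl(I_{w_i|\lfloor \eta''_ij_i\rfloor}\bigr)}{\lfloor \eta''_ij_i\rfloor},\qquad \delta_i := \frac{m_i}{\eta''_i j_i}.
\]
Proposition~\ref{fm}(6) bounds $\alpha_i$, the set $[\etal,\eta]$ bounds $\eta''_i$, and the first paragraph bounds $\delta_i$. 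Extract a further sub-sequence along which $(\alpha_i,\eta''_i,\delta_i)\to (\alpha^*,\eta^*,\delta^*)$.

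The almost-multiplicativity property \eqref{quasib} combined with the defining relation \eqref{tail} of $\Tl$ yields
\[
-\log_2 \mu(I_{w_i}) = \alpha_i\,\lfloor\eta''_i j_i\rfloor + {H_\ell}(\eta''_i)\,(j_i-\lfloor\eta''_i j_i\rfloor) + o(j_i).
\]
Inserting the lower bound $\mu(I_{w_i})\ge 2^{-m_i(H+\ep/2)}$ and dividing by $\eta''_i j_i$ produces $(\alpha_i+\Ht_\ell(\eta''_i))/\delta_i \le H+\ep/2+o(1)$, so that in the limit
\[
(\alpha^*+\Ht_\ell(\eta^*))/\delta^* \le H+\ep/2.
\]
The upper bound $\delta^*\le 1/\eta^*$ follows from $m_i\le j_i$. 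The main obstacle is establishing the lower bound $\delta^*\ge 1$: should a sub-sequence satisfy $m_i<\eta''_i j_i$ asymptotically, I would replace $\eta''_i$ by $\widehat\eta_i:=m_i/j_i$ and $\alpha_i$ by $\widehat\alpha_i:=-\log_2\mu(I_{u_i})/m_i$, and then verify via \eqref{quasib} that $w_i$ still belongs to $\Tl(j_i,\widehat\eta_i,\mathrm{small})$. The $\widehat\eta_i$-tail of $w_i$ is the concatenation of a middle block of length $\lfloor\eta''_ij_i\rfloor-m_i$ with the original $\eta''_i$-tail; checking that its local dimension remains close to ${H_\ell}(\widehat\eta_i)$ requires exploiting the fact that $\eta''_i$ was built in the proof of Proposition~\ref{p2} via an intermediate-value argument applied to the continuous function $\widetilde H_{j_i}(\cdot)$, which guarantees compatibility across a range of $\eta'$.

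Finally, the countable dense subset $\mathcal P_\ell$ of $[H_{\min},H_{\max}]\times(\etal,\eta]\times[1,\infty)$ provides a triple $(\alpha,\eta',\delta)\in\mathcal P_\ell$ arbitrarily close to $(\alpha^*,\eta^*,\delta^*)$ and satisfying $\delta\in[1,1/\eta']$ and $(\alpha+\Ht_\ell(\eta'))/\delta\le H+\ep$. The continuity of the conditions defining $F_{\mu,\ell}(j,\alpha,\eta',\delta-\ep,k)$ in the parameters, together with the density and the properties of the $w_i$ established above, implies $x\in F_{\mu,\ell}(j_i,\alpha,\eta',\delta-\ep,k)$ for infinitely many indices $i$. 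As $\ep$ and $k$ are arbitrary, $x\in F_{\mu,\ell}(H)$, which completes the proof.
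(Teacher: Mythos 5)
Your overall architecture matches the paper's: realize $\M_\mu(I_m(x))$ by a surviving vertex $w$, invoke Proposition~\ref{p2} to split $w$ into an $\eta''$-root and an $\eta''$-tail with exponent ${H_\ell}(\eta'')$, extract convergent subsequences of $(\alpha_i,\eta''_i,\delta_i)$, and approximate by the dense set $\mathcal P_\ell$. The derivation of $(\alpha^*+\Ht_\ell(\eta^*))/\delta^*\le H+\ep/2$ and of $\delta^*\le 1/\eta^*$ is correct. But the step you yourself flag as ``the main obstacle'' --- proving $\delta^*\ge 1$, i.e.\ that $m_i\gtrsim \eta''_ij_i$ --- is exactly where your argument breaks down, and it is the heart of the paper's proof. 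Your proposed fix is to reparametrize: set $\widehat\eta_i=m_i/j_i$ and check that $w_i\in\Tl(j_i,\widehat\eta_i,\mathrm{small})$. This fails. When $m_i<\lfloor\eta''_ij_i\rfloor$, the $\widehat\eta_i$-tail of $w_i$ is the concatenation of a middle block (of length $\lfloor\eta''_ij_i\rfloor-m_i$, lying inside the $\eta''_i$-\emph{root}, hence with scaling exponent governed by $\alpha_i$, about which nothing forces it to equal ${H_\ell}$ of anything) with the original $\eta''_i$-tail. There is no reason the resulting average exponent equals ${H_\ell}(\widehat\eta_i)$. The intermediate-value argument inside Proposition~\ref{p2} only yields the \emph{existence} of some $\eta'$ with $\widetilde H_{j_i}(\eta')={H_\ell}(\eta')$; it gives no ``compatibility across a range of $\eta'$'' and in particular no information at the specific value $m_i/j_i$.

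The missing idea is a \emph{competition} argument exploiting the maximality of $\mu(I_{w_i})$ among all surviving descendants of $I_{m_i}(x)$, which is what the paper's intermediate lemma (inequality \eqref{eq7s}, $j_n\eta_{j_n}\le J_n(1+\ep/6)$) provides. If the root of $w_i$ were much deeper than the observation scale $m_i$, then Corollary~\ref{coroetank} (the renewal property: every cube of generation $\lfloor\eta_{N,k}j'\rfloor$ contains a surviving vertex of generation $j'$ whose tail has exponent close to ${H_\ell}(\eta_{N,k})$) produces another surviving vertex $w'_i$ inside $I_{m_i}(x)$ with root of depth exactly $m_i$ and tail exponent ${H_\ell}$, and the quasi-Bernoulli estimate then shows $\mu(I_{w'_i})>\mu(I_{w_i})$ for large $i$ --- contradicting the fact that $w_i$ realizes the maximum. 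Without this step the case $\delta<1$ cannot be excluded, and it must be excluded: in the covering argument of Proposition~\ref{ubip}, balls of radius $(2^{-\eta'j})^{\delta}$ with $\delta<1$ are \emph{larger} than the cubes indexing them and the dimension bound $D_\mu(\alpha)/\delta$ would not follow. (A secondary, more minor omission: when $\etal=0$ the accumulation point $\eta^*$ of the $\eta''_i$ may be $0$, in which case $\Ht_\ell(\eta''_i)\to\infty$; the paper treats this case separately by freezing a small $\eta'>0$ and absorbing the error, which your compactness extraction does not cover.)
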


\begin{proof}
Fix $H\ge 0$ and $x\in \underline E_{\M_\mu}(H)$. We denote by $\M_j(x) $ the coefficient $\M_\mu(I_{x_{|j}})$, and $\mathcal{N}_{j}(x)$ the set $\mathcal{N}(x_{|j})$ (recall Definition \ref{defnewsets}). 
Let $\ep\in (0,1)$.

By definition, since $\underline\dim (\Mm,x) =H$ there is an infinite number of integers $J_n$ such  that   $ \frac{-\log_2 \M_{J_n}(x)}{J_n}\le H+\ep/2$.   We write $W_n= x_{|J_n}$. By definition of the quantity $\M_{J_n} (x)  = \M_\mu(I_{W_n})$, there exists a surviving vertex   $w_n \in \mathcal{S}_{j_n}(\eta)$, $j_n \ge J_n$, such that  $w_n \in \mathcal N_{J_n}(x)$  and  $\M_{J_n}(x)=\mu(I_{w_n})$. One can assume that $I_{w_n}\subset I_{W_n}$, the other cases, i.e. when  $I_{w_n}$ is included in a neighboring cube $I_{w'}$ of $I_{W_n} $  of generation $J_n$, are absolutely similar by switching   $W_n$ and $w'$).

By Proposition~\ref{p2}, there are $\eta_{j_n} \in[\etal,\eta]\cup[\etar,\eta]$ and  $\alpha_{\lfloor j_n \eta_{j_n} \rfloor}\in[{H_{\min}},{H_{\max}}]$ such that  
\begin{enumerate}
\sk\item either $\eta_{j_n} \in[\etal,\eta]$, and 
\begin{align}
\label{r1}-\log_2 \mu(I_{w_n})&=\alpha_{\lfloor j_n \eta_{j_n} \rf } \lfloor j_n\eta_{j_n} \rfloor+ (j_n-\lfloor j_n\eta_{j_n} \rfloor) H_{ \ell}(\eta_{j_n}) +o(j_n),\\
\label{r2}-\log_2  \mu(I_{\sigma^{\lfloor j_n\eta_{j_n} \rfloor}w_n})&= (j_n-\lfloor j_n \eta_{j_n } \rfloor) {H_{\ell }}(\eta_{j_n}) + o(j_n),
\end{align}  
 
\sk
\item or 
$\eta_{j_n} \in[\etar,\eta]$, and the same holds with $H_r(\eta_{j_n})$ instead of ${H_\ell}(\eta_{j_n})$.
\end{enumerate}

Obviously,  $\max(2^{-j_n},d(x,I_{w_n}))\le  2 \cdot 2^{-{J_n}}$ (recall that we work with the $\|.\|_\infty$ norm).

\begin{center}
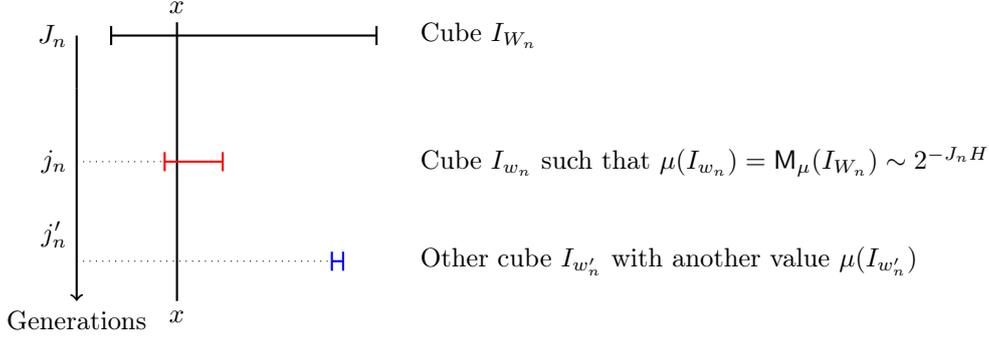
\begin{figure}
  \begin{tikzpicture}[xscale=0.88,yscale=0.88]
{\small
\draw[thick] (-0.5,4) -- (-0.5,3.2);
\draw [thick,->] (-0.5,3.2) -- (-0.5,-0.) node [below] {Generations};
\draw [fill] (1,4.2)  node   [above]   {$x$ }   [thick,-] (1,4.2) -- (1,-0.) node [below] {$x$};
\draw    [thick,|-|]   (0,4) -- (4 ,4)  ;
 \draw [fill] (-0.5,2.1)  node   [left]   {{$j_n $} }     [dotted] (-0.5,2.1) --  (.8,2.1)    ; 
\draw   [thick,color=red,|-|] (.8,2.1) -- (1.7,2.1)  ;
\draw   [thick,color=blue,|-|] (3.3,0.6) -- (3.5,0.6)  ;
\draw [fill] (4.5,4)  node  [right] {Cube $I_{W_n}$} ;
\draw [fill] (-0.5,1.)  node   [left]   {$j'_n$ }     [dotted] (-0.5,0.6) --  (3.4,0.6)    ; 
\draw [fill] (4.5,2.1)  node  [right] { Cube $I_{w_n}$ such that $\mu(I_{w_n}) = \M_\mu(I_{W_n}) \sim 2^{-J_n H} $};
\draw [fill] (4.5,0.6)  node  [right] {Other cube $I_{w'_n}$ with another value $ \mu(I_{w'_n})$   } ;  
 \draw [fill] (-0.5,4)  node   [left]   {$J_n$ } ;      
}
\end{tikzpicture}    
\caption{Competition between  surviving vertices  }
\label{figure_decoup2}
\end{figure}
\end{center}

$\bullet$ Assume that  the subsequence $(\eta_{j_n})_{n\geq 1}$ has an accumulating point in $(0,\eta]$. 

Without loss of generality we can assume that the whole  sequence $(\eta_{j_n})_{n\geq 1}$ converges to some $\eta'\in [\etal,\eta]\setminus\{0\}$ and that we are in the above situation above (1) for infinitely many values of $j_n$, along which   $H_{\ell }(  \eta_{j_n}  )$ converges to  $H_{\ell }(  \eta' )$. The other situation, absolutely symmetric,  is that  the    sequence $(\eta_{j_n} )_{n\geq 1}$ converges to some $\eta'\in [\etar,\eta]\setminus\{0\}$, the situation (2) holds for $n$ large enough,  and  $H_{r }(  \eta_{j_n}  )$ converges to $H_{r}(\eta')$. 

\begin{lemma}
One has
$\lim_{n\to+\infty} j_n \eta_{j_n}=+\infty$, and  for $n$ large enough, one has
\begin{equation}
\label{eq7s}
j_n \eta_{j_n} \le J_n(1+\ep/6).
\end{equation}
\end{lemma}

\begin{proof}
The first part $\lim_{n\to\infty} j_n \eta_{j_n }= +\infty$ is obvious since $\lim_{n\to\infty}\eta_{j_n} = \eta' >0$.

Recall  Corollary \ref{coroetank} and the notations therein. For $N\ge 1$, let us write $\eta_{j_n}(N) $  for the  unique $\eta_{N,k}$ such that  \begin{equation}
\label{eq8s}
  \eta_{j_n}  \in \big [ \eta_{N,k} , \,  \eta_{N,k} +(\eta-\etal)/N \big).
\end{equation}
  We apply Corollary \ref{coroetank}  with an integer  $ j '_n >J_n$ such that  $\lf {j'_n}   \eta_{j_n}(N) \rf  =   J_n $, to the word $W_n   \in \Sigma_{ J_n }$: there exists $w'_n \in \mathcal{S}_{j'_n} (\eta)$ such that  $I_{w'_n}\subset I_{W_n }$  and  $w'_n \in \Tl(j'_n, \eta_{j_n}(N), \ep^{4,N}_{j'_N})$, i.e.
$$
2^{-(j'_n- J_n ) \big ({{H_\ell}}   ( \eta_{j_n}(N)   )+\ep^{4,N}_{j'_n }  \big )}\le \mu(I_{\sigma^{\lfloor {j'_n } \eta_{j_n} (N)   \rfloor}  w'_n })\le 2^{-( j' _n - J_n    ) \big ({{H_\ell}}(\eta_{j_n}(N)  )-\ep^{4,N}_{j'_n  }  \big )},
$$
see Figure \ref{figure_decoup2}. 
 Observe  that $I_{w'_n} \subset I_{W_n}$  (hence $(w'_n)_{|\lf{j'_n } \eta_{j_n} (N) \rf} =(w'_n)_{|J_n} = W_n$), so 
\begin{eqnarray}
\label{eq5sss}
\mu(I_{w'_n }) &\geq & C^{-1}\mu(I_{ {w'_n }_{|J_n }})\mu (I_{\sigma^{J_n } w'_n})  \geq  C^{-1} \mu(I_{W_n  })  2^{-( {j'_n} - J_n) \big({{H_\ell}}(\eta_{j_n }(N)   )+\ep^{4,N}_{  {j'_n} } \big  )}.
\end{eqnarray}

Assume  towards contradiction that \eqref{eq7s} is not true. We are going to  prove that $  \mu(I_{  w'_n})>\mu(I_{w_n})$, contradicting the maximality of $\mu(I_{w_{n}})$ and the  fact that $\M_{J_n} (x)=\mu(I_{w_n})$. When \eqref{eq7s} does not hold, one has
\begin{eqnarray*}
  {j'_n}   - J_n     &  \leq     \dfrac{J_n+1}{ \eta_{j_n}(N)}  - J_n  = J_n \left(  \dfrac{1+1/J_n}{ \eta_{j_n} (N)}  - 1\right ) <  \dfrac{ {j_n} \eta_{j_n}  }{ 1+\ep/6}  \dfrac{1+1/J_n -\eta_{j_n}(N) }{ \eta_{j_n}(N)}  .
\end{eqnarray*}
Moreover, since $\eta_{j_n}$ tends to $\eta'$ and $|\eta_{j_n}(N)-\eta_{j_n}|\le 1/N$, we  choose $N$ so large that   that when $n$  becomes large   we have 
$$
  {j'_n}   -J_n  \le \frac{{j_n} -\lfloor j_n \eta_{j_n}\rfloor}{1+\ep/12}.
$$   
Observe that $ j_n \eta_{j_n}  \geq J_n(1+\ep/6)  $ also  yields  $I_{ {w_n}_{|\lfloor  j_n \eta_{j_n} \rfloor}} \subset I_{ W_n}$. This, together with  the last inequality and  \eqref{eq5sss} yields
\begin{eqnarray*}
\mu(I_{ w'_n}) & \ge  &   
 C^{-1} \mu(I_{ {w_n}_{|\lfloor  j_n \eta_{j_n} \rfloor}}) 2^{- \frac{j_n-\lfloor j_n \eta_{j_n} \rfloor}{1+\ep/12}  \big({H_\ell}( \eta_{j_n}(N))+\ep^{4,N}_{  {j'_n} }  \big)}.
\end{eqnarray*}

One chooses now the integers $N$ and $n$ so large that:
\begin{itemize}
\item
 $|{H_\ell}( \eta_{j'_n} ) - {H_\ell}( \eta')|< {H_\ell}( \eta_{j_n} )  \ep/96$,
 
 \item
 $|{H_\ell}( \eta_{j_n}(N))-{H_\ell}( \eta_{j_n})|<{H_\ell}( \eta_{j_n} )  \ep/96$,

\item
$\ep^{4,N}_{  {j'_n} }  <  {H_\ell}( \eta_{j_n} )  \ep/96$.

\end{itemize} 
This is possible since  $\eta_{j'_n} \to \eta' >0$, ${H_\ell}(\eta')>0$,  and ${H_\ell}$ is differentiable as function of $\eta'\in (\eta_\ell, \eta)$.
With these choices, one sees that  
\begin{eqnarray*} \frac{j_n-\lfloor j_n \eta_{j_n} \rfloor}{1+\ep/12}  \big({H_\ell}( \eta_{j_n}(N))+\ep^{4,N}_{  {j'_n} }  \big)  & \leq & (j_n-\lfloor j_n \eta_{j_n} \rfloor   ){H_\ell}( \eta_{j_n} )  (1-\ep/48).
 \end{eqnarray*}

Finally, we use \eqref{r2} to get, for $n$ large enough 
\begin{eqnarray*}
\mu(I_{ w'_n}) & \ge  &   
 C^{-1} \mu(I_{ {w_n}_{|\lfloor  j_n \eta_{j_n} \rfloor}}) 2^{-(j_n-\lfloor j_n \eta_{j_n} \rfloor   ){H_\ell}( \eta_{j_n} )  (1-\ep/48)}\\
 & \ge  &   
 C^{-2 } \mu(I_{ {w_n}_{|\lfloor  j_n \eta_{j_n} \rfloor}}) \mu(I_{\sigma^{\lfloor j_n\eta_{j_n} \rfloor}w_n}) 2^{(j_n-\lfloor j_n \eta_{j_n} \rfloor   ){H_\ell}( \eta_{j_n} )  \ep/48 +o(j_n)} \\
   & \ge  & C' \mu(I_{w_n})  2^{(j_n-\lfloor j_n \eta_{j_n} \rfloor   ){H_\ell}( \eta_{j_n} )  \ep/96}.
\end{eqnarray*}

When $n$ becomes large, we conclude that  $  \mu(I_{  w'_n})>\mu(I_{w_n})$, hence a contradiction.
 \end{proof}
 
Now we prove that $x$ is well-approximated by some cubes $I_w$ whose $\eta'$-tail and $\eta'$-root have a controlled local behavior with respect to $\mu$.

By construction and the last lemma,  since $\eta_{j_n}$ tends to $\eta'$ when $n \to +\infty$, we have 
$$2^{-j_n} \leq \max(2^{-j_n},d(x,I_{w_n}))\le  2\cdot 2^{- J_n }\le 2\cdot 2^{- \lfloor j_n \eta_{j_n} \rfloor/(1+\ep/6)}\le 2^{-  j_n\eta' (1-\ep/3)}.$$
Consequently, for $n$ large enough  we can write   $\max(2^{-j_n},d(x,I_{w_n}))=  2^{-j_n \eta'(\delta_{j_n}-\ep/2)}$ for some $\delta_{j_n}\in [1,1/\eta']$.  

Up to extraction of a subsequence, the sequences $(\delta_{j_n})_{n\geq 1}$ and $(\alpha_{\lfloor j_n \eta_{j_n} \rf })_{n\ge 1}$  can be assumed to converge to some $\delta\in [1, 1/\eta']$ and  $\alpha\in [{H_{\min}},{H_{\max}}]$. In addition, by construction we have $\M_{\lfloor j_n  \eta' \delta_{j_n} \rfloor}(x)=\M_{J_n}(x)=\mu(I_{w_n} )$ and $J_n \le \lfloor j_n \eta '\delta_{j_n}+ 1\rfloor  $. Thus, recalling \eqref{r1}, one finally gets 
$$
\frac{\alpha_{\lfloor{j_n} \eta_{j_n} \rf } \lfloor {j_n}\eta_{j_n} \rfloor+ ({j_n}-\lfloor {j_n}\eta_{j_n} \rfloor) H_{ \ell}(\eta_{j_n}) +o({j_n}) }{\eta _{j_n} {j_n}  \delta_{j_n}}  \le  \frac{-\log_2  \M_{J_n}(x)}{J_n}\le H+\ep/2,
$$
which can also be written
$$
\alpha_{\lfloor{j_n}  \eta_{j_n} \rf }\frac{ \lfloor {j_n}\eta_{j_n} \rfloor}{\eta _{j_n} {j_n}  \delta_{j_n}}  +  \frac{({j_n}-\lfloor {j_n}\eta_{j_n} \rfloor)  }{\eta _{j_n} {j_n}  \delta_{j_n}} (H_{ \ell}(\eta_{j_n}) +o(1))\le   H+\ep/2.
$$
Finally,   fix an integer $k\geq 1$. Choosing $\ep$ such that  $\ep<1/(4k)$  and  a triplet $(\widetilde\alpha, \widetilde\delta,\widetilde\eta')$ in the dense set $\mathcal{P}_\ell$  so that  for $n$ large  $\|(\alpha_{j_n}, \delta_{j_n},\eta_{j_n}) - (\widetilde\alpha, \widetilde\delta,\widetilde\eta') \|_\infty \leq \ep/4$,  we see that  when  $n$ becomes large, we ensured that:
\begin{itemize}
\item
$\widetilde\delta \in [1,1/\eta']$,  
\item
$\frac{\widetilde\alpha+ {\Ht_\ell}(\weta')}{\widetilde\delta} =\frac{\widetilde\alpha+ (1/\widetilde\eta'-1) {H_\ell}(\weta')}{\widetilde\delta} \le H+\ep$,

\item
$\max(2^{-j_n},d(x,I_{w_n}))=  2^{-j_n\widetilde \eta'(\widetilde\delta -\ep)}$,
\item

\item
$\left| \dfrac{-\log_2 \mu(I_{(w_n)_{|\lf  {j_n }\widetilde \eta'\rf} })}{\lfloor {j_n} \widetilde \eta'  \rfloor} - \widetilde \alpha   \right |\le 1/k$, hence $w_n \in \Rmu(j, \widetilde \eta',\widetilde \alpha\pm 1/k)$,

\item
$ \left | \dfrac{-\log_2 \mu \left (I_{\sigma^{\lfloor  {j_n}\widetilde \eta' \rfloor} w_n} \right)}{j_n -\lfloor j_n\widetilde \eta' \rfloor} -{{H_\ell}}(\widetilde \eta') \right|\le 1/k $, hence $w_n \in \Tl(j, \widetilde \eta', 1/k)$.
  \end{itemize}
Recalling Definition \ref{defbigsets} and equation \eqref{defFmul}, this precisely  shows that  $x$ belongs to the set $ {F}_{\mu,\ell}(j_n,\widetilde\alpha,\widetilde\eta',\widetilde\delta-\ep,k) $ with $\widetilde \delta \in[1,1/\eta']$ and $\frac{\widetilde\alpha+ {\Ht_\ell}(\weta')}{\widetilde\delta}  \le H+\ep$, as claimed in the initial statement.  
 
 \mk
 
 $\bullet$ Assume now  that $(\eta_{j_n})_{n\geq 1}$ converges to $0$ (this implies that $\etal=0$). Fix a small $\eta'>0$, and rewrite \eqref{r1} and \eqref{r2} as
\begin{align}
\nonumber-\log_2 \mu(I_{w_n} )&=\alpha_{j_n} \lfloor j_n \eta'\rfloor+ (j_n-\lfloor j_n \eta'\rfloor) {{H_\ell}}(\eta') +o(j_n)+j_n \xi_1(j_n,\eta'),\\
\nonumber -\log_2 \mu(I_{\sigma^{\lfloor j_n \eta'\rfloor}w_n })&= (j_n -\lfloor j_n \eta'\rfloor) {{H_\ell}}(\eta') +o(j_n)+j_n \xi_2(j_n ,\eta'),
\end{align}
where both $\xi_1(j_n,\eta')$  and $\xi_2(j_n ,\eta')$ are $O\left(\left|{{H_\ell}}(\eta')-{{H_\ell}}(\eta_{j_n})\right|+\left|\eta'\right|\right)$. If one chooses $\eta'$ so small that $\limsup_{n\to +\infty} |{{H_\ell}}(\eta')-{{H_\ell}}(\eta_{j_n} )|+\eta'\le |{{H_\ell}}(\eta')-{{H_\ell}}(0)|+\eta'\le \ep^2$, we are back to the previous situation. 
\end{proof}

Using Proposition~\ref{propmajs}, we are now able to find an upper bound for the Hausdorff dimension of any $\underline{E}_{\M_\mu}(H)$. 
\begin{proposition}\label{ubip}
For $H>0$ and $\ep\ge 0$ let 
\begin{equation}\label{eq110s}
D(H,\ep)=\sup\left \{ \frac{{D_\mu}(\alpha)}{\delta}:
\begin{cases} 
\alpha\in [{H_{\min}},{H_{\max}}],\\
\exists\, i\in\{\ell,r\},\   \ \eta'\in [\eta_i,\eta]\setminus\{0\},\\
1\le \delta\le 1/\eta',\ \  \dfrac{\alpha+ \Ht_i(\eta')}{\delta}\le H+\ep
\end{cases}\right\},
\end{equation}
with the convention $\sup\emptyset=-\infty$.
For all $H>0$, we have 
\begin{equation}
\label{eq11s}
\dim \underline E_{\M_\mu}(H)\le D(H):=\lim_{\ep\to 0^+} D(H,\ep).
\end{equation}
\end{proposition}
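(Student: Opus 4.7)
By Proposition~\ref{propmajs}, $\underline E_{\M_\mu}(H)\subset F_{\mu,\ell}(H)\cup F_{\mu,r}(H)$, so by countable stability of Hausdorff dimension I only need to bound $\dim F_{\mu,i}(H)$ for $i\in\{\ell,r\}$ by $D(H)$. I focus on the case $i=\ell$; the case $i=r$ is identical with the obvious substitutions. Because $\mathcal{P}_\ell$ is countable and Hausdorff dimension is stable by countable unions, it suffices to prove that for any fixed $(\alpha,\eta',\delta)\in\mathcal{P}_\ell$ with $\delta\in[1,1/\eta']$ and $\frac{\alpha+\Ht_\ell(\eta')}{\delta}\le H+\ep$, and for any fixed $k\ge 1$,
$$
\dim \limsup_{j\to\infty} F_{\mu,\ell}(j,\alpha,\eta',\delta-\ep,k)\ \le\ \frac{{D_\mu}(\alpha)}{\delta-\ep}.
$$
Indeed, after passing to the sup over the admissible parameters and letting $\ep\to 0^+$, the bound $D(H)$ will follow directly from the definition \eqref{eq110s}, and the convention $\sup\emptyset=-\infty$ takes care of the case $F_{\mu,\ell}(H)=\emptyset$.

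The idea is to use the natural covering of $F_{\mu,\ell}(j,\alpha,\eta',\delta-\ep,k)$ inherent in its definition: each $x$ in this set lies within distance $2^{-\eta'j(\delta-\ep)}$ of some $I_w$ with $w\in\mathcal{S}_j(\eta)\cap \Rmu(j,\eta',\alpha\pm 1/k)\cap\Tl(j,\eta',1/k)$. Enlarging slightly, $F_{\mu,\ell}(j,\alpha,\eta',\delta-\ep,k)$ is covered by the collection of closed $\ell^\infty$-balls of radius $3\cdot 2^{-\eta'j(\delta-\ep)}$ centered at the points $x_w$ associated with such $w$. I then need to estimate the cardinality of the set
$$
\mathcal A_j\ :=\ \mathcal{S}_j(\eta)\cap \Rmu(j,\eta',\alpha\pm 1/k)\cap\Tl(j,\eta',1/k).
$$
Each $w\in\mathcal A_j$ decomposes as $w=W\cdot v$ with $W=w_{|\lfloor\eta'j\rfloor}$ the $\eta'$-root and $v$ the $\eta'$-tail. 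The $\eta'$-root $W$ satisfies $W\in\mathcal{E}_\mu(\lfloor\eta'j\rfloor,\alpha\pm 1/k)$, so item~(5) of Proposition~\ref{fm} (large deviations) gives, for any $\wep>0$, and $j$ large enough,
$$
\#\{W\in\Sigma_{\lfloor\eta'j\rfloor}:W\in\mathcal{E}_\mu(\lfloor\eta'j\rfloor,\alpha\pm 1/k)\}\ \le\ 2^{\eta'j({D_\mu}(\alpha)+\wep)}.
$$
For each such $W$, Proposition~\ref{p4} asserts that for $k$ large enough (so that $1/k<\beta$, where $\beta$ is chosen from Proposition~\ref{p4} for the prescribed small $\ep$), the number of surviving words $w\in\mathcal{S}_j(\eta,W)$ whose $\eta'$-tail lies in $\Tl(j,\eta',1/k)$ is at most $2^{\eta'j\wep}$. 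Combining both bounds,
$$
\#\mathcal A_j\ \le\ 2^{\eta'j({D_\mu}(\alpha)+2\wep)}
$$
for $j$ large enough, almost surely.

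Now fix $s>\dfrac{{D_\mu}(\alpha)+2\wep}{\delta-\ep}$. For each $j_0\ge 1$, the collection $\bigcup_{j\ge j_0}\{B(x_w,3\cdot 2^{-\eta'j(\delta-\ep)}):w\in\mathcal A_j\}$ covers $\limsup_{j\to\infty}F_{\mu,\ell}(j,\alpha,\eta',\delta-\ep,k)$ by sets of diameter $\le 6\cdot 2^{-\eta'j_0(\delta-\ep)}$, and
$$
\sum_{j\ge j_0}\sum_{w\in\mathcal A_j}\big(6\cdot 2^{-\eta'j(\delta-\ep)}\big)^s\ \le\ 6^s\sum_{j\ge j_0} 2^{\eta'j[{D_\mu}(\alpha)+2\wep-s(\delta-\ep)]},
$$
which is a convergent tail. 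Letting $j_0\to\infty$ gives $\mathcal H^s\bigl(\limsup_j F_{\mu,\ell}(j,\alpha,\eta',\delta-\ep,k)\bigr)=0$, so the Hausdorff dimension is at most $s$. Letting $\wep\to 0$ and then $s\to\tfrac{{D_\mu}(\alpha)}{\delta-\ep}$ from above yields the claimed intermediate bound. Taking the union over the countable set $\mathcal{P}_\ell$ of admissible triples, then intersecting over $\ep\to 0$ and $k\to\infty$, and finally doing the same for $F_{\mu,r}$, concludes the proof.

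The main technical subtlety lies in step~4: getting a uniform-in-$W$ tail bound via Proposition~\ref{p4} so that the counting factorizes as (number of roots)~$\times$~(number of tails per root), rather than a weaker direct bound. This is why the cardinality of $\mathcal A_j$ stays close to $2^{\eta'j{D_\mu}(\alpha)}$ and gives the correct exponent ${D_\mu}(\alpha)/\delta$ after scaling by the approximation radius $2^{-\eta'j\delta}$.
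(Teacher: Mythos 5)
Your proof follows the paper's argument essentially step for step: the reduction via Proposition~\ref{propmajs}, the factorization of the count of admissible $w$ into (number of $\eta'$-roots, controlled by the large deviations estimate of Proposition~\ref{fm}(5)) times (number of tails per root, controlled by the weak redundancy of Proposition~\ref{p4}), followed by the standard Hausdorff-measure series estimate over the covering by balls of radius comparable to $2^{-\eta'j(\delta-\ep)}$. The only cosmetic differences are that you treat each triple of $\mathcal{P}_\ell$ separately and invoke countable stability where the paper uses a compactness argument in $\alpha$ to make $k$ uniform, and that your intermediate exponent is $\frac{{D_\mu}(\alpha)}{\delta-\ep}$ rather than the paper's $\frac{{D_\mu}(\alpha)}{\delta}+2\ep$; since $\delta\ge 1$ both collapse to $D(H)$ as $\ep\to 0^+$.
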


\begin{proof}
Recalling Proposition~\ref{propmajs}, it is enough to find an upper bound for the Hausdorff dimensions of $F_{\mu,\ell}(H) $ and $ F_{\mu,r}(H)$. But for $F_{\mu,\ell}(H) $ (the same hold true for $F_{\mu,r}(H) $),  one needs only to focus on  $\ \limsup_{j\to\infty} {F}_{\mu,\ell}(j,\alpha,\eta',\delta-\ep,k)$, for the suitable values of the parameters $\alpha,\delta,\eta',\ep$ described in \eqref{defFmul}. We are going to prove  that   the weak redundancy (described in Proposition~\ref{p4}) implies that  for $\ep\in (0,1)$, for $k$ large enough, uniformly in $(\alpha,\eta',\delta)$ (under the constraint $\delta \in [1,1/\eta']$),   one has
\begin{equation}\label{dimFi}
\dim  \ \limsup_{j\to\infty} {F}_{\mu,\ell}(j,\alpha,\eta',\delta -\ep,k)\le \dfrac{{D_\mu}(\alpha)}{\delta}+ \theta(\ep),
\end{equation}
where $\lim_{\ep\to 0}\theta(\ep)=0$. Then, the result follows by taking the supremum over $(\alpha,\eta',\delta)$ in $\mathcal{P}_\ell$ and letting $\ep$ tend to~$0$. 

\mk

We prove \eqref{dimFi}. Fix $\ep>0$.

For every  $\alpha\in[{H_{\min}},{H_{\max}}]$, using Proposition \ref{fm}  and the large deviations properties \eqref{eq1sss} of $\mu$, there exists $\beta_\alpha>0$ and $j_\alpha\in\mathbb N$ such that for $j\ge j_\alpha$, 
\begin{equation}
\label{eq9s}
\#\mathcal{E}_\mu(j,\beta_\alpha,\alpha) \leq   2^{j({D_\mu}(\alpha)+\ep/2)}.
\end{equation}
 Moreover, ${D_\mu}$ being continuous over $[{H_{\min}},{H_{\max}}]$, one can choose $\beta_\alpha$ so small  that ${D_\mu}(\alpha)\le {D_\mu}(\alpha')+\ep/2$ for all $\alpha'\in [\alpha -\beta_{\alpha},\alpha+\beta_{\alpha}]$. Using the compactness of $[{H_{\min}},{H_{\max}}]$, there exists finitely many real numbers  $\alpha_1,\ldots,\alpha_p \in[{H_{\min}},{H_{\max}}]$,  such that for the associated numbers $\beta_1, ...,\beta_p>0$ and integers $j_1,...,j_p$, one has  $[{H_{\min}},{H_{\max}}]\subset \bigcup_{i=1}^p[\alpha_i-\beta_i,\alpha_i+\beta_i]$ and \eqref{eq9s} holds  for every $j\geq j_i$.

\sk

Pick  an integer $k\ge 1$ such that $1/k\le \min\{\beta_i:1\le i\le p\}$. For all $j\ge J^1_\ep=\max\{j_{i}:1\le i\le p\}$, for every $\alpha\in [{H_{\min}},{H_{\max}}]$,  there exists $1\le i\le p$ such that $\alpha\in [\alpha_i-\beta_i,\alpha_i+\beta_i]$, hence because of \eqref{eq9s}, one has  
\begin{equation}
\label{eq10s}
\#\mathcal{E}_\mu(j,1/k,\alpha) \leq   2^{j({D_\mu}(\alpha_i)+\ep/2)} \le 2^{j({D_\mu}(\alpha)+\ep)} .
\end{equation}

\sk

Consider  one triplet $(\alpha,\eta',\delta)\in \mathcal{P}_\ell$ with $\delta\in [1,1/\eta']$.  We choose the integer $k$ such that  $1/k \leq \beta$. 
Observe that with these   parameters, 
$$ {F}_{\mu,\ell}(j,\alpha,\eta',\delta-\ep,k) \subset   \bigcup_{W \in  \Sigma_{\lf j \eta'\rf}} T_{\mu, \ell} (j, \eta',\beta,W).$$
We use  Proposition~\ref{p4}  to find an upper bound for the cardinality $T_{\mu, \ell} (j, \eta',\beta,W)$ . Applying part (1) of this proposition, one can find  $\beta>0$ and  an integer $  J^2_\ep$ such that for all $j\geq   J^2_\ep$,   equation \eqref{eq4s} holds true.  

For each $j\ge J_\ep :=\max(J^1_\ep/\eta', J^2_\ep)$,  every  point $x$   belonging to $ {F}_{\mu,\ell}(j,\alpha,\eta',\delta- \ep,k) $ is close at distance at most     $2^{-{\lfloor \eta'j \rfloor}(\delta-\ep)}$  from some cube  $I_w$  such that $w\in \mathcal{S}_j(\eta) \cap \Rmu(j,\eta',\alpha\pm1/k) \cap \Tl(j,\eta',1/k)$, with $\delta\in [1,1/\eta']$.

One combines now two facts:

\begin{itemize}

\sk \item By \eqref{eq10s}, the cardinality of  those words $W \in \Sigma_{ \lfloor \eta'j\rfloor}$ satisfying 
 $$ \left |\dfrac{ -\log_2 \mu(I_{W})}{ \lf j \eta'\rf} -\alpha\right| \leq 1/k$$  is bounded from above by 
$$\#\mathcal{E}_\mu(\lfloor \eta'j  \rfloor,\alpha,1/k) \leq 2^{ \lf j \eta'\rf ({D_\mu}(\alpha)+ \ep )}.$$ 
This applies  to the words  $W=w_{\lf\eta'j\rf}$ when $w\in \Rmu(j,\eta',\alpha\pm1/k)$

\sk\item
 If $W \in \Sigma_{ \lfloor \eta'j\rfloor}$ is fixed, we know by  \eqref{eq4s} that  the cardinality  the words  $w \in \Tl(j,\eta', 1/k)$    is bounded from above by $2^{\eta'j  \ep}$.  
\end{itemize}

The first item allows us to control the number of possible  $\eta'$-roots $w_{\lf\eta'j\rf}$ of $w$, and the second item the number of possible $\eta'$-tails $w_{\sigma^{j-\lf \eta'j\rf}w}$.

We deduce  that for all   $J\ge   J_\ep $,  the set ${F}_{\mu,\ell}(j,\alpha,\eta',\delta-\ep,k)$ is covered by at most $ 2^{ \lf j \eta'\rf ({D_\mu}(\alpha)+ \ep) }$ times $2^{\eta'j  \ep}$ cubes of diameter  $2^{{\lfloor \eta'j \rfloor}(\delta-\ep)}$. 

Fix a real number $s> \dfrac{{D_\mu}(\alpha)+2\ep}{\delta}$. The $s$-Hausdorff measure $\mathcal{H}^s$ of the limsup set $F:=\limsup_{j\to+ \infty} {F}_{\mu,\ell}(j,\alpha,\eta',\delta-\ep,k) $   satisfies for every integer $J\ge J_\ep $
\begin{eqnarray*}
\mathcal{H }^s(F)  & \leq  & \sum_{j\geq J} 2^{ \lf j \eta'\rf ({D_\mu}(\alpha)+ \ep)  }2^{\eta'j  \ep}2^{- \delta{\lfloor \eta'j \rfloor}(\delta-\ep)}<+\infty.
\end{eqnarray*}
Hence $\dim F\leq  \dfrac{{D_\mu}(\alpha)+2\ep}{\delta}$. Since $\delta\ge 1 $, $\dim F\leq  \dfrac{{D_\mu}(\alpha)}{\delta}+2\ep$, and \eqref{eq9s} is proved.
\end{proof}

\subsection{Simplification of the upper bound formula \eqref{eq11s} of Section~\ref{upb1}}\label{upb2}

Summarizing the previous information, we already established that, with probability 1,   the domain of the singularity spectrum $D_{\M_\mu}$  of $\M_\mu$ is included in $[{H_\ell}(\etal), H_{\max}+{\Ht_\ell}(\weta)]$ (Lemma \ref{lempointgauche} and Section \ref{upb3}), and for all $H\in [{H_\ell}(\etal), H_{\max}+{\Ht_\ell}(\weta)]$ we   found  the upper bound $D(H)$ for  $D_{\M_\mu}(H)$ in Proposition~\ref{ubip}. 

Formula \eqref{eq11s} defining $D(H)$ is useless for the moment for at least two reasons: it is not tractable as it it is written, and we ignore yet whether this upper bound is optimal.
In this section we investigate the optimization problem raised by the expression of $D(H)$ and prove the following proposition, which 
  yields the  upper bound announced in Theorem~\ref{thm-0} for the increasing part of  $D_{\M_\mu}$.

\begin{proposition}
\label{prop46sss}
For every $H \in [{H_\ell}(\etal),{H_s}+{\Ht_\ell}(\widetilde\eta)]$, one has 
$$D(H) =
\begin{cases}\sk
\ {{D_\mu}}(H)-d(1-\eta) &\text{if }  \ \ \ \ \ \  \ \  \,  \ \ {{H_\ell}} (\eta_\ell) \le H \le {{H_\ell}}(\widetilde\eta),\\ \sk
 \  H \cdot \dfrac{  \widetilde\eta \,  {{D_\mu}}( {{H_\ell}}(\widetilde\eta))}{{{H_\ell}   } (\widetilde\eta)} &\text{if }  \ \ \ \ \ \ \ \ \ \ \ {{H_\ell}}(\widetilde\eta) \leq H \leq  {{H_\ell}}(\widetilde\eta)+{\Ht_\ell}(\widetilde\eta),\\\sk
 \ {{D_\mu}}(H-{\Ht_\ell}(\widetilde\eta))&\text{if } \  {{H_\ell}}(\widetilde\eta)+{\Ht_\ell}(\widetilde\eta) \leq H \leq H_s+{\Ht_\ell}(\widetilde\eta).
\end{cases}$$
 In particular $H\longmapsto D(H)$ is strictly increasing on the interval $ [{H_\ell}(\etal),{H_s}+{\Ht_\ell}(\widetilde\eta))$, $D({H_\ell}(\etal))={D_\mu}({H_\ell}(\etal))-d(1-\eta)$ and   $D({H_s}+{\Ht_\ell}(\widetilde\eta)) = d$.
\end{proposition}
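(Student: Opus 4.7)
The plan is to solve the optimization problem defining $D(H,\varepsilon)$ explicitly and pass to the limit $\varepsilon\to 0^+$. By compactness of the parameter set and continuity of $D_\mu$ together with $\eta'\mapsto(H_i(\eta'),\Ht_i(\eta'))$, one checks that $D(H)=D(H,0)$, so we work directly with the constraint $(\alpha+\Ht_i(\eta'))/\delta\le H$. Since $D_\mu\ge 0$, $\delta\mapsto D_\mu(\alpha)/\delta$ is non-increasing, so the optimal $\delta$ equals $\max\bigl(1,(\alpha+\Ht_i(\eta'))/H\bigr)$ subject to $\delta\le 1/\eta'$. This splits the analysis into Case~A ($\delta=1$, requiring $\alpha+\Ht_i(\eta')\le H$) and Case~B ($\delta=(\alpha+\Ht_i(\eta'))/H$, with the additional restriction $\eta'(\alpha+\Ht_i(\eta'))\le H$).

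For the lower bound, I exhibit in each region an admissible quadruple with $i=\ell$ realizing the claimed formula. In Region~3, take $\alpha=H-\Ht_\ell(\widetilde\eta)\in[H_\ell(\widetilde\eta),H_s]$, $\eta'=\widetilde\eta$, $\delta=1$ (Case~A). In Region~2, take $\alpha=H_\ell(\widetilde\eta)$, $\eta'=\widetilde\eta$, $\delta=(H_\ell(\widetilde\eta)+\Ht_\ell(\widetilde\eta))/H$; this satisfies $1\le\delta\le 1/\widetilde\eta$ precisely for $H\in[H_\ell(\widetilde\eta),H_\ell(\widetilde\eta)+\Ht_\ell(\widetilde\eta)]$, and yields $\widetilde\eta D_\mu(H_\ell(\widetilde\eta))H/H_\ell(\widetilde\eta)$. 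In Region~1, take $\alpha=H$, $\eta'\in[\eta_\ell,\eta]$ determined by $H_\ell(\eta')=H$ (well-defined since $H\in[H_\ell(\eta_\ell),H_\ell(\widetilde\eta)]\subset[H_\ell(\eta_\ell),H_s]$), and $\delta=1/\eta'$ (Case~B with the constraint $\delta\le 1/\eta'$ saturated), which gives $\eta' D_\mu(H)=D_\mu(H)-d(1-\eta)$ via the identity $1-\eta'=d(1-\eta)/D_\mu(H)$.

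For the upper bound, one first checks that the branch $i=r$ is dominated by $i=\ell$: since $H_r(\eta')\ge H_s\ge H_\ell(\eta')$ wherever both are defined and $\widetilde\eta$ minimizes $\Ht_\ell$, one deduces $\Ht_r(\eta')\ge\Ht_\ell(\widetilde\eta)$ for every admissible $\eta'$, so a direct comparison at fixed $(\alpha,\delta)$ bounds every $i=r$ contribution by an $i=\ell$ one. For $i=\ell$, Case~A gives $D_\mu(\alpha)\le D_\mu(H-\Ht_\ell(\widetilde\eta))$ by $\Ht_\ell(\eta')\ge\Ht_\ell(\widetilde\eta)$ and monotonicity of $D_\mu$ on $[H_{\min},H_s]$, matching Region~3. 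Case~B splits into two subcases: when $\delta<1/\eta'$ is strict, $D_\mu(\alpha)H/(\alpha+\Ht_\ell(\eta'))$ is bounded, using $\Ht_\ell(\eta')\ge\Ht_\ell(\widetilde\eta)$ and Lemma~\ref{lem2s}, by $H\cdot\widetilde\eta D_\mu(H_\ell(\widetilde\eta))/H_\ell(\widetilde\eta)$, matching Region~2; when $\delta=1/\eta'$ is saturated, the ratio equals $\eta'D_\mu(\alpha)$, and the concavity inequality $D_\mu(\eta'\alpha+(1-\eta')H_\ell(\eta'))\ge\eta'D_\mu(\alpha)+(1-\eta')D_\mu(H_\ell(\eta'))$ combined with $(1-\eta')D_\mu(H_\ell(\eta'))=d(1-\eta)$, the constraint $\eta'\alpha+(1-\eta')H_\ell(\eta')\le H$ and monotonicity of $D_\mu$ on $[H_{\min},H_s]$ (noting $\eta'\alpha+(1-\eta')H_\ell(\eta')\in[H_{\min},H_s]$ throughout Region~1) yields $\eta'D_\mu(\alpha)\le D_\mu(H)-d(1-\eta)$, matching Region~1.

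The main obstacle is the Case~B optimization and identifying which subcase dominates in which region: the saturated subcase $\delta=1/\eta'$ dominates in Region~1 (because for $H<H_\ell(\widetilde\eta)$ the Region~2 extremizer $(\alpha,\eta')=(H_\ell(\widetilde\eta),\widetilde\eta)$ violates $\delta\le 1/\widetilde\eta$, leaving only configurations with $\delta=1/\eta'$ as effective), while the strict subcase dominates in Region~2. Once the three regions are established, gluing them produces the piecewise formula, and strict monotonicity on $[H_\ell(\eta_\ell),H_s+\Ht_\ell(\widetilde\eta))$ together with the endpoint values $D(H_\ell(\eta_\ell))=D_\mu(H_\ell(\eta_\ell))-d(1-\eta)$ and $D(H_s+\Ht_\ell(\widetilde\eta))=d$ follow by direct inspection of the three formulas and by continuity at their junctions.
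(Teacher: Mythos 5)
Your overall strategy is the same as the paper's (reduce to $i=\ell$, show $D(H)=D(H,0)$ by compactness, exhibit the three optimal triplets for the lower bound, then optimize after saturating the constraint), and your lower bounds coincide with the paper's Lemma~\ref{lemcritical}. Your bound for the saturated subcase is actually cleaner than the paper's: since $\delta=1/\eta'$ forces $\eta'(\alpha+\Ht_\ell(\eta'))=H$, i.e.\ $\eta'\alpha+(1-\eta')H_\ell(\eta')=H$ exactly, concavity of $D_\mu$ plus $(1-\eta')D_\mu(H_\ell(\eta'))=d(1-\eta)$ gives $\eta'D_\mu(\alpha)\le D_\mu(H)-d(1-\eta)$ directly, bypassing the paper's Lagrange-multiplier computation.

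However, the upper bound has a genuine gap. Your three case bounds are, in every region, $f_3(H)=D_\mu(H-\Ht_\ell(\weta))$ for Case~A, $f_2(H)=q_\weta H$ for Case~B-strict, and $f_1(H)=D_\mu(H)-d(1-\eta)$ for Case~B-saturated. But $f_2$ dominates the other two \emph{everywhere}: $f_1\le f_2$ because $f_2$ is the tangent line to $H\mapsto D_\mu(H)-d(1-\eta)$ at $H_\ell(\weta)$, and $f_3\le f_2$ because Lemma~\ref{lem2s} gives $D_\mu(\alpha_H)/(\alpha_H+\Ht_\ell(\weta))\le D_\mu(H_\ell(\weta))/(H_\ell(\weta)+\Ht_\ell(\weta))=q_\weta$ with $\alpha_H+\Ht_\ell(\weta)=H$. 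So taking the maximum of your three bounds yields only $D(H)\le q_\weta H$, which is strictly larger than the claimed value throughout the interiors of Regions~1 and~3. Strict Case~B configurations genuinely exist in Region~3 (take $\eta'=\weta$, $\alpha$ slightly above $\alpha_H$, $\delta=(\alpha+\Ht_\ell(\weta))/H$ slightly above $1$ and below $1/\weta$), so they cannot be dismissed; to bound them by $f_3(H)$ you must use the \emph{monotonicity} of $\alpha\mapsto D_\mu(\alpha)/(\alpha+\Ht_\ell(\weta))$ on $[H_\ell(\weta),H_{\max}]$ (not just its maximum) after splitting on $\alpha\gtrless\alpha_H$, as the paper does: for $\alpha\ge\alpha_H\ge H_\ell(\weta)$ the ratio is at most its value at $\alpha_H$, namely $f_3(H)/H$, while for $\alpha\le\alpha_H$ one has $D_\mu(\alpha)/\delta\le D_\mu(\alpha)\le D_\mu(\alpha_H)$. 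Similarly, in Region~1 your parenthetical — that the Region-2 extremizer violates $\delta\le1/\weta$, ``leaving only configurations with $\delta=1/\eta'$ as effective'' — is not a proof: infeasibility of the unconstrained maximizer does not bound the constrained supremum over the strict subcase. The paper's Lemma~\ref{lem3s} closes exactly this by a perturbation argument showing the optimum cannot sit at an interior $1<\delta<1/\eta'$ (perturb $\eta'$ toward $\weta$ if $\eta'\ne\weta$; if $\eta'=\weta$, use that $\alpha<H_\ell(\weta)$ lies in the increasing part of the ratio and perturb $\alpha,\delta$ upward). You identify this as ``the main obstacle'' but do not actually overcome it.
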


 \begin{remark}
 For every $H\in ({H_\ell}(\etal), H_s+{\Ht_\ell}(\weta)]$, we are going to show that $D(H)=D(H,0)$ and there exists a unique triplet $(\alpha_H,\eta_H, \delta_H)$ which realizes the supremum in $D(H,0)$. Lemma \ref{lemcritical} is particularly important, since it describes the precise scenario leading to the optimal upper and lower bounds for the dimensions of the level sets.  In particular,  Lemma \ref{lemcritical}  distinguishes this triplet, which yields the simplified formula and is key for the next  Section \ref{sec_lower}, where we prove that the upper bound $D(H)$ is a lower bound for  $D_{\M_\mu}(H)$ as well, when $H\in [{H_\ell}(\etal), H_s+{\Ht_\ell}(\weta)]$.
 
The rest of the section is less essential and more technical, since it involves only computations based on Legendre transforms.
\end{remark}

\begin{figure}
        \begin{tikzpicture}[xscale=2.2,yscale=2.9]{\small
   \draw [->] (0,-0.2) -- (0,1.2) [radius=0.006] node [above] {$D_{\M_\mu}(h)$ };
\draw [->] (-0.2,0) -- (3.2,0) node [right] {$h$};
\draw [thick, domain=0.6:1.9,color=black]  plot ({-(exp(\x*ln(1/5))*ln(0.2)+exp(\x*ln(0.8))*ln(0.8))/(ln(2)*(exp(\x*ln(1/5))+exp(\x*ln(0.8)) ) )} , {(-0.33)-\x*( exp(\x*ln(1/5))*ln(0.2)+exp(\x*ln(0.8))*ln(0.8))/(ln(2)*(exp(\x*ln(1/5))+exp(\x*ln(0.8))))+ ln((exp(\x*ln(1/5))+exp(\x*ln(0.8))))/ln(2)});
\draw [thick, domain=0:0.6, color=cyan]  plot ({(0.47)-(exp(\x*ln(1/5))*ln(0.2)+exp(\x*ln(0.8))*ln(0.8))/(ln(2)*(exp(\x*ln(1/5))+exp(\x*ln(0.8)) ) )} , {-\x*( exp(\x*ln(1/5))*ln(0.2)+exp(\x*ln(0.8))*ln(0.8))/(ln(2)*(exp(\x*ln(1/5))+exp(\x*ln(0.8))))+ ln((exp(\x*ln(1/5))+exp(\x*ln(0.8))))/ln(2)});
  \draw[dashed] (0,1) -- (1.82,1);
\draw  [fill]  (0.45,0) circle [radius=0.03] node [below] {${{H_\ell}}(0)$};
 \draw[fill] (0.93,0.56) circle [radius=0.03] [dashed] (0.93,0.56) -- (0.93,-0.0) [fill] circle [radius=0.03]  [dashed] (0.93,0.0) -- (0.93,-0.1)  node [below]  {${{H_\ell}}(\widetilde \eta)$};
 \draw[fill] (1.41,0.89) circle [radius=0.03] [dashed] (1.41,0.89) -- (1.41,-0.0) [fill] circle [radius=0.03]  [dashed] (1.41,0.0) -- (1.41,-0.3)  node [below]  {${{H_\ell}}(\widetilde \eta)+{\Ht_\ell}(\widetilde \eta)$};
 \draw[fill] (1.82,1) circle [radius=0.03] [dashed] (1.82,1) -- (1.82,-0.0) [fill] circle [radius=0.03]  [dashed] (1.82,0.0) -- (1.82,-0.0)  node [below]  { \ \ \ \ \ \ \ \ \  ${H_s}+{\Ht_\ell}(\widetilde \eta)$};
 \draw [thick,color=blue]    (0.93,0.56) -- (1.41,0.89); 
\draw [dotted,color=blue]   (0,0) --  (0.93,0.57) ;
 \draw [fill] (-0.1,-0.10)   node [left] {$0$}; 
 \draw [thick, fill] (1.82,1) -- (2.82,1); 
 \draw [fill] (0,1) circle [radius=0.03] node [left] {$d \ $}; 
  \draw [<-] (0.5,0.25) -- (-0.7,0.25) node [left] {${D_\mu}(H)-d(1-\eta)$};
 \draw [<-] (1.1,0.75) -- (-0.8,0.75) node [left] {$ H \cdot \dfrac{  \widetilde\eta \,  {{D_\mu}}( {{H_\ell}}(\widetilde\eta))}{{{H_\ell}  } (\widetilde\eta)} $};
  \draw [<-] (1.6,0.9) -- (3.3,0.5) node [right] {${{D_\mu}}(H-{\Ht_\ell}(\widetilde\eta))$};
   } 
\end{tikzpicture} 
\caption{  The mapping $H\mapsto D(H)$}
\end{figure}

For the proof of Proposition \ref{prop46sss}, we will use the following facts:
\begin{enumerate}
\mk\item[{\bf Fact 1:}] By Lemma \ref{lem2s}, 
the mapping $\alpha\mapsto \frac{{D_\mu}(\alpha)}{\alpha+{\Ht_\ell}(\widetilde \eta)}$  is increasing when $\alpha \leq {H_\ell}(\weta)$, decreasing when $\alpha \geq {H_\ell}(\weta)$.

\mk\item[{\bf Fact 2:}]  For all $\eta'\in [\etal,\eta]\setminus\{0\}$, one has $ \eta'=  \frac{ {{H_\ell}} (\eta')+ {\Ht_\ell}(\eta')}{{{H_\ell}} (\eta')} $.

\mk\item[{\bf Fact 3:}]  For all $\eta'\in [\etal,\eta]$, one has $ {D_\mu}({H_\ell}(\eta')) =   \dfrac{d(1-\eta)} { 1-\eta'}  $.   

\end{enumerate}

\begin{proof}[Proof of Proposition \ref{prop46sss}]
We start with some direct observations.

Observe first that $D(H,\ep)$  and $D(H)$ are non-decreasing mappings with respect to the variable $H$. Hence we only need to deal  with $H>{H_\ell}(\eta_\ell)$, since it will follow from our computations that $ D({H_\ell}(\etal)) = \lim_{H\to {H_\ell}(\etal)^+} D(H) = {D_\mu}({H_\ell}(\etal))-d(1-\eta) $.

In addition,   for all $\eta'\in [\eta_r,\eta]\setminus\{0\}$, one has ${\Ht_r}(\eta')\ge {\Ht_\ell} (\eta)\ge {\Ht_\ell}(\widetilde\eta)$. Hence it is enough to consider ${\Ht_\ell}(\eta')$ and $\eta'\in [\eta_\ell,\eta]\setminus\{0\}$ to obtain the greatest upper bound in $D(H,\ep)$. One deduces that  \eqref {eq110s} reduces to
$$
D(H,\ep)=\sup\left \{ \frac{{D_\mu}(\alpha)}{\delta}:
\begin{cases} 
  \,  \alpha\in [{H_{\min}},{H_{\max}}], \    \ \eta'\in [\eta_\ell,\eta]\setminus\{0\},\\
1\le \delta\le 1/\eta',\ \  \dfrac{\alpha+{\Ht_\ell}(\eta')}{\delta}\le H+\ep
\end{cases}\right\}.
$$
We call $\mathcal{P}_{H,\ep}$ the domain of admissible values appearing in the right hand-side above, so that
$$
D(H,\ep)=\sup\left \{ \frac{{D_\mu}(\alpha)}{\delta}: (\alpha,\eta',\delta)\in \mathcal{P}_{H,\ep} \right\}.$$

\mk

One starts by finding the expected lower bounds for $D(H)$.

\begin{lemma} 
\label{lemcritical}
\begin{enumerate}
\sk\item
If  $H\in [{{H_\ell}}(\widetilde\eta)+{\Ht_\ell}(\widetilde\eta),{H_s}+{\Ht_\ell}(\widetilde\eta))$,   the triplet 
\begin{equation}
\label{triplet1}
  \big(\alpha_H:=H-{\Ht_\ell}(\weta), \weta ,1\big)
  \end{equation}
  belongs to the domain $\mathcal{P}_{H,0}$, and
  \begin{equation}\label{DH01}
D(H) \geq \frac{{D_\mu}(\alpha_H)}{1} = {D_\mu}\big(H- {\Ht_\ell}(\widetilde\eta)\big) .
  \end{equation}

\sk\item  If  $H\in [ {{H_\ell}}(\widetilde\eta)    , {{H_\ell}}(\widetilde\eta)+{\Ht_\ell}(\widetilde\eta))$,  the triplet 
\begin{equation}
\label{triplet2}
  \left({{H_\ell}}(\widetilde\eta), \weta,\ \frac{{H_\ell}(\widetilde\eta)+{\Ht_\ell}(\widetilde \eta)}{H_\ell (\widetilde \eta)}\right)
   \end{equation} 
belongs to the domain $\mathcal{P}_{H,0}$ and
  \begin{equation}\label{DH02}
    D(H) \geq  H\frac{{D_\mu}({{H_\ell}}(\widetilde\eta) )}{{{H_\ell}}(\widetilde\eta)+{\Ht_\ell}(\widetilde \eta)}=H\cdot \dfrac{\weta \cdot {D_\mu}( {{H_\ell}}(\widetilde\eta)) }{{{H_\ell}}(\widetilde\eta)} .
  \end{equation}

\sk\item
If  $H\in ({{H_\ell}}(\etal),   {{H_\ell}}(\widetilde\eta)   )$, let       $\eta_H\in (\eta_\ell,\widetilde\eta)$ be the unique real number such that  
\begin{equation}
\label{defetaH}
H={{H_\ell}}(\eta_H).
\end{equation}
 The triplet   
\begin{equation}
\label{triplet3}
({H_\ell}(\eta_H), \eta_H, 1/\eta_H)
 \end{equation}
  belongs to $\mathcal{P}_{H,0}$ and  
  \begin{equation}\label{DH03}
   D(H) \geq \frac{ {D_\mu}({H_\ell}(\eta_H)) }{1/\eta_H}    = {D_\mu}(H) - d(1-\eta) .
 \end{equation}
 \end{enumerate}
  \end{lemma}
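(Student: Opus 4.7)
Each of the three parts asks us to exhibit a triplet $(\alpha,\eta',\delta)$ lying in the admissibility domain $\mathcal{P}_{H,0}$ -- i.e.\ satisfying $\alpha\in[H_{\min},H_{\max}]$, $\eta'\in[\eta_\ell,\eta]\setminus\{0\}$, $1\le\delta\le 1/\eta'$, and $(\alpha+\Ht_\ell(\eta'))/\delta\le H$ -- whose associated ratio $D_\mu(\alpha)/\delta$ reproduces the claimed lower bound. The strategy is uniform: (i) verify the four admissibility inequalities; (ii) compute $D_\mu(\alpha)/\delta$; (iii) simplify using Fact~2 ($\eta' = (H_\ell(\eta')+\Ht_\ell(\eta'))/H_\ell(\eta')$) and Fact~3 ($D_\mu(H_\ell(\eta')) = d(1-\eta)/(1-\eta')$) to recover the target expression.

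For Part~(1) I pick $\delta=1$, which makes the constraint $(\alpha+\Ht_\ell(\weta))/\delta\le H$ tight at $\alpha_H=H-\Ht_\ell(\weta)$. Since $H\in[H_\ell(\weta)+\Ht_\ell(\weta),H_s+\Ht_\ell(\weta))$, one has $\alpha_H\in[H_\ell(\weta),H_s)\subset[H_{\min},H_{\max}]$, and the bound $\delta=1\le 1/\weta$ is immediate from $\weta<1$. The remaining admissibility conditions are trivial, and the ratio collapses directly to $D_\mu(H-\Ht_\ell(\weta))$.

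For Part~(3), the monotonicity and continuity of $\eta'\in[\eta_\ell,\eta]\mapsto H_\ell(\eta')$, which follow from Definition~\ref{defetal} together with the strict concavity of $D_\mu$ and the fact that $D_\mu$ is strictly increasing on $[H_{\min},H_s]$, produce a unique $\eta_H\in(\eta_\ell,\weta)$ with $H_\ell(\eta_H)=H$. Choosing $\delta=1/\eta_H$ saturates the constraint by Fact~2 (which rewrites $H_\ell(\eta_H)+\Ht_\ell(\eta_H)$ as $H_\ell(\eta_H)/\eta_H$), and the ratio $D_\mu(H_\ell(\eta_H))/\delta=\eta_H\, D_\mu(H)$ becomes $d(1-\eta)\eta_H/(1-\eta_H)=D_\mu(H)-d(1-\eta)$ thanks to Fact~3.

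Part~(2) is the interpolation scenario: $\alpha=H_\ell(\weta)$ and $\eta'=\weta$ are fixed, and $\delta$ is adjusted to saturate the constraint $(\alpha+\Ht_\ell(\weta))/\delta=H$, forcing $\delta=(H_\ell(\weta)+\Ht_\ell(\weta))/H$. Via Fact~2, the admissibility $1\le\delta\le 1/\weta$ is then equivalent to $H\in[H_\ell(\weta),H_\ell(\weta)+\Ht_\ell(\weta)]$, which is precisely the range considered. The value $D_\mu(\alpha)/\delta=H\cdot D_\mu(H_\ell(\weta))/(H_\ell(\weta)+\Ht_\ell(\weta))$ then matches the announced bound after one more application of Fact~2. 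I expect no genuine obstacle; the only fiddly point, common to all three parts, is tracking which end of each interval is open or closed so that $\delta$ is admissible without strict inequalities -- a purely clerical issue.
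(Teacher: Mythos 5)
Your argument is correct and follows essentially the same route as the paper's (very terse) proof: exhibit each triplet, check the four admissibility constraints, and simplify $D_\mu(\alpha)/\delta$ via Facts~2 and~3; your computation for Part~(3) is exactly the paper's chain of identities. One remark on Part~(2): the triplet you actually use, $\big({H_\ell}(\weta),\weta,({H_\ell}(\weta)+{\Ht_\ell}(\weta))/H\big)$, is not the one displayed in \eqref{triplet2}, whose $\delta$ has denominator ${H_\ell}(\weta)$ and therefore only yields the ratio $\weta\, D_\mu({H_\ell}(\weta))$, i.e.\ the bound \eqref{DH02} at the left endpoint $H={H_\ell}(\weta)$. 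Your choice, which saturates the constraint $(\alpha+{\Ht_\ell}(\weta))/\delta=H$ and coincides (by Fact~2) with the $\delta={H_\ell}(\weta)/(\weta H)$ used later in Section~\ref{sharplow2}, is the one that actually establishes \eqref{DH02} for every $H$ in the stated range, so you have in effect corrected what appears to be a typo in the statement rather than introduced an error.
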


\begin{proof}  
Observe that   $\mathcal{P}_{H,0} \subset \bigcap_{\ep>0} \mathcal{P}_{H,\ep}$. So, if a triplet $(\alpha,\eta',\delta) $ belongs to $\mathcal{P}_{H,0}$, $D(H,\ep) \geq \frac{{D_\mu}(\alpha)}{\delta}$ for every $\ep$, and one necessarily has $D(H) \geq \frac{{D_\mu}(\alpha)}{\delta}.$

\sk

The fact that the three triplets belong to the associated domains     $\mathcal{P}_{H,0}$ is a simple calculation.

\sk

Part (1) is immediate.

\sk

Part (2) follows from Fact 2.

\sk

Concerning Part (3), one observes that
\begin{eqnarray*}
\frac{ {D_\mu}({H_\ell}(\eta_H)) }{1/\eta_H}   & =   & \eta_H \dfrac{d(1-\eta)}{1-\eta_H}  =  \frac {d(1-\eta)}{1-\eta_H}-d(1-\eta)  \\
& =&  {D_\mu}({H_\ell}(\eta_H)) - d(1-\eta) = {D_\mu}(H) - d(1-\eta) .
\end{eqnarray*}
\end{proof}

From the above lower bounds, one deduces that 
$
D(H)=\lim_{\ep\to 0^+}D(H,\ep)=D(H,0).
$
Indeed, take any $H\in ({H_\ell}(\etal),{H_s}+{\Ht_\ell}(\widetilde\eta)]$. Obviously $d\geq D(H,\ep) \geq D(H) >0$, for every $\ep\geq 0$. Recalling  that  $D(H,\ep)$ is a supremum of quantities $\frac{{D_\mu}(\alpha)}{\delta}$ and ${D_\mu}(\alpha) \leq d$, it is enough to consider triplets of the form $(\alpha,\eta',\delta)$ with $\delta \leq d/D(H,\ep) $. As a consequence, in this range of triplets, recalling the constraint $\dfrac{\alpha+{\Ht_\ell}(\eta')}{\delta}\le H+\ep$, the quantity ${\Ht_\ell}(\eta') $ must be bounded to contribute to the value of $D(H,\ep)$. This means that $\eta'$ is bounded from below by some positive constant depending on $H$ only.

We know now that the domain of suitable parameters leading to $D(H,\ep)$ can be chosen to be compact, and independent of $\ep>0$. The continuity of all the functions involved in the domain $\mathcal{P}_{H,\ep}$, $\ep\geq 0$, allows us to conclude that $D(H,0) = \lim_{\ep\to 0^+}D(H,\ep)$, and 
\begin{equation}
\label{dhdh0}
D(H)= D(H,0) = \max \left \{ \frac{{D_\mu}(\alpha)}{\delta}:
\begin{cases} 
\exists \,  \alpha\in [{H_{\min}},{H_{\max}}], \  \exists \ \eta'\in [\eta_\ell,\eta]\setminus\{0\},\\
1\le \delta\le 1/\eta',\ \  \dfrac{\alpha+{\Ht_\ell}(\eta')}{\delta}\le H 
\end{cases}  \!\!\!\! \right\},
\end{equation}   
where we know that the maximum is effectively reached (it is not only a supremum as in formula \eqref{eq110s}).

\mk

 Now let us make some general remarks on $D(H,0)$.

Suppose the maximum  in \eqref{dhdh0} is realized at  some triplet $(\alpha,\eta',\delta)$. Observe that if $\alpha $ were strictly greater than ${H_s}$,  one could  improve the bound   ${D_\mu}(\alpha)/\delta$ by replacing $\alpha $ by $H_s$ and not changing the value of the other parameters. This contradicts the maximality of  ${D_\mu}(\alpha)/\delta$. As a conclusion,  
$\alpha\leq{H_s}.$

\medskip

Suppose now that $\dfrac{\alpha+{\Ht_\ell}(\eta')}{\delta}< H$. Then necessarily $\delta=1$,   otherwise one could improve the upper bound $\frac{{D_\mu}(\alpha)}{\delta}$ by slightly decreasing $\delta$. When $\delta=1$, one has  $\alpha+{\Ht_\ell}(\eta') < H \le {H_s}+{\Ht_\ell}(\widetilde\eta)$.   Observe that  it is necessary to have $\alpha <H_s$, since ${\Ht_\ell}(\eta')\ge {\Ht_\ell}(\widetilde\eta)$. By taking    $\alpha' \in (\alpha, H_s)$ still satisfying $\alpha'+{\Ht_\ell}(\eta') < H$, one   gets a larger value for $\frac{{D_\mu}(\alpha')}{\delta}$, which contradicts again the maximality of $\frac{{D_\mu}(\alpha)}{\delta}$. So,   for the optimal triplet, one has necessarily the equality 
\begin{equation}
\label{eq12s}
 \dfrac{\alpha+{\Ht_\ell}(\eta')}{\delta} =H.
\end{equation}

\mk

Finally,  observe that  \eqref{eq12s} implies that  
\begin{align}
 D(H,0)& \label{up2}\le
H \cdot \max\left \{\frac{{D_\mu}(\alpha)}{\alpha+{\Ht_\ell}(\eta')}:
\begin{cases} 
\alpha\in [{H_{\min}},{H_{\max}}],\  \eta'\in [\eta_\ell,\eta]\setminus\{0\},\\
\ \exists \ 1\le \delta\le 1/\eta',\ 
\dfrac{\alpha+{\Ht_\ell}(\eta')}{\delta} = H
\end{cases}\right\}.
\end{align}

We now  distinguish three cases.

 \medskip 
\noindent
{\bf $\bullet$ First case:  $H\in ({{H_\ell}}(\widetilde\eta)+{\Ht_\ell}(\widetilde\eta),{H_s}+{\Ht_\ell}(\widetilde\eta))$.}
\sk

Recall the triplet \eqref{triplet1}, which satisfies \eqref{eq12s}. 

If  $\alpha > \alpha_H$,   Fact 1 yields
$$H\frac{{D_\mu}(\alpha)}{\alpha+{\Ht_\ell}(\eta')}\le H\frac{{D_\mu}(\alpha)}{\alpha+{\Ht_\ell}(\widetilde \eta)}\le H \frac{{D_\mu}(\alpha_H)}{\alpha_H+{\Ht_\ell}(\widetilde \eta)}={D_\mu}(\alpha_H)$$
  
   If $\alpha<\alpha_H$, we have ${D_\mu}(\alpha)/\delta<{D_\mu}(\alpha_H)/1 = {D_\mu}(\alpha_H)$. 
   
   Consequently, the maximum is reached necessarily at $(\alpha_H, \weta, 1)$, and it equals \eqref{DH01}. 
\medskip\medskip 

\noindent
{\bf $\bullet$ Second case: $H\in [ {{H_\ell}}(\widetilde\eta)    , {{H_\ell}}(\widetilde\eta)+{\Ht_\ell}(\widetilde\eta))$.  }

\sk
For any $\alpha\in [{H_{\min}},{H_{\max}}]$ and any $\eta'$, by Fact 1 we have 
$$H\frac{{D_\mu}(\alpha)}{\alpha+{\Ht_\ell}(\eta')}\le H\frac{{D_\mu}(\alpha)}{\alpha+{\Ht_\ell}(\widetilde \eta)} \le H\frac{{D_\mu}({{H_\ell}}(\widetilde\eta ))}{{{H_\ell}}(\widetilde\eta)+{\Ht_\ell}(\widetilde \eta)},$$
which is the value obtained in \eqref{DH02} with the triplet   \eqref{triplet2}, which satisfies  \eqref{eq12s}.

\medskip \medskip 
\noindent
{\bf $\bullet$ Third case:} $H\in ({{H_\ell}}(\etal),   {{H_\ell}}(\widetilde\eta)   )$. 

 \medskip

The optimization process here is more intricate. Indeed, in the first two cases,  only one amongst the three parameters involved in the optimal triplet depends on $H$. In this range of local dimensions, the three parameters of the optimal triplet  are functions of $H$.

\begin{lemma}
\label{lem3s}
When $H\in ( {{H_\ell}}(\etal),   {{H_\ell}}(\widetilde\eta)   )$,  the value of $D(H,0)$ is reached at some triplet $(\alpha,\eta',1/\eta')$, for some $\eta' \in  [\etal,\weta)\setminus \{0\}$. 
\end{lemma}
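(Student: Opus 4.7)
The strategy is to use the equality \eqref{eq12s} to eliminate $\delta=(\alpha+{\Ht_\ell}(\eta'))/H$, reducing $D(H,0)$ to a two-variable maximization over $(\alpha,\eta')$ with objective $F(\alpha,\eta')=H{D_\mu}(\alpha)/(\alpha+{\Ht_\ell}(\eta'))$ on the feasible region $\alpha\in [\max(H_{\min},H-{\Ht_\ell}(\eta')),\min(H_s,H/\eta'-{\Ht_\ell}(\eta'))]$, $\eta'\in[\eta_\ell,\eta]\setminus\{0\}$. The plan is to show that the maximum is attained at a point with $\alpha=H/\eta'-{\Ht_\ell}(\eta')$ (i.e.\ $\delta=1/\eta'$) for some $\eta'<\widetilde\eta$.

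First I rule out interior maxima. The partial $\partial_{\eta'}F$ has sign opposite to ${\Ht_\ell}'(\eta')$, so it vanishes only at $\widetilde\eta$ by definition of $\widetilde\eta$; together with $\partial_\alpha F=0$ and Lemma~\ref{lem2s}, any interior critical point must equal $({H_\ell}(\widetilde\eta),\widetilde\eta)$. Fact~2 yields that the associated $\delta={H_\ell}(\widetilde\eta)/(\widetilde\eta H)$ exceeds $1/\widetilde\eta$ because $H<{H_\ell}(\widetilde\eta)$ in the present case, so this candidate is infeasible and the maximum necessarily lies on the boundary.

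I then compare the values attained on each boundary piece. On $\delta=1$, $F$ reduces to ${D_\mu}(H-{\Ht_\ell}(\eta'))$, whose supremum over $\eta'$ equals ${D_\mu}(H-{\Ht_\ell}(\widetilde\eta))$ since ${\Ht_\ell}$ is minimized at $\widetilde\eta$. On $\delta=1/\eta'$, the triplet $({H_\ell}(\eta_H),\eta_H,1/\eta_H)$ from Lemma~\ref{lemcritical}(3), with $\eta_H$ defined by ${H_\ell}(\eta_H)=H$, is feasible and attains value ${D_\mu}(H)-d(1-\eta)$; strict monotonicity of ${H_\ell}$ together with $H<{H_\ell}(\widetilde\eta)$ places $\eta_H$ in $(\eta_\ell,\widetilde\eta)\subset[\eta_\ell,\widetilde\eta)\setminus\{0\}$. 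The remaining pieces $\alpha=H_{\min}$, $\alpha=H_s$, and the endpoints $\eta'\in\{\eta_\ell,\eta\}$ either coincide with a $\delta=1$ or $\delta=1/\eta'$ corner or give smaller values; when $\eta_\ell>0$, the identity ${D_\mu}(H_{\min})=d(1-\eta)/(1-\eta_\ell)$ combined with ${D_\mu}(H)=d(1-\eta)/(1-\eta_H)$ and $\eta_H>\eta_\ell$ yields ${D_\mu}(H_{\min})\le {D_\mu}(H)-d(1-\eta)$ by a direct algebraic check.

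The crucial step is to establish ${D_\mu}(H-{\Ht_\ell}(\widetilde\eta))\le {D_\mu}(H)-d(1-\eta)$, so that the $\delta=1/\eta'$ branch dominates the $\delta=1$ branch. I plan to prove this in three sub-steps. First, using Fact~3, Definition~\ref{defgammal}, and the identity ${H_\ell}(\widetilde\eta)+{\Ht_\ell}(\widetilde\eta)={H_\ell}(\widetilde\eta)/\widetilde\eta$, a short calculation verifies that the tangent to ${D_\mu}$ at $({H_\ell}(\widetilde\eta),{D_\mu}({H_\ell}(\widetilde\eta)))$ meets the vertical axis at $(0,d(1-\eta))$, equivalently $q_{\widetilde\eta}{\Ht_\ell}(\widetilde\eta)=d(1-\eta)$ where $q_{\widetilde\eta}={D_\mu}'({H_\ell}(\widetilde\eta))$. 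Second, concavity of ${D_\mu}$ gives the chord bound ${D_\mu}({H_\ell}(\widetilde\eta))-{D_\mu}({H_\ell}(\widetilde\eta)-{\Ht_\ell}(\widetilde\eta))\ge q_{\widetilde\eta}{\Ht_\ell}(\widetilde\eta)=d(1-\eta)$. Third, concavity again implies that $H\mapsto {D_\mu}(H)-{D_\mu}(H-{\Ht_\ell}(\widetilde\eta))$ is nonincreasing, so the inequality extends from $H={H_\ell}(\widetilde\eta)$ down to all $H<{H_\ell}(\widetilde\eta)$. The main obstacle I foresee is the careful handling of the corner cases where $H-{\Ht_\ell}(\widetilde\eta)<H_{\min}$, in which the $\delta=1$ branch is simply infeasible so the above comparison becomes vacuous but must be argued separately, together with the verification of the secondary boundaries in the subcase $\eta_\ell>0$.
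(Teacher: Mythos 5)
Your overall strategy (eliminate $\delta$ via \eqref{eq12s}, exclude the interior critical point by infeasibility, then compare boundary branches) is a genuinely different route from the paper's, which instead rules out every non-conforming candidate by explicit perturbation/exchange arguments; and your central inequality ${D_\mu}(H-{\Ht_\ell}(\weta))\le {D_\mu}(H)-d(1-\eta)$, derived from the tangent identity $q_{\weta}{\Ht_\ell}(\weta)=d(1-\eta)$ plus concavity, is correct and is an attractive replacement for the paper's argument against $\delta=1$. However, the proposal has two genuine gaps.

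First, the conclusion $\eta'\in[\etal,\weta)$ is never established. You exhibit one feasible triplet $({H_\ell}(\eta_H),\eta_H,1/\eta_H)$ with value ${D_\mu}(H)-d(1-\eta)$ and show it dominates the $\delta=1$ branch, but you never compare it with the rest of the $\delta=1/\eta'$ branch. For all your argument shows, the maximum could be attained only at some $(\alpha,\eta',1/\eta')$ with $\eta'\ge\weta$ and a strictly larger value, in which case the lemma as stated would fail. The paper closes exactly this hole: given $\eta'\in[\weta,\eta]$ and $\alpha$ with $H=\eta'(\alpha+{\Ht_\ell}(\eta'))$, it constructs $\alpha'\in(\alpha,{H_\ell}(\weta))$ with $H=\weta(\alpha'+{\Ht_\ell}(\weta))$ and uses Fact 1 to get $\weta\,{D_\mu}(\alpha')>\eta'{D_\mu}(\alpha)$. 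Some argument of this type is indispensable and is missing from your plan.

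Second, the dismissal of the remaining boundary pieces is not only unproved but contains a false step. When $\etal>0$, your claimed inequality ${D_\mu}(H_{\min})\le {D_\mu}(H)-d(1-\eta)$ reads $\frac{1}{1-\etal}\le\frac{\eta_H}{1-\eta_H}$, i.e. $\eta_H\ge\frac{1}{2-\etal}$; since $\frac{1}{2-\etal}>\etal$ and $\eta_H\to\etal$ as $H\to{H_\ell}(\etal)^+$, this fails for $H$ near the left endpoint. The piece $\alpha=H_{\min}$ must instead be eliminated by observing that $\alpha\mapsto {D_\mu}(\alpha)/(\alpha+{\Ht_\ell}(\eta'))$ is increasing at $H_{\min}$ (because ${D_\mu}'(H_{\min}^+)=+\infty$), so the constrained maximum in $\alpha$ never sits there. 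Similarly, on the pieces $\eta'\in\{\etal,\eta\}$ with $1<\delta<1/\eta'$, the only bound your tools give is $F\le H q_{\weta}$, which is useless since concavity yields $H q_{\weta}\ge {D_\mu}(H)-d(1-\eta)$; excluding these pieces requires controlling the sign of ${\Ht_\ell}'$ on each side of $\weta$ or, again, a perturbation argument. As it stands, most of the real work of the lemma is concentrated in the steps you treat as routine.
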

\begin{proof}  
 
Suppose that $D(H,0)$ is reached at some triplet $(\alpha,\eta',\delta)$, with $1< \delta<1/\eta'$. 

If $\eta'\neq\widetilde\eta$, we can perturb slightly $(\alpha,\eta',\delta)$  into a new triplet $(\alpha'',\eta'',\delta'')$ such that  $\alpha '' \geq \alpha$,  ${\Ht_\ell}(\eta'')<{\Ht_\ell}(\eta')$,  $1 < \delta''<\delta' $ and  $\delta''<1/\eta''$, and such that \eqref{eq12s} still holds for the new triplet. This contradicts the maximality of ${D_\mu}(\alpha)/\delta$.
One deduces  that $\eta'=\weta$, and the optimal  triplet  is in fact $(\alpha,\weta,\delta)$  with  $1< \delta<1/\weta$.

\sk

Now, the constraints \eqref{eq12s} and $H<{H_\ell}(\widetilde\eta)$ imply that $\alpha<{H_\ell}(\widetilde\eta)$, since ${H_\ell}(\widetilde\eta)=\widetilde\eta({H_\ell}(\widetilde\eta)+{\Ht_\ell}(\widetilde\eta))$. Then, optimizing  in $\alpha$ the ratio ${D_\mu}(\alpha)/\delta$ under the constraint  \eqref{eq12s}  amounts to studying the mapping
$$ \alpha \mapsto H \frac{{D_\mu}(\alpha)}{ \alpha+{\Ht_\ell}(\weta)}$$
over $[\alpha_{\min},{H_\ell}(\widetilde\eta))$. Fact 1 ensures that it is increasing. If the  optimal  triplet  is  $(\alpha,\weta,\delta)$  with  $1< \delta<1/\weta$, we can slightly increase the values of  $\alpha$ and $\delta$ in $\alpha'$ and $\delta'$, preserve \eqref{eq12s}, $\alpha'<{H_\ell}(\widetilde\eta)$ and $1< \delta'<1/\weta$, and get the contradiction $H \frac{{D_\mu}(\alpha')}{ \alpha'+{\Ht_\ell}(\weta)}>D(H,0)= \frac{{D_\mu}(\alpha)}{ \alpha+{\Ht_\ell}(\weta)}$.   

Next suppose that $D(H,0)$ is reached at some triplet $(\alpha,\eta',1)$, for some $\eta' \in [\etal,\eta]\setminus\{0\}$. This imposes    $H=\alpha+{\Ht_\ell}(\eta')$, and  one looks for the maximum of ${D_\mu}(\alpha)$. Necessarily $\alpha  \in I=[ {{H_\ell}}(\etal) -{\Ht_\ell}(\eta'),   {{H_\ell}}(\widetilde\eta)  -{\Ht_\ell}(\eta')  )$, which is an interval included in the increasing part of the spectrum ${D_\mu}$. However ${D_\mu}$ does not reach a maximum over $I$, hence a new contradiction.

The previous cases leading to a contradiction, we deduce that  necessarily $D(H,0)$ is reached at some triplet $(\alpha,\eta',1/\eta')$, for some $\eta'\in [\etal,\eta]\setminus\{0\}$. Hence we have
\begin{equation}
\label{eq15s}
 H=\eta'(\alpha+{\Ht_\ell}(\eta'))  \ \mbox { and } \ D(H,0)=\eta'{D_\mu}(\alpha),
 \end{equation}
  with $\eta'\in [ \eta_\ell,\eta]\setminus\{0\}$. 

\mk

Further, we  prove that it is enough to consider $\eta'\in [\etal,\weta)$. 

Suppose that  \eqref{eq15s} holds for some  $\eta'\in[\widetilde\eta ,\eta]$. 
Since $H< \widetilde\eta({{H_\ell}}(\widetilde\eta)+{\Ht_\ell} (\widetilde\eta))$ and ${\Ht_\ell}(\eta')\ge {\Ht_\ell}(\widetilde\eta)$, we have $\alpha< {{H_\ell}}(\widetilde\eta)$. Consequently, there exists $\alpha'\in (\alpha,{{H_\ell}}(\widetilde\eta))$ such that $H=\widetilde\eta(\alpha'+{\Ht_\ell} (\widetilde\eta))$. For the triplet $(\alpha', \weta,1/\weta)$, one has
$$\widetilde\eta {D_\mu}(\alpha')= H \dfrac{ {D_\mu}(\alpha')}{\alpha'+{\Ht_\ell} (\widetilde\eta)} > H \dfrac{ {D_\mu}(\alpha)}{\alpha+{\Ht_\ell} (\widetilde\eta)} , $$ 
where we used  Fact 1 and ${H_\ell}(\weta) >\alpha'>\alpha$. Finally, since ${\Ht_\ell} (\widetilde\eta)  \le {\Ht_\ell} (\eta') $, one sees that 
 $$\widetilde\eta {D_\mu}(\alpha' ) > H \dfrac{ {D_\mu}(\alpha)}{\alpha+{\Ht_\ell} (\eta')} =D(H,0),$$
 which is a (last) contradiction.
\end{proof}

From last Lemma, the maximum $D(H,0)$  reduces to
$$
 D(H,0)=\max\Big \{\eta' {{D_\mu}(\alpha)} :
\alpha\in [{H_{\min}}, H_s], \ \eta'\in [\eta_\ell,\weta)\setminus\{0\},\     \eta'({\alpha+{\Ht_\ell}(\eta'))}=H  \Big\}.
$$
 
This is standard optimization under constraints. The maximum is reached  when
$$ {D_\mu}'(\alpha )(\alpha +\Ht'_\ell(\eta'))+ {D_\mu}(\alpha ) = 0 ,$$
or equivalently when
$$  \Ht'_\ell(\eta') =  - \frac{D^*_\mu( {D_\mu}'(\alpha))}{ D'_\mu(\alpha)}.$$

When $\eta'$ is fixed, this happens if and only if $\alpha={H_\ell}(\eta')$.  This means that   $H=\eta'({H_\ell}(\eta') +{\Ht_\ell}(\eta')) $, so $H={H_\ell}(\eta')$. This leads to the choice \eqref{defetaH} for $\eta'$, and to  the triplet \eqref{triplet3}, which gives the value \eqref{DH03} for $D(H)$. 
\end{proof}

\begin{remark}
\label{rk1s}
A key observation for the following is that actually we proved a little more than what we announced. Indeed, as a direct by-product of the proof, not only we know that when $H\leq H_s+{\Ht_\ell}(\widetilde\eta)$, $D_{\M_\mu}(H) (=\dim \underline E_{\M_\mu}(H) ) \, \leq D(H)$, we also have that
\begin{equation}
\label{majdimsss}
\dim \underline E^\leq _{\M_\mu}(H)  \, \leq D(H).
\end{equation}

This inequality is useful in the following section.
\end{remark}

\begin{remark}
\label{rk2s}There is no chance for   $D(H)$ to  be an optimal bound in the decreasing part of the singularity spectrum of $\M_\mu$, since the mapping $H\mapsto D(H)$ is non-decreasing.
\end{remark}

 \section{Lower bound for the singularity spectrum}
\label{sec_lower}

For each admissible  local dimension $H$, we are going to exhibit an auxiliary probability measure $\nu$ (which depends on $H$) such that $\nu \big(\underline E_{\M_\mu}(H) \big)=1$, and such that the dimension of $\nu$ equals the announced value for $D_{\Mm}(H)$: i.e. $D(H)$  when $H\leq H_s+{\Ht_\ell}(\weta)$, and    ${D_\mu}(H-{\Ht_\ell}(\weta))$ when $H> H_s+{\Ht_\ell}(\weta)$. 

These auxiliary measures do not always have the same nature, depending on $H$. They can be taken as a Gibbs measures when $H\in [ H_\ell(\weta)+\Ht_\ell(\weta), H_{\max}+\Ht_\ell(\weta))$,  but not for the other values of $H$.  

We introduce  two  families of measures in Section~\ref{fammeas}, whose properties are established in Section \ref{sec5proof}. Then we obtain the sharp lower bound for $D_{\M_\mu}$ in Sections~\ref{sharplow1} to \ref{sharplow3}.

 \subsection{Two families of measures}\label{fammeas}

The first family will be used to obtain a sharp lower bound for $D_{\M_\mu}(H)$ when $H\in[{H_\ell}(\weta)+{\Ht_\ell}(\weta), H_{\max}+{\Ht_\ell}(\weta)]$. It is based on the following result.

Recall  Proposition~\ref{discretization} in which  the event $\mathscr{C}(N,J,W)$ is defined.  

\begin{theorem}\label{cantor} With probability 1, for all $\alpha\in [H_{\min},H_{\max}]$, there exists an exact dimensional Borel probability measure $\nu_\alpha$ of Hausdorff dimension ${D_\mu}(\alpha)$ supported on $\widetilde E_\mu(\alpha)$ (i.e. $\nu_\alpha(\widetilde E_\mu(\alpha))=1$), such that:
\begin{enumerate}
\item
for all $\delta>1$ we have
$$
\nu_\alpha \left (\bigcap_{J\ge 1} \ \bigcup_{j\ge J}\   \bigcup_{w\in\mathcal{S}_j(\eta)}  B \big(x_w,(2\cdot 2^{-\lfloor \eta j \rfloor})^\delta\big)\right )=0.
$$ 

\medskip
\item  for all $N>1/\eta$,  for $\nu_\alpha$-almost every $x$, there exists an integer $J_{N,\alpha,x}\ge 1$ such that for all $J\geq J_{N,\alpha,x}$, the event $\mathscr{C}(N,J,x_{|J})$   is not realized.
\end{enumerate}
\end{theorem}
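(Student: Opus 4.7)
The plan is to take $\nu_\alpha := \mu_\alpha$, the Gibbs measure on $[0,1]^d$ associated with the potential $q_\alpha \phi$, where $q_\alpha \in \R$ is the unique real number satisfying $\tau_\mu'(q_\alpha) = \alpha$ (uniqueness comes from the strict concavity of $\tau_\mu$ on $(H_{\min},H_{\max})$). Part (4) of Proposition \ref{fm} then immediately yields that $\nu_\alpha$ is exact dimensional with Hausdorff dimension $D_\mu(\alpha)$ and $\nu_\alpha(\widetilde E_\mu(\alpha)) = 1$. The crucial feature of this choice is that $\nu_\alpha$ is \emph{deterministic}: it does not depend on the random sampling $(p_w)_{w \in \Sigma^*}$, which is what enables the Fubini-type arguments below. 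The two boundary values $\alpha \in \{H_{\min}, H_{\max}\}$ require a separate small argument (e.g., taking a limit of $\mu_{\alpha_n}$ as $\alpha_n$ approaches the endpoint from inside).

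For property (2), fix an integer $N > 1/\eta$ and set $C_J := \{x \in [0,1]^d : \mathscr{C}(N, J, x_{|J}) \text{ is realized}\}$. The independence of $\nu_\alpha$ from the sampling, Fubini, and Proposition \ref{discretization} give
\begin{equation*}
\mathbb{E}[\nu_\alpha(C_J)] = \sum_{W \in \Sigma_J} \nu_\alpha(I_W)\, \mathbb{P}(\mathscr{C}(N, J, W)) \le 2^{-J \ep_N}.
\end{equation*}
Hence $\mathbb{E}[\sum_J \nu_\alpha(C_J)] < \infty$, so $\sum_J \nu_\alpha(C_J) < \infty$ $\mathbb{P}$-a.s., which gives $\nu_\alpha(\limsup_J C_J) = 0$ $\mathbb{P}$-a.s.; this is exactly (2). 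A countable intersection over integers $N > 1/\eta$ preserves a probability-one event.

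For property (1), fix a rational $\delta > 1$, set $r_j := (2 \cdot 2^{-\lfloor \eta j \rfloor})^\delta$ and $B_j := \bigcup_{w \in \mathcal{S}_j(\eta)} B(x_w, r_j)$. Fubini and a union bound give
\begin{equation*}
\mathbb{E}[\nu_\alpha(B_j)] \le 2^{-dj(1-\eta)} \int_{[0,1]^d} \#\bigl\{w \in \Sigma_j : x \in B(x_w, r_j)\bigr\}\, d\nu_\alpha(x).
\end{equation*}
Since the $x_w$ for $|w|=j$ lie on a lattice of mesh $2^{-j}$ in $[0,1]^d$, the counting integrand is bounded by $C_d \max(1, 2^j r_j)^d$, from which
\begin{equation*}
\mathbb{E}[\nu_\alpha(B_j)] \le C_d\, 2^{-dj \min(1-\eta,\, \eta(\delta-1))}.
\end{equation*}
This is summable since $\eta \in (0,1)$ and $\delta > 1$; Borel--Cantelli gives $\nu_\alpha(\limsup_j B_j) = 0$ $\mathbb{P}$-a.s., which is (1) for that $\delta$. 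Intersecting over rational $\delta > 1$, together with the monotonicity of the limsup set in $\delta$ (it decreases as $\delta$ grows), extends the conclusion to every real $\delta > 1$.

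The main obstacle is the ``for all $\alpha \in [H_{\min},H_{\max}]$'' quantifier: the probability-zero exceptional sets depend on $\alpha$, but $\alpha$ ranges over an uncountable interval. My planned resolution is first to establish a probability-one event on which (1) and (2) hold simultaneously for all $\alpha$ in a countable dense subset of $(H_{\min},H_{\max})$, and then either to transfer the conclusion to arbitrary $\alpha$ via a continuity argument (the constants $C_d$ and the decay rates $\ep_N$ above are uniform in $\alpha$, and $\alpha \mapsto \mu_\alpha$ is continuous in a suitable topology), or to exploit the fact that in the later applications of Theorem \ref{cantor} in Section \ref{sec_lower} only countably many values of $\alpha$ (namely the exponents $\alpha_H$ used in the Cantor-type constructions of the lower bound) are actually needed.
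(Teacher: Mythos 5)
Your Fubini/first-moment argument for Part (1) is a genuinely different route from the paper's, and for a \emph{fixed, deterministic} probability measure it is cleaner: since the bound $\mathbb{E}[\nu(B_j)]\le C_d\,2^{-dj\min(1-\eta,\,\eta(\delta-1))}$ uses nothing about $\nu$ beyond its independence from the sampling, it sidesteps the paper's quenched covering argument, which needs the a.s.\ bound $\#\mathcal S_j(\eta,W)\le 2^{\eta j\ep_j}$ together with a uniform upper bound $\nu(I_W)\le 2^{-JD_\mu(\alpha_{k_J})(1-\widetilde\theta_{k_J})}$ on cube measures and a delicate tuning of the rate at which $D_\mu(\alpha_k)\to 0$. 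Your Part (2) argument coincides with the paper's for fixed $\alpha$. That said, two points you defer are where the paper's proof actually lives, and one of your proposed escapes does not work.

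First, the endpoints. For $\alpha\in\{H_{\min},H_{\max}\}$ there is no Gibbs measure $\mu_\alpha$ (Proposition 2(4) only covers the open interval), and a weak-$*$ limit of $\mu_{\alpha_n}$ gives no control whatsoever on local dimensions: the limit need not charge $\widetilde E_\mu(H_{\min})$ and need not have dimension $D_\mu(H_{\min})$ (which may be $0$). The paper devotes essentially its entire proof of Part (1) to building $\nu_{H_{\min}}$ by concatenating the measures $\mu_{\alpha_k}$ over longer and longer blocks of generations, with $\alpha_k\to H_{\min}$ chosen so that $D_\mu(\alpha_k)=\sqrt{\ep_k}$; this is not a ``separate small argument.'' (Once that measure is built, your Fubini argument would apply to it and actually simplify the paper's verification of Part (1) for it.) Second, the uniformity in $\alpha$. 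Your fallback that ``only countably many $\alpha$ are needed'' is not viable: Lemma 6 and Section 6.2 use the conclusion simultaneously for all $H$ in an interval, i.e.\ for uncountably many $\alpha_H$. Your other suggestion --- a continuity argument --- is exactly what the paper carries out for Part (2), via the bounded-distortion estimate $\mu_{\alpha_{q'}}(I_W)\le C_{q,q'}2^{\kappa|q-q'|\,|W|}\mu_{\alpha_q}(I_W)$ and a finite cover of the parameter interval with mesh $\le \ep_N/(2\kappa)$, so that the supremum over $\alpha$ can be pulled inside the expectation at the cost of a factor $2^{J\ep_N/2}$; the same device would be needed (and would work, after covering each ball $B(x_w,r_j)$ by boundedly many dyadic cubes) to make your Part (1) uniform in $\alpha$. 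As written, both the endpoint construction and the uniformization are genuine gaps rather than routine details.
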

Theorem \ref{cantor} is proved at the end of this Section (Section \ref{sec5proof}). Observe that the result holds simultaneously for all $\alpha\in[H_{\min}, H_{\max}]$.

In the first item, the limsup set contains those points $x\in \zu^d$  that are very close to the surviving coefficients, i.e. those $x$ satisfying for some $\delta>1$
$$|x-x_w| < 2 \cdot 2^{- \lf |w|\eta\rf \delta}$$
for infinitely many surviving words $w$. We know  by the covering Lemma \ref{lem1} that when $\delta <1$, every $x\in  \zu^d$ satisfies the last inequality infinitely many times. Part (1) of Theorem \ref{cantor} states that this is no longer true when $\delta >1$, in the sense that the $\nu_\alpha$-measure of these sets of points is always 0.

\mk

 The second part of the Theorem is technical, and used in the proofs below.

\mk\mk

The second family of measures allows us to compute the value of $D_{\M_\mu}(H)$ when  $H \in[{H_\ell}(\eta_\ell),{H_\ell}(\weta)+{\Ht_\ell}(\weta)]$.  These measures are built thanks to the theory of heterogeneous ubiquity theory, developed in \cite{BSubiquity1,BSubiquity2,BSubiquity3,Durand}, whose main results can be resumed as follows.

\begin{theorem}
\label{ubi}
Let $\mathcal{F} = ((x_n,r_n))_{n\geq 1}$ be a sequence of couples such that $(x_n)_{n\geq 1}$ is a sequence of points in $\zu^d$, and $(r_n)_{n\geq 1}$ is a positive  sequence converging to zero. Assume that 
\begin{equation}
\label{eq17s}
(0,1)^d \subset \limsup _{n\to+\infty} B(x_n,r_n).
\end{equation}
Let $\alpha\in(H_{\min},H_{\max})$. Recall that the Gibbs measure  $\mu_\alpha$ was defined in Proposition~\ref{fm}(4). 
For every $\delta\geq 1$ and any positive sequence $\widetilde \beta:=(\widetilde \beta_n)_{n\geq 1} $ converging to zero, define\begin{equation}
\label{defU}
U_\mu \big (\alpha,\delta, \mathcal{F}, \widetilde\beta\big ):=   \bigcap_{N\geq 1  }   \ \bigcup_{ \substack{n\geq N :  \\ (r_n)^{\alpha+\widetilde\beta_n }\leq {\mu_\alpha(B(x_n,r_n))} \leq (r_n)^{\alpha-\widetilde\beta_n} }} B\Big (x_n, (r_n)^\delta\Big) .
\end{equation}

For every $\delta\geq 1$, there exists a Borel probability measure $\nu_{\alpha,\delta}$ and a positive sequence $\widetilde \beta:= (\widetilde \beta_n)_{n\geq 1} $ converging to zero such that 
$$\nu_{\alpha,\delta}\left( U_\mu \big (\alpha,\delta, \mathcal{F}, \widetilde\beta\big )\right) = 1,$$
and $\nu_{\alpha,\delta}(E)=0$ for every set $E$ such that $\dim\, E<{D_\mu}(\alpha)/\delta$.

In particular, one has 
$$ \dim U_\mu \big (\alpha,\delta, \mathcal{F}, \widetilde\beta\big )  \geq \dim\nu_{\alpha,\delta} \geq  \frac{ {D_\mu}(\alpha)}{\delta}.$$

\mk

Moreover, if $(\alpha^{(p)},\mathcal{F}^{(p)},\delta^{(p)})_{p \geq 1}$ stands for a sequence of parameters satisfying the above conditions, there exists a measure $\widetilde \nu $ and sequences $\widetilde \beta^{(p)}: =(\widetilde \beta^{(p)}_{n})_{p\ge 1,n\geq 1}$ converging to zero satisfying
$$\widetilde \nu\left( \bigcap_{p\geq 1}  U_\mu\big (\alpha^{(p)},\delta^{(p)},\mathcal{F}^{(p)}, \widetilde\beta^{(p)}\big)  \right) = 1,$$
and $\widetilde \nu(E)=0$ for every set $E$ such that $\displaystyle \dim\, E<\inf_{p\geq 1 } \frac{{D_\mu}(\alpha^{(p)})}{\delta^{(p)}} $.

In particular,  
$$
\dim  \bigcap_{p\geq 1}  U_\mu\big (\alpha^{(p)},\delta^{(p)},\mathcal{F}^{(p)}, \widetilde\beta^{(p)}\big)       \geq \inf_{p\geq 1}  \frac{{D_\mu}(\alpha^{(p)})}{\delta^{(p)}}.
$$
\end{theorem}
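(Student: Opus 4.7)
The result is a typical conclusion of the heterogeneous ubiquity machinery developed in \cite{BSubiquity1,BSubiquity2,BSubiquity3}, so the plan is to invoke that framework and explain how to check its hypotheses in the present setting rather than to redo the whole construction from scratch. First I would record the key ingredients about $\mu_\alpha$ coming from Proposition~\ref{fm}(4): $\mu_\alpha$ is exact dimensional of dimension ${D_\mu}(\alpha)$, has full topological support, and is carried by $\widetilde E_\mu(\alpha)$. In particular one can fix a sequence $\widetilde \beta = (\widetilde\beta_n)_{n\ge 1}$ decreasing to $0$ sufficiently slowly so that the set
$$
G(\widetilde\beta)=\Big \{(x,r)\in [0,1]^d\times (0,1):  r^{\alpha+\widetilde\beta_{-\log_2 r}}\le \mu_\alpha(B(x,r))\le r^{\alpha-\widetilde\beta_{-\log_2 r}}\Big \}
$$
has $\mu_\alpha$-full measure of its first projection, thanks to a Borel--Cantelli argument using the large deviation estimates recalled in Proposition~\ref{chernov} and Proposition~\ref{fm}(5).

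Next, combining the covering property \eqref{eq17s} with the fact that $\mu_\alpha$ gives full mass to $G(\widetilde\beta)$, I would extract a sub-family $\mathcal{F}'\subset \mathcal{F}$ of \emph{typical} balls $B(x_n,r_n)$ (with $(x_n,r_n)\in G(\widetilde\beta)$) that still covers $[0,1]^d$ in the $\limsup$ sense and whose radii decrease geometrically. On $\mathcal{F}'$ I apply the main heterogeneous ubiquity theorem of \cite{BSubiquity3,Durand}: since $\mu_\alpha$ is doubling on its support (quasi-Bernoulli, via \eqref{quasib}) and exact dimensional with dimension ${D_\mu}(\alpha)$, that theorem provides a Cantor-like compact $K_{\alpha,\delta}\subset U_\mu(\alpha,\delta,\mathcal F,\widetilde\beta)$ together with a measure $\nu_{\alpha,\delta}$ carried by $K_{\alpha,\delta}$ whose Frostman exponent is exactly ${D_\mu}(\alpha)/\delta$: more precisely, for every $s<{D_\mu}(\alpha)/\delta$, one has $\nu_{\alpha,\delta}(B(x,r))\le C r^s$ for $r$ small. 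By the mass distribution principle this immediately yields $\nu_{\alpha,\delta}(E)=0$ whenever $\dim E<{D_\mu}(\alpha)/\delta$, hence $\dim U_\mu(\alpha,\delta,\mathcal F,\widetilde\beta)\ge \dim \nu_{\alpha,\delta}\ge {D_\mu}(\alpha)/\delta$.

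For the second, uniform part of the statement, I would  build $\widetilde \nu$ by a diagonal Cantor construction: at step $p$ of the construction, one refines the current generation of cylinders by invoking the ubiquity theorem applied to the $p$-th family $(\alpha^{(p)},\mathcal F^{(p)},\delta^{(p)})$, thereby ensuring that the points of $\mathrm{supp}(\widetilde\nu)$ eventually lie in $U_\mu(\alpha^{(p)},\delta^{(p)},\mathcal F^{(p)},\widetilde\beta^{(p)})$ for every $p$. Because the $p$-th refinement can be carried out at an arbitrarily deep scale, the Frostman exponent of $\widetilde\nu$ at scale $r$ can be kept above $\min_{q\le p(r)}{D_\mu}(\alpha^{(q)})/\delta^{(q)}-\varepsilon_r$ with $p(r)\to\infty$ and $\varepsilon_r\to 0$ as $r\to 0$, which in the limit gives the required $\inf_p {D_\mu}(\alpha^{(p)})/\delta^{(p)}$ lower bound via the mass distribution principle again.

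The main obstacle is really bookkeeping in the Cantor construction: one must simultaneously (i) pick, at each generation, enough typical balls from $\mathcal F$ (resp. each $\mathcal F^{(p)}$) to place a significant amount of $\mu_\alpha$-mass on their $\delta$-contractions so that the resulting measure has the right Frostman dimension, and (ii) preserve the covering property inside the shrinking pieces of the Cantor set, whose scales behave like $r_n^\delta$ instead of $r_n$; the loss of a factor $\delta$ is precisely what produces the $1/\delta$ in the lower bound. Since this delicate verification is exactly what is done in \cite{BSubiquity3,Durand}, my proof would reduce to stating the abstract ubiquity theorem, checking its assumptions for $(\mu_\alpha,\mathcal F,\widetilde\beta)$ as above, and carrying out the diagonal extension for the uniform statement.
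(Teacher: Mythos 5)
Your proposal is correct and follows essentially the same route as the paper: Theorem~\ref{ubi} is presented there as a summary of the heterogeneous ubiquity results of \cite{BSubiquity1,BSubiquity2,BSubiquity3,Durand}, with no proof given beyond the citation, and your plan of verifying the hypotheses for $(\mu_\alpha,\mathcal F,\widetilde\beta)$ and invoking that machinery is exactly what is intended. The only (minor) divergence is in the countable-intersection part, where the paper appeals to the large intersection property of the sets $U_\mu(\alpha,\delta,\mathcal F,\widetilde\beta)$ established in \cite{BSubiquity3,Durand}, whereas you propose an explicit diagonal Cantor construction; both are standard and yield the same conclusion.
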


The last property is due to the fact that the sets $U_\mu\big(\alpha,\delta,\mathcal{F}, \widetilde\beta\big) $ enjoy the large intersection property, i.e. when intersecting a countable number of them, the Hausdorff dimension of the resulting set is at least the infimum of all the dimensions, see \cite{BSubiquity3,Durand}.

\mk

We are going to apply Theorem \ref{ubi} with specific families $(x_n,r_n)_{n\geq 1}$:

\mk

{Let   $\mathcal{D}_\ell$ be a dense countable subset of $[\eta_\ell,\eta]\setminus\{0\}$, such that $\widetilde\eta\in\mathcal D_\ell$.} With probability~1, for all $\eta'\in \mathcal D_\ell$, Proposition~\ref{p'2} proves  the existence of   words $w\in \mathcal{S}_j(\eta,W) \cap \Tl(j,\eta',\ep^3_j)$,  for $j$ large enough, for all $W \in \Sigma_{\lfloor \eta' j\rfloor}$. For such a surviving word $w$, we set $r_w= 2\cdot 2^{-\lfloor \eta'j\rfloor}$. The  sequence of  couples $(x_w,r_w)$ obtained in this way is  denoted 
$$\mathcal{F}_{\eta'}:= \big(x_n(\eta'), r_n(\eta')  \big)_{n\geq 1}$$
 after being re-ordered so that the sequence of radii $ (r_n(\eta'))_{n\geq 1}$ is non-increasing. By construction, the covering property \eqref{eq17s} is satisfied for the family $\mathcal{F}_\eta'$, so that the second part of Theorem \ref{ubi} can be applied  with the countable number of families $\big(\mathcal{F}_{\eta'}\big)_{\eta'\in\mathcal{D}_\ell}$.

\subsection{The right part of the spectrum $D_{\M_\mu}$ }\label{sharplow1} \ 

Recall that for $ H\in \big [{{H_\ell}}(\widetilde\eta)+{\Ht_\ell}(\widetilde\eta), {H_{\max}}+{\Ht_\ell}(\widetilde\eta)\big ]$ we defined $\alpha_H=H-{\Ht_\ell}(\widetilde\eta)$. Also recall Theorem~\ref{cantor} in which the measure $\nu_{\alpha_H}$ is defined.    
%
\begin{lemma}
\label{lemmino}
With probability 1, for all $H\in  [{{H_\ell}}(\widetilde\eta)+{\Ht_\ell}(\widetilde\eta), {H_{\max}}+{\Ht_\ell}(\widetilde\eta)]$, there exists a set $G_H  \subset \widetilde E_\mu(\alpha_H)$ such that:
\begin{itemize}
\item
$\nu_{\alpha_H}(G_H) =1$
\item
 for all $x\in G_H$, for all integers $N> 1/\eta$, there exists $J_N(x) \ge 1$ such that for all $J\ge J_N(x)$, for all $ J \le j < J/(\weta_{N,-1})$, one has 
 $$ \bigcup_{W\in \mathcal{N}_J(x)} \mathcal{S}_{j }(\eta,W)=\emptyset.$$
 \end{itemize}

\end{lemma}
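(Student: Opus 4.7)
The plan is to deduce Lemma~\ref{lemmino} almost entirely from part~(1) of Theorem~\ref{cantor}, by choosing the exponent $\delta$ in that theorem so as to match the scale range $[J, J/\weta_{N,-1})$ appearing in the statement.

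First I would fix an integer $N > 1/\eta$ and observe that, since $\eta_0 < \eta$ by definition,
$$\weta_{N,-1} = \eta + (\eta-\eta_0)/N > \eta,$$
so $\delta_N := \weta_{N,-1}/\eta > 1$ is a legitimate parameter to feed into Theorem~\ref{cantor}(1). That theorem then guarantees that, with probability~1, for every $\alpha \in [H_{\min},H_{\max}]$ the measure $\nu_\alpha$ charges with full mass the set $G_\alpha^N$ of points $x$ for which there exists $j_0(x,N)$ such that, for all $j \geq j_0(x,N)$ and all $w \in \mathcal{S}_j(\eta)$,
$$|x - x_w|_\infty > \bigl(2\cdot 2^{-\lfloor \eta j\rfloor}\bigr)^{\delta_N}.$$

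The second step is an elementary geometric comparison. Suppose $J \leq j < J/\weta_{N,-1}$, which is equivalent to $\delta_N\,\eta j < J$; in particular $\delta_N \lfloor \eta j\rfloor < J$, whence
$$\bigl(2\cdot 2^{-\lfloor \eta j\rfloor}\bigr)^{\delta_N} = 2^{\delta_N - \delta_N \lfloor \eta j\rfloor} > 2^{\delta_N - J} > 2\cdot 2^{-J},$$
where the last inequality uses $\delta_N > 1$. On the other hand, if there were a word $w \in \mathcal{S}_j(\eta,W)$ with $W \in \mathcal N_J(x)$, the inclusion $I_w \subset I_W$ and the neighbor condition would force $|x - x_w|_\infty \leq 2\cdot 2^{-J}$. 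For $x \in G_{\alpha_H}^N$ and $J \geq J_N(x) := j_0(x,N)$, combining these three inequalities produces the contradiction $|x-x_w|_\infty > 2\cdot 2^{-J} \geq |x-x_w|_\infty$, so no such $w$ can exist, which is exactly the conclusion of the Lemma at scale $J$.

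Finally I would set
$$G_H = \widetilde E_\mu(\alpha_H) \cap \bigcap_{N > 1/\eta} G_{\alpha_H}^N.$$
This is a countable intersection of full-$\nu_{\alpha_H}$-measure sets, so $\nu_{\alpha_H}(G_H)=1$; the inclusion $G_H \subset \widetilde E_\mu(\alpha_H)$ is built in, and the ``for all $H$ simultaneously, almost surely'' clause is inherited from the corresponding uniformity already present in the statement of Theorem~\ref{cantor}. The only real subtlety, rather than an obstacle, is the matching between the two scales: one must pick $\delta_N$ so that the exclusion radius $(2\cdot 2^{-\lfloor \eta j\rfloor})^{\delta_N}$ provided by Theorem~\ref{cantor}(1) exceeds the neighbor radius $2\cdot 2^{-J}$ exactly on the range $j < J/\weta_{N,-1}$, and the choice $\delta_N = \weta_{N,-1}/\eta$ does this by design. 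No Borel--Cantelli or independent probabilistic work beyond Theorem~\ref{cantor}(1) is needed.
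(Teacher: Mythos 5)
Your proof is correct and follows essentially the same route as the paper's: both deduce the lemma from Theorem~\ref{cantor}(1) by matching the exclusion radius $(2\cdot 2^{-\lf \eta j\rf})^{\delta}$ against the neighbour radius $2\cdot 2^{-J}$ over the range $J\le j<J/\weta_{N,-1}$ (the paper phrases this as a contradiction on a positive-measure set, you argue pointwise on the full-measure complement, which is the same argument). Your explicit choice $\delta_N=\weta_{N,-1}/\eta$, the geometric comparison, and the countable intersection over $N$ are all sound.
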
 

\begin{proof}
Notice that for all $N\ge 1 $ we have $1/\weta_{N,-1}<1/\eta$. Assume towards contradiction  that with positive probability,  there exists $H\in [{{H_\ell}}(\widetilde\eta)+{\Ht_\ell}(\widetilde\eta), {H_{\max}}+{\Ht_\ell}(\widetilde\eta)]$, a set  \red{$F_H$} of positive $\nu_{\alpha_H}$-measure, and $N\ge 1$  such that 
$$
F_H\subset\bigcap_{J\ge 1} \ \bigcup_{j\ge J} \   \bigcup_{w\in\mathcal{S}_j(\eta) }  B\big (x_w,(2\cdot 2^{-\lfloor \eta j \rfloor})^\delta\big)$$
for all  $\delta\in (1,\weta^{-1}\weta_{N,-1})$.  
This contradicts Theorem~\ref{cantor}. 

Consequently we get that, with probability 1, for all $H\in [{{H_\ell}}(\widetilde\eta)+{\Ht_\ell}(\widetilde\eta), {H_{\max}}+{\Ht_\ell}(\widetilde\eta)]$ a set $G_H $ such that the two items of the statement hold. Moreover,   $ G_H$ can be taken a subset of $\widetilde E_\mu(\alpha_H)$  since  $\nu_{\alpha_H}\big(\widetilde E_\mu(\alpha_H)\big) =1$.
\end{proof}

Now, we prove that $D_{\Mm}(H) \geq D_{\mu} \big(H-\Ht(\weta)\big)$.

\mk Consider a set $\Omega'$ of  probability 1 over which the conclusions of Theorem \ref{cantor}  and Lemma~\ref{lemmino} hold true.

\begin{lemma} For all $\omega\in\Omega'$, for all $H\in [{{H_\ell}}(\widetilde\eta)+{\Ht_\ell}(\widetilde\eta), {H_{\max}}+{\Ht_\ell}(\widetilde\eta)]$, one has  $G_H\subset    \underline E_{\M_\mu}(H) $.

\end{lemma}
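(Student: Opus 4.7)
Fix $x \in G_H$ and set $\alpha_H := H - \Ht_\ell(\weta) \in [H_\ell(\weta), H_{\max}]$. I will establish the two inequalities $\underline\dim_{\mathrm{loc}}(\M_\mu, x) \leq H$ and $\underline\dim_{\mathrm{loc}}(\M_\mu, x) \geq H$ separately, from which the conclusion $x \in \underline E_{\M_\mu}(H)$ follows.

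The upper bound is immediate. Since $G_H \subset \widetilde E_\mu(\alpha_H)$, both the upper and the lower local dimensions of $\mu$ at $x$ equal $\alpha_H$; applying Proposition~\ref{majexp} yields $\overline\dim_{\mathrm{loc}}(\M_\mu, x) \leq \overline\dim_{\mathrm{loc}}(\mu, x) + \Ht_\ell(\weta) = \alpha_H + \Ht_\ell(\weta) = H$, and a fortiori $\underline\dim_{\mathrm{loc}}(\M_\mu, x) \leq H$.

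For the matching lower bound, fix $\ep > 0$ and choose $N$ so large that the sequence $\ep_N$ of Proposition~\ref{discretization} satisfies $\ep_N < \ep$. By definition, $\M_\mu(I_{x_{|J}})$ equals the maximum of $\mu(I_w)$ taken over all surviving $w \in \mathcal S_{|w|}(\eta)$ with $I_w \subset I_W$ for some $W \in \mathcal N_J(x)$. The plan is to show that, for all sufficiently large $J$ and every such $w$ with $j := |w|$, one has $\mu(I_w) \leq 2^{-J(H - O(\ep))}$, by splitting the analysis according to the value of $j$. The case $J \leq j < J/\weta_{N,-1}$ is excluded by Lemma~\ref{lemmino}(2). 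For $j \in [J/\weta_{N,-1}, J/\eta_0]$, picking the unique $k \in \{-1, \ldots, N-1\}$ with $j \in [J/\weta_{N,k}, J/\weta_{N,k+1}]$, the non-occurrence of $\mathscr C(N, J, W)$ for all $W \in \mathcal N_J(x)$ (Theorem~\ref{cantor}(2), strengthened from the single cylinder $x_{|J}$ to all its $3^d$ neighbors by a short Fubini--Borel--Cantelli argument based on the estimate $\mathbb P(\mathscr C(N, J, W)) \leq 2^{-J \ep_N}$ from Proposition~\ref{discretization}) gives $\mu(I_{\sigma^J w}) \leq 2^{-J(\Ht_\ell(\weta_{N,k}) + \ep_N)}$; combined with the bound $\mu(I_{w_{|J}}) \leq 2^{-J(\alpha_H - \ep)}$ (valid since $w_{|J} \in \mathcal N_J(x)$ and $x \in \widetilde E_\mu(\alpha_H)$), the quasi-Bernoulli property~\eqref{quasib}, and the inequality $\Ht_\ell(\weta_{N,k}) \geq \Ht_\ell(\weta)$ (since $\weta$ minimizes $\Ht_\ell$ over $[\eta_\ell, \eta] \supset [\eta_0, \eta]$, as $\weta \geq \eta_0$), this yields the desired estimate on $\mu(I_w)$. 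The last regime $j > J/\eta_0$ is handled by the algebraic identity $1/\eta_0 \geq 1 + \Ht_\ell(\weta)/H_{\min}$, which holds with equality when $\eta_\ell = 0$ by direct expansion of $\eta_0 = H_{\min}/(H_{\min} + \Ht_\ell(\weta))$, and when $\eta_\ell > 0$ follows from $1/\eta_\ell = 1 + \Ht_\ell(\eta_\ell)/H_{\min}$ combined with $\Ht_\ell(\eta_\ell) \geq \Ht_\ell(\weta)$; this forces $j - J \geq J \Ht_\ell(\weta)/H_{\min}$, and the uniform pointwise upper bound $\mu(I_v) \leq 2^{-|v|(H_{\min} - \ep)}$ for $|v|$ large (a consequence of Proposition~\ref{chernov} with $H = H_{\min} - \ep$, since $\tau_\mu^*(H_{\min} - \ep) = -\infty$) applied to $v = \sigma^J w$, combined with the bound on $\mu(I_{w_{|J}})$ via quasi-Bernoulli, yields again $\mu(I_w) \leq 2^{-J(H - O(\ep))}$. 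Taking the maximum over all admissible $w$ gives $\M_\mu(I_{x_{|J}}) \leq 2^{-J(H - O(\ep))}$, and letting $\ep \to 0$ concludes that $\underline\dim_{\mathrm{loc}}(\M_\mu, x) \geq H$.

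The main obstacle is the third regime: one must recognize that the choice of $\eta_0$ in Proposition~\ref{discretization} is precisely calibrated so that the large-deviations upper bound on suffixes of length $\geq J(1/\eta_0 - 1)$ reaches the target exponent $\Ht_\ell(\weta)$, thereby closing the gap left by the events $\mathscr C(N, J, W)$, which control only generations $j \leq J/\eta_0$. A secondary technical step is the extension of Theorem~\ref{cantor}(2) from the single cylinder $x_{|J}$ to all its $3^d$ neighbors, which is carried out by a short Borel--Cantelli argument using the exponentially decaying probability bound of Proposition~\ref{discretization}.
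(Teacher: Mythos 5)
Your proof is correct and follows essentially the same route as the paper: the same splitting of the generations $j$ of the surviving words into the three regimes $[J, J/\weta_{N,-1})$, $[J/\weta_{N,k}, J/\weta_{N,k+1}]$ and $(J/\eta_0,+\infty)$, handled respectively by Lemma~\ref{lemmino}, the non-occurrence of $\mathscr{C}(N,J,\cdot)$, and the calibration $(1/\eta_0-1)H_{\min}\ge \Ht_\ell(\weta)$, all combined through the quasi-Bernoulli factorization, with Proposition~\ref{majexp} giving the matching upper bound. Your explicit Borel--Cantelli extension of Theorem~\ref{cantor}(2) from the cylinder $x_{|J}$ to its $3^d$ neighbors is a detail the paper passes over silently, and it is a welcome clarification rather than a deviation.
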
 

\begin{proof}
Take  $\omega\in \Omega'$, and fix an integer $N>1/\eta$. Fix $x\in G_H$.
 We focus on  the values of $M_\mu(I_J(x))$. We   analyse the values of $\mu(I_w) $ when  $w\in \mathcal{S}_j(\eta)$ is a surviving vertex such that $I_w$ is included in the neighborhood $\mathcal{N}_J(x)$ of $x$. For this, we apply   the quasi-Bernoulli  property \eqref{quasib} to get
$$\mu(I_w) \approx  \mu(I_{w_{|J}}) \mu(I_{\sigma^Jw}),$$
and  we use some of the inequalities we proved above. 

First, combining part (2) of Theorem \ref{cantor}  and Lemma \ref{lemmino}, for all $J$ large enough, for all $W\in \mathcal N_J(x)$, one has:
\begin{itemize}
\item for all $J\le j \le J/\weta_{N,-1}$, $ \bigcup_{W\in \mathcal{N}_J(x)} \mathcal{S}_{j }(\eta,W)=\emptyset;$

\item
for all $-1\le k\le N-1$, for all $J/\weta_{N,k}\le j \le J/\weta_{N,k+1}$, for all $w\in \mathcal{S}_j(\eta,W)$,    
$$\mu(I_{\sigma^J w})\le 2^{-J({\Ht_\ell}(\eta_{N,k})-\ep_N)} \le 2^{-J (\eta_{N,k}^{-1}-1)\ep_N}  2^{- J{\Ht_\ell}(\weta)} ;$$
\item
 if $j>J/\eta_0$, for all $w\in \Sigma_{j}$ such that $I_w\subset I_{W}$, we have  
 $$\mu(I_{\sigma^J w})\le 2^{-J (\eta_0^{-1}-1)(H_{\min}-\epsilon_N)} \leq \begin{cases} 2^{- J (\eta_0^{-1}-1)(H_{\min} - \epsilon_N)}\le 2^{   -J \big({\Ht_\ell}(\weta) - (\eta_0^{-1}-1)\epsilon_N\big)} & \mbox{if } \eta_\ell=0\\ 2^{-j(\eta_\ell^{-1}-1)({H_\ell}(\eta_\ell)-\epsilon_N)}\le 2^{ -J \big( {\Ht_\ell}(\weta)-   (\eta_\ell^{-1}-1)\epsilon_N\big)}& \mbox{if } \eta_\ell>0.   \end{cases}.
 $$
\end{itemize}

Secondly, since $x\in G_H \subset  \widetilde E_\mu(\alpha_H)$, there exists a sequence $(\widetilde\ep_J)_{J\geq 1}$ (depending on $x$) tending to 0 as $J \to +\infty$ such that  $x_{|J} \in \mathcal{E}_\mu(J,\alpha_H\pm \widetilde\ep_J)$. In particular, one has for $I_W\in \mathcal{N}_J(x)$, 
$$    \mu(I_W) \leq 2^{-J (\alpha_H -\wep_J)}.$$

When $\etal=0$, combining  the previous inequalities,   we get 
\begin{eqnarray*}
\M_\mu(I_J( x))& =  & \max \{\mu(I_w) :w\in \mathcal{S}_j(\eta,W),  \ W  \in \mathcal{N}_J(x)\}\\
& \leq & C \cdot \max\{\mu(I_{W}): I_W \in\mathcal N_J(x)\} \cdot  \max \{\mu(I_{\sigma^J w}) : w\in \mathcal{S}_j(\eta,W),  \ W  \mathcal{N}_J(x) ) \}\\
&\le & C 2^{-J\big(\alpha_H-\widetilde \ep_J + {\Ht_\ell}(\weta) -(\eta_{0}^{-1}-1)\ep_N\big)}.
\end{eqnarray*}
Consequently,  
$$
{\underline \dim_\locloc}(\M_\mu,x)  \ge {\alpha_H+{\Ht_\ell}(\weta)- (\eta_{0}^{-1}-1)\ep_N=H-(\eta_{0}^{-1}-1)\ep_N}.
$$
This holds for all $N>1/\eta$ hence ${\underline \dim_\locloc}(\M_\mu,x) \ge H$, for every $x\in G_H$.  The same estimate are true when $\etal>0$ by replacing $\eta_0$ by $\etal$.

On the other side, by Proposition~\ref{majexp} we know that $  {\overline \dim_\locloc}(\M_\mu,x)  \le   {\overline \dim_\locloc}(\mu,x)  +{\Ht_\ell}(\weta)=\alpha_H+{\Ht_\ell}(\weta)=H$, hence $\underline  \dim(\M_\mu,x)=H$ {(in fact we obtained that $G_H\subset E_{\M_\mu}(H)$)}. 
\end{proof}

We can now conclude. Recall that  with probability 1, simultaneously  for all $H\in [{{H_\ell}}(\widetilde\eta)+{\Ht_\ell}(\widetilde\eta), {H_{\max}}+{\Ht_\ell}(\widetilde\eta)]$, we have $\nu_{\alpha_H}(G_H)=1$, so one has $\dim G_H \geq {D_\mu}(\alpha_H)={D_\mu}(H-{\Ht_\ell}(\widetilde\eta))$. 
Finally, since $G_H\subset    \underline E_{\M_\mu}(H) $, one has 
$$\dim\,  \underline E_{\M_\mu}(H) \ge \dim\,  G_H \ge  {D_\mu}(H-{\Ht_\ell}(\widetilde\eta)) .$$

\begin{remark}\label{remmin}
Observe that the previous arguments give also a lower bound for the Hausdorff dimension of the level sets of the limit local dimension:    for  any $H\in  [{H_{\min}}+{\Ht_\ell}(\widetilde\eta), {H_{\max}}+{\Ht_\ell}(\widetilde\eta)]$,   $\dim\, E_{\M_\mu}(H)\ge {D_\mu}(H-{\Ht_\ell}(\widetilde\eta))$.
\end{remark}

\subsection{The middle part of the spectrum $D_{\M_\mu}$}\label{sharplow2}    

\ 

\mk

 Let $H \in  [ {{H_\ell}}(\widetilde\eta), {{H_\ell}}(\widetilde\eta)+{\Ht_\ell}(\widetilde\eta)]$. We apply Theorem ~\ref{ubi}   with the parameters:
\begin{itemize}
\sk\item
$\eta'=\widetilde\eta$, 
\sk\item 
the family $\mathcal{F}_{\weta} = (x_n(\weta), r_n(\weta))_{n\geq 1}$,
\sk\item
$\alpha={H_\ell}(\widetilde\eta)$,
\sk\item
 $\delta= {{H_\ell}}(\widetilde\eta)/ (\widetilde\eta H)$ (which does belong
 to $[1,1/\weta]$).
\end{itemize}

There exists a sequence $\widetilde\beta:=(\widetilde\beta_n)_{n\geq 1}$ and a Borel probability measure $\nu_{\alpha,\delta}$ supported on the  set  $U_\mu\big ({H_\ell}(\weta),\delta,\mathcal{F}_{\weta} , \widetilde\beta \big )$  and such that 
$$ {\dim}\  \nu_{\alpha,\delta}  \ge\frac{\dim \mu_{{{H_\ell}}(\widetilde\eta)}   }{\delta} =  \weta H \frac{{D_\mu}({H_\ell}(\weta))}{{H_\ell}(\weta)} = D(H) .$$

\begin{lemma}
\label{lemminor2}
One has  $ U_\mu \big({H_\ell}(\weta),\delta,\mathcal{F}_{\weta} , \widetilde\beta \big) \subset  \underline E^\leq _{\M_\mu}(H)$.
\end{lemma}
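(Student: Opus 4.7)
The plan is to use the heterogeneous-ubiquity measure $\nu_{{H_\ell}(\weta),\delta}$ produced by Theorem~\ref{ubi} to exhibit, for every $x\in U_\mu\big({H_\ell}(\weta),\delta,\mathcal F_{\weta},\widetilde\beta\big)$, an infinite sequence of dyadic scales $j_k^\ast$ along which $\M_\mu(I_{j_k^\ast}(x))$ is lower-bounded by $2^{-j_k^\ast H+o(j_k^\ast)}$, which is exactly what is needed to force ${\underline\dim_\locloc}(\M_\mu,x)\le H$, i.e. $x\in\underline E^{\le}_{\M_\mu}(H)$.

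Fix $x$ in that limsup-type set. By definition of $U_\mu$, there are infinitely many indices $n_k$ for which both the proximity condition $|x-x_{n_k}(\weta)|\le r_{n_k}(\weta)^\delta$ and the scaling condition
$$r_{n_k}(\weta)^{{H_\ell}(\weta)+\widetilde\beta_{n_k}}\le \mu_{{H_\ell}(\weta)}\big(B(x_{n_k}(\weta),r_{n_k}(\weta))\big)\le r_{n_k}(\weta)^{{H_\ell}(\weta)-\widetilde\beta_{n_k}}$$
hold. By construction of $\mathcal F_{\weta}$, each center is of the form $x_{n_k}(\weta)=x_{w_k}$ for a surviving word $w_k\in \mathcal S_{j_k}(\eta,W_k)\cap \Tl(j_k,\weta,\ep^3_{j_k})$ produced by Proposition~\ref{p'2}, with $W_k=(w_k)_{|\lfloor\weta j_k\rfloor}$ and $r_{n_k}(\weta)=2\cdot 2^{-\lfloor\weta j_k\rfloor}$. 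Because $\mu_{{H_\ell}(\weta)}$ is the Gibbs measure at the parameter $q_{\weta}$ (with $\tau_\mu'(q_{\weta})={H_\ell}(\weta)$), the above two-sided inequality translates, via the thermodynamic comparison between $\mu$ and $\mu_{{H_\ell}(\weta)}$, into
$$\mu(I_{W_k})=2^{-\lfloor\weta j_k\rfloor({H_\ell}(\weta)+o(1))}.$$
Combining this with the $\Tl$-control $\mu(I_{\sigma^{\lfloor\weta j_k\rfloor}w_k})=2^{-(j_k-\lfloor\weta j_k\rfloor)({H_\ell}(\weta)+o(1))}$ and the quasi-Bernoulli inequality~\eqref{quasib} yields
$$\mu(I_{w_k})\ge C^{-1}\mu(I_{W_k})\,\mu(I_{\sigma^{\lfloor\weta j_k\rfloor}w_k})\ge 2^{-j_k({H_\ell}(\weta)+o(1))}.$$

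Introduce now the scale $j_k^\ast:=\lfloor\weta j_k\delta\rfloor$. Since $H\in[{H_\ell}(\weta),{H_\ell}(\weta)+\Ht_\ell(\weta)]=[{H_\ell}(\weta),{H_\ell}(\weta)/\weta]$, the choice $\delta={H_\ell}(\weta)/(\weta H)$ lies in $[1,1/\weta]$, so $\lfloor\weta j_k\rfloor\le j_k^\ast\le j_k$. The proximity condition then gives $|x-x_{w_k}|\le 2\cdot 2^{-\lfloor\weta j_k\rfloor\delta}\lesssim 2^{-j_k^\ast}$, which places $x_{w_k}$ in (an enlargement by a fixed number of cubes of) the neighborhood of $I_{j_k^\ast}(x)$. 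Since $j_k\ge j_k^\ast$, the cube $I_{w_k}$ is in turn contained in the cube $I_u$ of some $u\in\mathcal N_{j_k^\ast}(x)$, and the definition~\eqref{defnewmmu} yields
$$\M_\mu(I_{j_k^\ast}(x))\ge \mu(I_{w_k})\ge 2^{-j_k({H_\ell}(\weta)+o(1))}.$$
Dividing by $j_k^\ast=\weta j_k\delta+O(1)=j_k {H_\ell}(\weta)/H+O(1)$ gives
$$\frac{-\log_2\M_\mu(I_{j_k^\ast}(x))}{j_k^\ast}\le \frac{{H_\ell}(\weta)}{\weta\delta}+o(1)=H+o(1),$$
and letting $k\to\infty$ we conclude ${\underline\dim_\locloc}(\M_\mu,x)\le H$, as desired.

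The main subtlety to handle carefully is the Gibbs translation in the second paragraph: transferring the scaling assumption on $\mu_{{H_\ell}(\weta)}\big(B(x_{w_k},r_{n_k})\big)$ into the matching estimate on $\mu(I_{W_k})$. This uses that $\mu$ and $\mu_{{H_\ell}(\weta)}$ are linked by the Gibbs relation $\mu_{{H_\ell}(\weta)}(I_w)\cdot 2^{-|w|\tau_\mu(q_{\weta})}\asymp \mu(I_w)^{q_{\weta}}$, and that $I_{W_k}$ is an essentially minimal dyadic cube containing the ball $B(x_{w_k},r_{n_k})$, so that comparing masses of cubes and balls costs only subexponential factors absorbed into $o(j_k)$. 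Once this bridge is set, the rest of the argument is purely book-keeping, with the choice of scale $j_k^\ast$ tuned precisely so that the exponents ${H_\ell}(\weta)$ appearing on the root and tail of $w_k$ telescope into $H$.
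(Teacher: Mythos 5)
Your argument is correct and follows the paper's proof essentially step for step: extract the surviving words $w_k$ realizing the limsup, lower-bound $\mu(I_{w_k})$ via the quasi-Bernoulli property from the root and tail estimates, and compare at the scale $J_k\approx \delta\weta j_k$ chosen so that the exponents telescope to $H_\ell(\weta)/(\weta\delta)=H$. The only divergence is your Gibbs-comparison step converting the $\mu_{H_\ell(\weta)}$-condition of \eqref{defU} into $\mu(I_{W_k})\approx 2^{-\lfloor\weta j_k\rfloor H_\ell(\weta)}$: the paper simply reads the selection condition of $U_\mu$ as $w_k\in\Rmu(j_k,\weta,H_\ell(\weta)\pm\widetilde\beta_{\lfloor\weta j_k\rfloor})$, i.e.\ directly as a condition on $\mu$, and indeed if one took the exponent $\alpha$ on $\mu_\alpha(B(x_n,r_n))$ in \eqref{defU} at face value, the relation $\mu(I_W)^{q_\weta}\asymp \mu_{H_\ell(\weta)}(I_W)\,2^{-|W|\tau_\mu(q_\weta)}$ would yield the exponent $(\alpha+\tau_\mu(q_\weta))/q_\weta$ rather than $\alpha$ on $\mu(I_{W_k})$ -- so the reading you (and the paper) ultimately use is that the selected balls satisfy $\mu(B(x_n,r_n))\approx r_n^{\alpha}$.
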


\begin{proof}
Let $x\in U_\mu \big ({H_\ell}(\weta),\delta,\mathcal{F}_{\weta} , \widetilde\beta\big)$. By definition of this limsup set, there is an increasing  sequence of integers $(j_k)_{k\ge 1}$ and words $w_k\in \mathcal{S}_{j_k}(\eta) \cap \Rmu (j_k, \widetilde \eta, {H_\ell}(\weta)\pm \widetilde \beta_{j_k}) \cap \Tl (j_k, \widetilde \eta,\ep^3_{j_k}) $ such that for each $k\ge 1$, $x\in B\big(x_{w_k },(2\cdot 2^{-\lfloor \widetilde\eta j_k\rfloor})^\delta\big)$. In other words,   $w_k$ satisfies  
$$
\begin{cases}
 \ \ 2^{-\lfloor \widetilde \eta j_k\rfloor({H_\ell}(\weta)+\widetilde \beta_{\lfloor \widetilde \eta j_k\rfloor})} \ \  \, \le\  \mu(I_{{w_k }_{|\lfloor \widetilde \eta j_k\rfloor}}) \ \le  \ 2^{-\lfloor \widetilde \eta j_k\rfloor({H_\ell}(\weta)-\widetilde \beta_{\lfloor \widetilde \eta j_k\rfloor})}\\
  \ \ 2^{-(j_k-\lfloor \weta j_k\rfloor)({{H_\ell}}(\widetilde \eta)+\ep^3_{j_k})}\le  \ \mu(I_{\sigma^{\lfloor\widetilde \eta j_k\rfloor} {w_k }})\le  \ 2^{-(j_k-\lfloor \widetilde \eta j_k\rfloor)({{H_\ell}}(\widetilde \eta)-\ep^3_{j_k})}.
\end{cases}
$$

Consider for each $k\ge 1$ the  largest integer $J_k$ such that $2^{-J_k} \geq    
(2\cdot 2^{-\lfloor \widetilde \eta j_k\rfloor})^{\delta}$.  With such a choice, one has $I_{w_k}\subset \mathcal{N}_{J_k}(x)$, so that $\M_{J_k}(x)    \geq   \mu(I_{w_k })$. Since $J_k = \delta \widetilde \eta j_k +o(1/k)$, one concludes that 
\begin{eqnarray*}
\M_{J_k}(x)    \geq    \mu(I_{w_k }) \geq C^{-1}2^{-\lfloor \widetilde \eta j_k\rfloor({H_\ell}(\weta)+\widetilde \beta_{\lfloor \widetilde \eta j_k\rfloor})}2^{-(j_k-\lfloor \weta j_k\rfloor)({{H_\ell}}(\widetilde \eta)+\ep^3_{j_k})}\geq 2^{-\frac{J_k}{\delta}( {H_\ell}(\weta)+  \widehat{\beta}_k) },
\end{eqnarray*}
for some sequence $\widehat{\beta}_k$  converging to 0 as $k\to +\infty$. Taking  the liminf as $k\to + \infty$ on both sides yields ${\underline \dim_\locloc}(\M_\mu,x) \leq H$. 
\end{proof}

From the previous lemma, we deduce that 
$$ \nu_{\alpha,\delta} \Big (E^\leq _{\M_\mu}(H)  \Big) \geq \nu_{\alpha,\delta} \big (U_\mu \big({H_\ell}(\weta),\delta,\mathcal{F}_{\weta} , \widetilde\beta \big) >0.$$

But for any $H'<H$, applying formula \eqref{majdimsss} of Remark \ref{rk1s}, one knows that   $\dim E^\leq _{\M_\mu}(H') \leq D(H') < D(H)$. In addition, Theorem \ref{ubi} asserts that $  \nu_{\alpha,\delta} \big (E^\leq _{\M_\mu}(H') \big) =0$.   We deduce that 
$$  \nu_{\alpha,\delta}   \left(E^\leq _{\M_\mu}(H) \setminus \bigcup_{n\geq 1} 
  \underline E^\leq_{\M_\mu} (H-1/n) \right) =1.$$
  Since $\underline E_{\M_\mu}(H) = E^\leq _{\M_\mu}(H) \setminus \bigcup_{n\geq 1} 
 \underline E^\leq_{\M_\mu}(H-1/n) $, we conclude that  $  \nu_{\alpha,\delta} \big (\underline E_{\M_\mu}(H)\big) =1$, i.e.  $D_{\M_\mu}(H)  \geq D(H)$. Since we already proved the converse inequality, equality holds.

\subsection{The   left part of the spectrum $D_{\M_\mu}$}\label{sharplow3}

$\ $
\mk

 Let $H \in   \big [ {{H_\ell}}(\eta_\ell),{{H_\ell}}(\widetilde\eta) \big )$. Recall the definition \eqref{defetaH} of $\eta_H$ when $H\in ({{H_\ell}}(\eta_\ell),{{H_\ell}}(\widetilde\eta))$.
When $H=  {{H_\ell}}(\eta_\ell)$, we set  $\eta_{{H_\ell}(\eta_\ell)}=\eta_\ell$.
   Let $(H^{(p)})_{p\ge 1}$ be a decreasing sequence of  real numbers in the interval  $ \big ({{H_\ell}}(\eta_\ell),{{H_\ell}}(\widetilde\eta) \big)$ converging to $H$, with the constraint that $\eta_{H^{(p)}} \in \mathcal{D}_\ell$. For each $p\ge 1$, we consider  any sequence $(\delta^{(p)})_{p\ge 1}$ converging to $ 1/ \eta_H $ as $p\to + \infty$,  and such that  the sequence of real numbers $\displaystyle \left ( \frac{{D_\mu}(H^{(p)})}{ \delta^{(p)}} \right )_{p\ge 1}$ is non increasing.
 
We apply the second part of Theorem \ref{ubi}:  there exist a collection of  positive sequences $ \big (\widetilde\beta^{(p)}:=(\widetilde\beta^{(p)}_n )_{n\ge 1} \big )_{p\ge 1}$    converging to 0, such that the  set $\bigcap_{p\ge 1} U_{\mu}\big (H^{(p)},\delta^{(p)}, \mathcal{F}_{\eta_{H^{(p)}}}, \widetilde\beta^{(p)} \big ) $ supports a measure  $\widetilde \nu_H$, whose dimension is greater than   or equal to
$$\inf_{p\geq 1} \frac{{D_\mu}(H^{(p)})}{\delta^{(p)}} = \eta_H  {D_\mu}(H)= D(H).$$
 
\mk

{Also, similarly to what was done in Section \ref{sharplow2}, we can get 
$$\bigcap_{p\geq 1}U_{\mu}\big (H^{(p)},\delta^{(p)}, \mathcal{F}_{ \eta_{H^{(p)} }}, \widetilde\beta^{(p)} \big )  \ \subset \   \underline E^\leq _{\M_\mu}(H)$$
and $ \widetilde \nu_H \left( \underline E^\leq_{\M_\mu}(H-1/n) \right) =0$ for all $n\ge 1$.  This yields
$$\widetilde \nu_H \left(E^\leq _{\M_\mu}(H) \setminus \bigcup_{n\geq 1} 
  \underline E^\leq_{\M_\mu}(H-1/n) \right) =1,$$
and finally that  $D_{\M_\mu}(H) =D(H)$.}

 \subsection{Proof of  Part (1) of Theorem \ref{cantor}}
 \label{sec5proof}
 
 If $\alpha \in (H_{\min},H_{\max})$, we can choose for $\nu_\alpha$  the Gibbs measure $\mu_\alpha$ of item (4) of Proposition~\ref{fm}  and it is not too difficult  to prove the desired property by using natural coverings because the Hausdorff dimension of $\mu_\alpha $ is positive.  
 
\medskip

We give a construction of a measure $\nu_\alpha$ that works for   $\alpha\in\{H_{\min},H_{\max}\}$, based on a concatenation method. It is also possible to adapt this method to get   another   choice for $\nu_\alpha$ when  $\alpha \in (H_{\min},H_{\max})$ (as explained at the end of the proof). 
 
\medskip

Due to Lemma~\ref{lem2} we can fix a positive sequence $(\ep_j)_{j\ge 1}$ converging to 0, such that, with probability 1, for $j$ large enough,  for all $W\in\Sigma_{\lfloor \eta j \rfloor }$, 
\begin{equation}
\label{eq13sss}
\#   \mathcal{S}_j(\eta,W)  \le 2^{\eta j \ep_j}.
\end{equation}
 Without loss of generality we can assume that $(\ep_j)_{j\ge 1}$ is non-increasing, $1/j\le \ep_j\le d$ for all $j\ge 1$, and  $\ep_{j+1}/\ep_j$ converges to 1 as $j\to +\infty$.

\mk

We treat the case $H_{\min}$, the case $H_{\max}$ is identical.
\subsubsection{Construction of the measure $\nu_{H_{\min}} $ and an associated Cantor set $\mathcal{C}_{H_{\min}}$}

 Let $(q_k)_{k\ge 1}$ be an increasing  sequence of real numbers, and let $\alpha_k:=\tau'_\mu(q_k)$. We assume that :
 \begin{itemize}
 \sk\item
  If ${D_\mu}(H_{\min})=0$, we choose $q_k$ such that ${D_\mu}(\alpha_{k})=\sqrt{\ep_k}$, for every $ k\ge 1$.
  
Hence $(q_k)$  satisfies $\lim_{k\to+\infty} \alpha_k =H_{\min}$ and    $\lim_{k\to\infty}{D_\mu}(\alpha_k)=0 $.

\sk
\item If ${D_\mu}(H_{\min})>0$, we choose $q_k$ such that   $\lim_{k\to+\infty} \alpha_k =H_{\min}$ and one also has  $\lim_{k\to+\infty}{D_\mu}(\alpha_k)={D_\mu}(H_{\min})$.
\end{itemize}
In all cases, by construction we have 
\begin{equation}
\label{condeps}
|{D_\mu}(\alpha_{k+1})-{D_\mu}(\alpha_k)|= \theta_k \sqrt{\ep_{k}}\le \theta_k{D_\mu}(\alpha_{k}),
\end{equation}
with $\lim_{k\to +\infty} \theta_k=0$.

\mk

We start by selecting some intervals words at which $\mu$ and $\mu_{\alpha_k}$ have the desired scaling properties.

\mk
Recall that by item (4) of Proposition \ref{fm}, the measure Gibbs $\mu_{\alpha_k}$ satisfies 
$$\mu_{\alpha_k}\big (\widetilde E_\mu(\alpha_k)\big) = \mu_{\alpha_k}\big(\widetilde {E}{\mu_{\alpha_k}} \big({D_\mu}(\alpha_k) \big) \big) =1.$$

Hence, for all $k\ge 1$,  the sets
\begin{eqnarray*}
\mathcal{A}_J ^k& = &   \big \{ W\in\Sigma_J:  \, \forall \, W'\in \mathcal N(W),  \ W'\in \mathcal{E}_\mu \big (J,\alpha_k\pm\ep_k \big ) \big \}\\
\text{and}\quad {\mathcal{B}_J  ^k}& = &  \big \{ W\in\Sigma_J: \, \forall \, W'\in \mathcal N(W), \, W'\in \mathcal{E}_{\mu_{\alpha_k}} \big (J, {D_\mu}(\alpha_k)\pm\ep_k \big ) \big \} 
\end{eqnarray*}
satisfy $\lim_{J\to +\infty} \mu_{\alpha_k}(\mathcal{A}^k_J) = \lim_{J\to +\infty} \mu_{\alpha_k}(\mathcal{B}^k_J) =1$.
Up to extraction of a subsequence, one deduces that there exists an integer $J_k\in\N_+$ and a collection $\mathcal{W}_{k}$ of words of generation $J_k$ such that the cubes $I_W$, $W \in \mathcal{W}_{k}$, are pairwise disjoint, $\sum_{W\in   \mathcal{W}_{k}}\mu_{\alpha_k}(I_W)\ge e^{-\ep_k}$, and 
\begin{equation}\label{ak}
 \forall\, W \in  \mathcal{W}_{k},\, \forall \, W'\in \mathcal N(W),\,   \ W'\in   \mathcal{E}_\mu(J,\alpha_k\pm\ep_k)\cap  \mathcal{E}_{\mu_{\alpha_k}}(J, {D_\mu}(\alpha_k)\pm\ep_k).
 \end{equation}
 
  Now, let $(N_k)_{k\ge 1}$ be an increasing sequence of integers such that for all $k\ge 1$, 
\begin{eqnarray}
\label{eq12sss}
\sum_{p=1}^{k-1} N_p J_p \max\big(1,\alpha_p+2\ep_p,{D_\mu}(\alpha_p)+2\ep_p\big)&\le  & \ep_k N_k J_k, \\
  \nonumber \frac{J_{k+1}}{N_{k} J_k}\max\big(1,\alpha_{k+1}+2\ep_{k+1},{D_\mu}(\alpha_{k+1})+2\ep_{k+1}\big) & \le  &\ep_{k+1}\alpha_k.
\end{eqnarray}

We also define $\widetilde J_k=\sum_{p=1}^{k} N_p J_p$, which satisfies
$$N_kJ_k\leq \widetilde J_k  \leq N_kJ_k(1+\ep_k).$$

Then we define  recursively a Cantor-like set $\mathcal{C}_{H_{\min}}$ and simultaneously a  Borel probability measure $\nu_{H_{\min}}$ on $[0,1]^d$ supported on $\mathcal{C}_{H_{\min}}$. To do so, we use a  construction by concatenation: the measure $\nu_{H_{\min}}$ behaves like $\mu_{\alpha_k}$ between the generations  $\widetilde J_{k-1} +1$ and $\widetilde J_k$. More precisely:

\sk
- We set $I_\emptyset = \zu^d$ and  $\nu_{H_{\min}} (\zu^d)=1$, 

\sk
-  For every $k\geq 1$,  we write  $ \widetilde W_k \in  \mathcal{W}_{ k}^{N_k}$ as 
  $\widetilde W_k=W_{k,1}\cdots W_{k,N_k}$ where $W_{k,i}\in  \mathcal{W}_k\subset \Sigma_{J_k}$,

\sk
- The Cantor set is
$$
\mathcal{C}_{H_{\min}}= \ \bigcap_{k\ge 1} \ \bigcup_{(\widetilde W_1,\ldots,\widetilde W_k)\in \mathcal{W}_{ 1}^{N_1}\times\cdots\times  \mathcal{W}_{ k}^{N_k}} I_{\widetilde W_1\cdots \widetilde W_k}
$$

\sk
-  The measure $\nu_{H_{\min}}$ is defined recursively as follows: 
 for every $k\geq 1$, for every   $(\widetilde W_1,\ldots,\widetilde W_k)\in \mathcal{W}_{ 1}^{N_1}\times\cdots\times  \mathcal{W}_{ k}^{N_k}$,  
we set  for every $i\in \{1,..., N_k\}$
$$
\nu_{H_{\min}}\Big (I_{\widetilde W_1\cdots  \widetilde {W}_{k-1} W_{k,1}\cdots W_{k,i-1}W_{k,i}} \Big )=\nu_{H_{\min}} \Big (I_{\widetilde W_1\cdots \widetilde  {W}_{k-1} W_{k,1}\cdots W_{k,i-1}}\Big) \frac{\mu_{\alpha_k}(I_{W_{k,i}})}{\sum_{W'_{k}\in \mathcal{W}_{ k}}\mu_{\alpha_k}(I_{W'_k})}.
$$
It is clear that this measure $\nu_{H_{\min}}$, defined only on the cubes appearing in the Cantor's construction, uniquely extends to a Borel probability measure on the cube $[0,1]^d$.

\subsubsection{Properties of the measure $\nu_{H_{\min}} $}
We first prove that the Cantor set lies on the elements $x\in \zu^d$ satisfying simultaneously $ \underline{\dim}(\mu,x) = H_{\min}$ and   $ \underline{\dim}(\nu_{H_{\min}},x) = D_\mu(H_{\min})$.

\begin{lemma}
One has $\mathcal{C}_{H_{\min}}\subset \widetilde E_\mu \big(H_{\min}\big) \cap \widetilde E_{\nu_{H_{\min}}} \big( {D_\mu}({H_{\min}})\big)$.
\end{lemma}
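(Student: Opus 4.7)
The strategy is to fix $x\in \mathcal{C}_{H_{\min}}$ and a large generation $j$, locate the unique $k$ with $\widetilde J_{k-1}<j\le \widetilde J_k$, and write $j=\widetilde J_{k-1}+(i-1)J_k+s$ with $1\le i\le N_k$ and $0\le s<J_k$. By construction, the word $x_{|j}$ is a concatenation
$$
x_{|j}=W_{1,1}\cdots W_{1,N_1}\cdots W_{k-1,1}\cdots W_{k-1,N_{k-1}}\cdot W_{k,1}\cdots W_{k,i-1}\cdot U
$$
with $|U|=s$, where each $W_{p,\ell}\in \mathcal W_p$ satisfies \eqref{ak}. The plan is to estimate $\mu(I_{x_{|j}})$ and $\nu_{H_{\min}}(I_{x_{|j}})$ by invoking almost-multiplicativity of $\mu$ and the recursive definition of $\nu_{H_{\min}}$, and then to show that the normalized logarithms tend to $H_{\min}$ and ${D_\mu}(H_{\min})$ respectively.

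More precisely, iterating \eqref{quasib} one gets
$$
-\log_2 \mu(I_{x_{|j}})=\sum_{p=1}^{k-1}\sum_{\ell=1}^{N_p}\big(-\log_2 \mu(I_{W_{p,\ell}})\big)+\sum_{\ell=1}^{i-1}\big(-\log_2 \mu(I_{W_{k,\ell}})\big)-\log_2 \mu(I_U)+O\big(\textstyle\sum_{p\le k} N_p\big),
$$
and \eqref{ak} together with item (6) of Proposition~\ref{fm} gives
$$
\Big|-\log_2\mu(I_{x_{|j}})-\Big(\sum_{p=1}^{k-1}N_pJ_p\alpha_p+(i-1)J_k\alpha_k\Big)\Big|\le \sum_{p=1}^{k-1}N_pJ_p\ep_p+(i-1)J_k\ep_k+KJ_k+O(N_k).
$$
The first inequality in \eqref{eq12sss} controls the contribution from levels $\le k-1$ (the ``old'' mass is bounded by $\ep_k N_kJ_k$), while the second inequality ensures $J_k\le \ep_k\widetilde J_{k-1}$, so $s/j\le J_k/j \to 0$ as $k\to \infty$. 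Dividing by $j$, the dominant terms reduce to a convex combination of $\alpha_{k-1}$ and $\alpha_k$ (up to $o(1)$), and both tend to $H_{\min}$. Hence $\lim_{j\to\infty}\frac{-\log_2\mu(I_{x_{|j}})}{j}=H_{\min}$. The same bookkeeping with $\mu_{\alpha_k}$ in place of $\mu$, combined with the normalization factor $\sum_{W'_k\in\mathcal W_k}\mu_{\alpha_k}(I_{W'_k})\ge e^{-\ep_k}$ at each stage, gives $\lim_{j\to\infty}\frac{-\log_2\nu_{H_{\min}}(I_{x_{|j}})}{j}={D_\mu}(H_{\min})$.

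To obtain membership in $\widetilde E_\mu(H_{\min})\cap \widetilde E_{\nu_{H_{\min}}}({D_\mu}(H_{\min}))$ rather than just the limit on $I_{x_{|j}}$, one needs uniform estimates over the $\le 3^d$ neighbors $W'\in\mathcal N_j(x)$. Here the built-in feature that $\mathcal W_k\subset \mathcal A_{J_k}^k\cap \mathcal B_{J_k}^k$, i.e.\ that \emph{all} $3^d$ neighbors of any $W\in\mathcal W_k$ already satisfy the scaling conditions \eqref{ak}, is decisive. Indeed, any $W'\in\mathcal N_j(x)$ is obtained from $x_{|j}$ by a carry operation that either stays inside the currently active level-$k$ block (in which case the leading blocks $W_{p,\ell}$ are unchanged and one substitutes $U$ by a neighboring suffix of $W_{k,i}$), or crosses the boundary of some level-$k$ block (in which case $W_{k,i}$ is replaced by an element of $\mathcal N(W_{k,i})$, still satisfying \eqref{ak}). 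In both cases the same decomposition and estimates apply verbatim, so that $\max_{W'\in \mathcal N_j(x)}$ and $\min_{W'\in \mathcal N_j(x)}$ of $-\log_2\mu(I_{W'})/j$ share the same limit $H_{\min}$, and similarly for $\nu_{H_{\min}}$.

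The main technical obstacle is precisely the neighbor analysis of the previous paragraph: one has to check that substituting $W_{k,i}$ or its suffix by a neighbor in $\mathbb R^d$ only introduces corrections absorbed by the bookkeeping, exploiting the uniform control \eqref{ak} on $\mathcal N(W)$ for $W\in\mathcal W_k$ together with almost-multiplicativity for the unchanged leading concatenation. Once this uniformity is established, the conclusion $\mathcal{C}_{H_{\min}}\subset \widetilde E_\mu(H_{\min})\cap \widetilde E_{\nu_{H_{\min}}}({D_\mu}(H_{\min}))$ follows at once.
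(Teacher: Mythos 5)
Your treatment of the cubes $I_{x_{|j}}$ themselves is exactly the paper's argument: decompose $x_{|j}$ into the construction blocks, iterate the almost-multiplicativity \eqref{quasib}, control each complete block by \eqref{ak} and the incomplete one by Proposition~\ref{fm}(6), and absorb the lower-level contributions and the block-counting errors using \eqref{eq12sss} together with $J_k\to\infty$. That part is correct and matches the paper.

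The gap is in the neighbour analysis, which you rightly single out as the delicate point (the paper itself merely asserts that the same two-sided estimate holds ``for every cube $W\in\mathcal N_J(x)$''). Your dichotomy --- either only the trailing suffix changes, or ``$W_{k,i}$ is replaced by an element of $\mathcal N(W_{k,i})$'' --- does not exhaust $\mathcal N_j(x)$. If $W'\in\mathcal N_j(x)$ and $m_0$ is the first block boundary with $W'_{|m_0}\neq x_{|m_0}$, then the block of $W'$ ending at $m_0$ is indeed a neighbour of the corresponding block of $x$ (their prefixes at the previous boundary coincide, so the two sub-cubes sit adjacently inside the same parent), but every \emph{subsequent} block of $W'$ is only constrained by $W'_{|m}\in\mathcal N(x_{|m})$ at the later boundaries $m$: its relative position inside $I_{W'_{|m-J_q}}$ need not be a neighbour of $W_{q,i''}$. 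Concretely, in $d=1$ the right neighbour of a word ending in $01^t$ ends in $10^t$; when the run of $1$'s crosses several block boundaries, the intermediate blocks change from $1^{J_q}$ to $0^{J_q}$, and $I_{0^{J_q}}$ is \emph{not} a neighbour of $I_{1^{J_q}}$, so \eqref{ak} gives no control on $\mu$ of these blocks. This does occur for points $x\in\mathcal C_{H_{\min}}$ that lie close to the boundaries of their construction cubes, and it affects precisely the $\min_{W'\in\mathcal N_j(x)}$ half of the definition of $\widetilde E_\mu$ and $\widetilde E_{\nu_{H_{\min}}}$ (the $\max$ half is harmless for $\mu$ at the exponent $H_{\min}$, since $\mu(I_w)\le 2^{-|w|(H_{\min}-o(1))}$ holds for every word; for $\nu_{H_{\min}}$ even the $\min$ over neighbours disjoint from the Cantor set is literally $0$). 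Closing this requires an input beyond \eqref{ak} --- e.g.\ restricting to points staying away from block boundaries, or a two-sided comparison of $\mu$ on adjacent cubes --- so the step as you state it would fail.
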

\begin{proof}
If $k\ge 2$ and $\widetilde J_k< J \le \widetilde  J_{k+1}$,   we   set $k_J=k$. 

Fix $x\in \mathcal{C}_{H_{\min}}$. 
Using the quasi-Bernoulli property \eqref{quasib} of $\mu$ and the inequalities \eqref{ak}, one gets that 
  for every $k\ge 2$ and $\widetilde J_k< J \le \widetilde  J_{k+1}$, and for every cube $W\in \mathcal{N}_J(x)$,
\begin{eqnarray*}
&&
2^{-\big (N_1J_1(\alpha_1+\ep_1) +.... +N_{k}J_{k}(\alpha_{k}+\ep_{k}) +  (J-\widetilde J_{k})( \alpha_{k+1}+\ep_{k+1} ) \big)}\\
&&\le\mu(I_W)\le 2^{-\big (N_1J_1(\alpha_1-\ep_1) +.... +N_{k}J_{k}(\alpha_{k}-\ep_{k}) +  (J-\widetilde J_{k})( \alpha_{k+1}-\ep_{k+1} ) \big)}.
\end{eqnarray*}
Using the equations \eqref{eq12sss}, we deduce that
\begin{eqnarray*}
 2^{-\big ((\alpha_k+2\ep_k )N_{k}J_{k}+  (J-\widetilde J_{k})( \alpha_{k+1}+\ep_{k+1} ) \big)}  \le\mu(I_W)\le 2^{-\big(  (\alpha_k-2\ep_k ) N_{k}J_{k} +  (J-\widetilde J_{k})( \alpha_{k+1}-\ep_{k+1} ) \big)}, 
\end{eqnarray*}
which yields 
\begin{eqnarray*}
 2^{-\big ((\alpha_k+2\ep_k )(1+\ep_k )\widetilde J_k+  (J-\widetilde J_{k})( \alpha_{k+1}+\ep_{k+1} ) \big)}   \le\mu(I_W)\le 2^{-\big(  (\alpha_k-2\ep_k )(1+\ep_k )\widetilde J_k +  (J-\widetilde J_{k})( \alpha_{k+1}-\ep_{k+1} ) \big)}.
\end{eqnarray*}
Since $\alpha_k \to H_{\min} $ when $k\to+\infty$, we deduce that $\lim_{J\to +\infty} \frac{\log_2 \mu(I_W)}{-J} = H_{\min}$, where $W\in \mathcal{N}_J(x)$. This proves that $x  \in  \widetilde E_\mu(H_{\min}) $.

Similarly,   the same arguments show that   for every $k\ge 2$ and $\widetilde J_k< J \le \widetilde  J_{k+1}$, and for every cube $W\in \mathcal{N}_J(x)$,
\begin{eqnarray*}
&&2^{-\big (({D_\mu}(\alpha_k)+2\ep_k )(1+\ep_k )\widetilde J_k+  (J-\widetilde J_{k})({D_\mu}( \alpha_{k+1})+\ep_{k+1} ) \big)} \\
&& \le\nu_{H_{\min}}(I_W)\le 2^{-\big(  ({D_\mu}(\alpha_k) - 2\ep_k )(1+\ep_k )\widetilde J_k +  (J-\widetilde J_{k})( {D_\mu}(\alpha_{k+1} )-\ep_{k+1} ) \big)}.
\end{eqnarray*}
The equation  \eqref{condeps} then yields
\begin{equation}\label{nuH}
2^{-J {D_\mu}(\alpha_{k}) (1+\widetilde\theta_k )}\le \nu_{H_{\min}} \big(I_W \big)\le 2^{- J  {D_\mu}(\alpha_{k})(1-\widetilde\theta_k )},
\end{equation}
for some decreasing sequence $\widetilde\theta_k$, tending to $ 0$ when $k\to +\infty$. This yields that  $x\in  \widetilde E_{\nu_{H_{\min}}} \big( {D_\mu}({H_{\min}})\big)$, since  ${D_\mu}(\alpha_{k}) \to {D_\mu}( {H_{\min}})$ when $k\to +\infty$.
\end{proof}
Observe also that  \eqref{nuH} implies that for each $j$ large enough, we have  
\begin{equation}\label{u}
\#\{ W \in \Sigma_j : I_{W}\cap \mathcal{C}_{H_{\min}} \neq\emptyset\}\le 2^{j {D_\mu}(\alpha_{k_j})(1+\widetilde\theta_{ k_j})}.
\end{equation}

\subsubsection{Proof that well-approximated points have $\nu_{H_{\min}} $-measure 0}

Fix an approximation rate $\delta>1$.    To get the result, one focuses first on the value of
$$\nu_{H_{\min}} \Big (\bigcup_{  w\in   \mathcal{S}_j(\eta)} B(x_w,(2\cdot 2^{-\lfloor \eta j \rfloor})^\delta)\Big) =  \, \nu_{H_{\min}} \Big (\bigcup_{ W \in\Sigma_{{\lfloor \eta j \rfloor} }}\bigcup_{w\in   \mathcal{S}_j(\eta,W)} B(x_w,(2\cdot 2^{-\lfloor \eta j \rfloor})^\delta)\Big)  .$$

For each $j$ large enough, consider $W\in \Sigma_{\lfloor j \eta\rfloor}$ such that $I_{W }\cap\mathcal{C}_{H_{\min}}\neq\emptyset$. One looks {for points   $x\in I_{W}\cap \mathcal{C}_{H_{\min}}$   such that  $x\in B(x_w, (2\cdot 2^{-\lfloor \eta'j \rfloor})^\delta)$ for some  surviving word  $w\in  \mathcal{S}_j(\eta,W)$.} Hence, one sees that 

\begin{eqnarray*}
&&\nu_{H_{\min}} \Big (\bigcup_{W \in\Sigma_{\lfloor \eta j  \rfloor }}\bigcup_{w\in   \mathcal{S}_j(\eta,W )} B(x_w,(2\cdot 2^{-\lfloor \eta j \rfloor})^\delta)\Big )\\
&\le & \sum_{W \in \Sigma_{\lf \eta j \rf} :  \, I_{W }\cap \mathcal{C}_{H_{\min}}\neq\emptyset} \ \sum_{\substack{ w\in   \mathcal{S}_j(\eta,W), \\ \, I_w \cap \mathcal{C}_{H_{\min}} \neq\emptyset}}\nu_{H_{\min}}\left ( B(x_w,(2\cdot 2^{-\lfloor \eta j \rfloor})^\delta) \right).
\end{eqnarray*}
Recall that by  \eqref{eq13sss}, the number of such possible surviving vertices $w$ (in the second sum above) is bounded from above by $2^{\eta j \ep_j}$.   Applying  \eqref{u} to $ B(x_w,(2\cdot 2^{-\lfloor \eta j \rfloor})^\delta)$ for the generation $J =\lfloor \eta j \delta\rfloor$, we get  
{\begin{eqnarray*}
&&\nu_{H_{\min}} \Big (\bigcup_{W \in\Sigma_{\lfloor \eta j  \rfloor }}\bigcup_{w\in   \mathcal{S}_j(\eta,W)} B(x_w,(2\cdot 2^{-\lfloor \eta j \rfloor})^\delta)\Big )\\
&\le & \Big ( \# \{W\in \Sigma_{\lfloor \eta j\rfloor}: I_{W}\cap\,  \mathcal{C}_{H_{\min}} \neq\emptyset\} \Big)  \cdot  2^{\eta j  \ep_j}  \cdot  2^{-\lfloor \eta j \delta\rfloor {D_\mu}(\alpha_{k_{\lfloor \eta j \delta\rfloor}})(1-\widetilde\theta_{k_{\lfloor \eta j \delta\rfloor}})}\\
&\le&  2^{\eta j \ep_j+ \lfloor \eta j \rfloor {D_\mu}(\alpha_{k_{\lfloor \eta j \rfloor}}) (1+\widetilde\theta_{k_{\lfloor \eta j \rfloor}}) -\lfloor \eta j \delta\rfloor({D_\mu}(\alpha_{k_{\lfloor \eta j \delta\rfloor}})(1-\widetilde\theta_{k_{\lfloor \eta j \delta\rfloor}})}.
\end{eqnarray*}}
It follows now from the properties imposed to the sequences $(\varepsilon_k)_{k\ge 1}$ and $(\alpha_k)_{k\ge 1}$ that 
$$
{ \xi_j:= \nu_{H_{\min}} \Big (\bigcup_{  w\in   \mathcal{S}_j(\eta)} B(x_w,(2\cdot 2^{-\lfloor \eta j \rfloor})^\delta)\Big) } \le C'2^{\eta j (1-\delta) {D_\mu}(\alpha_{k_{\lfloor \eta j \rfloor}}) (1+o(1))},
$$
where $C  $ is another constant coming from the fact that we dropped some integer parts. 

\begin{itemize}
\sk\item
When  ${D_\mu}(H_{\min})>0$, it is direct that {the series  $\displaystyle \sum_j \xi_j$ converges}. 
\sk\item
When  ${D_\mu}(H_{\min})=0$, for large values of $j$ one has by construction $j>k_{\lfloor \eta j  \rfloor}$, so $j^{-1}\le\ep_j\le \ep_{k_{\lfloor \eta j \rfloor}}$, and ${D_\mu}(\alpha_{k_{\lfloor \eta j \rfloor}})= \sqrt{ \ep_{k_{\lfloor \eta j \rfloor}}}$. Thus, for $j$ large enough we get 
$$
2^{\eta j (1-\delta) {D_\mu}(\alpha _{k_{\lfloor \eta j \rfloor}}) (1+o(1))}\le 2^{-\sqrt{j}\eta(1-\delta)(1+o(1))}, 
$$
hence the series {$\displaystyle \sum_j \xi_j$} still converges. 
\end{itemize}

Finally, {the Borel-Cantelli lemma   proves Part (1) of Theorem \ref{cantor}.} 
 
 \sk
 
  Observe that everything works similarly if we replace $H_{\min}$ with $H_{\max}$ and change the sequence $\alpha_k$ accordingly. When $\alpha \in (H_{\min}, H_{\max})$, we can even take the sequence $(\alpha_k)_{k\geq 1}$ to be constant (if not, this process gives other measures sitting on $\widetilde E_\mu(\alpha)$).

\subsection{Proof of  Part (2) of Theorem \ref{cantor}}

Recall Proposition~\ref{discretization}. Applying the  Borel-Cantelli lemma, it is enough to prove that for all integers $N\ge 1$ and $p>2(H_{\max}-H_{\min})^{-1}$,  
\begin{equation}\label{unifcont2}
\mathbb{E} \Big (\sup_{\alpha \in \{H_{\min},H_{\max}\}\cup  I_p} \  \sum_{J \ge 1}\sum_{W\in \Sigma_J }\nu_\alpha (I_W)\mathbf{1}_{\mathscr{C} (N,J,W)}\Big )<+\infty.
\end{equation}
where $ I_p=  [H_{\min}+1/p,H_{\max}-1/p]$. 

At first, notice that for any  Borel probability measure $\nu$ on $[0,1]$ we have 
\begin{eqnarray*}
\mathbb{E}\Big (\sum_{J\ge 1}\sum_{W\in \Sigma_J}\nu(I_W)\mathbf{1}_{\mathscr{C} (N,J,W) }\Big ) & =  & \sum_{ J \ge 1}\sum_{W\in \Sigma_J}\nu(I_W)\mathbb{P}\big (\mathscr{C}(N,J,W)\big )\\
& \le & \sum_{J \ge 1}2^{-J \ep_N} \sum_{W\in \Sigma_J}\nu(I_W )<\infty.
\end{eqnarray*}
Applying this to $\nu_{H_{\min}}$ and $\nu_{H_{\max}}$ constructed above, it remains us to prove \eqref{unifcont2} only with the interval  $I_p$.

Recall that when $\alpha\in  I_p\subset (H_{\min},H_{\max})$, one can take $\nu_\alpha=\mu_\alpha$ {(where $\mu_\alpha$ is the Gibbs measure ofProposition \ref{fm})} .

Let us write the interval  $  I_p$ as $   I_p =\tau_\mu'([q'_p,q_p])$, for some real numbers {$q_p> q'_p$}. Recall that the Gibbs capacity  $\mu$ is associated with a H\"older potential $\phi$ which belongs to the $C^\beta$ H\"older class, for some $\beta>0$. Standard arguments based on the bounded distorsion property give that for $\kappa=2\|\phi\|_\infty/\log(2)$ and $C_{q,q'} =e^{\frac{(|q|+|q'|) C}{(1-2^{-\beta})}}$, for all $q,q'\in {[q'_p,q_p]}$ and $W\in\Sigma^*$, setting $\alpha _q=\tau_\mu'(q)$, we have 
$$
\mu_{\alpha_{q'}}(I_W)\le C_{q,q'}2^{{\kappa |q-q'| \cdot|W|}} \mu_{\alpha_{q}}(I_W).
$$
 
The interval $  I_p$ being compact, one can extract a finite collection of intervals $[\alpha_{\widetilde q_{k-1}},\alpha_{\widetilde q_{k}}]$, $1\le k\le K$, such that {$q_p=\widetilde q_{0}> \ldots >\widetilde q_K=q'_p$} and $|\widetilde q_k-\widetilde q_{k-1}|\le \epsilon_N/ (2\kappa)$. Setting $C_k=\sup_{q'\in  {[\widetilde q_{k},\widetilde q_{k-1}]}  }C_{q',\widetilde q_k}$,  one rewrites the above properties as follows: for all $W\in\Sigma^*$, for all $1\le k\le K$,
$$
\sup_{q'\in {[\widetilde q_{k},\widetilde q_{k-1}]}  } \mu_{\alpha_{q'}}(I_W)\le   C_k 2^{ |W| \ep_N/2} \mu_{\alpha_{q_k}}(I_W).
$$
From these considerations, we get for $1\le k\le K$, 
\begin{eqnarray*}
&&\mathbb{E} \Big (\sup_{\alpha\in [\alpha_{\widetilde q_{k-1}},\alpha_{\widetilde q_{k}}] } \sum_{J\ge 1}\sum_{W\in \Sigma_J}\nu_\alpha (I_W)\mathbf{1}_{\mathscr{C}(N,J,W)}\Big )\\
&\le &\mathbb{E} \Big ( \sum_{J\ge 1}\sum_{W\in \Sigma_J }\sup_{\alpha \in [\alpha_{\widetilde q_{k-1}},\alpha _{\widetilde q_{k}}]}  \nu_\alpha (I_W)\mathbf{1}_{\mathscr{C}(N,J,W)}\Big )\\
&\le &C_k \sum_{J\ge 1} \sum_{W\in \Sigma_J} 2^{J\ep_N/2}\nu_{\alpha_{q_k}}(I_W)\mathbb{P} \big(\mathscr{C}(N,J,W)\big)\\
&\le &C_k  \sum_{J \ge 1} 2^{-J\ep_N/2}<+\infty.
\end{eqnarray*}
It follows that 
$$
\mathbb{E} \Big (\sup_{\alpha \in I_p} \Big (\sum_{J\ge 1}\sum_{W\in \Sigma_J}\nu_\alpha(I_W)\mathbf{1}_{\mathscr{C}(N,J,W)}\Big )  \Big ) \le \sum_{k=1}^K C_k \sum_{J\ge 1} 2^{-J\ep_N/2}<+\infty,
$$
i.e. \eqref{unifcont2} holds. 

\section{Free energy and large deviations for $\M_\mu$}\label{free}

{Recall the definitions \eqref{defqeta} and \eqref{defqetap} for $q_{\weta}$ and $q_{\eta_\ell}$, and also formula \eqref {formtaut2} for $
\widetilde\tau(q)$ that we reproduce for convenience:
$$\widetilde\tau(q)=
\begin{cases}
\ \tau_\mu(q)+{\Ht_\ell}(\widetilde\eta)q&\text{if } q\le {q_{\widetilde \eta}},\\
\ \tau_\mu(q)+d(1-\eta)&\text{if }{q_{\widetilde \eta}}<q<{q_{\eta_\ell}},  \\
\ {{H_\ell}} (0) q&\text{if ${q_{\eta_\ell}}<\infty$ and $q\ge {q_{\eta_\ell}}$}.\end{cases}$$}

{In this section we prove that, with probability 1:
\begin{itemize}
\sk\item   (Section \ref{sec91}) the Legendre transform of $\tilde\tau$ equals $D_{\Mm}$,
\sk\item     (Section \ref{sec92})  the lower $L^q$-spectrum $\tau_{\Mm}$ is bounded below by  $\tilde\tau$ ,
\sk\item (Section \ref{sec93}) the lower large deviation spectrum satisfies 
\begin{equation}\label{LD}
\underline{f}_{\M_\mu} (H) \ge  \tau_{\Mm}^*(H).
\end{equation}
 \end{itemize}}

\noindent
{\noindent Using that $D_{\Mm}(H)  \leq \tau_{\M_\mu}^*(H)$ holds true for every $H$,   the first result yields $\taut  \geq \tau_{\M_\mu}$.   \\
The second result ensures that there is in fact equality:  $\taut  = \tau_{\M_\mu}$.  \\
Since one always has $\underline{f}_{\M_\mu} (H) \le \overline{f}_{\M_\mu} (H)  \leq \tau_{\M_\mu}^*(H)$, we conclude that $D_{\Mm}(H) = \tau_{\M_\mu}^*(H) = \underline{f}_{\M_\mu} (H) = \overline{f}_{\M_\mu} (H) =\taut(H)$. In particular, $\M_\mu$ obeys the multifractal formalism.\\
Finally, by Varadhan's lemma (or in our situation very simple estimates),   the  free energy $\tau_{\M_\mu}(q)$ exists as a limit, not only as a liminf, for all $q\in\R$.}

{This completes Part (2) of Theorem \ref{thm-0} and   Theorem \ref{thm-2}.}

\subsection{Equality between ${\taut}^*$ and the singularity spectrum of $\M_\mu$}
\label{sec91}

  First, we prove that   $D_{\M_\mu}(H)$ (given by part (2) of Theorem \ref{thm-0}) 
 is indeed  the Legendre transform of $\widetilde \tau$.  
  \begin{lemma}
  \label{lemtaut}
{With probability one, one has $\widetilde \tau^*=D_{M_\mu}$.}
\end{lemma}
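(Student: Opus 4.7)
The plan is a direct computation of the Legendre transform $\widetilde\tau^*$, exploiting the piecewise structure of $\widetilde\tau$ and the identities between the parameters $q_{\weta}$, $q_{\eta_\ell}$, ${H_\ell}(\weta)$, $\Ht_\ell(\weta)$ coming from Legendre duality for~$\tau_\mu$. I will then compare the resulting piecewise formula with the expression for $D_{\M_\mu}$ given by Theorem~\ref{thm-0}.

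First I would establish two normalization identities which are forced by the tangency condition defining~$\weta$. Since the line through $(0,d(1-\eta))$ and $({H_\ell}(\weta), D_\mu({H_\ell}(\weta)))$ is tangent to $D_\mu$ at ${H_\ell}(\weta)$ with slope $q_{\weta}=D_\mu'({H_\ell}(\weta))$, one has $D_\mu({H_\ell}(\weta)) = q_{\weta}{H_\ell}(\weta) + d(1-\eta)$. Combined with $\tau_\mu(q_{\weta}) = q_{\weta}{H_\ell}(\weta) - D_\mu({H_\ell}(\weta))$, this gives the key identity $\tau_\mu(q_{\weta}) = -d(1-\eta)$. Together with $\Ht_\ell(\weta) = (1-\weta){H_\ell}(\weta)/\weta$ and $D_\mu({H_\ell}(\weta)) = d(1-\eta)/(1-\weta)$, a short calculation yields $\Ht_\ell(\weta)\, q_{\weta} = d(1-\eta)$. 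Similarly, when $q_{\eta_\ell}<\infty$ one has $\tau_\mu'(q_{\eta_\ell})={H_\ell}(0)$ and $\tau_\mu^*({H_\ell}(0)) = d(1-\eta)$, so $q_{\eta_\ell}{H_\ell}(0) - \tau_\mu(q_{\eta_\ell}) = d(1-\eta)$. These identities immediately give the continuity of $\widetilde\tau$ at the two junction points $q_{\weta}$ and $q_{\eta_\ell}$, the value $\widetilde\tau(q_{\weta}) = 0$, and show that $\widetilde\tau$ has a slope discontinuity at $q_{\weta}$ (jump from ${H_\ell}(\weta)+\Ht_\ell(\weta)$ to ${H_\ell}(\weta)$, in the decreasing direction of $q$) and is $C^1$ at $q_{\eta_\ell}$. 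Hence $\widetilde\tau$ is concave on $\R$.

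Once concavity is in hand, $\widetilde\tau^*(H)=\inf_{q\in\R}\bigl(qH-\widetilde\tau(q)\bigr)$ is computed by distinguishing the range of $H$ according to which piece of $\widetilde\tau$ contains the minimizer, using the fact that the slopes $\widetilde\tau'(q)$ decrease from $+\infty$ to~${H_\ell}(0)$ (or to $H_{\min}$ when $q_{\eta_\ell}=+\infty$). For $H\ge {H_\ell}(\weta)+\Ht_\ell(\weta)$, the minimum is attained on $q\le q_{\weta}$, giving
\[
\widetilde\tau^*(H) = \inf_{q\le q_{\weta}} \bigl(q(H-\Ht_\ell(\weta))-\tau_\mu(q)\bigr) = \tau_\mu^*(H-\Ht_\ell(\weta)) = D_\mu(H-\Ht_\ell(\weta)),
\]
since $H-\Ht_\ell(\weta)\ge {H_\ell}(\weta)=\tau_\mu'(q_{\weta})$ ensures the unconstrained Legendre minimum is reached in the admissible range. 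For ${H_\ell}(\weta)\le H\le {H_\ell}(\weta)+\Ht_\ell(\weta)$, $H$ lies in the slope gap at $q_{\weta}$, so the minimum is attained exactly at $q=q_{\weta}$, giving $\widetilde\tau^*(H)=q_{\weta}H-\widetilde\tau(q_{\weta})= q_{\weta}H$. For ${H_\ell}(\eta_\ell)\le H\le {H_\ell}(\weta)$, the minimizer lies in $[q_{\weta},q_{\eta_\ell}]$ (with $q_{\eta_\ell}=+\infty$ when $\eta_\ell=0$ is replaced by $[q_{\weta},+\infty)$), yielding
\[
\widetilde\tau^*(H) = \inf_{q_{\weta}\le q\le q_{\eta_\ell}}\bigl(qH-\tau_\mu(q)\bigr) - d(1-\eta) = \tau_\mu^*(H) - d(1-\eta) = D_\mu(H)-d(1-\eta).
\]
Finally, for $H<{H_\ell}(\eta_\ell)$ the linear (or asymptotic) piece with slope ${H_\ell}(0)$ drives $qH-\widetilde\tau(q)\to-\infty$ as $q\to+\infty$, and symmetrically for $H>H_{\max}+\Ht_\ell(\weta)$ as $q\to-\infty$, so $\widetilde\tau^*(H)=-\infty$ outside $[{H_\ell}(\eta_\ell),H_{\max}+\Ht_\ell(\weta)]$.

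Comparing with the expression for $D_{\M_\mu}(H)$ in Theorem~\ref{thm-0} piece by piece shows $\widetilde\tau^*=D_{\M_\mu}$ on the whole real line. The only delicate point in the computation is to verify that the optimizing $q$ really lies in the announced range of each piece of $\widetilde\tau$; this is where the identities relating $q_{\weta}$, $q_{\eta_\ell}$ and the slopes $\tau_\mu'(q_{\weta})={H_\ell}(\weta)$, $\tau_\mu'(q_{\eta_\ell})={H_\ell}(0)={H_\ell}(\eta_\ell)$ derived at the beginning are essential.
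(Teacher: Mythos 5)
Your proposal is correct and follows essentially the same route as the paper: establish $\tau_\mu(q_{\weta})=-d(1-\eta)$ from the tangency condition defining $\weta$, deduce continuity and the first-order slope jump of $\widetilde\tau$ at $q_{\weta}$ (and $C^1$ matching at $q_{\eta_\ell}$), then compute $\widetilde\tau^*$ piecewise by matching slope ranges of $\widetilde\tau$ to ranges of $H$. The only cosmetic difference is that you make the auxiliary identity $\Ht_\ell(\weta)\,q_{\weta}=d(1-\eta)$ and the concavity of $\widetilde\tau$ explicit, which the paper leaves implicit.
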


\begin{proof} 
 
Let us start with a few observations. By definition of ${H_\ell}(\weta)$ (see Figure \ref{figbetal}), $q_{\weta}$ is the slope of the tangent line to the graph of  $\tau_\mu^*$  at ${H_\ell}(\weta)$, and this tangent line passes through the point $(0, d(1-\eta))$. Hence   $   \tau_\mu^*({{H_\ell}}(\widetilde\eta))   - d(1-\eta)  = q_\weta {{H_\ell}}(\widetilde\eta) $. Recalling that  $ \tau_\mu^*({{H_\ell}}(\widetilde\eta))  =   d\frac{1-\eta}{1-\weta}$, one deduces  that 
$$\weta  \tau_\mu^*( {{H_\ell}}(\widetilde\eta)) = q_{\widetilde \eta}\tau_\mu'({q_{\widetilde \eta}})  .$$

But by the definition of the Legendre transform, one  has   $ \tau_\mu^*( {{H_\ell}}(\widetilde\eta)) = \tau_\mu^*(\tau_\mu'({q_{\widetilde \eta}})) = q_{\widetilde \eta}\tau_\mu'({q_{\widetilde \eta}}) - \tau_\mu({q_{\widetilde \eta}}) $. We deduce that  
$$ \tau_\mu(q_\weta) =   \tau_\mu^*( {{H_\ell}}(\widetilde\eta)) (\weta -1) = -d(1-\eta),$$
 i.e.  $ {\taut}({q_{\widetilde \eta}})=0$. 
Notice that ${\taut} $ is continuous on $\R$, and that  there is a first order phase transition at ${q_{\widetilde \eta}}$, since $ {H_\ell}(\weta) +{\Ht_\ell}(\weta) = {\taut}'({q_{\widetilde \eta}}^-) > {\taut}'({q_{\widetilde \eta}}^+)  =  {H_\ell}(\weta)   $. 

\sk

{$\bullet$ When $H\ge  {H_\ell}(\weta) $:  Since  $\widetilde\tau$ and $\tau_\mu$ differ by a  linear term of slope ${\Ht_\ell}(\widetilde\eta)$ over $(-\infty, {q_{\widetilde \eta}}]$, their Legendre transform are translated versions of each other by  ${\Ht_\ell}(\widetilde\eta)$ over the interval $[{\taut}'(q_\weta), +\infty) = [{H_\ell}(\weta) +{\Ht_\ell}(\weta) , +\infty)$. Hence, for      $ H\ge   {H_\ell}(\weta) +{\Ht_\ell}(\weta) $, one has $\widetilde\tau^*(H)=\tau_\mu^*(H- {\Ht_\ell}(\widetilde\eta)) = D_{\M_\mu}(H)$.    
}

\sk
 $\bullet$ When $H\in [{H_\ell}(\weta),{H_\ell}(\weta) +{\Ht_\ell}(\weta)]$: The discontinuity of $({\taut})'$ at $q_{\weta}$ implies that  for  $H$ in the interval $[{\taut}'({q_{\widetilde \eta}}^+) , {\taut}'({q_{\widetilde \eta}}^-)] =  [ {H_\ell}(\weta)  , {H_\ell}(\weta) +{\Ht_\ell}(\weta) ]$, one has
$$\widetilde \tau^*(H ) = \inf_{q\in \R} (qH  - {\taut}(q) ) =  {q_{\widetilde \eta}} H  - {\taut}({q_{\widetilde \eta}})  = {q_{\widetilde \eta}} H=D_{\M_\mu}(H).
$$

\sk
 {$\bullet$ When $\eta_\ell=0$ and $H\le  {H_\ell}(\widetilde\eta) $:
In this case we have $q_{\eta_\ell}<+\infty$. Since $\widetilde\tau$ and $\tau_\mu$ differ by the constant $d(1-\eta)$ over $[q_{\weta},q_{\eta_\ell}]$,  for $H\in \big[{\taut}'(q_{\eta_\ell}), {\taut}'(q_\weta)\big] = \big [{H_\ell}(0), {H_\ell}(\widetilde\eta) \big]$, one has $\widetilde\tau^*(H)={\tau^*_\mu}(H)-d(1-\eta)$. Then, when $q\ge q_{\eta_\ell}$,   $\widetilde\tau $ is  linear   with slope $\widetilde\tau'(q_{\eta_\ell}^-)$, so $\widetilde\tau^*(H)=-\infty$ for all $H<{H_\ell} (0)$. In all cases, ${\taut}^*(H) = D_{\M_\mu}(H)$.}

\sk
{$\bullet$ When
$\eta_\ell>0$ and $H\le  {H_\ell}(\widetilde\eta) $: Here $q_{\eta_\ell}=+ \infty$ and ${H_\ell}(\eta_\ell)=H_{\min}$. The same argument as above yields  $\widetilde\tau^*(H)={\tau^*_\mu}(H)-d(1-\eta)$ for all $H\le {H_\ell}(\widetilde\eta)$.  }
\end{proof}

We  know now  that $D_{\M_\mu}(H) = {\taut}^*(H)$, for all $H\in\R$.
Since the multifractal formalism states that  $D_{\M_\mu}(H)  \leq  {\tau}_{\M_\mu}^  *(H)$ for all $H\in\R$, one deduces that ${\taut}^* \leq  {\tau}_{\M_\mu}^  *$. By inverse Legendre transform, one gets
$$\mbox{ for all $q\in\R$ },  \ \ \ {\taut}(q) \geq  {\tau}_{\M_\mu}  (q).$$

The next section establishes that  {${\taut} \leq  {\tau}_{\M_\mu}$}, so that equality indeed holds almost surely.

\subsection{Lower bound for  {$ {\tau}_{\M_\mu}$} }
\label{sec92}

\

\subsubsection{ When  ${q_{\widetilde \eta}} < q < {q_{\eta_\ell}}$:}

The sub-multiplicativity property of $\mu$ gives for $j\ge 1$  
\begin{align*}
\sum_{ W \in \Sigma_J} \M_\mu (I_W)^q &=\sum_{W\in \Sigma_J}   \left(\max_{W' \in \mathcal N_J(W)} \max_{ w\in \mathcal{S}_j(\eta,W')  }   \mu(I_w)\right)^q\\
&\le 3^d \sum_{W\in \Sigma_J }\max_{ w\in \mathcal{S}_j(\eta,W)  } \mu(I_w)^q\\
&\le 3^d C^q \sum_{W\in \Sigma_J }\mu(I_W)^q \sum_{\substack{w\in \Sigma^*,  \, p_{Ww } =1}} \mu(I_{w})^q\\
&= 3^d C^q \sum_{W\in \Sigma_J}\mu(I_W)^q  \, \sum_{k\ge 0} \,  \sum_{ w \in \Sigma_k } \mu(I_{w})^q p_{Ww}.
\end{align*}

The random variables $p_{Ww}$ being independent, with law $B(2^{-d(J+k)(1-\eta)})$, this yields 
$$
\mathbb E \left (\sum_{ W\in \Sigma_J} \M_\mu (I_W )^q \right)\le 3^d C^q  \left (\sum_{W\in \Sigma_J}\mu(I_W)^q \right) \sum_{k\ge 0} 2^{-(J+k)d(1-\eta)}  \left (\sum_{w\in \Sigma_k}\mu(I_{w})^q\right ).
$$
Observe that a direct consequence of \eqref{quasib} is that  for some positive constant $C_q>0$, 
\begin{equation}
\label{upperbound}
 \sup_{k\ge 1}2^{k\tau_\mu (q)}   \sum_{w\in \Sigma_k}\mu(I_{w})^q\le C_q.
 \end{equation}
Consequently, 
$$
\mathbb E \left (\sum_{ W\in \Sigma_J} \M_\mu (I_W )^q \right) \le 3^d C_q C^q  \left ( 2^{-J d(1-\eta)}  \sum_{W\in \Sigma_J}\mu(I_W)^q \right) \sum_{k\ge 0} 2^{- k( \tau_\mu(q) +d(1-\eta)) }  .
$$
Since $q>{q_{\widetilde \eta}}$, we have $\tau_\mu(q)+d(1-\eta)>0$. Hence for some constant  $C'_q$ depending on $q$ only, 
$$
\mathbb E \left (\sum_{ W\in \Sigma_J} \M_\mu (I_W )^q \right) \le 3^d C'_q  \left ( 2^{-J d(1-\eta)}  \sum_{W\in \Sigma_J}\mu(I_W)^q \right)    .
$$
Finally, for every $\ep>0$, applying \eqref{upperbound} we get 
$$
\mathbb E \left (\sum_{J\ge 1} 2^{J(\tau_\mu(q)+d(1-\eta)-\ep)} \sum_{W\in \Sigma_J} \M_\mu (I_W)^q \right )\le 3^d C'_qC_q \sum_{J \ge 1} 2^{-J \ep},
$$
which is finite. We conclude that with probability 1, we have 
$$ {\liminf_{J\to +\infty} \frac{1}{J }\log_2 \sum_{W\in \Sigma_J} \M_\mu (I_W)^q}\le -\tau_\mu(q)-d(1-\eta),$$
i.e.  $ \tau_{\M_\mu}(q)\ge \tau_\mu(q)+d(1-\eta) = \tilde\tau(q)$. 

This holds for each ${q_{\widetilde \eta}} < q < {q_{\eta_\ell}}$ almost surely, and by concavity (hence continuity) of $\tau_{\M_\mu}$ and $\widetilde\tau$, this holds almost surely for all ${q_{\widetilde \eta}} \leq q \leq {q_{\eta_\ell}}$.

\subsubsection{ When $q\in(0,{q_{\widetilde \eta}})$:}

$\bullet$ Suppose for a while  that both $\eta_\ell$ and $\eta_r$ are positive.  

\sk 
  Fix $0<\ep<{{H_\ell}}(\etal)$, and two integers $N_\ell,N_r > 2/\ep$. Due to Proposition~\ref{p2} and the continuity of the mappings ${{H_\ell}}$ and ${H_r}$, there exists $j_0\ge 1$, and two sets of parameters $\eta_\ell= \eta_{\ell,1}<\ldots<\eta_{\ell,N_\ell}=\eta $ and $\eta_r=\eta_{r,1}<\ldots<\eta_{r,N_r} $ such that for $j \ge j_0$, if $w\in\mathcal{S}_{j}(\eta)$, then  $w\in \Tt(j,  \eta_{i,k}, \ep)$  for some  $i\in\{\ell,r\}$ and $1\le k \le N_{i}$, i.e.
 \begin{equation}\label{discret}
 (j-\lfloor \eta_{i,k} j \rfloor)({H_i}(\eta_{i,k})-\ep) \le -\log_2 \mu(I_{\sigma^{\lfloor \eta_{i,k} j\rfloor}w})\le (j-\lfloor \eta_{i,k} j  \rfloor)({H_i}(\eta_{i,k})+\ep).
\end{equation}

For $J$ large enough, given $W\in\Sigma_J$,  {assume that $\M_\mu(I_W)  $ is realized at $w$, i.e.  $\M_\mu(I_W) =\mu(I_{w})$,  for some word of length  $|w| = j \geq J$, $I_{w}\subset I_{W'}$ and $W'\in \mathcal{N} _{J}(W) $. In this case,} there exists $\eta_{i,k}$ such that \eqref{discret} holds. We have to distinguish the two following possibilities for the parameters $\{\eta_{i,k}\}_{i\in \{\ell,r\}, k\in \{1,..., \max(N_\ell,N_r)\}}$: 
\begin{itemize}
\sk\item[-]
if  $\lfloor \eta_{i,k}j \rfloor \le J$,  then  $I_{W} \subset \bigcup_{ u \in \mathcal {N}_{\lfloor \eta_{i,k}j \rfloor} ( w_{|\lfloor \eta_{i,k}j \rfloor})} I_{u} $. 

\sk\item[-]
  if $\lfloor \eta_{i,k} j  \rfloor > J$,  then  
  $$\M_\mu(I_W) \leq C\mu(I_w)  \mu(I_{ \sigma^{j-\lfloor\eta_{i,k} j  \rf} w}  )\leq C\mu(I_W) 2^{(j-\lfloor j \eta_{i,k}\rfloor))({H_i}(\eta_{i,k})-\ep)},$$ where \eqref{discret}  has been used.
\end{itemize}

In the second case, some information is lost between the generations $J$ and $\lf\eta'j\rf$. We deduce from these observations and the quasi-Bernoulli property of $\mu$  that
\begin{align*}
\sum_{W \in \Sigma_J} \M_\mu (I_W)^q &\le \sum_{i\in\{\ell,r\}}  \sum_{\substack{k=1 \\ \lf \eta_{i,k}j\rf\leq J} }^{N_i} \sum_{w\in \Tt(j,  \eta_{i,k}, \ep)}   \mu(I_w)^q+  \sum_{i\in\{\ell,r\}} \sum_{\substack{k=1 \\ \lf \eta_{i,k}j\rf >J}}^{N_i} \sum_{w\in \Tt(j,  \eta_{i,k},  \ep)}   \mu(I_w)^q \\
&\le \sum_{i\in\{\ell,r\}} \ \sum_{k=1}^{N_i}  3^d\sum_{J\le j \le J/\eta_{i,k}} \ \sum_{ u \in\Sigma_{\lfloor j \eta_{i,k}\rfloor}}C^{q}\mu(I_{u} )^q 2^{-q(j-\lfloor j \eta_{i,k}\rfloor)({H_i}(\eta_{i,k})-
\ep)} \\
&+\sum_{i\in\{\ell,r\}} \ \sum_{k=1}^{N_i} 3^d \sum_{W\in\Sigma_J} \ \sum_{j >J/\eta_{i,k}}C^q \mu(I_W)^q  2^{-q(j -\lfloor j \eta_{i,k}\rfloor)({H_i}(\eta_{i,k})-\ep)}.
\end{align*}
Recalling \eqref{upperbound} and the fact that $\widetilde H_i (\eta') = H_i(\eta') (\eta'^{-1}-1)$ for every $\eta'$, the first term in the last sum is bounded from above by
\begin{align*}
&  3^dC^{q} C_q\sum_{i\in\{\ell,r\}}\ \sum_{k=1}^{N_i} \ \sum_{J \le j \le J/\eta_{i,k}} 2^{-\lfloor j \eta_{i,k}\rfloor\tau_\mu(q)} 2^{-q(j -\lfloor j \eta_{i,k}\rfloor)({H_i}(\eta_{i,k})-\ep)} \\
&\le C'_q \sum_{i\in\{\ell,r\}} \ \sum_{k=1}^{N_i} 2^{qJ \ep/\eta_{i,k}}\sum_{J \le j \le J/\eta'_{i,k}} 2^{- j \eta_{i,k}(\tau_\mu(q)+q \Ht_i(\eta_{i,k}))}
\end{align*}
and the second by
 \begin{align*}
& 3^dC^{q}C_q\sum_{i\in\{\ell,r\}} \ \sum_{k=1}^{N_i}  2^{-J\tau_\mu(q)}\sum_{j> J/\eta_{i,k}}   2^{-q(j -\lfloor j \eta_{i,k}\rfloor)({H_i}(\eta_{i,k})-\ep)}\\
&\leq C'_q\sum_{i\in\{\ell,r\}} \ \sum_{k=1}^{N_i}  2^{-J (\tau_\mu(q)+q\Ht_i(\eta_{i,k}))} 2^{q J({\eta_{i,k}}^{-1}-1)\ep}
\end{align*}
for some other constant $C'_q$. 
Since $q\ge 0$, $\tau_\mu(q)+q\Ht_i(\eta_{i,k})$ is bounded from below for all $p \in\{1,...,\max( N_\ell,N_r)\}$  by $\tau_\mu(q)+q{\Ht_\ell}(\widetilde \eta)$, which is negative. Consequently, 
$$\sum_{J\le j \le J/\eta_{i,k}} 2^{- j \eta_{i,k}(\tau_\mu(q)+q
Ht_i(\eta_{i,k}))}=O(2^{-J (\tau_\mu(q)+q{\Ht_\ell}(\widetilde\eta))}).$$
In addition, one always has  ${\eta_{i,k}} \geq  \eta_i$, hence 
$$2^{- J (\tau_\mu(q)+q\Ht_i(\eta_{i,k}))} 2^{qJ ({\eta_{i,k}}^{-1}-1)\ep}\le 2^{-J (\tau_\mu(q)+q\Ht(\widetilde\eta))} 2^{qJ({\eta_i}^{-1}-1)\ep}.$$
Putting everything together we get for some $C''_q>0$ 
 \begin{align*}
& 
\sum_{W \in \Sigma_J} \M_\mu (I_W)^q\\
&\leq  C''_q \Big (N_\ell (2^{qJ\ep/\eta_\ell} + 2^{q J ({\eta_\ell}^{-1}-1)\ep})+ N_r (2^{q J \ep/\eta_r} + 2^{qJ ({\eta_r}^{-1}-1)\ep})\Big )2^{-J (\tau_\mu(q)+q{\Ht_\ell}(\widetilde\eta))}.
\end{align*}
This yields $\tau_{\M_\mu}(q)\ge \tau_\mu(q)+q{\Ht_\ell}(\widetilde\eta)+O(\ep)$, and letting $\ep$ tend to 0 gives the desired conclusion. 

\mk 

$\bullet$ Now we deal with the   case where at least one of parameters $\etal$ and $\etar$ equals zero.  

\sk

According to the value of $\eta_i$, we construct a subset  $ \Sigma^{(i)}_J$  of words  of length $J$ having specific properties:

\mk

{\bf First case:  $\eta_i>0$:}  set $\Sigma^{(i)}_J=\emptyset$. 

\mk
 
{{\bf Second case:  $\eta_i=0$:}   in this case, ${D_\mu}({H_i}(\eta_i)) = d(1-\eta)$. Heuristically, $ \Sigma^{(i)}_J$ contains those words  $W$  such that $\Mm(I_W ) = \mu(I_w)$ for some surviving vertex   $w\in \mathcal{S}_j(\eta)$ having an ``extreme'' behavior, i.e.  $\mu(I_w)\sim 2^{-j  H_i(\eta_i)}$. We proceed as follows:}

 {At first, let $K\ge H_{\max}$ be as in Proposition~\ref{fm}(6). For $\eta'_i\in [\eta_i,\eta]$ close to $\eta_i$, we denote by $\widehat{\eta_i}$ the unique real number in $[\eta_i,\eta]$ such that $H_i(\widehat\eta_i)= H_i(\eta_i')+K\eta'_i$ if $i=\ell$ and $H_i(\widehat\eta')= H_i(\eta_i')-K\eta'_i$ if $i=r$. Notice that $\widehat\eta_i>\eta_i'$. }

 {Now, fix $\varepsilon=q H_i(\eta_i)/4$, and choose $\eta'_i$ small enough so that $(1-\eta'_i) ({H_i}(\widehat\eta_i)-K\widehat\eta_i)>H_i(\eta_i)/2$ if $i=\ell$ and ${H_i}(\eta'_i)-2K\widehat\eta_i>H_i(\eta_i)/2$ if $i=r$, and 
$$
\begin{cases}
H_i(\eta'_i)+2K\widehat \eta_i<H_s \text{ and } D_\mu(H_i(\eta'_i)+2K\widehat \eta_i)\le D_\mu({H_i}(\eta_i)+\varepsilon/2& \text{if }i=\ell,\\  
H_i( \eta'_i)-2K\widehat \eta_i>H_s\text{ and } D_\mu(H_i(\eta'_i)-2K\widehat \eta_i)\le D_\mu({H_i}(\eta_i))+\varepsilon/2& \text{if }i=r
\end{cases}.
$$
 By item (5) of Proposition \ref{fm}, there exists  an integer $J_i$ such that  for $j \ge J_i$,
\begin{equation}\label{eq24sss}
\begin{split} 
&\# \mathcal{E}_\mu(j , [0,H_\ell(\eta'_i)+2K\widehat\eta_i])  \le 2^{j  (D_\mu(H_i(\eta'_i)+2K\eta'_i)+\varepsilon/2)}\le 2^{j  (D_\mu(H_i(\eta_i))+\varepsilon)} \text{ if }i=\ell,\\  
&\# \mathcal{E}_\mu(j , [H_i(\eta'_i)-2K\widehat\eta_i,+\infty))  \le 2^{j  (D_\mu(H_r(\eta'_i)+2K\eta'_i)+\varepsilon/2)}\le 2^{j  (D_\mu(H_i(\eta_i))+\varepsilon)} \text{ if }i=r
\end{split}.
\end{equation}}

\sk

 {We can also choose $J_i$ such that $\ep^2_{{j}}\le K\eta'_i/2\le  K\widehat \eta_i/2$ for $j\ge J_i$, where$( \ep^2_{{j}})_{j\ge 1}$ is the sequence introduced in Proposition~\ref{p2}. }

 {For  $J\ge J_i$,  we take  $\Sigma^{(i)}_J$  as the set of those words $W\in \Sigma_J$ such that   $\M_\mu(I_W) =\mu(I_w)$, where   $w \in  \mathcal{S}_j(\eta,W) \cap \mathcal{T}_{\mu,i }(j, \eta', \ep^2_j,W)   $     for some $\eta'$ satisfying
 $$
 \begin{cases}
 {H_\ell}(\eta')+\ep^2_{{j}}\le H_\ell(\eta'_\ell)+K\eta'_\ell&\text{if }i=\ell,\\
 {H_r}(\eta')-\ep^2_{{j}}\ge H_r(\eta'_r)-K\eta'_r&\text{if }i=r
 \end{cases}.
$$
 In particular, we have $\eta'\le \widehat\eta_i$. 
 The words    $W\in \Sigma^{(i)}_{J}$ are the ones that may cause problems when compared to the case where $\eta_\ell,\eta_r>0$. The other words $W$ are such that $\M_\mu(I_W)$ is reached at some $w$ associated with $\eta' $ satisfying ${H_\ell}(\eta')\in  [{{H_\ell}}(\eta'_\ell)+K\eta'_\ell/2,{H_r}(\eta'_r)-K\eta'_r/2]$, i.e. $\eta'$ stays bounded away from 0.} 
 
 \sk
 
  {When $J\ge J_i$ and  $W\in \Sigma^{(i)}_{J}$, for the associated word $w \in  \mathcal{S}_j(\eta,W) \cap \mathcal{T}_{\mu,i }(j, \eta', \ep^2_j,W)   $  (according to the previous notations), we have, using   \eqref{quasib} and the definition of $K$:
\begin{eqnarray*}
 C^{-1} 2^{-\lfloor j\eta'\rfloor K}\mu(I_{\sigma^{\lfloor \eta' {j}\rfloor}w}) \le \M_\mu(I_W) =   \mu(I_w) \le C  \mu(I_{\sigma^{\lfloor \eta' {j}\rfloor}w}),
   \end{eqnarray*}
   which yields, due to the property of $(W,w)$ and the fact that $\eta '\le\widehat \eta_i$:
\begin{eqnarray*}
 C^{-1} 2^{- j\widehat\eta_iK} 2^{-{j}(H_i(\eta')+\varepsilon_j^2) }\le \M_\mu(I_W) =   \mu(I_w)\le 
   C   2^{-({j}-\lfloor \widehat\eta_i j\rfloor)({H_i}(\eta')-\ep^{2}_j)}.
   \end{eqnarray*}  
 This yields, for $J$ large enough,
 $$
 \begin{cases}
2^{- j(H_i(\eta'_i)+2K\widehat \eta_i)}\le \M_\mu(I_W) =   \mu(I_w)\le 
 2^{-({j}(1-\eta'_i) ({H_i}(\widehat\eta_i)-K\widehat\eta_i)}&\text{ if }i=\ell,\\
   \M_\mu(I_W) =   \mu(I_w)\le 
2^{-{j}({H_i}(\eta'_i)-2K\widehat\eta_i)}&\text{if }i=r
\end{cases}
.$$
}

 {Hence, each such word $W$ is associated with one surviving word $w\in  \mathcal{E}_\mu(j , [0, H_i(\eta'_i) +2K\widehat\eta_i])   $, for some $ j\geq J$ if $i=\ell$, and one surviving word $w\in  \mathcal{E}_\mu(j , [H_i(\eta'_i) -2K\widehat\eta_i,+\infty))   $, for some $ j\geq J$ if $i=r$.}   

 {Then, if $i=\ell$, writing $j=J+k$ one gets :
\begin{align*}
\sum_{ W \in \Sigma^{(i)}_J} \M_\mu (I_W)^q & \leq \sum_{k=0}^{+\infty}  \ \sum_{w\in  \mathcal{E}_\mu(J+k,  [0, H_i(\eta'_i) +2K\widehat\eta_i])  } p_w   C^q   2^{-(J+k)q(1-\eta'_i) ({H_i}(\widehat\eta_i)-K\widehat\eta_i)} .
\end{align*}
Taking expectations and recalling \eqref{eq24sss}, one gets 
\begin{eqnarray*}
 \mathbb E\Big (\sum_{W\in \Sigma^{(i)}_{J}}\M_\mu (I_W)^q\Big ) \!
&\le& \! 3^d C^q\sum_{k\ge 0} 2^{-(J+k)d(1-\eta)} 2^{(J+k) (d(1-\eta)+ \ep)} 2^{-q(J+k) (1-\eta'_i) ({H_i}(\widehat\eta_i)-K\widehat\eta_i)}\\
&=&3^d C^q\sum_{k\ge 0} 2^{(J+k)(\ep - q (1-\eta'_i) ({H_i}(\widehat\eta_i)-K\widehat\eta_i))}.
\end{eqnarray*}
}

{Now by our choice for $\varepsilon$ and $\eta'_i$, we have  $\ep - q (1-\eta'_i) ({H_i}(\widehat\eta_i)-K\widehat\eta_i)\le -\ep$. We deduce that 
\begin{eqnarray*}
 \mathbb E\Big (\sum_{W\in \Sigma^{(i)}_{J}}\M_\mu (I_W)^q\Big )  \leq C_{q,\ep} 2^{-J\ep} 
\end{eqnarray*}
for some constant $C_{q,\ep}>0$. Finally,  applying the Borel-Cantelli lemma, we deduce that with probability 1, for $J$ large enough we have 
\begin{equation}
\label{taufinal}
\sum_{W\in \Sigma^{(i)}_{J}}  \M_\mu  (I_W)^q\le 1.
\end{equation}
Observe that \eqref{taufinal} holds true even if $\eta_i>0$ (in which case $\Sigma^{(i)}_{J}$ is empty). If $i=r$, similar computations yield
$$
\mathbb E\Big (\sum_{W\in \Sigma^{(i)}_{J}}\M_\mu (I_W)^q\Big )\le 3^d C^q\sum_{k\ge 0} 2^{(J+k)(\ep - q({H_i}(\eta'_i)-2K\widehat\eta_i))},
$$
with  a similar conclusion. }

\mk

Finally,  the same estimates as when both $\eta_\ell$ and $\eta_r$ are strictly positive yield  
$$\liminf_{J\to + \infty} \frac{-1}{J}\log_2 \sum_{W\in \Sigma_J\setminus ( \Sigma^{(\ell)}_J\cup \Sigma^{(r)}_J)} \M_\mu (I_W)^q\ge \tau_\mu(q)+q{\Ht_\ell}(\widetilde\eta) = \tilde\tau(q).$$
Since $\tilde\tau(q)<0$ and    \eqref{taufinal} holds for $J$ large enough, we conclude that  $\tau_{\M_\mu}(q)\ge \tilde\tau(q)$.

\subsubsection{  When $q<0$:} 

Applying Proposition~\ref{p'2}   with $\eta'=\widetilde\eta$, there exists a positive sequence $(\ep^3_j)_{j\ge 1}$ converging to 0 such that with probability 1, for $j$ large enough, for all $W\in \Sigma_{\lfloor \weta j\rfloor}$, there exists $w\in \mathcal{S}_j(\eta,W)$ such that  the $\weta$-tail of $w$ satisfies 
$$
2^{-(j-\lfloor \weta j\rfloor)({{H_\ell}}(\weta)+\ep^3_j)}\le \mu(I_{\sigma^{\lfloor\weta  j\rfloor} w}).$$
The quasi-Bernoulli property implies that   $\M_\mu (I_{Ww})\ge C^{-1} \mu(I_{W}) 2^{-j(1-\widetilde \eta)({{H_\ell}}(\widetilde \eta)+\ep_j)}$, which for $q<0$ yields 
\begin{eqnarray*}
\sum_{W\in \Sigma_{\lfloor \weta j\rfloor}} \!\!  \M_\mu  (I_{W})^q & \le  &  C^q 2^{-jq(1-\widetilde \eta)({{H_\ell}}(\widetilde \eta)+\ep_j)}  \!\! \sum_{ W \in \Sigma_{\lfloor \weta j\rfloor}}\mu(I_{W})^q \\
& \le& C^{q+1}  2^{-\lf \weta j\rf q  ({{\Ht_\ell}}(\widetilde \eta)+\ep_j/\weta )} \!\!  \sum_{ W \in \Sigma_{\lfloor \weta j\rfloor}}\mu(I_{W})^q   .
\end{eqnarray*}
One concludes that  $ \tau_{\M_\mu }(q)\ge  \tau_\mu(q)+{\Ht_\ell} (\widetilde\eta)q =\tilde\tau(q)$.

\subsubsection{ When ${q_{\eta_\ell}}<+\infty$ and $q>{q_{\eta_\ell}}$.}

 Recall that this implies $\eta_\ell =0$. 
We have already shown that $\tau_{\M_\mu} (q)\ge\tau_\mu(q)+d(1-\eta)$ when $q\in [{q_{\widetilde \eta}},{q_{\eta_\ell}}]$. 

The tangent to the graph of  $ q \mapsto \tau_{\M_\mu}(q)$ at $({q_{\eta_\ell}},\tau_{\M_\mu}({q_{\eta_\ell}}))$ is the affine line passing through $(0,0)$, whose slope  is $ \tau_{\M_\mu}'(q_{\eta_\ell}) =  {H_\ell}(0)$. Consequently, the concavity  of $\tau_{\M_\mu}$  implies  that    $\tau_{\M_\mu}(q)\le q {{H_\ell}}(0)$ for all $q\ge {q_{\eta_\ell}}$. On the other hand, if $q\ge {q_{\eta_\ell}}$, for all integers $J\ge 1$ we have 
$$
\sum_{W\in\Sigma_J}\M_\mu(I_W)^q\le\Big( \sum_{W\in\Sigma_J} \M_\mu(I_W)^{{q_{\eta_\ell}}}\Big )^{q/{q_{\eta_\ell}}},
$$
from which it follows that $ \tau_{\M_\mu}(q) \ge \frac{q}{{q_{\eta_\ell}}} \tau_{\M_\mu}({q_{\eta_\ell}})=q{{H_\ell}}(0)$.

\subsection{Lower bound for the lower large deviations spectrum $\underline{f}_{\Mm}(H)$}
\label{sec93}

Let us check that \eqref{LD} holds. It is enough to deal with a dense countable subset of the support  $ [{{H_\ell}}(\eta_\ell),{H_{\max}}+{\Ht_\ell}(\widetilde\eta)]$ of $  {\tau_{\Mm} }$.

\mk$\bullet$ 
Suppose first that $H \in [{{H_\ell}}(\etal),{{H_\ell}}(\widetilde\eta)]$. Recall the definition \eqref{defetaH} of $\eta_H$: $H={H_\ell}(\eta_H)$.

By item (4) of Proposition \ref{fm},   for every $\ep>0$, there exists $\beta(\ep)>0$   such that  when  $j$ becomes large, 
$$ \#\mathcal{E}_\mu( \lfloor \eta_H  j  \rfloor, [0, {{H_\ell}}(\eta_H)+\ep]) \geq  2^{\lfloor \eta_H  j \rfloor ({D_\mu}({{H_\ell}}(\eta_H))- \beta(\ep))}.$$

One also knows that $\beta(\ep)$ can be taken so that $\beta(\ep) \to 0$ when $\ep\to 0$.

In addition, by Proposition \ref{p'2}, there is a positive sequence $(\ep^3_j)_{j\ge 1}$ converging to~0 such that, with probability 1, for $j$ large enough, each cube $I_W$  (with $W\in   \mathcal{E}_\mu( \lfloor \eta_H  j  \rfloor, [0, {{H_\ell}}(\eta_H)+\ep])$) contains a smaller cube $I_{w}$, with $w\in\mathcal{S}_j(\eta,W)  \cap \Tl ( j,\eta_H,  \ep_j^3)$.

By the quasi-Bernoulli property  of $\mu$, 
$$\M_\mu(I_{w}) \geq \mu(I_{w} ) \geq C^{-1} 2^{-\lfloor \eta_H  j \rfloor ({{H_\ell}}(\eta_H)+\ep)} 2^{-(j-\lfloor \eta_H  j \rfloor) ({{H_\ell}}(\eta')+\ep^3_j)} =   2^{-  j  ({{H_\ell}}(\eta_H)+2\ep)} $$
when $j$ becomes large. Thus, 
\begin{eqnarray*}
\liminf_{j\to + \infty}\frac{1}{j}\log_2\#  \mathcal{E}_{\M_\mu} (j,[Õ0, {H_\ell}(\eta_H)+2\ep])  \ge \eta_H \Big({D_\mu}({{H_\ell}}(\eta_H) ) -\beta(\ep)\Big), 
\end{eqnarray*}
Since by construction $H ={H_\ell} (\eta_H)$ and $ \eta_H {D_\mu}({{H_\ell}}(\eta_H))={D_\mu}(H) - d(1-\eta)$, letting $\ep$ go to zero gives
$$
\lim_{\ep\to 0^+} \liminf_{j\to +\infty}\frac{1}{j}\log_2\#  \mathcal{E}_{\M_\mu} (j,[0, H +2\ep]) \geq   {D_\mu}(H) - d(1-\eta) = \tau_{\Mm} ^*(H).
$$
We conclude that 
$\underline{f}_{\M_\mu} (H) \geq   {\taut }^*(H),$
for otherwise there would exist $H'<H$ such that 
$$
\limsup_{j\to+ \infty}\frac{1}{j}\log_2\#\mathcal{E}_{\M_\mu} (j,[Õ0, H']) \ge \tau_{\Mm} ^*(H)> \tau_{\Mm} ^*(H'),
$$
which contradicts the fact that for  all $H'\le H_s+{\Ht_\ell}(\weta)=\tau_{\M_\mu}'(0)$, Proposition~\ref{chernov} yields
$$
 \limsup_{j\to+ \infty}\frac{1}{j}\log_2\#  \mathcal{E}_{\M_\mu} (j,[Õ0, H']) \le \tau_{\M_\mu}^*(H') .
$$

\mk$\bullet$  For $H\in [{{H_\ell}}(\widetilde\eta), {{H_\ell}}(\widetilde\eta)+{\Ht_\ell}(\etal)]$ we use the same idea: There exist a positive sequence $(\beta_j)_{j\ge 1}$ converging to 0 such that, for  $j$ is large enough, at generation $\lfloor j\widetilde \eta\rfloor$ there are at least $2^{\lfloor j\widetilde \eta\rfloor ({D_\mu}({{H_\ell}}(\weta) )-\ep_j))}$ elements $W$  in $ \mathcal{E}_\mu(\lf j\weta\rf, [0, {H_\ell}(\weta)+\beta_j])$.

In addition, by Proposition~\ref{p'2}, with probability 1, for $j$ large enough, each of these $I_W$ contains a smaller cube $I_{w}$, with $w\in\mathcal{S}_j(\eta,W) \cap  \Tl (j,\weta,  \ep_j^3)$. 

Then, let $w'$ be the word of generation $j'=\lfloor j {{H_\ell}}(\widetilde \eta)/H\rfloor$ such that $I_{w}\subset I_{w'}\subset I_W$. We have $-\log _2 \M_\mu (I_{w'}) \ge -\log \M_\mu (I_W) \sim  j {{H_\ell}} (\eta')\sim j' H$.  It follows that  for any $\ep>0$,  
$$
\liminf_{j'\to+\infty}\frac{1}{j'}\log_2\# \mathcal{E}_{\M_\mu}(j', [0,H+\ep])  \ge \frac{\widetilde \eta {D_\mu}({{H_\ell}}(\widetilde \eta)}{{{H_\ell}}(\widetilde\eta)} H=\widetilde\tau^*(H) =  \tau_{\Mm} ^*(H).
$$
We conclude as in the previous case.

\mk$\bullet$ For $H\in [ {{H_\ell}}(\widetilde\eta)+{\Ht_\ell}(\widetilde\eta),{H_{\max}}+{\Ht_\ell}(\widetilde\eta)]$, we can use Section~\ref{sharplow1} which directly yields the conclusion.


\section{Dimension of the level sets $E_{\M_\mu}(H)$ and $\overline E_{\M_\mu}(H)$ for $H\ge H_\ell(\weta)+\widetilde H_\ell(\weta)$}\label{limlimd}

Due to Remark~\ref{remmin} we only have to prove the sharp upper bound for $\dim \overline E_{\M_\mu}(H)$.

 Proposition~\ref{majexp} yields ${\overline \dim_\locloc}(\mu,x)\ge  {\overline \dim_\locloc}(\M_\mu,x)  -{\Ht_\ell}(\weta)$ for all $x\in [0,1]^d$. Hence for all $H\ge 0$ we have  $\overline E_{\M_\mu}(H) \subset \overline {E}^{\ge }_{\mu}(H-{\Ht_\ell}(\weta))$. By item (3) of Proposition~\ref{fm}, we deduce that $\dim\, \overline E_{\M_\mu}(H) \le {D_\mu}(H-{\Ht_\ell}(\weta))$ for $H\ge H_s+{\Ht_\ell}(\weta)$. 

\mk

{It remains us to treat the case  $H\in [H_\ell(\weta)+\widetilde H_\ell(\weta),H_s+\widetilde H_\ell(\weta))$. We already know by item (2) of Proposition~\ref{fm} that $\dim \Big( \overline E_{\M_\mu}(H)\cap \underline E^{\le}_\mu(H-\widetilde H_\ell(\weta)) \Big)\le D_\mu(H-\widetilde H_\ell(\weta))$. }

In order to complete the proof, it is enough to prove that $\dim \Big( \overline E_{\M_\mu}(H)\cap \underline E^{\ge}_\mu(H-\widetilde H_\ell(\weta)) \Big)\le D_\mu(H-\widetilde H_\ell(\weta))$. For this, consider
 $x\in \overline E_{\M_\mu}(H)\cap  \underline E^{\ge}_\mu(H-\widetilde H_\ell(\weta))$. Based on the discussion achieved in the proof of Proposition~\ref{propmajs}, we know that 
$x\in \widetilde F_\mu(H)$, where 
$$
\widetilde  F_{\mu} (H)=\bigcup_{i\in\{\ell,r\}}\bigcap_{\ep\in(0,1)}  \ \bigcap_{k\ge 1} \ \bigcup_{\substack { (\alpha,\eta',\delta) \in \mathcal{P}_i(H): \\
 \delta \in [1,1/\eta'],\ 
 \frac{\alpha+{\Ht_i}(\eta')}{\delta}\le H+\ep } }\limsup_{j\to+ \infty} F_{\mu,\ell}(j,\alpha,\eta',\delta -\ep,k),
$$
and $\mathcal{P}_i(H)$ is a countable set of parameters $(\alpha,\eta',\delta) $ dense in $[H-\widetilde H_\ell(\weta),H_{\max}]\times  (\eta_i,\eta] \times [1,+\infty)$. Also, if $\alpha\ge H-\widetilde H_\ell(\weta)$ and $\frac{\alpha+{\Ht_i}(\eta')}{\delta}\le H+\ep $, then  
$$
\frac{D_\mu (\alpha)}{\delta}\le (H+\ep) \frac{D_\mu (\alpha)}{\alpha+{\Ht_i}(\eta')}\le (H+\ep)\frac{D_\mu (\alpha)}{\alpha+{\Ht_\ell}(\weta)}\le (H+\ep)\frac{D_\mu (H-{\Ht_\ell}(\weta))}{H},
$$
since $\alpha\mapsto \frac{D_\mu (\alpha)}{\alpha+{\Ht_\ell}(\weta)}$ is decreasing over $[H_\ell(\weta),H_{\max}]$ and $H\ge H_\ell(\weta)+\widetilde H_\ell(\weta)$. Then, using the same estimates as in the proof of Proposition~\ref{ubip}, we get $\dim \widetilde  F_{\mu} (H)\le D_\mu (H-{\Ht_\ell}(\weta))$, hence $\dim  \Big( \overline E_{\M_\mu}(H)\cap  \underline E^{\ge}_\mu(H-\widetilde H_\ell(\weta)) \Big)\le D_\mu (H-{\Ht_\ell}(\weta))$. Hence the conclusion.
 
\begin{remark}
We could  conclude for all $H\in  [H_{\min}+\widetilde H_\ell(\weta),H_{\max}+\widetilde H_\ell(\weta)]$ if  we were able to prove the property  $ {\overline \dim_\locloc}(\M_\mu,x)  \ge {\underline \dim_\locloc}(\mu,x)+{\Ht_\ell}(\weta)$ for all $x\in [0,1]^d$ to hold. 
\end{remark}

\section{Case of a homogeneous Gibbs measure}
\label{sechomo}

We rapidly explain the case of a homogeneous capacity that we denote $\lambda$. We assume without loss of generality that for some $\beta >0$, for every finite word $w\in \Sigma^*$,  $\lambda(I_w) \sim 2^{-\beta |w|}$.

\begin{figure}\begin{tikzpicture}[xscale=0.8,yscale=0.8]
{\small
\draw [->] (0,-2.8) -- (0,1.5) [radius=0.006] node [above] {$\tau_\lambda(q)$};
\draw [->] (-1.,0) -- (3.7,0) node [right] {$q$};
 \draw [thick, domain=-1 :3, color=brown]  plot ({\x},  {0.7*\x-1}); 
 \draw [dotted] (0.6,0)-- (0.6,-1.5) [fill] (0.6,-1.6) node [below] {${q_{\widetilde \eta}}$};  
\draw [fill] (-0.1,-0.20)   node [left] {$0$}; 
\draw [fill] (-0,-1) circle [radius=0.03]  node [left] {$-d$ \ }; }
\end{tikzpicture}    \ \hspace{5mm} \ 
\begin{tikzpicture}[xscale=0.8 ,yscale=0.8 ]{\small
\draw [->] (0,-2.8) -- (0,1.5) [radius=0.006] node [above]  {$ \tau_{\M_\lambda}(q)$};
\draw [->] (-1.,0) -- (3.7,0) node [right] {$q$};
  \draw [thick, domain=-1 :0.6, color=brown]  plot ({\x},  {1.666*\x-1}); 
  \draw [thick, domain=0.6:3, color=red]  plot ({\x},  {0.7*\x-0.42}); 
 \draw [<-] (0.7,-0.1) -- (2.5,-1) node [right] {Phase};
 \draw[fill] (2.5,-1.4)   node   [right] {transition};
 \draw [dotted] (0.6,0)-- (0.6,-1.5) [fill] (0.6,-1.6) node [below] {${q_{\widetilde \eta}}$};  
\draw [fill] (-0.1,-0.20)   node [left] {$0$}; 
\draw [fill] (-0,-1) circle [radius=0.03]  node [left] {$-d$ \ }; }
\end{tikzpicture} 

    \begin{tikzpicture}[xscale=1.4,yscale=2.2]{\footnotesize
\draw [->] (0,-0.2) -- (0,1.2) [radius=0.006] node [above] {$D_\lambda(H)$ };
\draw [->] (-0.2,0) -- (2.7,0) node [right] {$H$};
  \draw[fill] (1.32,1) circle [radius=0.03] [dashed] (1.32,1) -- (0,1) [fill] circle [radius=0.03]    node [left]  {  ${d} $};  
   \draw[fill] (1.32,1) circle [radius=0.03] [dashed] (1.32,1) -- (1.32,-0.0) [fill] circle [radius=0.03]  [dashed] (1.32,0.0) -- (1.32,-0.0)  node [below]  {  ${\beta} $};  }
\end{tikzpicture}   
\          \begin{tikzpicture}[xscale=1.2,yscale=2.2]{\footnotesize
   \draw [->] (0,-0.2) -- (0,1.2) [radius=0.006] node [above] {$D_{\M_\lambda}(H)$ };
\draw [->] (-0.2,0) -- (3.2,0) node [right] {$H$};
  \draw[fill] (2.82,1) circle [radius=0.03] [dashed] (2.82,1) -- (0,1) [fill] circle [radius=0.03]    node [left]  {  ${d} $};  
   \draw[fill] (0,0.4) circle [radius=0.03]  node [left] {$d\eta$} [dashed] (0,0.4) -- (1.32,0.4)[fill] (1.32,0.4) circle [radius=0.03] [dashed] (1.32,0.4) -- (1.32,-0.0) [fill] circle [radius=0.03]  [dashed] (1.32,0.0) -- (1.32,-0.0)  node [below]  {  ${\beta} $};  
  \draw[fill] (1.32,0.4) circle [radius=0.03];
  \draw [fill,thick] (1.32,0.4) -- (2.82,1) ;
  \draw[fill] circle [radius=0.03]  [dashed] (2.82,1) -- (2.82,-0.0)  [fill] (2.82,0) circle [radius=0.03] node [below]  {${\beta}/\eta $};  }   
\end{tikzpicture} 
 \caption{{\bf Left:} Free energy function  (top) and associated multifractal spectrum (bottom) of a homogeneous capacity $\lambda$. {\bf Right:} The almost sure  free energy (top) and the multifractal spectrum (bottom)  of   $ {\M_\lambda}$.}

\end{figure}
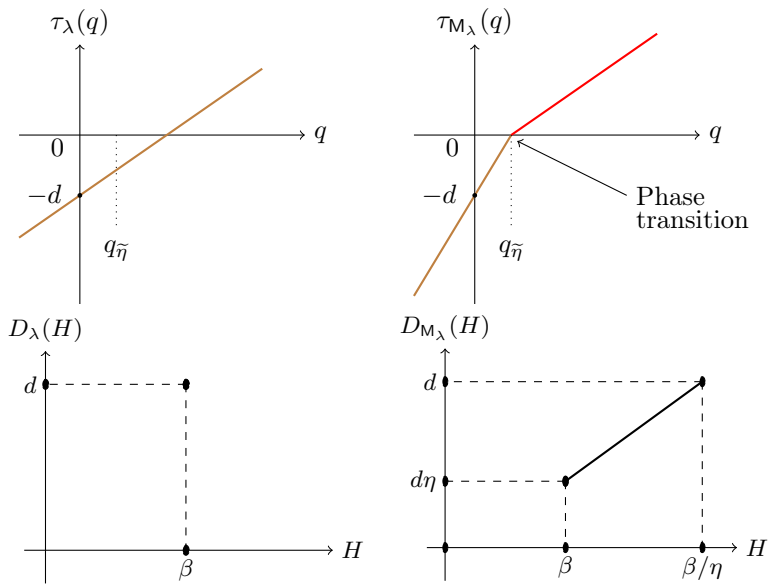

 In this situation, $H_{\min}= H_s = H_{\max} = \beta$, so $\etal=\etar=\weta= \eta$, and ${H_\ell}(\etal) = H_r(\eta_r) = \beta$. One also has ${\Ht_\ell}(\weta) = \beta (1/\eta-1)$.
Moreover, 
$q_{\eta_\ell}=+\infty$. 

The free energy function $\tau_\lambda(q) =  \beta q -d$  is linear, and  $q_\weta  $ is the solution to $\tau_\lambda(q) = -d(1-\eta) $, i.e. $q_{\weta } = d \eta/\beta$. 

The proof  follows exactly the same lines as in the previous sections, except that most of the arguments are trivial. Indeed, all the survivors at a given generation $j$  satisfy $\lambda(I_w) \sim 2^{-j\beta}$ (there is no   dependence of the value $\lambda (I_w)$ on the location of $w$). The sets $\mathcal{R}_\lambda$, $\mathcal{T}_{\lambda}$ are similarly defined, but are also trivial.

The obtained energy function is
$$
 \tau_{\M_\lambda}(q)=
\begin{cases}
\ \tau_\lambda(q)+ \beta (1/\eta-1) q  \, = \,  q \beta/\eta   - d&\text{if } q\le d {\eta/\beta},\\
\ \tau_\lambda(q)+d(1-\eta)  \ \ \ \ \  \, =  \, q \beta -d(1-\eta) &\text{if } q > d {\eta/\beta},
\end{cases}
$$
and the associated multifractal spectrum is
$$
D_{\M_\lambda}(H)= 
\begin{cases}\sk
\  \displaystyle  \frac{d\eta}{\beta} H  &\text{if }   H\in [ \beta, \beta/\eta],\\\sk
 \ -\infty&\text{otherwise}.
\end{cases}
$$

\mk

Actually, this case  was already studied by Jaffard in the context of ``lacunary wavelet series'' and  multifractal analysis of functions~\cite{JLac}. More precisely,   Jaffard computes the singularity spectrum of   wavelet series whose wavelet coefficients are defined as follows: Fix  $(\psi_{j,k})_{j,k\in\mathbb Z}$, a wavelet basis of $L^2(\mathbb R)$ associated with a smooth mother wavelet and normalized so that all its elements have the same $L^\infty$ norm. Fix $\beta>0$, and for each $j\ge 1$  select uniformly and independently $ 2^{\lfloor \eta  j \rfloor}$ intervals  among the $2^{j}$ dyadic subintervals of $\zu$ of generation $j$. Then assign the coefficient $2^{-\beta j}$ to $\psi_{j,k}$ if $[k2^{-j}, (k+1)2^{-j}]$ has been selected; otherwise assign the coefficient 0. Though different, this sparse collection of coefficients is close to that  obtained  by sampling  the homogeneous capacity $\lambda$ as above  in the special situation where  $\lambda  (I_w)=2^{-\beta |w|}$ for all $w\in\Sigma^*$. It turns out that the multifractal  analysis of the resulting sparse wavelet series  is essentially reducible to that of $\M_\lambda$, which in this case follows from quite a direct application of homogeneous ubiquity theory \cite{Dodson,JLac}. Of course, Jaffard obtained the same multifractal spectrum, although he did not compute  the free energy function.


\end{document}